%\documentclass[letterpaper,11pt]{article}

%\documentclass{ws-ijfcs}
%\usepackage{enumerate}
%\usepackage{url}
%\urlstyle{same}

\documentclass{lmcs}
\keywords{computable manifolds, embeddings, computable topology, manifold structures}

\usepackage{hyperref}

\usepackage[utf8x]{inputenc}
\usepackage{amsmath,latexsym,amssymb}
\usepackage{amscd}
\usepackage[all,cmtip]{xy}
\usepackage{diagxy}
%\usepackage{dsfont}
%\usepackage{fullpage}

%\usepackage[pagewise]{lineno}
%\linenumbers

%%
%% Comandos matematicos generales
%%

%% Comandos matematicos generales

\newcommand{\reals}{\mathbb{R}}

\newcommand{\integers}{\mathbb{Z}}
\newcommand{\naturals}{\mathbb{N}}
\newcommand{\rationals}{\mathbb{Q}}

\newcommand{\myemptyset}{\varnothing}
\newcommand{\defeq}{\stackrel{\text{def}}{=}}
\newcommand{\domp}[1]{\dom (#1)}
\newcommand{\rangep}[1]{\range \bigl(#1 \bigr)}
\newcommand{\directimage}[2]{#2 \bigl[ #1 \bigr]}
\newcommand{\inverseimage}[2]{#2^{-1} \bigl[ #1 \bigr]}
\newcommand{\powersetof}[1]{\mathcal{P}(#1)}

% Operadores generales

\DeclareMathOperator{\range}{range}

\DeclareMathOperator{\dom}{dom}

%% Comandos topologicos

\newcommand{\rsphere}[2]{#2\mathbb{S}^{#1}}
\newcommand{\puncturedsphere}[1]{\widehat{\mathbb{S}}^{#1}}
\newcommand{\sphere}[1]{\rsphere{#1}{}}

\newcommand{\euclideanspace}[2]{\mathbb{#1}^{#2}}

\newcommand{\rationalsspace}[1]{\euclideanspace{Q}{#1}}

\newcommand{\euclidean}[1]{\euclideanspace{R}{#1}}

\newcommand{\realprojectivespace}[1]{\reals \text{P}^{#1}}

% Operadores topologicos

%%% Variedades topologicas/Teoria de la computacion %%%

%% Comandos de teoria de la computacion

\newcommand{\generalstringset}[2]{#1^{#2}}
\newcommand{\stringset}[1]{\generalstringset{#1}{*}}
\newcommand{\sigmastringset}{\stringset{\Sigma}}

\newcommand{\chomeom}{\cong_{ct}}

%%% FOR TURING MACHINE PSEUDOCODE !!!
% =========================================================================
\newcounter{linecounter}

\newcommand{\resetline}{\setcounter{linecounter}{0}}

%=========================================================================
%%% FOR TURING MACHINE PSEUDOCODE !!!

%%% TTE

\newcommand{\omegastringset}[1]{\generalstringset{#1}{\omega}}
\newcommand{\sigmaomegastringset}{\omegastringset{\Sigma}}

%%% Computable topology

\newcommand{\fsunionbasenotation}[1]{\bigcup\nolimits_{\text{fs}} #1}
\newcommand{\fsintersectionbasenotation}[1]{\bigcap\nolimits_{\text{fs}} #1}

\newcommand{\fsunionbasenu}{\fsunionbasenotation{\nu}}
\newcommand{\fsintersectionbasenu}{\fsintersectionbasenotation{\nu}}

\newcommand{\ctopcompsubspace}[2]{#2_{#1}}

%%% Distributed systems commands

% \newcommand{\thecondition}[1]{\mathsf{FULLCONTACT}_A^i(#1)}

%\newfloat{mytable}{tbp}{loa}
%\floatname{mytable}{Tabla}

\theoremstyle{plain}
\newtheorem{theorem}{Theorem}[section]
\newtheorem{proposition}[theorem]{Proposition}%[section]
\newtheorem{lemma}[theorem]{Lemma}%[section]
\newtheorem{corollary}[theorem]{Corollary}%[section]

\theoremstyle{definition}
\newtheorem{definition}[theorem]{Definition}%[section]
\newtheorem*{definition*}{Definition}
%[section]
\newtheorem*{definitions*}{Definitions}
\newtheorem{example}[theorem]{Example}%[section]
\newtheorem*{example*}{Example}

\theoremstyle{remark}

\newtheorem*{remark*}{Remark}

%\newtheorem*{obs}{Observación}

%%%% Comandos especificos de variedades computables %%%%

%\numberwithin{equation}{section}

%\numberwithin{figure}{section}

%\title{\bf Computable structures on topological manifolds}

%\author{Marcelo A. 
%Aguilar\footnote{email: {\tt marcelo@matem.unam.mx}.} \and 
%Rodolfo Conde\footnote{email: {\tt rconde@ciencias.unam.mx}.} \and 
%Instituto de Matemáticas, 
%Universidad Nacional Autónoma de México\\
%Ciudad Universitaria, M\'exico D.F. 04510, M\'exico}
%
%\date{}

\begin{document}

%\markboth{M. A. Aguilar and R. Conde}
%{Computable structures on topological manifolds}
%
%%%%%%%%%%%%%%%%%%%%%% Publisher's Area please ignore %%%%%%%%%%%%%%%
%%
%\catchline{}{}{}{}{}
%%
%%%%%%%%%%%%%%%%%%%%%%%%%%%%%%%%%%%%%%%%%%%%%%%%%%%%%%%%%%%%%%%%%%%%%

\title[Computable structures on manifolds]{\bf Computable structures on topological manifolds}
%\titlecomment{{\lsuper*}OPTIONAL comment concerning the title, \eg, 
%  if a variant or an extended abstract of the paper has appeared elsewhere.}

%% Please do not enter footnotes or \inst{}-notes into the optional
%% argument of the author command. The optional argument will go into
%% the header.  If there is only one address the marker \inst{x} may be
%% omitted.

%% Information for the first author.
\author[M.~A.~Aguilar]{Marcelo A. Aguilar%\inst{1,}%
}
\address{Instituto de Matemáticas, 
Universidad Nacional Autónoma de México\\
Ciudad Universitaria, M\'exico D.F. 04510, M\'exico}
\email{marcelo@matem.unam.mx}  %optional

%    Information for the second author
\author[R.~Conde]{Rodolfo Conde%\inst{2,}
}
\address{Instituto de Matemáticas, 
Universidad Nacional Autónoma de México\\
Ciudad Universitaria, M\'exico D.F. 04510, M\'exico}
\email{rodolfocondemx@gmail.com}  %optional

%\begin{history}
%\received{(Day Month Year)}
%%\revised{(Day Month Year)}
%\accepted{(Day Month Year)}
%\comby{(xxxxxxxxxx)}
%\end{history}

\begin{abstract}
We propose a definition of computable manifold by introducing computability
as a structure that we impose to a given topological manifold, just in the same way as
differentiability or piecewise linearity are defined for smooth and PL manifolds respectively. Using
the framework of computable topology and Type-2 theory of effectivity, we develop computable
versions of all the basic concepts needed to define manifolds, like \emph{computable atlases} and
\emph{(computably) compatible} computable atlases. We prove that given a computable atlas $\Phi$
defined on a set $M$, we can construct a computable topological space $(M, \tau_\Phi, \beta_\Phi,
\nu_\Phi)$, where $\tau_\Phi$ is the topology on $M$ induced by $\Phi$ and that the
equivalence class of this computable space characterizes the \emph{computable structure} determined
by $\Phi$. The concept of \emph{computable sub\-ma\-ni\-fold} is also investigated. We show that any
compact computable manifold which satisfies a computable version of the $T_2$-separation axiom, can
be embedded as a computable submanifold of some euclidean space $\euclidean{q}$, with a computable
embedding, where $\euclidean{q}$ is equipped with its usual topology and some canonical computable
encoding of all open rational balls.%
\end{abstract}

%\keywords{computable manifolds; embeddings; computable topology; TTE. }
%\subjclass[msc2010]{03D60,03D78}

\maketitle

\section{Introduction}

Computability theory over continuous structures began formally in 1936 with the landmark paper of 
Alan Turing \cite{Turing1936} where he defined the notion of a single computable real number: $x\in
\reals$ is computable if its decimal expansion can be calculated in the discrete sense, that is,
output by a Turing machine. Since then, other authors have developed definitions and results to try
to build a reasonable theory of computability in the continuous setting. There are two main
approaches to modeling computations with real number inputs. The first approach is given by the
framework of \emph{Computable Analysis} studied in many papers and some books, e.g., 
\cite{Lacombe,Grzegorczyk,pourElRichards,kerikocomplexityrealfunctions,weihrauch2000computable}. 
The second approach is the algebraic one, 
its development goes back to the 1950s and focuses on the algebraic operations needed to perform 
tasks \cite{borodin1975computational,burgisser1997algebraic}. The most influential model is the so 
called \emph{BSS model} developed by Blum, Shub and Smale \cite{BSS89,BCSS98}.

Computable Analysis reflects the fact that computers can
only store finite amounts of information. Since real numbers and other objects in analysis are
``infinite'' in nature, a Turing machine can only use finite objects to approximate them and to perform
the actual computations on these finite pieces of information, thus we have that topology plays an
important role in computable analysis \cite{weihrauch2000computable}. The representation approach
and the framework of \emph{Type-2 Theory of Effectivity} (TTE) \cite{theoryrepresentations}, a
generalization of ordinary computability theory, has provided a solid background to formalize the
theory of computable analysis \cite{weihrauch2000computable}. TTE has been extensively studied and 
developed as a standalone topic in computability theory 
\cite{Weihrauch:jucs_14_6:the_computable_multi_functions}. Also, it has been generalized to a more 
general model of computability for analysis \cite{GTMs}. 
\emph{Computable metric spaces} (also known as \emph{recursive metric spaces})
\cite{KlausCompMetricStructs,Bra03ComputabilityModels} have been defined in computable
analysis and play a very important role in the subject, as many results of computability over
euclidean spaces can be generalized to the broader world of computable metric spaces. For an 
overview of basic computable analysis, see the tutorial given in \cite{CCAtutorial}.

In recent years, as a product of various publications
\cite{weihrauch2000computable,schroder2003admissible,GW05ComputableDini,GSW07Computablemetrization} trying to
consider computable topology as a foundation of computable analysis, Weihrauch and Grubba
\cite{KWECTOP} developed a solid foundation for computability over more general spaces, where the
main objects of study are called \emph{computable topological spaces}. Roughly
speaking, a computable topological space is a $T_0$-space $(X,\tau)$ in which a base $\beta\subseteq
\tau$ is provided with an encoding with strings from $\sigmastringset$, such that the set of all
valid strings that encode elements of $\beta$ is computable and under this encoding, intersection of
base elements is computable in a formal sense (see Definition \ref{defCompTOPSPACE}). This 
framework has been used to prove many important results in computable topology, like the following:  
%Dini's Theorem states that if $X$ is a compact topological space and $\{ f_n \}$ is a 
%monotonically increasing sequence 
%(i.e., $f_n(x) \leqslant f_{n+1}(x)$ for all $n$ and $x$) of 
%continuous functions on $X$ which converges pointwise to a continuous function $f$, then the 
%convergence is uniform. 
A computable version of Dini's 
Theorem is proved in \cite{GW05ComputableDini}; A celebrated result in general topology and the theory of metric spaces 
states that every second-countable regular topological space $(X, \tau)$ is metrizable, that is, 
$X$ is homeomorphic to a metric space $(M,d)$. In \cite{GSW07Computablemetrization}, it is proved 
that every computable topological space satisfying a ``computably regular'' 
condition, has a computable embedding in a computable metric space (which topologically is its 
completion). Also, a computable Urysohn Lemma is established. 
 There is also a computable version of the Stone-Weierstrass Approximation Theorem  
%is constructed in 
\cite{GrubbaEffectiveStoneWeierstrass} for computable topological spaces which satisfy computable 
versions of the Hausdorff and locally compactness properties.

 The paper \cite{KWECTOP} is an effort to put in one place the most general and basic facts about 
computable topology, because these facts were scattered throughout many papers 
(e.g., \cite{weihrauch2000computable,GW05ComputableDini,GSW07Computablemetrization,GrubbaEffectiveStoneWeierstrass} among others). Computable versions of topological properties and 
separation axioms are an important tool to obtain computable versions of important topological 
theorems and the paper \cite{Weihrauch:jucs_16_18:computable_separation_in_topology} gives 
definitions of many computable versions of each separation axiom and proves the relations between 
these computable separation axioms. More recent results of computable topology include 
\cite{DBLP:journals/corr/RettingerW13}, where Rettinger and Weihrauch study computability aspects 
of finite and infinite products of computable topological spaces and they prove computable versions 
of Tychonoff's Theorem.

% On the other hand, the algebraic approach to computing with real numbers abstracts away the
% messiness of finite approximations and assumes that real numbers can be represented exactly and
% each arithmetic operation can be executed in just one step. This approach applies naturally to
% arbitrary rings and fields, being the most usual $\reals$ and $\complexnumbers$. The most
% influential model in the algebraic setting is due to Blum, Shub and Smale, this model is called 
% the
% \emph{BSS model} \cite{BSS89}. More information and a complete treatment of the BSS model can be
% found on the book \cite{BCSS98} and in \cite{Blum04computingover} comes a very good exposition of
% the model. A computer in the BSS model has registers which can hold arbitrary elements of the
% underlying ring $R$ and performs exact arithmetic operations ($+, −, \cdot$, and $\div$ in the 
% case
% $R$ is a field) and computer programs can branch on conditions based on exact comparisons between
% ring elements ($=$,and also $<$, in the case of an ordered field). It is known that the BSS model
% and the model given by the framework of computable analysis are not equivalent
% \cite{braverman05,BravermanCook}.
With the advent of computable topological spaces and the solid foundation of computability in
euclidean spaces provided by computable analysis, the stage is set to introduce computability
properties in a very important class of topological spaces: \emph{Topological manifolds}. A space $M$
is a topological manifold if and only if each point $x\in M$ has a neighborhood
homeomorphic to an open set of euclidean space $\euclidean{n}$ for fixed $n$ (this integer
is called the \emph{dimension} of $M$). Manifolds are one of the most important types of topological
spaces, many problems related to manifolds have been the inspiration for some of the most beautiful
mathematical constructions, using very sophisticated and advanced techniques, as it can be seen in 
the foundational work of Kirby and Siebenmann \cite{kirby1977foundational} and more recently, in 
the paper \cite{MR3402697}, where Manolescu shows that there exist high-dimensional 
topological manifolds which cannot be triangulated as simplicial complexes, thus refuting the 
\emph{Triangulation conjecture} \cite{simplicialtriangulationstop}. But despite the fact that there 
is a lot of knowledge about to\-po\-lo\-gi\-cal manifolds, there are still unanswered questions and 
hard open problems related to them. As a result of this, manifold theory is always intensively
studied by the mathematicians doing research in topology.

Additional structure can be imposed to a topological manifold, giving rise to special
classes of manifolds, like \emph{smooth manifolds} \cite{WhitneyDIFF36,brickellclarkDIFF},
\emph{analytic manifolds} \cite{hormander1990introduction} and \emph{piecewise linear
(PL) manifolds} \cite{hudson1969piecewise,rourke1982introduction}. The relationships between
manifolds equipped with any or all of these structures and standard topological manifolds are
studied in many papers \cite{kirby1977foundational,WhitneyDIFF36,KervaireManifoldNODIFF,poincareconjecturegt4,Freedman4manifolds}.

%\paragraph*{Our contributions.} 
In this paper, computability enters the world of topological manifolds. We propose a definition of
 computable manifold by introducing computability as a structure that we
impose to a given topological manifold, just in the same way as differentiability or
piecewise linearity are defined for smooth and PL manifolds respectively. Using the framework of
computable topology and TTE, we give effective versions of the concepts needed to define
manifolds e.g., \emph{charts} and \emph{atlases}. Namely, the following concepts are developed:
%\begin{itemize}
 a) The definition of \emph{computable atlas} $\Phi$ on a set $M$;
  b) the computable topological space induced by $\Phi$ on $M$;
  c) the definition of \emph{computably compatible computable atlases};
 d) the definition of \emph{computable structure}, which is an equivalence class of
computably compatible computable atlases (characterized by the equivalence class of a computable
topological space, see Definition \ref{defEquivCompTOPSPACES}).
%\end{itemize}

A \emph{computable manifold} is a set $M$ endowed with a computable structure. We present many
examples of computable manifolds and their respective computable topological spaces. We study the
relationships between computable manifolds and computable functions (with respect to the representations
induced by computable atlases) and prove that computable homeomorphisms and computable structures
behave nicely when working together. We also prove some basic properties of
computable manifolds.

\emph{Submanifolds}, that is, manifolds which are inside other manifolds are an important tool
to investigate manifolds. We define \emph{computable submanifolds} and present some properties about
them. 
% Also, we build a result which provides a characterization of computable submanifolds of 
% euclidean space $\euclidean{m}$. This result says that computable submanifolds of 
% $\euclidean{m}$, 
% can be defined using only basic tools of standard topology and computability theory, without any 
% mention to TTE's advanced concepts. 
Finally, we use all the previous definitions and results about 
computable manifolds to give an effective version of the following well known result concerning 
topological manifolds \cite{1974TOPPROPERTIES,munkres2000topology}: \emph{Every compact Hausdorff 
topological manifold embeds in some high dimensional euclidean space}. We show that any compact 
computable manifold $M$ which is also computably Hausdorff 
(see Definition \ref{defSCT2}) can 
be 
embedded in some euclidean space $\euclidean{q}$, where $q$ depends on $M$ and we equip 
$\euclidean{q}$ with its usual topology and the standard computable encoding of all open rational 
balls. Thus every compact computable manifold that is computably Hausdorff can be seen as a 
computable submanifold of some euclidean space.

%\paragraph*{Related work.} %We remark that 
Related work. 
In \cite{realcomputablemanifolds}, Calvert and Miller 
give another definition for the
term ``computable manifold''. Using the BSS model \cite{BCSS98}, \emph{$\reals$-computable
manifolds} are defined in \cite{realcomputablemanifolds}. Informally, an $\reals$-computable
manifold is a topological manifold $M$, together with a finite collection of $\reals$-computable
functions (e.g. computable in the BSS sense), called the inclusion functions, which describe the
inclusion relations of all the domains of the charts of a specific atlas $\{ \varphi_i
\colon U_i \to \euclidean{d} \}_{i\in \naturals}$ on $M$. This definition is used to prove some
results (in the BSS model) about the undecidability of nullhomotopy and simple connectedness in
$\reals$-computable manifolds and also how to determine a presentation of the fundamental group of
such manifolds. However, it is hard to interpret these undecidability results in terms of practical 
computing, because in the BSS model, simple subsets of $\euclidean{2}$ which can be easily 
``drawn'' (i.e., approximated), such as the Koch snowflake and the graph of $y=e^x$ are 
undecidable\footnote{If a set $C\subseteq \euclidean{n}$ is decidable in the BSS model, then it 
must be a countable disjoint union of semi-algebraic sets. This is the reason why simple sets such 
as the graph of $y=e^x$ are not decidable in the BSS model.} in the BSS model \cite{BCSS98}. But we 
know that these sets can be approximated with an arbitrary precision, thus it would seem 
that in general, uncomputability results in the BSS model do not match real world computations.

In \cite{DBLP:journals/corr/Iljazovic13}, Iljazovi\'c studies the 
computability (in the sense of TTE and computable analysis) of compact subsets of computable metric 
spaces. They use these results to show that each semi-computable compact manifold with computable 
boundary is computable, as a subset of a metric space. In \cite{DBLP:journals/corr/BurnikI14}, the 
authors prove similar results for 1-manifolds, not necessary compact. In this paper, 
% we do not 
% intent to answers question about undecidability, 
our focus 
is to develop the basic concepts and results necessary to build an effective theory of manifolds.

The paper is organized as follows. Section \ref{secPrelim} contains many definitions and results
regarding topological manifolds, computable analysis and topology. The reader which is familiar
with some or all of the contents of this section can skip the corresponding parts. In Section
\ref{secComputableManifolds}, we introduce computable atlases and we define the computable topological
space associated with a computable atlas, later on, computable structures (equivalence classes of
computable atlases) are defined using computably compatible computable atlases. We show that a
computable structure on a given set is characterized by an equivalence class of computable
topological spaces (Definition \ref{defEquivCompTOPSPACES}). We also study the relationships of
computable manifolds with computable functions and finally, we prove some basic properties of
computable manifolds. Section \ref{secComputableSubmanifolds} is devoted to the important concept of
\emph{computable submanifold}, that is, a computable manifold inside another manifold. In Section 
\ref{secImbeddings}, 
% we construct a simple characterization of computable submanifolds of 
% computable euclidean space using basic topology and computability theory. At the end of the 
% section, 
we prove an effective version of one of the most useful tools for handling manifolds: We show 
that any compact computable manifold (computably Hausdorff) can be embedded as a computable 
submanifold in an euclidean space of sufficiently high dimension, with a computable embedding. 
Throughout the paper, we give examples to explain the main ideas behind concepts and
results. Section \ref{secFinalRemarks} contains our concluding remarks.

\section{Notations and preliminaries}\label{secPrelim}

In this section, we summarize many definitions, facts and technical details about topological
ma\-ni\-folds and computable topology that will be used throughout the rest of the paper. The 
reader can check the references for more complete information on these topics.

\subsection{Basic notations}\label{secTopology}

Our basic references for topology are \cite{armstrong,engelking1989general}. The power set of any 
set $A$ will be denoted by $\powersetof{A}$. A function $f\colon A \to B$ between the sets 
$A,B$, which is defined on a subset of $A$, is called a \emph{partial function} and is denoted by 
$f\colon \subseteq A \to B$. When $f$ is defined on the entire set $A$, $f$ is called a \emph{total 
function} and we omit the ``$\subseteq$'' symbol. The \emph{identity} function on $A$ is
$1_A\colon A \to A$. For a topological space $(X,\tau)$, we denote by $\mathcal{A}$ the set of
closed subsets of $X$ and $\mathcal{K}$ will denote the set of compact subsets of $X$. A set
$Z\subseteq X$ is a \emph{$G_\delta$-set} if $Z$ is a countable intersection of open sets in $X$.
The symbols $\naturals,\integers,\rationals$ and $\reals$ are used to represent the set of
\emph{natural numbers}; the set of \emph{integers}; the set of \emph{rational numbers} and the set
of \emph{real numbers} respectively. Let $\euclidean{n}=\{ (x_1, \ldots, x_n) \mid x_i \in \reals
\}$ denote \emph{euclidean space} of dimension $n \geqslant 0$.
% If there is no confusion about $\tau$, we omit the subindex and simply
% write $\mathcal{A}$ and $\mathcal{K}_\tau$. 
If $(M,d)$ is a metric space with metric $d\colon M \times M \to \reals$, then for any $x\in M$ and
$0<r\in\reals$, let 
\[ B(x, r) = \{ y \in M \mid d(x,y) < r \} \text{ and } \overline{B}(x, r) = \{ y \in M \mid 
d(x,y)
\leqslant r \}\]
denote the \emph{open} and \emph{closed ball} with center $x$ and radius $r$. We endow the 
euclidean 
space $\euclidean{n}$ with its standard topology, which is induced by the metric $d(x,y) = \lVert x 
- y \rVert$; $\sphere{n-1}=\{ x \in \euclidean{n} \mid \lVert x \rVert = 1 \}$ is the $n$-sphere. 
All subsets of $\euclidean{n}$ are equipped with subspace topology.

\subsection{Topological manifolds}

The main objects of study in this paper are topological manifolds
\cite{kirby1977foundational,brickellclarkDIFF,Freedman4manifolds,1974TOPPROPERTIES}. We first
present one of the most common approaches to define them.

\begin{definition}\label{defTOPManifoldLES}
 By a \emph{topological $n$-manifold} (or just \emph{manifold}) we mean a space $M$ such that
every point in $M$ has an open neighborhood which is homeomorphic to an open subset of euclidean
space $\euclidean{n}$. The integer $n$ is called the \emph{dimension} of the manifold and is
denoted by $\dim M$.
\end{definition}

According to this definition, a topological manifold is just a \emph{locally euclidean space} in
which all the points have the same local dimension. Most authors require their manifolds to satisfy
further topological properties. Manifolds are usually assumed to be paracompact and Hausdorff. In
this paper, we will not assume any of these properties until we explicitly require them. 
It is easy to prove that the next result holds 

\begin{proposition}\label{thmEquivalencesTOPMANIFOLD}
The following are equivalent for a space $M$.
%\begin{itemize}
1) $M$ is a topological manifold.
 2) Every point of $M$ has a neighborhood homeomorphic to an open ball in $\euclidean{n}$.
3) Every point of $M$ has a neighborhood homeomorphic to $\euclidean{n}$ itself.
%\end{itemize} 
\end{proposition}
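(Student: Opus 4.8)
The plan is to prove the three equivalences by a cycle of implications $3)\Rightarrow 2)\Rightarrow 1)\Rightarrow 3)$, where only the last implication requires a genuine geometric argument; the first two are essentially immediate.

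First I would handle $3)\Rightarrow 2)$. If every point $x\in M$ has a neighborhood homeomorphic to $\euclidean{n}$, then since $\euclidean{n}$ is itself homeomorphic to the open unit ball $B(0,1)\subseteq\euclidean{n}$ (via the standard radial map $y\mapsto y/(1+\lVert y\rVert)$, with inverse $z\mapsto z/(1-\lVert z\rVert)$), composing gives a neighborhood of $x$ homeomorphic to an open ball. Next, $2)\Rightarrow 1)$ is trivial: an open ball in $\euclidean{n}$ is in particular an open subset of $\euclidean{n}$, so condition $1)$ of Definition \ref{defTOPManifoldLES} holds verbatim.

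The substantive step is $1)\Rightarrow 3)$, and this is where I expect the only real work. Suppose $x\in M$ has an open neighborhood $U$ with a homeomorphism $\varphi\colon U\to V$ onto an open subset $V\subseteq\euclidean{n}$. Set $p=\varphi(x)$. Since $V$ is open, choose $r>0$ with $B(p,r)\subseteq V$; this ball is an open neighborhood of $p$ in $\euclidean{n}$, and $\varphi^{-1}\bigl[B(p,r)\bigr]$ is an open neighborhood of $x$ in $M$ that is homeomorphic to $B(p,r)$. Finally, $B(p,r)$ is homeomorphic to $\euclidean{n}$: after translating by $-p$ and rescaling by $1/r$ we may assume $p=0$ and $r=1$, and then the radial homeomorphism $z\mapsto z/(1-\lVert z\rVert)$ maps $B(0,1)$ onto all of $\euclidean{n}$, with continuous inverse $y\mapsto y/(1+\lVert y\rVert)$. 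Composing, $x$ has a neighborhood in $M$ homeomorphic to $\euclidean{n}$, which is condition $3)$.

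The main (and essentially only) obstacle is exhibiting the explicit homeomorphism between an open Euclidean ball and $\euclidean{n}$ and checking it is a genuine homeomorphism; once that standard fact is in hand, the cycle closes and all three conditions are seen to be equivalent. I would remark that no separation or paracompactness hypotheses are used anywhere, consistent with the convention stated after Definition \ref{defTOPManifoldLES}.
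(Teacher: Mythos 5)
Your proof is correct; the paper itself offers no argument for this proposition (it is stated as ``easy to prove''), and your cyclic chain $3)\Rightarrow 2)\Rightarrow 1)\Rightarrow 3)$, with the standard radial homeomorphism $z\mapsto z/(1-\lVert z\rVert)$ between $B(0,1)$ and $\euclidean{n}$, is exactly the expected argument. The only point worth a passing remark is that your step $2)\Rightarrow 1)$ tacitly reads ``neighborhood'' as ``open neighborhood'' (as Definition \ref{defTOPManifoldLES} does); this is harmless here, and is consistent with the paper's subsequent use of euclidean balls as a basis for the topology of $M$.
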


A neighborhood in $M$ homeomorphic to an open ball in $\euclidean{n}$ is called an \emph{euclidean
ball}. The set of all euclidean balls in $M$ form a basis for the topology of $M$. Being a
topological manifold is a topological property. If $M$ is a manifold and $f\colon M \to X$
is a homeomorphism, then $X$ is also a topological manifold. %with an atlas induced by $M$ and $f$.
Many examples of topological manifolds exists, we present the following small list:
%\begin{itemize}
 i) Euclidean space $\euclidean{n}$ is the prototypical $n$-manifold; 
ii) any discrete space is a 0-manifold; 
iii) surfaces are manifolds of dimension 2; 
iv) a circle is a 1-manifold. In fact, the $n$-sphere $\sphere{n}\subset \euclidean{n+1}$ is a
compact $n$-manifold; 
v) any open subset of a $n$-manifold is a $n$-manifold with the subspace topology; 
vi) if $M$ is a $m$-manifold and $N$ is a $n$-manifold, the topological product $M\times N$ is a
manifold such that $\dim (M\times N)=n+m$; 
vii) the disjoint union of a family of $n$-manifolds is a $n$-manifold.
%\end{itemize}
We will see more examples later.

\subsubsection*{Charts and atlases}

There is another way to define when a set $X$ is a topological manifold, without any explicit reference to some topology on $X$. This is done via the concepts of \emph{chart} and \emph{atlas}.

\begin{definition}\label{defChartAtlas}
 Let $X$ be a set. A \emph{coordinate chart} (or just \emph{chart}) on $X$ is a pair
$(\varphi, U)$ where $U\subseteq X$ and $\varphi$ is a bijective function of $U$ onto an open subset
of $\euclidean{n}$. An $n$-dimensional \emph{(topological) atlas} on $X$ is a collection $\{
(\varphi_i, U_i)
\}_{i\in I}$ of coordinate charts on $X$ such that
\begin{itemize}
 \item [a)] the sets $U_i$'s cover $X$;
\item [b)] for each $i,j\in I$, $\directimage{U_i \cap U_j}{\varphi_i}$ is an open subset of $\euclidean{n}$;
\item [c)] Each map (called a \emph{transition function}) $\varphi_j \varphi_i^{-1}\colon
\directimage{U_i \cap U_j}{\varphi_i} \to \directimage{U_i \cap U_j}{\varphi_j}$ is a homeomorphism
between open subsets of
$\euclidean{n}$.
\end{itemize}
\end{definition}

\begin{definition}\label{defInducedTOPAtlas}
Suppose that $\Phi=\{ (\varphi_i, U_i) \}_{i\in I}$ is a topological atlas on $X$. The topology $\tau_\Phi$ induced by $\Phi$ in $X$ is defined as follows: A set $V\subseteq X$ is open if and only if
$\directimage{V\cap U_i}{\varphi_i}$ is open in $\euclidean{n}$ for each $i\in I$.
The topology $\tau_\Phi$ has two very important properties:
\begin{itemize}
 \item Each set $U_i$ is open in $X$.
\item Each map $\varphi_i\colon U_i \to \directimage{U_i}{\varphi_i}$ is a homeomorphism.
\end{itemize}
\end{definition}

This says that the set $X$, equipped with the topology $\tau_\Phi$ becomes a topological manifold as
per Definition \ref{defTOPManifoldLES}. However, a set $X$ may have several different atlases
defined on it and two such atlases need not induce the same topology on $X$. We need to be able to
say when two distinct atlases induce the same topology on $X$.

\begin{definition}\label{defCompatibleAtlases}
 Two $n$-dimensional atlases $\{ (\varphi_i, U_i) \}_{i\in I}, \{ (\psi_\alpha, V_\alpha) \}_{\alpha \in \Lambda}$ on $X$ are \emph{compatible} if their union is an atlas on $X$, so that $\directimage{U_i \cap V_\alpha}{\varphi_i}$ and $\directimage{U_i \cap V_\alpha}{\psi_\alpha}$ are open and $\varphi_i \psi_\alpha^{-1}$ and $\psi_\alpha \varphi_i^{-1}$ are homeomorphisms for all $i\in I, \alpha \in \Lambda$.
\end{definition}

Compatibility is clearly an equivalence relation on the set of all atlases of
dimension $n$ on $X$. An equivalence class $\left[ \Phi \right]$ of atlases on $X$ is called a
\emph{TOP structure} on $X$. 

\begin{lemma}\label{lemmaTOPatlasescompatible}
Let $X$ be a set and $\Phi,\Psi$ two $n$-dimensional atlases on $X$. Then $\Phi$ and $\Psi$
are compatible if and only if they induce the same topology on $X$.
\end{lemma}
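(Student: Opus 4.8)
The plan is to prove the two implications separately, using throughout the two properties of $\tau_\Phi$ recorded in Definition~\ref{defInducedTOPAtlas}: every $U_i$ is $\tau_\Phi$-open and every $\varphi_i\colon U_i\to\varphi_i[U_i]$ is a homeomorphism onto an open subset of $\euclidean{n}$ (and symmetrically for $\Psi$).

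For the direction ``compatible $\Rightarrow$ same topology'', by symmetry it is enough to show $\tau_\Phi\subseteq\tau_\Psi$. Given $V\in\tau_\Phi$ and a chart $(\psi_\alpha,V_\alpha)\in\Psi$, I would use that $\{U_i\}_{i\in I}$ covers $X$ to write $\psi_\alpha[V\cap V_\alpha]=\bigcup_{i\in I}\psi_\alpha[V\cap V_\alpha\cap U_i]$ and so reduce to proving that each piece is open in $\euclidean{n}$. For this, note that $\varphi_i[V\cap U_i]$ is open (because $V\in\tau_\Phi$) and $\varphi_i[V_\alpha\cap U_i]$ is open (compatibility, Definition~\ref{defCompatibleAtlases}); since $\varphi_i$ is injective on $U_i$, their intersection equals $\varphi_i[V\cap V_\alpha\cap U_i]$, which is therefore an open subset of the domain $\varphi_i[U_i\cap V_\alpha]$ of the transition map. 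Applying the homeomorphism $\psi_\alpha\varphi_i^{-1}$ carries it to an open subset of $\psi_\alpha[U_i\cap V_\alpha]$, which is itself open in $\euclidean{n}$ by compatibility; and that image is exactly $\psi_\alpha[V\cap V_\alpha\cap U_i]$. Taking the union over $i$ gives $V\in\tau_\Psi$.

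For the converse, suppose $\tau_\Phi=\tau_\Psi=:\tau$. To see that $\Phi\cup\Psi$ is an atlas I only need to check the conditions of Definition~\ref{defChartAtlas} relating a $\Phi$-chart to a $\Psi$-chart, i.e.\ precisely the clauses of Definition~\ref{defCompatibleAtlases} (the conditions internal to $\Phi$ or to $\Psi$ hold already). Since $U_i\cap V_\alpha$ is $\tau$-open and $\varphi_i$ is a homeomorphism from $U_i$ (with its $\tau$-subspace topology) onto the open set $\varphi_i[U_i]\subseteq\euclidean{n}$, the image $\varphi_i[U_i\cap V_\alpha]$ is open in $\euclidean{n}$; symmetrically for $\psi_\alpha[U_i\cap V_\alpha]$. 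Finally $\psi_\alpha\varphi_i^{-1}$ is the composite of $\varphi_i^{-1}$ restricted to $\varphi_i[U_i\cap V_\alpha]$ (a homeomorphism onto $U_i\cap V_\alpha$ with its $\tau$-subspace topology) with $\psi_\alpha$ restricted to $U_i\cap V_\alpha$ (a homeomorphism onto $\psi_\alpha[U_i\cap V_\alpha]$), hence a homeomorphism; likewise for $\varphi_i\psi_\alpha^{-1}$. Thus $\Phi$ and $\Psi$ are compatible.

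The main obstacle is the first implication: one cannot argue chart-by-chart, since a single $V_\alpha$ may meet infinitely many of the $U_i$, so openness has to be established locally over the cover $\{U_i\}$ and then reassembled by a union; the one point needing care is checking that $\varphi_i[V\cap V_\alpha\cap U_i]$ genuinely sits inside the domain of the transition homeomorphism, so that ``the image of an open set under a homeomorphism is open'' legitimately applies. Everything else is routine bookkeeping with subspace topologies and images under bijections.
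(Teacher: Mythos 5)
Your proof is correct and follows essentially the same route as the paper's: in the forward direction both arguments show $\varphi_i[V\cap V_\alpha\cap U_i]$ is open (you via the intersection $\varphi_i[V\cap U_i]\cap\varphi_i[V_\alpha\cap U_i]$, the paper via first noting $V_\alpha\in\tau_\Phi$), push it through the transition homeomorphism $\psi_\alpha\varphi_i^{-1}$, and reassemble $\psi_\alpha[V\cap V_\alpha]$ as a union over the cover $\{U_i\}$; the converse direction is identical in both. No gaps.
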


\begin{proof}
Suppose that $\Phi=\{ (\varphi_i, U_i) \}_{i\in I},\Psi=\{ (\psi_\alpha, V_\alpha) \}_{\alpha \in \Lambda}$ are
compatible atlases on $X$ and let $U\in \tau_{\Phi}$. By hypothesis, $\Phi$ and $\Psi$ are
compatible and that means, by Definition \ref{defCompatibleAtlases}, that $\Phi \cup \Psi$ is
an atlas on $X$. Using property b) of Definition \ref{defChartAtlas}, we can prove that $V_\alpha\in
\tau_{\Phi}$ for each $\alpha$ and as $U\in \tau_{\Phi}$, $U \cap V_\alpha$ is open in the topology
$\tau_{\Phi}$ and that implies that $\directimage{U \cap V_\alpha \cap U_i}{\varphi_i}$ is open in
$\euclidean{n}$ for all $i$. By c) of Definition \ref{defChartAtlas}, the function
\[ \psi_\alpha \varphi^{-1}_i \colon \directimage{V_\alpha \cap U_i}{\varphi_i} \to \directimage{V_\alpha \cap
U_i}{\psi_\alpha}\]
is an homeomorphism, thus an open map, so that it sends the open set $\directimage{U \cap V_\alpha
\cap U_i}{\varphi_i}$ onto the open set $\directimage{U \cap V_\alpha \cap U_i}{\psi_\alpha}$. Now we know
that $\directimage{U \cap V_\alpha}{\psi_\alpha}=\directimage{(U \cap X) \cap V_\alpha}{\psi_\alpha}=\directimage{(U
\cap (\bigcup_i U_i)) \cap V_\alpha}{\psi_\alpha}=\directimage{(\bigcup_i (U \cap U_i)) \cap
V_\alpha}{\psi_\alpha}=\directimage{\bigcup_i (U \cap U_i \cap V_\alpha)}{\psi_\alpha}=\bigcup_i \directimage{U \cap
U_i \cap V_\alpha}{\psi_\alpha}$. In conclusion, $\directimage{U \cap V_\alpha}{\psi_\alpha} = \bigcup_i \directimage{U
\cap U_i \cap V_\alpha}{\psi_\alpha}$ and this says that $\directimage{U \cap V_\alpha}{\psi_\alpha}$ is open in
$\euclidean{n}$ for each $\alpha$, so that $U\in \tau_{\Psi}$. We have proved that
$\tau_{\Phi}\subseteq \tau_{\Psi}$ and to prove the other inclusion, the argument is symmetric.
Hence, $\tau_{\Phi} =\tau_{\Psi}$ and the two atlases induce the same topology on $X$.

Conversely, assume that $\tau_{\Phi} =\tau_{\Psi}$. Since $V_\alpha \in \tau_\Psi=\tau_\Phi$ then $\directimage{U_i \cap V_\alpha}{\varphi_i}$ is open and since $U_i \in \tau_\Phi=\tau_\Psi$ then $\directimage{V_\alpha \cap U_i}{\psi_\alpha}$ is open. By the properties of the topology $\tau_\Phi=\tau_\Psi$ (see \ref{defInducedTOPAtlas}), $\varphi_i$ and $\psi_\alpha$ are homeomorphisms, therefore $\varphi_i\psi_\alpha^{-1}$ and $\psi_\alpha\varphi_i^{-1}$ are homeomorphisms for all $i\in I, \alpha \in \Lambda$, so that $\Phi$ and $\Psi$ are compatible
atlases, according to Definition \ref{defCompatibleAtlases}.
\end{proof}

Also, if $M$ has a topology $\tau$
and an atlas defined on it, then the topology induced by the atlas is the same as $\tau$ if and only
if each chart $\varphi \colon U \to \euclidean{n}$ is a homeomorphism in the topology $\tau$. In
summary, a TOP structure $\left[ \Phi \right]$ on $X$ is characterized by the topology induced by
every atlas that belongs to $\left[ \Phi \right]$. It is easy to prove the 

\begin{lemma}\label{lemmaTOPSTRUCTTOPMANIFOLD}
 A set $M$ is a topological manifold if and only if it has a TOP structure.
\end{lemma}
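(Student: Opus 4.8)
The plan is to prove the two implications separately, using Definitions \ref{defTOPManifoldLES}, \ref{defChartAtlas} and \ref{defInducedTOPAtlas} together with Lemma \ref{lemmaTOPatlasescompatible}.

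For the forward direction, I would start with a topological $n$-manifold $(M,\tau)$. For each point $x\in M$, Definition \ref{defTOPManifoldLES} supplies an open neighborhood $U_x$ and a homeomorphism $\varphi_x\colon U_x \to W_x$ onto an open subset $W_x$ of $\euclidean{n}$. Set $\Phi = \{(\varphi_x, U_x)\}_{x\in M}$ and verify it is an atlas in the sense of Definition \ref{defChartAtlas}: condition a) is immediate since $x\in U_x$; for condition b), note that $U_x \cap U_y$ is open in $M$ and that $\varphi_x$ is an open map (being a homeomorphism onto an open subset of $\euclidean{n}$), so $\directimage{U_x \cap U_y}{\varphi_x}$ is open in $W_x$, hence in $\euclidean{n}$; for condition c), each transition map $\varphi_y \varphi_x^{-1}$ is a composite of restrictions of the homeomorphisms $\varphi_x^{-1}$ and $\varphi_y$, hence a homeomorphism between open subsets of $\euclidean{n}$. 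Thus $[\Phi]$ is a TOP structure on $M$. One may additionally observe, via the remark following Lemma \ref{lemmaTOPatlasescompatible}, that $\tau_\Phi = \tau$, since every chart of $\Phi$ is by construction a homeomorphism in the topology $\tau$.

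For the converse, suppose $M$ carries a TOP structure $[\Phi]$ and pick any representative atlas $\Phi = \{(\varphi_i, U_i)\}_{i\in I}$; by Lemma \ref{lemmaTOPatlasescompatible} the induced topology $\tau_\Phi$ does not depend on this choice. Equip $M$ with $\tau_\Phi$. Definition \ref{defInducedTOPAtlas} tells us that each $U_i$ is open in $\tau_\Phi$ and that each $\varphi_i\colon U_i \to \directimage{U_i}{\varphi_i}$ is a homeomorphism onto the open subset $\directimage{U_i}{\varphi_i}$ of $\euclidean{n}$. Since the $U_i$ cover $M$, every point of $M$ has an open neighborhood homeomorphic to an open subset of $\euclidean{n}$, so $(M,\tau_\Phi)$ is a topological $n$-manifold by Definition \ref{defTOPManifoldLES}.

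There is essentially no serious obstacle here; the only point requiring a little care is well-definedness in the converse --- ensuring that ``having a TOP structure'' is a property of the set $M$ and not of one particular atlas --- which is exactly what Lemma \ref{lemmaTOPatlasescompatible} guarantees. The remainder is routine bookkeeping with restrictions of homeomorphisms and finite intersections of open sets.
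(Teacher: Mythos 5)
Your proof is correct, and it is exactly the standard argument the paper has in mind: the paper states this lemma without proof (prefacing it with ``It is easy to prove''), and your two directions --- building the atlas $\{(\varphi_x,U_x)\}_{x\in M}$ from the locally euclidean structure, and reading off local euclidean neighborhoods from $\tau_\Phi$ via Definition \ref{defInducedTOPAtlas} --- fill in that omitted routine verification. Your appeal to Lemma \ref{lemmaTOPatlasescompatible} for independence of the induced topology from the chosen representative atlas is the right way to make ``has a TOP structure'' well defined as a property of the set $M$.
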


The most common way to define topological manifolds is using locally
euclidean spaces. The definitions of charts and atlases are very handy in the context of
differentiable and PL manifolds \cite{brickellclarkDIFF,hudson1969piecewise},
where the transition maps are required to have the additional properties of being $C^k$ $(k\in
\naturals \cup \{ \infty \} )$ and/or piecewise linear (PL) functions respectively.

\subsection{Submanifolds and embeddings}
  
Manifolds which are subsets of other manifolds are an important tool to study topological manifolds
and their properties. A topological manifold $N$ is a \emph{submanifold} of the manifold $M$ if and
only if $N\subset M$ and $N$ is a subspace of $M$ (that is, the topology of $N$ is subspace
topology). An easy example of a submanifold is any open set $U\subseteq M$ of any given manifold. 
The $n$-sphere $\sphere{n}$ is a compact submanifold of $\euclidean{n+1}$.

%\begin{definition}
A
continuous function $h\colon X \to Y$ is a \emph{(topological) embedding} if $h$ is a homeomorphism of 
$X$ onto $\directimage{X}{h}\subset Y$, where $\directimage{X}{h}$ is
equipped with the subspace topology induced by $Y$. Such an embedding is denoted by 
$h\colon X \hookrightarrow Y 
%\text{ or } X \stackrel{h}{\hookrightarrow} Y.
$
%\end{definition}
When there exists an embedding of $X$ onto $Y$, we can view $X$ as a subspace of $Y$. 
% Manifolds which are compact spaces have very nice properties and will become important for our
% purposes. Compact manifolds have finite atlases and most important, 

There are many facts
\cite{kirby1977foundational} in the theory of manifolds (topological, differentiable and/or PL)
which can be shown to be true using the well known result that every Hausdorff $n$-manifold 
$M$ (of any kind) embeds in some high dimensional euclidean space $\euclidean{q}$, where $q$ depends
on $n$. Numerous versions of this embedding theorem exist, the big difference between them being 
the dimension of the space $\euclidean{q}$. It was proven by Whitney \cite{WhitneyDIFF36} that if
$M$ is smooth ($C^\infty$), then it embeds in $\euclidean{2n}$ and this is the best result possible.
The same is true for the piecewise linear case using similar constructions to those used in the smooth
case. If $M$ has no additional structure, it can be embedded in $\euclidean{2n+1}$ and again, this
is the lowest possible dimension for the euclidean space. This last result can be proven by means of
dimension theory \cite{munkres2000topology,hurewicz1948dimension,engelking1995theory}.

We will study submanifolds and embeddings from the computational point of view in Sections
\ref{secComputableSubmanifolds} and \ref{secImbeddings}, where we will prove an 
effective (i.e. computable) version of the embedding Theorem.

% \begin{theorem}\label{thmImbeddingCompactTOPManifold}
%  Let $M^n$ be a compact Hausdorff manifold. Then for some $q$, there is an embedding $e\colon M
% \hookrightarrow \euclidean{q}$.
% \end{theorem}
% A simple proof of Theorem \ref{thmImbeddingCompactTOPManifold} can be found in
% \cite{1974TOPPROPERTIES}. We remark that this result is the ``easy version'' of the embedding
% theorem for topological manifolds.

\subsection{Computability theory}\label{secComputabilityTheory}

We only give a brief summary of definitions and terminology that we will be using in the rest of
the paper. The reader that wishes to check a full introduction to basic
computability theory and Type-2 theory of effectivity (TTE) can see the references
\cite{weihrauch2000computable,theoryrepresentations,Weihrauch:jucs_14_6:the_computable_multi_functions,kozen1997automata,cooper2004computability}. All the
concepts and prior results given here can be found mostly in \cite{weihrauch2000computable,CCAtutorial,KWECTOP}.

\subsubsection*{Basic notions of computability theory}

Let $A,B$ be sets. A \emph{multi-function}\footnote{The definition of multi-function used in this paper is that of standard computable analisys, it must not be confused with the traditional use of this term. In fact, we use multi-functions only to define an encoding of compact subsets of a topological space $X$. We do not need it elsewhere.} from $A$ to $B$ is a triple $f=(A, B, R_f)$ such that
$R_f\subseteq A \times B$ (this is the \emph{graph} of $f$). We will denote it by $f\colon
A \rightrightarrows B$. The inverse of $f$ is the multi-function $f^{-1}= (B, A, R_{f^{-1}})$.
For $X\subseteq A$, let $\directimage{X}{f} = \{ b\in B \mid (\exists a\in X) (a,b) \in R_f\}$,
$\domp{f} = \inverseimage{B}{f}$ and $\rangep{f} = \directimage{A}{f}$. For $a\in A$, let
$f(a)= \directimage{\{ a \}}{f}$. If it happens that for every $a\in A$, $f(a)$ contains at
most one element, $f$ can be treated as  a usual partial function denoted by $f\colon \subseteq
A\to B$. In contrast to relational composition, for multi-functions $f\colon A \rightrightarrows B$
and $g\colon B \rightrightarrows C$, we define the composition $g\circ f \colon A \rightrightarrows
B$ by $a\in \domp{g\circ f} \Leftrightarrow f(a)\subseteq \domp{g}$ and $g\circ f(a) =
\directimage{f(a)}{g}$ \cite[Section 3]{Weihrauch:jucs_14_6:the_computable_multi_functions}.

\begin{definition}\label{defMultifuncRestrictionRange}
 For a multi-function $f\colon X \rightrightarrows Y$  and $Z\subseteq Y$, define $f|^Z\colon 
X\rightrightarrows Z$ by $f|^Z(x) = f(x) \cap Z$ for all $x\in X$.
\end{definition}

Notice that when $f$ is a function, then the multi-funcion $f|^Z$ from Definition 
\ref{defMultifuncRestrictionRange} is simply the usual restriction of $f$ to the set 
$\inverseimage{Z}{f}$.

An \emph{alphabet} is any non-empty finite set $\Sigma = \{ a_1, \ldots, a_n \}$. We assume that any
alphabet that we use contains at least the symbols 0, 1. We denote the set of finite words over
$\Sigma$ by $\sigmastringset$ and with $\sigmaomegastringset$ the set of infinite
sequences\footnote{It is customary to treat an element $p$ of $\sigmaomegastringset$ as an
``infinite word'' $p(0)p(1)p(2)\cdots$.} $p\colon \naturals\to \Sigma$ over $\Sigma$. We will be
using the ``wrapping function'' $\iota \colon \sigmastringset \to \sigmastringset$, defined by
$\iota(a_1a_2\cdots a_k)= 110a_10a_20\cdots0a_k011,$
for coding words in such a way that $\iota(u)$ and $\iota(v)$ cannot overlap. We will be
using standard functions for finite or countable tupling on $\sigmastringset$
and $\sigmaomegastringset$, denoted by $\langle \cdot \rangle$, in particular, $\langle u_1,
\ldots, u_n \rangle = \iota(u_1) \cdots \iota(u_n)$, $\langle u, p \rangle = \iota(u)p, \langle p,q
\rangle=p(0)q(0)p(1)q(1)\cdots$ and $\langle p_0, p_1, \ldots \rangle\langle i, j \rangle =
p_i(j)$ for all $u,u_1,\ldots, u_n \in \sigmastringset$ and $p,q,p_i \in \sigmaomegastringset$
$(i=0,1,\ldots)$.
% We also need the projections $pr_i\colon \sigmastringset \to
% \sigmastringset$, $pr_i(\langle u_1, \ldots, u_i,\ldots, u_n \rangle) = u_i$.
For $u \in \sigmastringset$ and $w\in \sigmastringset \cup \sigmaomegastringset$, $u\ll w$ if and
only if $\iota(u)$ is a subword of $w$. 
% For $w\in \sigmastringset$ let $\widehat{w}$ be the longest
% subword $v\in 11\sigmastringset 11$ of $w$ (and the empty word if no such subword exists).

\subsubsection*{Tarski's decision method for the elementary algebra of the reals}

In order to prove some results in the next sections, we will be using a celebrated result from A.
Tarski \cite{Tarski51}. We first give the definition of \emph{elementary expression}.

\begin{definition}
 An \emph{elementary expression} in the algebra of the real numbers, is an expression build with
the following objects: Variables over the real numbers; constants $c\in \naturals$; the symbols 
$+,-,\cdot, \div$ which denote sum, subtraction, multiplication and division of 
real numbers respectively; the symbols $>, =$ that denote the relations 
``greater that'' and ``equal to'' respectively, of real numbers; the logic connectives 
$\vee$ (disjunction), $\wedge$ (conjunction), $\neg$ (negation) and $\Rightarrow$ (implication); 
the universal ($\forall$) and existential ($\exists$) quantifiers.
\end{definition}

Notice that in general, it is impossible to say something about sets of real numbers with
elementary expressions. Although we can give expressions for sets like $\{ 5, 7, 10000 \}$ (with
the elementary expression $x=5 \vee x=7 \vee x=10000$), it is impossible to write down with an
elementary expression the statement ``$x$ is an integer'', that is, the expression $x\in \integers$
is not elementary, neither is the following expression about integer equations:
$(\exists x, y, z \in \integers)(x^3 + y^3 = z^3)$; if such expressions were to be elementary, 
that would imply that all sentences 
of elementary number theory, are elementary expressions, and that would mean that Theorem 
\ref{thmTarskiDecisionMethod} below is false \cite{Tarski51}, thus the previous expressions about 
elements of $\integers$ are not elementary. However, all basic properties of order, sum and 
multiplication of the field $\reals$ can be expressed as elementary expressions. Also, many useful 
properties of the ring $\reals\left[ x_1, \ldots, x_n \right]$ ($n\geqslant 1$) can be given with 
elementary expressions.
% All this facts will be used in our work.
We are now ready to introduce Tarski's result.

\begin{theorem}\label{thmTarskiDecisionMethod}
 There exists an algorithm to decide, given an elementary expression in the algebra of the real
numbers, whether it is true or false.
\end{theorem}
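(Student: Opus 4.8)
The plan is to prove the theorem by \emph{quantifier elimination}: I will show that every elementary expression is equivalent, over $\reals$, to a quantifier-free one, and that this equivalent can be produced effectively. Once this is established the theorem follows at once, since an elementary \emph{sentence} (one with no free variables) becomes, after quantifier elimination, a Boolean combination of atomic formulas of the form $c > 0$ and $c = 0$ with $c$ an explicitly computed integer, and the truth value of such a formula is decided by direct arithmetic. A little preprocessing is needed first: every occurrence of $\div$ is removed by clearing denominators, replacing a subterm $s \div t$ by $s$ and recording the side condition $\neg(t = 0)$, and the connectives are used to rewrite the formula in prenex form whose matrix is a Boolean combination of atomic formulas $p = 0$ and $p > 0$ with each $p$ a polynomial whose coefficients lie in $\naturals$ (hence in $\integers$ after transposing terms across $=$ and $>$).

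First I would reduce quantifier elimination, by an induction on the number of quantifiers using the equivalence of $\neg\exists x\,\chi$ with $\forall x\,\neg\chi$ together with De Morgan, to the single step of eliminating one existential quantifier from a formula $\exists x\,\psi(x,\bar y)$ with $\psi$ quantifier-free. Putting $\psi$ in disjunctive normal form and distributing $\exists x$ over the disjunction, it suffices to treat $\exists x \bigwedge_j \sigma_j\bigl(p_j(x,\bar y)\bigr)$, where $p_1,\dots,p_k$ are polynomials in $x$ with coefficients in $\integers[\bar y]$ and each $\sigma_j$ prescribes one of the signs $<,=,>$ (a negated equality being the disjunction of $<$ and $>$). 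The task becomes to express, by a quantifier-free formula in $\bar y$, the condition that for the given parameter values there is a real $x$ at which $p_1,\dots,p_k$ realize the prescribed sign pattern.

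The heart of the argument is to settle this last problem by a Sturm-type computation. Regarding $p_1,\dots,p_k$ as univariate polynomials in $x$ over the real closed field $\reals$, the sign patterns that the tuple $(p_1,\dots,p_k)$ can realize on the real line are determined by the signs of a finite, effectively computable list of polynomial expressions in the coefficients of the $p_j$ — their leading coefficients, the principal subresultant coefficients of the relevant signed remainder (Sturm) sequences, and the resultants of pairs $p_i,p_j$ — through the numbers of sign variations of those sequences, which by the Sturm–Tarski theorem count the relevant real roots with signs. Since the coefficients of the $p_j$ are themselves polynomials in $\bar y$ over $\integers$, substituting them produces polynomials in $\integers[\bar y]$, and the resulting Boolean combination of sign conditions on these polynomials is the desired quantifier-free equivalent of $\exists x \bigwedge_j \sigma_j(p_j)$. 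Carrying the induction through eliminates every quantifier and yields the decision algorithm.

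The main obstacle is exactly this algebraic core, which Tarski's original argument develops in detail. One must prove, \emph{uniformly} in the parameters $\bar y$, that the sign data of the subresultant/Sturm sequences pin down precisely the realizable sign patterns, and one must handle the case analysis forced by vanishing leading coefficients — where the degree of some $p_j$ in $x$ drops — by recursing on the lower-degree specializations while keeping the whole procedure finite. Guaranteeing this uniformity, so that the construction outputs a \emph{single} quantifier-free formula in $\bar y$ rather than a family of formulas indexed by the degenerate cases, is the delicate point; everything else (prenexing, passage to disjunctive normal form, clearing of denominators, and the final evaluation of a variable-free Boolean combination of integer comparisons) is routine.
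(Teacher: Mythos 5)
The paper does not actually prove this theorem: it is quoted as a classical result and the reader is referred to Tarski's 1951 monograph, so there is no in-paper argument to compare yours against. Your outline is a faithful description of the standard route — reduction to eliminating a single existential quantifier from a conjunction of sign conditions, followed by the Sturm--Tarski machinery that re-expresses the existence of a point realizing a prescribed sign pattern as a Boolean combination of sign conditions on polynomials in the parameters — and it correctly identifies the preprocessing steps (clearing $\div$, prenexing, disjunctive normal form) as routine.

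That said, as a proof it is incomplete, and you say so yourself: the entire content of the theorem lives in the uniform sign-determination lemma (that the realizable sign patterns of $p_1,\dots,p_k$ on $\reals$ are pinned down by the signs of finitely many effectively computable polynomials in the coefficients, with the case analysis over vanishing leading coefficients kept finite and folded into a single quantifier-free formula). You name this as ``the main obstacle'' and defer it to Tarski's original argument rather than proving it. So what you have is an accurate roadmap of the classical proof, not a self-contained one. Given that the paper itself supplies only a citation, this is arguably no worse than the paper's own treatment; but if the goal were a complete proof, the Sturm--Tarski counting argument and the uniformity over degenerate specializations would still need to be carried out.
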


More details on Tarski's method can be found in \cite{Tarski51}. From now on, in this 
paper, whenever we need to apply the Tarski's decision method, we assume that we 
have an appropriate encoding of polynomial functions with rational coefficients as strings of 
$\sigmastringset$.

\subsubsection*{Type-2 theory of effectivity and topology}

Let $Y_0,\ldots,Y_n\in \{ \sigmastringset, \sigmaomegastringset \}$ and $Y= \prod_{i=1} Y_i$. A
function $f\colon \subseteq Y \to Y_0$ is called \emph{(Turing) computable} if for some Type-2 
machine $M$, $f$ is the function $f_M$  computed by $M$. Informally, a Type-2 machine is a Turing 
machine which reads from input files (tapes) with finite or infinite inscriptions, operates on some 
work tapes and writes to an output-only tape. For $Y_0=\sigmastringset$, $f_M(y)=w$ if $M$, on 
input 
$y$, halts with the string $w$ on the output tape, and for $Y_0=\sigmaomegastringset$, $f_M(y)=q$ 
if 
$M$ on input $y$, computes forever and writes $q\in \sigmaomegastringset$ on the output-only tape. 
The computable functions on $\sigmastringset$ and $\sigmaomegastringset$ are closed under 
composition and even under programming \cite{Weihrauch:jucs_14_6:the_computable_multi_functions,GTMs}. The composition of computable functions has a computable extension. If $W,Z\subseteq Y$, the 
set $W$ is called \emph{computable enumerable (c.e.) in } $Z$ if there exists a Type-2 machine $M$ 
which halts on input $y\in Y$ if and only if $y\in W$ for all $y\in Z$. Equivalently, $W$ is c.e. 
in 
$Z$ if $W=Z\cap \dom f$ for some computable function $f\colon \subseteq Y \to \sigmastringset$). 
When $Z=Y$, we ommit ``in $Z$''. 
% For $p\in \sigmaomegastringset$ and a Type-2 machine $M$ with $n + 
% 1$ input tapes, let $f_{M^p}(y) = f_M(p, y)$, this is the function computed by $M$ with the
% ``oracle'' $p$.

We equip $\sigmastringset$ with its discrete topology and $\sigmaomegastringset$ with the
topology generated by the base $\{ w\sigmaomegastringset \mid w\in \sigmastringset \}$ of open
sets. With these topologies, every computable function is continuous and every c.e. set is open.
% Furthermore, a function $f\colon \subseteq Y \to Y_0$ is continuous if and only if for some Type-2
% machine $M$ and some oracle $p\in \sigmaomegastringset$, $f_{M^p}$ extends $f$. 
% Finally, $W\subseteq
% Y$ is open if and only if for some Type-2 machine $M$ with output set $\sigmastringset$ and some
% oracle $p$, $W=\domp{f_{M^p}}$.

\subsubsection*{Notations and representations}

In TTE, computability on finite or infinite sequences of symbols is transferred to other sets by
means of notations and representations, where elements of $\sigmastringset$ or
$\sigmaomegastringset$ are used as ``concrete name'' of abstract objects. We will need the more
general concept of realization via \emph{multi-representations}\footnote{The only multi-representation 
that we need to use in this paper is the multi-representation 
$\kappa \colon \sigmaomegastringset \to \mathcal{K}$ of compact subsets of a topological space $X$.} 
(see \cite[Section 6]{Weihrauch:jucs_14_6:the_computable_multi_functions} for a detailed discussion,
and also \cite{schroder2003admissible,KWECTOP}).

\begin{definition}\label{defMultirepresentation}
A \emph{multi-representation} of a set $M$ is a surjective multi-function
$\gamma\colon Y \rightrightarrows M$ where $Y\in \{ \sigmastringset, \sigmaomegastringset \}$. If
$\gamma$ is single-valued, it is called simply a \emph{representation} of $M$ and if additionally, $Y=\sigmastringset$, then $\gamma$ is called a \emph{notation} of
the set $M$.
\end{definition}

Examples of multi-representations are the canonical notations $\nu_{\naturals}\colon
\sigmastringset \to \naturals$ and $\nu_{\rationals} \colon$ $\sigmastringset \to \rationals$ of the
natural numbers and the rational numbers respectively, and the single-valued representation $\rho
\colon \subseteq \sigmaomegastringset \to \reals$ of the real numbers \cite{weihrauch2000computable}
which is defined by 
\begin{equation}\label{eqdefEuclideanCauchyRep}
 \rho(\# w_0\# w_1\# w_2\# \cdots) = x \Longleftrightarrow \left| x -  \nu_{\rationals}(w_i) \right|
< 2^{-i}, \text{ for all } i\in \naturals.
\end{equation}
This is called the \emph{Cauchy representation} of $\reals$. This idea can be easily generalized to
a representation $\rho^n \colon \subseteq \sigmaomegastringset \to \euclidean{n}$ of $n$-dimensional
euclidean space for all $n\geq 0$. Mathematical examples of multi-representations will be given
later. 
% If $\nu \colon \subseteq \sigmastringset \to X$ is a notation and $\delta \colon \subseteq
% \sigmaomegastringset \rightrightarrows Y$ is a multi-representation for $X$ and $Y$ respectively,
% sometimes we will denote $\nu(w) \in X$ and $\delta(p)\subseteq Y$ by $\nu_w$ and $\delta_p$.

For multi-representations $\gamma_i\colon Y_i \rightrightarrows M_i$ $(0\leqslant i \leqslant n)$,
let $Y=\prod_{i=1} Y_i$, $M=\prod_{i=1} M_i$ and $\gamma\colon Y \rightrightarrows M$,
$\gamma(y_1, \ldots, y_n)=\prod_i \gamma_i(y_i)$. A partial function $h\colon \subseteq Y\to Y_0$
\emph{realizes} the multi-function $f\colon M \rightrightarrows M_0$ if $f(x) \cap \gamma_0 \circ
h(y)\neq \myemptyset$ whenever $x\in \gamma(y)$ and $f(x) \neq \myemptyset$. This means that $h(y)$
is a name of some $z\in f(x)$ if $y$ is a name of $x\in \dom f$. If $f\colon \subseteq M\to M_0$ is
single-valued, then $h(y)$ is a name of $f(x)$ if $y$ is a name of $x\in \dom f$. If only the
representations are single-valued, $\gamma_0 \circ h(y) \in f(x)$ if $\gamma(y)=x$.

The multi-function $f$ is called $(\gamma_1,\ldots,\gamma_n,\gamma_0)$-\emph{continuous
(-computable)} if it has a continuous (computable) realization. 
% If the multi-representations are 
% fixed, we occasionally say that $f$ is \emph{relatively continuous (relatively computable)}. 
The 
% relatively 
continuous (computable) functions are closed under composition , even more, they are
closed under GOTO-programming with indirect addressing
\cite{Weihrauch:jucs_14_6:the_computable_multi_functions,GTMs}. 
% We will apply this result without
% further mentioning.
%  Now we extend the definition of $\gamma$-c.e. sets \cite{weihrauch2000computable} to multi-representations.
% \begin{definition}\label{defmultirepresentationsCE}
%  With $\gamma_i$ and $\gamma$ from above, a 
 A
 point $x\in M_1$ is $\gamma_1$-computable if and only
if $x\in \gamma_1(p)$ for some computable $p\in \dom \gamma_1$. A set $S\subseteq M$ is
$(\gamma_1,\ldots,\gamma_n)$-c.e. 
% (-open) 
if there is a c.e. 
% (open) 
set $W\subseteq Y$ such that
\[ x\in S \Leftrightarrow y\in W \]
for all $x,y$ with $x\in \gamma(y)$.
% \end{definition}
Therefore, $S\subseteq M$ is $(\gamma_1,\ldots, \gamma_n)$-c.e. if and only if there is a Type-2
machine 
% (with oracle for the ``open'' case) 
that halts on input $y\in \dom \gamma$ if and only if
$y$ is a name of some $x\in S$. 
% Notice that $ \inverseimage{S}{\gamma\gamma}=S$ if $S$ is $\gamma$-open.

Finally, we say that $\gamma_1$ is \emph{reducible} to $\gamma_0$ ($\gamma_1 \leq \gamma_0$) if
$M_1\subseteq M_0$ and the inclusion $i_{M_1}\colon M_1 \hookrightarrow M_0$ is
$(\gamma_1,\gamma_0)$-computable. This means that some computable function $h$ translates
$\gamma_1$-names to $\gamma_0$-names, that is, $\gamma_1(p)\subseteq \gamma_0 \circ h(p)$.
\emph{Continuous reducibility} ($\gamma_1 \leq_t \gamma_0$) is defined analogously by means of
continuous functions. Computable and continuous equivalences are defined canonically: 
\[\gamma_1 \equiv \gamma_0\Leftrightarrow
\gamma_1 \leq \gamma_0 \wedge \gamma_0\leq \gamma_1 \quad\text{and}\quad \gamma_1\equiv_t \gamma_0
\Leftrightarrow \gamma_1 \leq_t \gamma_0 \wedge \gamma_0 \leq_t\gamma_1.\]
Two multi-representations induce the same computability (continuity) if and only if they are
computably equivalent (continuously equivalent). For $X\subseteq M_1$, if $X$ is $\gamma_0$-c.e. and
$\gamma_1\leq \gamma_0$, then $X$ is $\gamma_1$-c.e. 
% and if $X$ is $\gamma_0$-open and
% $\gamma_1\leq_t \gamma_0$, then $X$ is $\gamma_1$-open.

From the representations $\gamma_1,\gamma_2$, a multi-representation $\left[ \gamma_1,\gamma_2
\right]$ of the product $M_1 \times M_2$ is defined by 
\[\left[ \gamma_1,\gamma_2 \right]\langle y_1,y_2 \rangle = \gamma_1(y_1) \times \gamma_2(y_2).\]
Since the function $(x_1,x_2) \mapsto
(x_1,x_2)$ is $(\gamma_1,\gamma_2, \left[ \gamma_1,\gamma_2 \right])$-computable and
$(x_1,x_2)\mapsto x_i$ is $(\left[ \gamma_1,\gamma_2 \right], \gamma_i)$-computable $(i=1,2)$, a
multi-function $g\colon M_1\times M_2 \rightrightarrows M_0$ is $(\gamma_1,\gamma_2,
\gamma_0)$-computable if and only if $g$ is $(\left[ \gamma_1,\gamma_2 \right],
\gamma_0)$-computable. A set is $(\gamma_1, \gamma_2)$-open if and only if it is $\left[
\gamma_1,\gamma_2 \right]$-open, etc. 
% By the conjunction of two multi-representations $\gamma$ and
% $\delta$, defined by $(\gamma \wedge \delta)\langle p, q \rangle = \gamma(p) \cap \delta(q)$,
% the information from the two names is combined into a single name.

In this paper, we will be using the canonical notation given in \cite{KWECTOP} of finite
%  and
% of countable 
subsets and apply Lemma \ref{lemmaFiniteCountableSetsNotationsRepresentations} without
further mentioning.

\begin{definition}\label{defFSCSnotations}
 For the notation $\mu \colon \subseteq \sigmastringset \to M$ 
%  and representation $\gamma \colon \subseteq \sigmaomegastringset \to Y$, 
 define 
%  notations 
the notation $\mu^{\text{fs}}$
%  $\mu^{\text{fs}},\mu^{\text{cs}}$ and representations $\gamma^{\text{fs}},\gamma^{\text{cs}}$
   of finite
%  and countable 
 subsets of $M$
%   and $Y$ 
as follows\footnote{Remember that $v \ll w \Leftrightarrow$ $\iota(u)$ is a subword of $w$. See Section \ref{secComputabilityTheory}.} ($w\in \sigmastringset$) %, q, p_i \in \sigmaomegastringset$ and $a_i\in \Sigma$, $i=0,1,\ldots$)
\begin{equation}\label{eqdefNotationFiniteSet}
 \mu^{\text{fs}}(w) = W \Longleftrightarrow \begin{cases}
                                (\forall v \ll w) v\in \domp{\mu}, \\
				W = \{ \mu(v) \mid v \ll w \};
                               \end{cases}
\end{equation}
% \begin{equation}\label{eqdefNotationCountableSet}
%  \mu^{\text{cs}}(q) = W \Longleftrightarrow \begin{cases}
%                                 (\forall v \ll q) v\in \domp{\mu}, \\
% 				W = \{ \mu(v) \mid v \ll q \};
%                                \end{cases}
% \end{equation}
% \begin{equation}\label{eqdefRepresentationFiniteSet}
%  \gamma^{\text{fs}}(q) = Z \Longleftrightarrow \begin{cases}
%                                 (\exists n)(\exists p_1, \ldots, p_n\in \domp{\gamma}) \\
% 				q=\langle 1^n,p_1,\ldots,p_n \rangle, \\
% 				Z = \{ \gamma(p_1),\ldots, \gamma(p_n) \};
%                                \end{cases}
% \end{equation}
% \begin{equation}\label{eqdefRepresentationCountableSet}
%  \gamma^{\text{cs}}\langle a_0p_0,a_1p_1,\cdots \rangle = Z \Longleftrightarrow \begin{cases}
%                                 (\forall i)(a_i = 0 \Rightarrow p_i \in \domp{\gamma}) \\
% 				Z = \{ \gamma(p_i) \mid a_i=0 \}.
%                                \end{cases}
% \end{equation}
% If $a_i\neq 0$ for all $i$, then $\gamma^{\text{cs}}\langle a_0p_0,a_1p_1,\cdots \rangle =
% \myemptyset$.
\end{definition}

\begin{lemma}\label{lemmaFiniteCountableSetsNotationsRepresentations}
Let $\mu$ be as in Definition \ref{defFSCSnotations}.
%  For notations $\mu$ and notations or representations $\beta,\gamma$,
\begin{enumerate}
 \item The set $\domp{\mu^{\text{fs}}}$ is computable if $\domp{\mu}$ is computable,
\item The function $(x,y)\mapsto \{ x, y\}$ is $(\gamma, \gamma, \gamma^{\text{fs}})$-computable,
% \item $\gamma^\prime \leq \gamma^{\text{fs}} \leq \gamma^{\text{cs}}$, where $\gamma^\prime(w)=\{
% \gamma(w) \}$,
\item $\gamma^\prime \leq \gamma^{\text{fs}}$, where $\gamma^\prime(w)=\{
\gamma(w) \}$,
% \item $\beta^{\text{fs}} \leq \gamma^{\text{fs}}$ and $\beta^{\text{cs}} \leq \gamma^{\text{cs}}$
% if $\beta \leq \gamma$.
\item $\beta^{\text{fs}} \leq \gamma^{\text{fs}}$ if $\beta \leq \gamma$.
\end{enumerate}
\end{lemma}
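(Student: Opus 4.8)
The plan is to work directly from the defining equivalence \eqref{eqdefNotationFiniteSet} for $\mu^{\text{fs}}$, unwinding it into decidable/semidecidable conditions on strings, and then exhibiting in each of the four parts an explicit Type-2 machine (or a reduction function) witnessing the required computability. The common technical device throughout is that, for a fixed $w\in\sigmastringset$, the finitely many strings $v$ with $v\ll w$ (i.e.\ $\iota(v)$ a subword of $w$) can be effectively enumerated from $w$: since $\iota$ is injective and its images cannot overlap, one scans $w$ for occurrences of the pattern $110\cdots011$ and reads off each candidate $v$; call this finite list $v_1,\ldots,v_k$. All four statements then reduce to manipulating this list together with the notation $\mu$.

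For part (1): by \eqref{eqdefNotationFiniteSet}, $w\in\domp{\mu^{\text{fs}}}$ iff $v\in\domp{\mu}$ for every $v\ll w$. Given $w$, compute the finite list $v_1,\ldots,v_k$ as above and test membership of each $v_i$ in $\domp{\mu}$; since $\domp{\mu}$ is computable (decidable) by hypothesis, this is a finite conjunction of decidable tests, hence $\domp{\mu^{\text{fs}}}$ is decidable. (If $k=0$, i.e.\ no $v\ll w$, then $\mu^{\text{fs}}(w)=\myemptyset$ and $w\in\domp{\mu^{\text{fs}}}$; this edge case should be noted explicitly.) For part (2): given names which for a notation just means $u,v$ with $\mu(u)=x$, $\mu(v)=y$, I would simply output $w=\iota(u)\iota(v)=\langle u,v\rangle$; then the strings $\ll w$ are exactly $u$ and $v$ (using non-overlap of $\iota$-images), so $\mu^{\text{fs}}(w)=\{\mu(u),\mu(v)\}=\{x,y\}$, and $w\mapsto\langle u,v\rangle$ is trivially computable. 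For part (3): the reduction function sends $w$ to $\iota(w)$; then $w\ll\iota(w)$ and it is the only such string, so $\gamma^{\text{fs}}(\iota(w))=\{\gamma(w)\}=\gamma'(w)$, i.e.\ $\gamma'(w)\subseteq\gamma^{\text{fs}}\circ h(w)$ with $h(w)=\iota(w)$ computable, giving $\gamma'\leq\gamma^{\text{fs}}$. For part (4): if $h$ is a computable function translating $\beta$-names to $\gamma$-names (so $\beta(w)=\gamma(h(w))$ for $w\in\domp\beta$), define $H$ on a $\beta^{\text{fs}}$-name $w$ by enumerating $v_1,\ldots,v_k$ with $v_i\ll w$ and outputting $\iota(h(v_1))\cdots\iota(h(v_k))$; then the strings $\ll H(w)$ are the $h(v_i)$, so $\gamma^{\text{fs}}(H(w))=\{\gamma(h(v_i)):i\le k\}=\{\beta(v_i):i\le k\}=\beta^{\text{fs}}(w)$, and $H$ is computable, so $\beta^{\text{fs}}\leq\gamma^{\text{fs}}$.

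The only point that takes genuine care — and which I would treat as the main obstacle, though it is more bookkeeping than mathematics — is the correctness of "reading off the blocks'': one must invoke the non-overlap property of $\iota$ (that distinct $\iota(u)$ cannot overlap inside a word, which is exactly what the $110\cdots011$ wrapper is designed to guarantee) to be sure that the set $\{v:v\ll w\}$ is computed correctly and, in parts (2)–(4), that concatenating $\iota$-blocks produces a word whose set of $\ll$-subwords is precisely the intended finite set with no spurious extra matches. Once that lemma about $\iota$ is in hand (it is part of the standard TTE toolkit and is recalled in the cited references), each of the four clauses is the routine machine construction sketched above.
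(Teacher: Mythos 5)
Your proposal is correct, and it is the standard argument: the paper itself does not prove this lemma (it is imported from the cited reference \cite{KWECTOP} and used ``without further mentioning''), so there is no in-paper proof to diverge from. The one point you rightly single out --- that concatenations $\iota(u_1)\cdots\iota(u_k)$ contain no spurious $\iota$-blocks, so that $\{v \mid v \ll w\}$ is exactly the intended finite set in parts (2)--(4) --- is indeed the only nontrivial step, and it follows from the non-overlap property of the wrapping function that the paper records in Section \ref{secComputabilityTheory}; with that in hand, each of your four machine constructions is sound (including the empty-list edge case in parts (1) and (4)).
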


%\begin{remark*}
 {\it Remark.} In some cases, the notation $\mu^{\text{fs}}$ will be used to give abstract 
 names to finite unions
or intersections of a collection $\mathcal{C}$ of subsets of a set $X$, where $\mu \colon
\sigmastringset \to \mathcal{C}$. To avoid confusion about which set operation we refer to with the
notation $\mu^{\text{fs}}$, we will denote $\mu^{\text{fs}}$ as $\fsunionbasenotation{\mu}$ when we
want to encode the finite union of elements of $\mathcal{C}$ and when we want to use
$\mu^{\text{fs}}$ to describe finite intersections of the members of $\mathcal{C}$,
we write $\fsintersectionbasenotation{\mu}$ instead of $\mu^{\text{fs}}$.

\subsection{Computable topology}\label{secComputableTopology}

In this section, we introduce the basic concepts of computable topology that we need in order to
define computable manifolds. Our main reference is \cite{KWECTOP}. The most important definition
that we need is that of \emph{computable topological space}.

\begin{definition}[\cite{KWECTOP}]\label{defCompTOPSPACE}
 An \emph{effective topological space} is a 4-tuple 
$\mathbf{X} = (X,\tau, \beta, \nu)$ such that $(X,\tau)$ is a $T_0$-space and $\nu \colon \subseteq 
\sigmastringset \to \beta$ is a notation of a base $\beta\subseteq \tau$. $\mathbf{X}$ is a
\emph{computable topological space} if $\dom \nu$ is computable and there exists a c.e. set
$S\subseteq (\dom \nu)^3$ such that
\begin{equation}\label{eqdefCompTOPSPACE}
\nu(u) \cap \nu(v) = \bigcup \{ \nu(w) \mid (u, v, w) \in S \} \text{ for all }
u,v,w\in \dom \nu,
\end{equation}
\end{definition}
% Since the base $\beta$ has a notation, it must be countable. $T_0$ spaces with countable bases are
% called \emph{second countable} \cite{engelking1989general}. 
Equation \eqref{eqdefCompTOPSPACE}
says that in a computable topological space the intersection of base elements is
computable\footnote{Using the representation $\theta$ of open sets given in Definition
\ref{defPositiveRepresentationsCompTOPSPACE}, intersection of base elements is
$(\nu,\nu, \theta)$-computable.}. 
% For every effective topological space
% there is some not necessarily c.e. set $S$ such that \eqref{eqdefCompTOPSPACE} is true. 

\begin{example}[\emph{Computable euclidean space}]\label{exampleCompTOPSPACES}
 Define $\mathbf{R}^n = (\euclidean{n}, \tau^n, \beta^n, \mu^n)$ such that $\tau^n$ is
the usual topology on $\euclidean{n}$ and $\mu^n$ is a canonical notation of the set of all open
balls with rational radii and center. The inclusion of a rational ball in the intersection of
two rational balls can be decided, therefore $\mathbf{R}^n$ is a computable topological space. Since this
computable space is very important throughout all the paper, we fix once and for all the notation
used in this example to denote the elements of $\mathbf{R}^n$. When $n=1$, $\mathbf{R}^1 =
\mathbf{R}$.
\end{example}

More examples of computable topological spaces can be found in \cite{weihrauch2000computable} and
\cite{KWECTOP}. The definition of computable topological space allows us to define 
% several multi-
representations of the points of $X$ and many classes of subsets (open, closed, compact, etc)
\cite{KWECTOP}. All these representations are an important piece to define computability inside
computable spaces. We will use the notations $\fsunionbasenu$ and $\fsintersectionbasenu$ of the
finite unions and finite intersections respectively, of the base sets of a computable topological 
space $\mathbf{X}=(X, \tau, \beta, \nu)$, see Definition \ref{defFSCSnotations} for details on the 
definition of the notations $\fsunionbasenu$ and $\fsintersectionbasenu$. As usual, we assume that 
$\bigcap \myemptyset = X$ and $\bigcup \myemptyset = \myemptyset$.

\begin{definition}
% [Positive information representations]
\label{defPositiveRepresentationsCompTOPSPACE}
 Let $\mathbf{X} = (X,\tau, \beta, \nu)$ be an effective to\-po\-lo\-gi\-cal space. Define a
representation $\delta\colon \subseteq \sigmaomegastringset \to X$ of $X$, a representation 
$\theta \colon \subseteq \sigmaomegastringset \to \tau$ of the set of open sets, a representation 
$\psi \colon \subseteq \sigmaomegastringset \to \mathcal{A}$ of the set of closed sets 
%, a multi-representation $\widetilde{\psi} \colon \subseteq \sigmaomegastringset \rightrightarrows 
% \powersetof{X}$ of the power set 
and a multi-representation $\kappa \colon \subseteq 
\sigmaomegastringset \rightrightarrows \mathcal{K}$ of the set of compact subsets of $X$ as follows:
\begin{equation}\label{eqdefPositiveRepresentationsCompTOPSPACE_deltaplus}
 x=\delta(p) \Longleftrightarrow (\forall w\in \sigmastringset)(w \ll p \Leftrightarrow w\in 
\dom \nu \text{ and } x\in \nu(w)),
\end{equation}
\begin{equation}\label{eqdefPositiveRepresentationsCompTOPSPACE_thetaplus}
 W=\theta(p) \Longleftrightarrow \begin{cases}
                                w \ll p \Rightarrow w\in \dom \nu, \\
				W = \bigcup_{w \ll p} \nu(w),
                               \end{cases}
\end{equation}
\begin{equation}\label{eqdefPositiveRepresentationsCompTOPSPACE_psiplus}
 A=\psi(p) \Longleftrightarrow (\forall w\in \sigmastringset)(w \ll p \Leftrightarrow w\in 
\dom \nu \text{ and } A \cap \nu(w) \neq \myemptyset),
\end{equation}
% \begin{equation}\label{eqdefPositiveRepresentationsCompTOPSPACE_psipluspowerset}
%  B \in \widetilde{\psi}(p) \Longleftrightarrow (\forall w\in \sigmastringset)(w \ll p
% \Leftrightarrow w\in \dom \nu \text{ and } B \cap \nu(w) \neq \myemptyset),
% \end{equation}
\begin{equation}\label{eqdefPositiveRepresentationsCompTOPSPACE_kappaplus}
 K \in \kappa(p) \Longleftrightarrow (\forall z\in \sigmastringset)(z \ll p \Leftrightarrow z\in 
\domp{\fsunionbasenu} \text{ and } K\subseteq \fsunionbasenu(z)),
\end{equation}
\end{definition}

The previous representations give us information about the represented object, that is, about its 
contents. There exists representations that complement the previous ones, in the sense that these 
representations can say something about the ``complements'' of the objects that the representations 
$\delta, \theta$ and $\psi$ are encoding. For our work, we will need only one of these representations.

\begin{definition}%[Closed sets negative information representation]
\label{defNegativeRepresentationsCompTOPSPACE}
  Let $\mathbf{X} = (X,\tau, \beta, \nu)$ be an effective topological space. Define a representation
% $\delta^-\colon \subseteq \sigmaomegastringset \to X$ of the points, a representation $\theta^-
% \colon \subseteq \sigmaomegastringset \to \tau$ of the set of open sets, a representation 
$\psi^-\colon \subseteq \sigmaomegastringset \to \mathcal{A}$ of the set of closed sets by 
$\psi^-(p) = X - \theta(p)$.
% \begin{equation}\label{eqdefNegativeRepresentationsCompTOPSPACE_deltaminus}
%  \delta^-(p)=x \Longleftrightarrow \theta(p)=X - \overline{\{ x \}},
% \end{equation}
% \begin{equation}\label{eqdefNegativeRepresentationsCompTOPSPACE_thetaminus}
%  \theta^-(p) = X - \psi(p),
% \end{equation}
% \begin{equation}\label{eqdefNegativeRepresentationsCompTOPSPACE_psiminus}
%  \psi^-(p) = X - \theta(p).
% \end{equation}
\end{definition}
We fix once and for all the following convention for all
the representations induced by the space $\mathbf{R}^n$: Each representation $\gamma$ from
Definitions \ref{defPositiveRepresentationsCompTOPSPACE} and
\ref{defNegativeRepresentationsCompTOPSPACE} will be denoted as $\gamma^n$.
% \end{example}

\begin{definition}[\cite{weihrauch2000computable,theoryrepresentations,schroder2003admissible,MatthiasAdmisibility}]\label{defCompTOPSPACEAdmissibleReps}
A representation $\gamma\colon \sigmaomegastringset \to X$ of a topological space $(X, \tau)$ is
called \emph{admissible} (with respect to $\tau$) if it is continuous and $\gamma^\prime\leq_t 
\gamma$ for every con\-ti\-nuous function $\gamma^\prime\colon \sigmaomegastringset \to X$. 
\end{definition}

\begin{proposition}[\cite{weihrauch2000computable}]\label{propDeltaAdmissibleCompTOPSPACE}
If $\mathbf{X}=(X,\tau,\beta,\nu)$ is an effective topological space, then the representation 
$\delta$ is admissible with respect to the topology $\tau$. 
\end{proposition}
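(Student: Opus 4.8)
The plan is to verify directly the two clauses in the definition of admissibility (Definition \ref{defCompTOPSPACEAdmissibleReps}) for $\delta$: first that $\delta$ is continuous, and then that every continuous total function $\gamma^\prime\colon\sigmaomegastringset\to X$ satisfies $\gamma^\prime\leq_t\delta$. For continuity, since $\beta$ is a base of $\tau$ it suffices to show that $\inverseimage{\nu(w)}{\delta}$ is open in $\dom\delta$ for each $w\in\dom\nu$. By the defining equivalence \eqref{eqdefPositiveRepresentationsCompTOPSPACE_deltaplus}, for $p\in\dom\delta$ and $w\in\dom\nu$ we have $\delta(p)\in\nu(w)$ iff $w\ll p$; hence $\inverseimage{\nu(w)}{\delta}=\dom\delta\cap\{\,p\mid w\ll p\,\}$, and the set $\{\,p\in\sigmaomegastringset\mid \iota(w)\text{ is a subword of }p\,\}$ is a union of cylinders $u\sigmaomegastringset$, hence open. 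So $\delta$ is continuous.

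For the universality clause, fix a continuous $\gamma^\prime\colon\sigmaomegastringset\to X$. Since $\gamma^\prime$ is continuous and each $\nu(w)$ is open, $\inverseimage{\nu(w)}{\gamma^\prime}$ is open in $\sigmaomegastringset$; thus whenever $\gamma^\prime(p)\in\nu(w)$ there is an initial word $v$ of $p$ (one with $p\in v\sigmaomegastringset$) that \emph{forces} this, meaning $\directimage{v\sigmaomegastringset}{\gamma^\prime}\subseteq\nu(w)$. I would then define $h\colon\sigmaomegastringset\to\sigmaomegastringset$ as follows: on input $p$, interleave over all $w\in\dom\nu$ a search through the finite prefixes of $p$ for a forcing word for $w$, and whenever one is found for some $w$, append $\iota(w)$ to the output. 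Every finite portion of $h(p)$ depends on only a finite prefix of $p$, so $h$ is continuous; and since only continuous (not computable) reducibility is required, it is harmless that the family of forcing predicates $\{\,v\mid\directimage{v\sigmaomegastringset}{\gamma^\prime}\subseteq\nu(w)\,\}$, $w\in\dom\nu$, may fail to be c.e.\ — it is simply wired into $h$.

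It then remains to check that $h$ translates correctly, i.e.\ $\delta(h(p))=\gamma^\prime(p)$ for all $p$. Write $x=\gamma^\prime(p)$. If $w\in\dom\nu$ and $x\in\nu(w)$, then by continuity of $\gamma^\prime$ some prefix of $p$ forces it, so $\iota(w)$ is eventually emitted and $w\ll h(p)$; conversely, $w\ll h(p)$ can only arise because the construction wrote $\iota(w)$ for some $w\in\dom\nu$ after seeing a forcing word, whence $x\in\nu(w)$. The crucial point here is the non-overlap property of the wrapping function $\iota$: it guarantees that $\iota(w)$ is a subword of a concatenation $\iota(w_1)\iota(w_2)\cdots$ exactly when $w\in\{w_1,w_2,\dots\}$, so no spurious base indices are introduced; and $h(p)\in\dom\delta$ because $\beta$ covers $X$ and hence $x$ lies in at least one $\nu(w)$. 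Thus $w\ll h(p)\iff(w\in\dom\nu\wedge x\in\nu(w))$ for all $w$, which by \eqref{eqdefPositiveRepresentationsCompTOPSPACE_deltaplus} is precisely $\delta(h(p))=x=\gamma^\prime(p)$. Hence $\gamma^\prime\leq_t\delta$, and with the first paragraph this proves $\delta$ admissible. I expect the main obstacle to be exactly this universality step: assembling the prefix-forcing searches over all $w\in\dom\nu$ into a single genuinely continuous function, and — above all — verifying the ``only if'' half of the $\delta$-name condition, where the non-overlap property of $\iota$ does the essential work.
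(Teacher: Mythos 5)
The paper offers no proof of this proposition --- it is imported directly from \cite{weihrauch2000computable} --- so the only benchmark is the standard argument there, which is exactly what you give: continuity of $\delta$ via the observation that $\inverseimage{\nu(w)}{\delta}$ is the trace on $\dom\delta$ of the open set $\{p \mid w\ll p\}$, and universality via a continuous prefix-forcing translation $h$ whose correctness rests on the non-overlap property of $\iota$ and on the fact that only continuous (not computable) reducibility is required. Your proof is correct; the one point worth making explicit is that the interleaved search must keep emitting $\iota(w)$ for \emph{every} forcing prefix it encounters (not only the first one found for each $w$), so that $h(p)$ is genuinely an element of $\sigmaomegastringset$ even when $\gamma^\prime(p)$ lies in only finitely many base sets.
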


It can be proven that all the other (single-valued) representations of Definitions 
\ref{defPositiveRepresentationsCompTOPSPACE} and \ref{defNegativeRepresentationsCompTOPSPACE} are 
admissible with respect to appropriated topologies \cite{schroder2003admissible}. We now present a 
result which gives some nice properties of the representations of a computable topological space 
$\mathbf{X}$ and the union and intersection operation between subsets of $X$, the proof can be 
found in \cite{KWECTOP}.

\begin{theorem}[\cite{KWECTOP}, Theorem 11]\label{thmTheorem11ECT}
 Let $\mathbf{X}=(X,\tau,\beta,\nu)$ be a computable topological space.
\begin{enumerate}
 \item Finite intersection on open sets is $(\nu^{\text{fs}},\theta)$-computable and 
$(\theta^{\text{fs}},\theta)$-computable. % NECESARIO
% \item Union on open sets is $(\theta^{\text{cs}},\theta)$-computable.
\item On closed sets, finite union is $((\psi^-)^{\text{fs}},\psi^-)$-com\-pu\-ta\-ble.
% and intersection is $((\psi^-)^{\text{cs}},\psi^-)$-com\-pu\-ta\-ble.
% \item On the at most countable collection $\mathcal{B}$ of closed sets, the function 
% $\mathcal{B}\mapsto \overline{\bigcup \mathcal{B}}$ is $(\psi^{\text{cs}},\psi)$-computable.
\item On the compact sets, finite union is $(\kappa^{\text{fs}},\kappa)$-computable.
% \item The function $(K, A)\mapsto K \cap A$ for compact $K$ and closed $A$ is 
% $(\kappa,\psi^-,\kappa)$-computable.
\end{enumerate}
\end{theorem}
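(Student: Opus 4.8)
The plan is to establish item~1 directly from the defining property \eqref{eqdefCompTOPSPACE} of a computable topological space --- the c.e.\ set $S\subseteq(\dom\nu)^3$ with $\nu(u)\cap\nu(v)=\bigcup\{\nu(w)\mid(u,v,w)\in S\}$ --- and then to obtain items~2 and~3 from item~1 by de Morgan's laws together with the definitions of $\psi^-$ and $\kappa$. The heart of item~1 is an ``engine'': a Type-2 machine that, given finitely many strings $v_1,\dots,v_k\in\dom\nu$, writes a $\theta$-name (in the sense of \eqref{eqdefPositiveRepresentationsCompTOPSPACE_thetaplus}) of $\bigcap_{i=1}^{k}\nu(v_i)$. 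For $k\leqslant 1$ this is immediate: one directly writes a $\theta$-name of the single base element $\nu(v_1)$, respectively of $X=\bigcap\myemptyset$ (using that $\dom\nu$ is decidable and $\bigcup\beta=X$). For the inductive step, semi-enumerate the c.e.\ section $T=\{w\mid(v_1,v_2,w)\in S\}$ and, for each $w$ found, recursively run the engine on $(w,v_3,\dots,v_k)$, dovetailing over $w\in T$ and over the recursive calls, and emit every base-element string any of them produces. Since all those strings lie in $\dom\nu$ (because $S\subseteq(\dom\nu)^3$ and each $v_i\in\dom\nu$), and since by the distributive law $A\cap\bigcup_j B_j=\bigcup_j(A\cap B_j)$ the union of the emitted base elements is $\bigcup_{w\in T}\bigl(\nu(w)\cap\bigcap_{i=3}^{k}\nu(v_i)\bigr)=\bigl(\bigcup_{w\in T}\nu(w)\bigr)\cap\bigcap_{i=3}^{k}\nu(v_i)=\bigcap_{i=1}^{k}\nu(v_i)$, the output is a legal $\theta$-name.

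Item~1 then follows at once. For $(\nu^{\text{fs}},\theta)$-computability, read the strings $v_1,\dots,v_k\in\dom\nu$ off a $\nu^{\text{fs}}$-name of the finite family $\{\nu(v_1),\dots,\nu(v_k)\}$ and apply the engine. For $(\theta^{\text{fs}},\theta)$-computability, a $\theta^{\text{fs}}$-name codes $\theta$-names $p_1,\dots,p_k$ of open sets $W_1,\dots,W_k$; using $W_i=\bigcup_{w\ll p_i}\nu(w)$ and distributing, $\bigcap_{i=1}^k W_i=\bigcup\bigl\{\,\bigcap_{i=1}^k\nu(w_i)\mid w_i\ll p_i\text{ for each }i\,\bigr\}$, so one dovetails over all tuples $(w_1,\dots,w_k)$ with $w_i\ll p_i$ (detectable by reading the $p_i$), runs the engine on each tuple, and emits everything produced; the emitted base elements all lie in $\dom\nu$, and their union is exactly $\bigcap_i W_i$.

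For item~2, recall $\psi^-(p)=X-\theta(p)$, so a $(\psi^-)^{\text{fs}}$-name of closed sets $A_1,\dots,A_k$ is, up to the coding, a list of $\theta$-names of $X-A_1,\dots,X-A_k$; since $X-\bigcup_iA_i=\bigcap_i(X-A_i)$, feeding these to the $(\theta^{\text{fs}},\theta)$-realizer of finite intersection just constructed yields a $\theta$-name of $X-\bigcup_iA_i$, i.e.\ a $\psi^-$-name of $\bigcup_iA_i$. For item~3, unwind \eqref{eqdefPositiveRepresentationsCompTOPSPACE_kappaplus}: extract $\kappa$-names $p_1,\dots,p_k$ of $K_1,\dots,K_k$ from a $\kappa^{\text{fs}}$-name, and note that for $z\in\dom\fsunionbasenu$ (a decidable set, by Lemma \ref{lemmaFiniteCountableSetsNotationsRepresentations}) one has $z\ll p_i\Leftrightarrow K_i\subseteq\fsunionbasenu(z)$, hence $\bigcup_iK_i\subseteq\fsunionbasenu(z)\Leftrightarrow(\forall i)\,z\ll p_i$. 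So produce the output $p$ by dovetailing: enumerate $\dom\fsunionbasenu$, and for each $z$ search every $p_i$ in parallel for the subword $\iota(z)$, emitting $\iota(z)$ once it has appeared in all of them; by the properties of the coding $\iota$ this gives $z\ll p\Leftrightarrow(\forall i)\,z\ll p_i\Leftrightarrow\bigcup_iK_i\subseteq\fsunionbasenu(z)$, so $\bigcup_iK_i\in\kappa(p)$, and since $\bigcup_iK_i$ is compact there are infinitely many such $z$, so $p$ is genuinely infinite.

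The main obstacle is item~1: one must organize the nested dovetailing so that the base elements emitted have union \emph{precisely} the target intersection --- which requires the finite inductive unfolding of \eqref{eqdefCompTOPSPACE} together with the distributive law above --- and, equally important, verify that every symbol written belongs to an honest $\theta$-name, for which $S\subseteq(\dom\nu)^3$ and decidability of $\dom\nu$ are exactly what is needed; a routine padding (re-emitting a found base element, or writing $0^{\omega}$, which is a $\theta$-name of $\myemptyset$, when none is ever found) ensures each output is an infinite word, as a $\sigmaomegastringset$-valued realizer requires. Once item~1 is settled, items~2 and~3 are straightforward de Morgan computations, the only care points being the edge cases $\bigcap\myemptyset=X$ and singleton families, and the infinitude of the $\kappa$-name in item~3.
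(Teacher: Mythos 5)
Your argument is correct, and it is essentially the standard proof: the paper itself gives no proof of this theorem but defers it to \cite{KWECTOP} (Theorem 11), where the same strategy is used --- unfolding the c.e.\ witness set $S$ of Definition \ref{defCompTOPSPACE} by induction and distributivity to realize finite intersections of base (then open) sets, passing to closed sets via $\psi^-(p)=X-\theta(p)$ and De Morgan, and handling compact sets through the finite-subcover characterization \eqref{eqdefPositiveRepresentationsCompTOPSPACE_kappaplus}. Your attention to the edge cases ($\bigcap\myemptyset=X$, padding to ensure infinite output words, non-overlapping of the $\iota$-blocks) covers exactly the points where such a dovetailing argument could otherwise go wrong.
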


The following result tell us something about the computability of some basic decision problems in a 
computable topological space and the representations given in Definition 
\ref{defPositiveRepresentationsCompTOPSPACE}. 
% The result is also proven in \cite{KWECTOP} and 
it 
will 
be useful in this paper.

\begin{lemma}[\cite{KWECTOP}, Corollary 14]\label{lemCorollary14ECT}
 Let $\mathbf{X}=(X,\tau,\beta,\nu)$ be a computable topological space. Then for all points $x\in 
X$ and open sets $W$
% , closed sets $A$ and compact sets $K$ 
of $X$, the decision problem ``$x\in W$'' is $(\delta, \theta)$-c.e.
% \begin{enumerate}
%  \item ``$x\in W$'' is $(\delta, \theta)$-c.e. % NECESARIO
% %  \item ``$K \subseteq W$'' is $(\kappa, \theta)$-c.e.
% %  \item ``$A \cap W \neq \myemptyset$'' is $(\psi, \theta)$-c.e.
% % \item ``$K \cap A = \myemptyset$'' is $(\kappa, \psi^-)$-c.e.
% \end{enumerate}
\end{lemma}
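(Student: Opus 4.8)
The plan is to exhibit a single Type-2 machine $M$ which, on input a pair $\langle p,q\rangle$ where $p$ is a $\delta$-name of a point $x\in X$ and $q$ is a $\theta$-name of an open set $W\in\tau$, halts if and only if $x\in W$. Since a set is $(\delta,\theta)$-c.e. if and only if it is $[\delta,\theta]$-c.e., this is exactly what is required. The only facts we need are the defining conditions of the two representations: by \eqref{eqdefPositiveRepresentationsCompTOPSPACE_deltaplus}, for every $w\in\sigmastringset$ we have $w\ll p$ if and only if $w\in\dom\nu$ and $x\in\nu(w)$; by \eqref{eqdefPositiveRepresentationsCompTOPSPACE_thetaplus}, every $w\ll q$ satisfies $w\in\dom\nu$, and $W=\bigcup_{w\ll q}\nu(w)$. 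Note that the full strength of $\mathbf{X}$ being computable (rather than merely effective) is not needed.

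The key observation is the equivalence
\[ x\in W \ \Leftrightarrow\ (\exists w\in\sigmastringset)\,\bigl(w\ll p \ \wedge\ w\ll q\bigr). \]
For ``$\Leftarrow$'': if $w\ll p$ then $w\in\dom\nu$ and $x\in\nu(w)$, while $w\ll q$ gives $\nu(w)\subseteq\bigcup_{w'\ll q}\nu(w')=W$; hence $x\in\nu(w)\subseteq W$. For ``$\Rightarrow$'': if $x\in W=\bigcup_{w\ll q}\nu(w)$, choose $w$ with $w\ll q$ and $x\in\nu(w)$; since $w\ll q$ forces $w\in\dom\nu$, condition \eqref{eqdefPositiveRepresentationsCompTOPSPACE_deltaplus} yields $w\ll p$, so this $w$ witnesses the right-hand side.

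It then remains to observe that the right-hand side is semi-decidable from $\langle p,q\rangle$: recall that $w\ll p$ means that $\iota(w)$ is a subword of the infinite inscription $p$, which $M$ can confirm by reading a long enough prefix of $p$ and comparing, and similarly for $w\ll q$. So $M$ dovetails over all $w\in\sigmastringset$ and all prefix lengths of $p$ and $q$, and halts as soon as it finds some $w$ together with prefixes of $p$ and of $q$ each containing $\iota(w)$ as a subword. By the displayed equivalence, $M$ halts precisely when $x\in W$, so the decision problem ``$x\in W$'' is $(\delta,\theta)$-c.e. I do not expect any real obstacle here; the statement falls out directly from the shapes of the $\delta$- and $\theta$-names, and the only mildly delicate point is the routine bookkeeping of the dovetailed subword search in the infinite sequences $p$ and $q$.
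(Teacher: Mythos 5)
Your proof is correct, and it is the standard argument for this fact: the paper itself does not reprove the lemma but cites it from \cite{KWECTOP} (Corollary 14), where the proof proceeds exactly as you describe, via the equivalence $x\in W \Leftrightarrow (\exists w)(w\ll p \wedge w\ll q)$ and a dovetailed subword search. Your side remark that only effectivity (not full computability) of $\mathbf{X}$ is used is also accurate.
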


\subsubsection*{Predicate spaces}

An important class of computable topological spaces can be constructed from very simple
assumptions. Let $X$ be any set and $\sigma\subseteq 2^X$. We may say ``$x$ has property
$U$'' if $x\in U$. For each $x\in X$, let 
\[ \mathcal{P}_x(X)=\{ U\in \sigma \mid x\in U \}. \]

\begin{definition}\label{defAtomicPRedicate}
 Let $X$ be any set. An \emph{effective predicate space} is a triple $\mathbf{Z}=(X, \sigma, \lambda)$ 
 such that $\sigma \subseteq 2^X$ is countable and $\bigcup \sigma = X$, $\lambda \colon \subseteq
\sigmastringset \to \sigma$ is a notation of $\sigma$ and the following assertion holds
\begin{equation}\label{eqdefAtomicPRedicate}
 (\forall x,y \in X)(x=y \Longleftrightarrow \mathcal{P}_x(X)=\mathcal{P}_y(X)).
\end{equation}
$\mathbf{Z}$ is a \emph{computable predicate space} if $\dom \lambda$ is computable. Define the 
representation $\delta_{\mathbf{Z}}\colon \subseteq \sigmaomegastringset \to X$ of $X$ by
$\delta_{\mathbf{Z}}(p)=x \Longleftrightarrow (\forall w\in \sigmastringset)(w \ll p
\Leftrightarrow w\in \dom \lambda \wedge x \in \lambda(w))$. Let 
$T(\mathbf{Z})=(X, \tau_\lambda, \beta_\lambda, \nu_\lambda)$ where $\beta_\lambda$ is
the set of all finite intersections of sets from $\sigma$, $\nu_\lambda =
\fsintersectionbasenotation{\lambda} \colon \subseteq \sigmastringset \to \beta_\lambda$ and
$\tau_\lambda$ is the set of all unions of subsets from $\beta_\lambda$.
\end{definition}

% Each element $U\in \sigma$ is called an \emph{atomic predicate}.
% Since the set $\beta_\lambda$ of the finite intersections of sets in $\sigma$ is closed under
% intersection, it is a base of the topology $\tau_\lambda$. The notation
% $\nu_\lambda(\iota(u_1)\cdots \iota(u_k))=\lambda(u_1) \cap \cdots \cap \lambda(u_k)$ can be called
% the notation by formal finite intersection.

\begin{lemma}[\cite{KWECTOP}]\label{lemmaCompPredicateSpaceISCompTOPSPACE}
  Let $\mathbf{Z}=(X, \sigma, \lambda)$ be an effective predicate space.
\begin{enumerate}
 \item $T(\mathbf{Z})$ is an effective topological space, which is computable if $\mathbf{Z}$ is
computable
(that is, if $\domp{\lambda}$ is computable).
\item Let $\delta_{T(\mathbf{Z})}$ be the inner representation of points for $T(\mathbf{Z})$. Then
$\delta_{T(\mathbf{Z})}
\equiv \delta_{\mathbf{Z}}$.
\item For every representation $\gamma_0$ of a subset $Y\subseteq X$, the set $\{ (x,U) \in
Y\times \sigma \mid x\in U \}$ is $(\gamma_0,\lambda)$-c.e. if and only if $\{ (x, V) \in Y\times
\beta_\lambda \mid x\in V \}$ is $(\gamma_0, \nu_\lambda)$-c.e.
\end{enumerate}
\end{lemma}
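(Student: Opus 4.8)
The plan is to prove the three parts separately, each time by unwinding the definitions of the finite‑intersection notation $\nu_\lambda = \fsintersectionbasenotation{\lambda}$, of the inner representations in Definition \ref{defPositiveRepresentationsCompTOPSPACE}, and of computable enumerability, and then invoking Lemma \ref{lemmaFiniteCountableSetsNotationsRepresentations}. The single recurring device is that every $U \in \sigma$ equals the one‑term intersection $\bigcap \{ U \}$, so converting a $\lambda$‑code of $U$ into a $\nu_\lambda$‑code of $U$ (and conversely) is computable by Lemma \ref{lemmaFiniteCountableSetsNotationsRepresentations}(3), and this conversion preserves the predicate ``$x \in U$''. For part (1) I would first check that $\tau_\lambda$ is a topology: arbitrary unions are built into its definition, $X = \bigcup \sigma$ is a union of members of $\beta_\lambda$, and closure under binary intersection follows from $\left( \bigcup_i B_i \right) \cap \left( \bigcup_j C_j \right) = \bigcup_{i,j} (B_i \cap C_j)$ together with the fact that an intersection of two finite intersections of members of $\sigma$ is again a finite intersection of members of $\sigma$, hence lies in $\beta_\lambda$. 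By construction $\beta_\lambda$ is a base and $\nu_\lambda$ a notation of it (the empty word names $\bigcap \myemptyset = X$), and the $T_0$ axiom is exactly condition \eqref{eqdefAtomicPRedicate}. For the computable case, $\domp{\nu_\lambda} = \domp{\lambda^{\text{fs}}}$ is computable when $\domp{\lambda}$ is, by Lemma \ref{lemmaFiniteCountableSetsNotationsRepresentations}(1); and since $\nu_\lambda(u) \cap \nu_\lambda(v)$ is again a finite intersection of members of $\sigma$, there is a total computable $g$ on $\domp{\nu_\lambda} \times \domp{\nu_\lambda}$ with $\nu_\lambda(g(u,v)) = \nu_\lambda(u) \cap \nu_\lambda(v)$ (concatenate the two lists of $\iota$‑blocks), so $S = \{ (u,v,g(u,v)) \mid u,v \in \domp{\nu_\lambda} \}$ is decidable, hence c.e., and witnesses \eqref{eqdefCompTOPSPACE}.

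For part (2), note that a $\delta_{\mathbf{Z}}$‑name of $x$ lists exactly the $w \in \domp{\lambda}$ with $x \in \lambda(w)$, while a $\delta_{T(\mathbf{Z})}$‑name of $x$ lists exactly the $w \in \domp{\nu_\lambda}$ with $x \in \nu_\lambda(w)$. To obtain $\delta_{\mathbf{Z}} \leq \delta_{T(\mathbf{Z})}$: from a $\delta_{\mathbf{Z}}$‑name $p$ of $x$, dovetail over all $w \in \domp{\nu_\lambda}$; each such $w$ has only finitely many blocks $c \ll w$, all lying in $\domp{\lambda}$, and $x \in \nu_\lambda(w)$ iff $c \ll p$ for every such $c$, a semidecidable condition; emit $\iota(w)$ whenever it is confirmed. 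For the reverse reduction: from a $\delta_{T(\mathbf{Z})}$‑name $q$ of $x$, for each $w \in \domp{\lambda}$ compute (Lemma \ref{lemmaFiniteCountableSetsNotationsRepresentations}(3)) a $\nu_\lambda$‑code $w'$ of $\bigcap \{ \lambda(w) \} = \lambda(w)$, semidecide $w' \ll q$, and emit $\iota(w)$ once confirmed; since $x \in \lambda(w) \Leftrightarrow x \in \nu_\lambda(w') \Leftrightarrow w' \ll q$, this produces a valid $\delta_{\mathbf{Z}}$‑name of $x$.

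For part (3), we again use the one‑term‑intersection conversion, now uniformly together with a $\gamma_0$‑name of $x$. If $\{ (x,V) \mid x \in V \}$ is $(\gamma_0, \nu_\lambda)$‑c.e. via a machine $N$, then on input a $\gamma_0$‑name of $x$ and a $\lambda$‑code $u$, convert $u$ to a $\nu_\lambda$‑code of $\lambda(u)$ and run $N$; it halts iff $x \in \lambda(u)$, so $\{ (x,U) \mid x \in U \}$ is $(\gamma_0, \lambda)$‑c.e. Conversely, if $\{ (x,U) \mid x \in U \}$ is $(\gamma_0, \lambda)$‑c.e. via a machine $M$, then on input a $\gamma_0$‑name of $x$ and a $\nu_\lambda$‑code $v$, run $M$ on that $\gamma_0$‑name paired with each block $c \ll v$ (all in $\domp{\lambda}$) in parallel and halt once all of these finitely many runs halt — halting at once when there are no blocks, i.e. when $v$ names $\bigcap \myemptyset = X$; this halts iff $x \in \bigcap_{c \ll v} \lambda(c) = \nu_\lambda(v)$, so $\{ (x,V) \mid x \in V \}$ is $(\gamma_0, \nu_\lambda)$‑c.e.

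I expect the only delicate point to be in part (2): one must make sure the dovetailed enumerations emit precisely the $\iota$‑wrapped codes demanded by the defining conditions of $\delta_{\mathbf{Z}}$ and $\delta_{T(\mathbf{Z})}$ — no more and no fewer — and that the code $w'$ produced by Lemma \ref{lemmaFiniteCountableSetsNotationsRepresentations}(3) genuinely lies in $\domp{\nu_\lambda}$. Everything else is routine bookkeeping with the subword relation $\ll$ and the conventions $\bigcap \myemptyset = X$ and $\bigcup \myemptyset = \myemptyset$.
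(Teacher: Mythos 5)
The paper does not actually prove this lemma---it is imported verbatim from \cite{KWECTOP}---so there is no in-paper argument to compare against; your proof is a correct, self-contained verification along the standard lines, with the one-term-intersection conversion $\lambda(u)=\nu_\lambda(\iota(u))$ carrying parts (2) and (3) and the concatenation-of-blocks map witnessing \eqref{eqdefCompTOPSPACE} in part (1). The only phrasing to tighten is that in part (2) you should dovetail over all of $\sigmastringset$ rather than ``over $\domp{\nu_\lambda}$'' (which need not be c.e.\ when the predicate space is merely effective), observing that the semidecidable certificate ``every block $c \ll w$ satisfies $c \ll p$'' already forces $w \in \domp{\nu_\lambda}$, so no prior enumeration of that domain is needed.
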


Roughly speaking, a $\delta_{\mathbf{Z}}$-name of a point is a list of all of its atomic
predicates, while a $\delta_{T(\mathbf{Z})}$-name is a list of all finite intersections of such sets.
Clearly, the two representations are equivalent.

\begin{example}\label{exampleCompPREDSPACES}
 Define $\mathbf{Z} \defeq (\reals, \sigma, \lambda)$ such that $\dom \lambda=\dom 
\nu_\rationals$, $\lambda(w) \defeq (\nu_\rationals(w),\nu_\rationals(w) + 1)\subset \reals$ and 
$\sigma \defeq \rangep{\lambda}$. For each $w\in \dom \lambda$, $\lambda(w)$ is the open interval 
in 
$\reals$ with the endpoints $\nu_\rationals(w)$ and $\nu_\rationals(w) + 1$. We claim that 
$\mathbf{Z}$ is a computable predicate space. By definition, the set $\dom \lambda$ is a computable 
subset of $\sigmastringset$, because $\dom\nu_\rationals$ is computable. Now we need to show that 
property \eqref{eqdefAtomicPRedicate} is satisfied by $\mathbf{Z}$, to do this, we will prove that 
for $x,y\in \reals$,
\[ x\neq y \Rightarrow \mathcal{P}_x(\reals) \neq \mathcal{P}_y(\reals). \]
So, assume that we have $x,y\in \reals$ such that $x\neq y$. Without loss of generality, suppose 
that $x < y$. Then we can find $q\in \rationals$ such that $x < q < y$ and $d(q,y) < \frac{1}{2}$. 
This implies that $y\in Q$, where $Q=(q, q+ 1)$, so that $Q\in \mathcal{P}_y(\reals)$. Also, as $x 
< q$, we have that $x\notin Q$, thus $Q\notin \mathcal{P}_x(\reals)$, therefore 
$\mathcal{P}_x(\reals) \neq \mathcal{P}_y(\reals)$. So that $\mathbf{Z}$ fulfills Definition 
\ref{defAtomicPRedicate}. By Lemma \ref{lemmaCompPredicateSpaceISCompTOPSPACE}, 
$T(\mathbf{Z})=(\reals, \tau^1, \beta_\lambda, \nu_\lambda)$ is a computable topological space. The 
computable space $T(\mathbf{Z})$ and the computable space $\mathbf{R}$ from Example 
\ref{exampleCompTOPSPACES} are equivalent as we shall see in Definition 
\ref{defEquivCompTOPSPACES}. 
This example can be generalized to show that for any $n\in \naturals$, $\euclidean{n}$ has the 
structure of a computable predicate space $\mathbf{Z}^n$ such that $T(\mathbf{Z}^n)$ is equivalent 
to $\mathbf{R}^n$.
\end{example}

%\subsubsection*{Subspaces and products of computable topological spaces}
\subsubsection*{Subspaces of computable topological spaces}

We need to consider restrictions and products of effective topological spaces \cite{KWECTOP}. Let
$\mathbf{X}=(X,\tau, \beta, \nu)$ be an effective topological space. For a subspace $B\subseteq X$,
define the restriction $\ctopcompsubspace{\mathbf{X}}{B}=(B,\tau_B, \beta_B, \nu_B)$ of $X$ to $B$ 
by $\domp{\nu_B} = \dom \nu,\nu_B(w) = \nu(w) \cap B, \beta_B= \rangep{\nu_B}$ and 
$\tau_B = \{W\cap B\mid W\in \tau \}$. Let $\delta_B,\theta_B,\ldots, \psi_B^{-}$ be the 
representations for $\ctopcompsubspace{\mathbf{X}}{B}$ from Definitions 
\ref{defPositiveRepresentationsCompTOPSPACE} and
\ref{defNegativeRepresentationsCompTOPSPACE}. Remember from Definition 
\ref{defMultifuncRestrictionRange} that for a multi-function $f\colon X \rightrightarrows Y$  and 
$Z\subseteq Y$, the multi-function $f|^Z\colon X\rightrightarrows Z$ is defined by $f|^Z(x) = 
f(x) \cap Z$ for all $x\in X$. The next result is proven in \cite[Lemma 26]{KWECTOP}.

\begin{lemma}\label{lemmaPropertiesEffectiveTOPSUBSPACE}
 $\ctopcompsubspace{\mathbf{X}}{B}$ is an effective topological space, which is computable if $\mathbf{X}$ is
computable. Also, the following properties are satisfied:
\begin{enumerate}
 \item $\delta_B=\delta|^B$,
\item $\theta_B(p) = \theta(p) \cap B$ for all $p\in\domp{\theta_B}=\dom \theta$,
\item $\psi^{-}_B(p) = \psi^{-}(p) \cap B$ for all $p\in\domp{\psi^{-}_B}=\dom \psi^{-}$,
\item $\psi_B|^{\mathcal{C}}=\psi|^{\mathcal{C}}$ for $\mathcal{C}=\{ C\subseteq B \mid C \text{
closed in } X \}$,
\item $\kappa_B|^{\mathcal{L}}=\kappa|^{\mathcal{L}}$ for $\mathcal{L}=\{ K\subseteq B \mid K \text{
compact in } X \}$.
\end{enumerate}
\end{lemma}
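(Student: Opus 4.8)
The plan is to first check that $\ctopcompsubspace{\mathbf{X}}{B}$ meets Definition~\ref{defCompTOPSPACE}, and then to read off the five identities by unfolding Definitions~\ref{defPositiveRepresentationsCompTOPSPACE} and \ref{defNegativeRepresentationsCompTOPSPACE} for $\ctopcompsubspace{\mathbf{X}}{B}$, using the one structural fact $\nu_B(w)=\nu(w)\cap B$ (for $w\in\dom\nu_B=\dom\nu$) together with the observation that $\cap B$ distributes over arbitrary unions. For the first part: $(B,\tau_B)$ is $T_0$, because if $x\ne y$ in $B$ then some $W\in\tau$ contains exactly one of $x,y$ and $W\cap B\in\tau_B$ separates them in $B$. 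Since $\nu$ is a notation of a base $\beta$ of $\tau$, the family $\beta_B=\rangep{\nu_B}\subseteq\tau_B$ is a base for $\tau_B$: any $U\in\tau_B$ has the form $W\cap B$ with $W=\bigcup_{w\in J}\nu(w)$ for some $J\subseteq\dom\nu$, so $U=\bigcup_{w\in J}\bigl(\nu(w)\cap B\bigr)=\bigcup_{w\in J}\nu_B(w)$; and $\nu_B$ is by construction a surjection $\dom\nu\to\beta_B$, hence a notation of $\beta_B$. When $\mathbf{X}$ is computable, $\dom\nu_B=\dom\nu$ is computable, and the c.e.\ set $S\subseteq(\dom\nu)^3$ witnessing \eqref{eqdefCompTOPSPACE} for $\nu$ works verbatim for $\nu_B$: intersecting $\nu(u)\cap\nu(v)=\bigcup\{\nu(w)\mid(u,v,w)\in S\}$ with $B$ and distributing gives $\nu_B(u)\cap\nu_B(v)=\bigcup\{\nu_B(w)\mid(u,v,w)\in S\}$. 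Thus $\ctopcompsubspace{\mathbf{X}}{B}$ is an effective topological space, computable whenever $\mathbf{X}$ is, and in particular $\delta_B,\theta_B,\psi_B,\kappa_B,\psi_B^{-}$ are well-defined.

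Next I would dispatch (1)--(3), the straightforward ones. The key propagation is $\fsunionbasenotation{\nu_B}(z)=\fsunionbasenu(z)\cap B$ (same domain, and the finite union defining the $\mathrm{fs}$-notation commutes with $\cap B$). From \eqref{eqdefPositiveRepresentationsCompTOPSPACE_thetaplus}, $\dom\theta_B=\dom\theta=\{p\mid (\forall w\ll p)\,w\in\dom\nu\}$ and $\theta_B(p)=\bigcup_{w\ll p}\nu_B(w)=\bigl(\bigcup_{w\ll p}\nu(w)\bigr)\cap B=\theta(p)\cap B$, which is (2). Property (3) is then immediate: $\dom\psi_B^{-}=\dom\theta_B=\dom\psi^{-}$ and $\psi_B^{-}(p)=B-\theta_B(p)=B\cap\bigl(X-\theta(p)\bigr)=\psi^{-}(p)\cap B$. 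For (1), note that the codomain of $\delta_B$ is $B$, so in \eqref{eqdefPositiveRepresentationsCompTOPSPACE_deltaplus} the clause ``$x\in\nu_B(w)$'' is equivalent to ``$x\in\nu(w)$'' once $x\in B$; hence $\delta_B(p)=x$ iff $x\in B$ and $\delta(p)=x$, i.e.\ $\delta_B$ is exactly the restriction $\delta|^{B}$ of the (single-valued, by $T_0$) representation $\delta$ to $\inverseimage{B}{\delta}$.

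Finally (4) and (5), which involve the $|^{\mathcal{C}}$ and $|^{\mathcal{L}}$ operators and where the real care is needed. The unifying point is that on a set $A$ already contained in $B$ the extra intersection with $B$ is vacuous: $A\cap\nu(w)\cap B=A\cap\nu(w)$, and $A\subseteq\fsunionbasenu(z)\cap B\iff A\subseteq\fsunionbasenu(z)$. For (4): if $A\in\mathcal{C}$, so $A\subseteq B$ and $A$ is closed in $X$ (hence closed in $B$, so $A$ lies in the codomain of $\psi_B$), then the condition \eqref{eqdefPositiveRepresentationsCompTOPSPACE_psiplus} defining $\psi_B(p)=A$ refers only to $A\cap\nu_B(w)=A\cap\nu(w)$ and so coincides with the condition defining $\psi(p)=A$; using that $\psi$ and $\psi_B$ are single-valued (again from $T_0$) one checks both directions and that the domains $\inverseimage{\mathcal{C}}{\psi_B}$ and $\inverseimage{\mathcal{C}}{\psi}$ agree, giving $\psi_B|^{\mathcal{C}}=\psi|^{\mathcal{C}}$. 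For (5): for $K\subseteq B$, ``$K$ compact in $B$'' and ``$K$ compact in $X$'' coincide (compactness is absolute), and $K\subseteq\fsunionbasenotation{\nu_B}(z)=\fsunionbasenu(z)\cap B$ reduces to $K\subseteq\fsunionbasenu(z)$; so by \eqref{eqdefPositiveRepresentationsCompTOPSPACE_kappaplus}, $K\in\kappa_B(p)\iff K\in\kappa(p)$ for every such $K$, and intersecting both sides with $\mathcal{L}$ yields $\kappa_B|^{\mathcal{L}}=\kappa|^{\mathcal{L}}$. I expect the only genuine obstacle to be exactly this bookkeeping in (4) and (5): tracking the domains of the partial (multi-)functions, checking both inclusions in the multi-valued case, and invoking absoluteness of closedness-in-$B$ and compactness for subsets of $B$; every individual set-theoretic manipulation is elementary.
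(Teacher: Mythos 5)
Your proof is correct. Note that the paper does not prove this lemma itself — it imports it verbatim as Lemma~26 of \cite{KWECTOP} — and your argument (unfolding Definitions~\ref{defPositiveRepresentationsCompTOPSPACE} and \ref{defNegativeRepresentationsCompTOPSPACE} for $\nu_B(w)=\nu(w)\cap B$, distributing $\cap B$ over unions, reusing the witness set $S$ for \eqref{eqdefCompTOPSPACE}, and using absoluteness of compactness and of closedness for subsets of $B$ in items (4)--(5)) is exactly the standard one given there.
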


\subsubsection*{The product of computable topological spaces}

For $i=1,2$ let $\mathbf{X}_i=(X_i,\tau_i, \beta_i, \nu_i)$ be effective topological spaces with
representations $\delta_i,\theta_i,\ldots,\psi_i^{-}$ from Definitions
\ref{defPositiveRepresentationsCompTOPSPACE} and \ref{defNegativeRepresentationsCompTOPSPACE}. To
convert the space $X_1\times X_2$ with the product topology into an effective topological space
define the product $\overline{\mathbf{X}}=(X_1\times X_2,\overline{\tau}, \overline{\beta},
\overline{\nu})$ of $\mathbf{X}_1$ and $\mathbf{X}_2$ such that $\dom \overline{\nu} = \{ \langle
u_1, u_2 \rangle \mid (u_1,u_2) \in \dom \nu_1 \times \dom \nu_2 \}$, $\overline{\nu}(\langle u_1,
u_2 \rangle) = \nu_1(u_1) \times \nu_2(u_2)$, $\overline{\beta}=\rangep{\overline{\nu}}$ and
$\overline{\tau}$ is the product topology generated by $\overline{\beta}$. The basic properties of
$\overline{\mathbf{X}}$ are proven in \cite{KWECTOP} and are given by the following

\begin{lemma}\label{lemmaPropertiesEffectiveTOPPRODUCT}
$\overline{\mathbf{X}}$ is an effective topological space, which is computable if
$\mathbf{X}_1$ and $\mathbf{X}_2$ are computable. Let
$\overline{\delta},\overline{\theta},\ldots,\overline{\psi}^{-}$ be the representations for
$\overline{\mathbf{X}}$ from Definitions \ref{defPositiveRepresentationsCompTOPSPACE} and
\ref{defNegativeRepresentationsCompTOPSPACE}. Then
\begin{enumerate}
 \item $\overline{\delta}\equiv \left[ \delta_1,\delta_2 \right]$.
\item The function $(x_1,x_2)\mapsto (x_1,x_2)$ is $(\delta_1, \delta_2,
\overline{\delta})$-computable and each projection $(x_1,x_2)\mapsto x_i$
is $(\overline{\delta},\delta_i)$-computable.
\item For open sets, the product $(W_1,W_2)\mapsto W_1\times W_2$ is $(\theta_1, \theta_2,
\overline{\theta})$-computable. 
% Furthermore, the product is $(\theta_1^{-}, \theta_2^{-},
% \overline{\theta}^{-})$-computable if the set $Z_i = \{ w\in\sigmastringset \mid \nu_i(w) \neq
% \myemptyset\}$ is c.e. for each $i=1,2$.
\item For open sets, the projection $W_1\times W_2\mapsto W_1$ is $(\overline{\theta},
\theta_1)$-computable if the set $Z_2 = \{ w\in\sigmastringset \mid \nu_2(w) \neq
 \myemptyset\}$ is c.e.
\item For closed sets, the product $(A_1,A_2)\mapsto A_1\times A_2$ is $(\psi_1, \psi_2,
\overline{\psi})$-computable and $(\psi_1^{-}, \psi_2^{-},$ $\overline{\psi}^{-})$-computable.
\item For compact sets, the operation $(K_1,K_2)\mapsto K_1\times K_2$ is $(\kappa_1, \kappa_2,
\overline{\kappa})$-computable and the projection $K_1\times K_2\mapsto K_i$ is $(\overline{\kappa},
\kappa_i)$-computable.
\end{enumerate}
\end{lemma}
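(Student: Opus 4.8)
The plan is to construct the tuple $\overline{\mathbf{X}}=(X_1\times X_2,\overline\tau,\overline\beta,\overline\nu)$ explicitly from the data of $\mathbf{X}_1$ and $\mathbf{X}_2$ and to prove all six assertions by \emph{transporting names across the tupling function} $\langle\cdot,\cdot\rangle$: in each case a $\overline\gamma$-name of a product object will be, up to enumeration order, the tupling of a $\gamma_1$-name and a $\gamma_2$-name of the two factors, so the realizer we need just decodes the tupling, runs the realizers guaranteed for $\mathbf{X}_1$ and $\mathbf{X}_2$ componentwise, and re-tuples. For (1): $(X_1\times X_2,\overline\tau)$ is $T_0$ because two distinct points differ in some coordinate $i$, where a basic set $\nu_i(u_i)$ already separates them, and the product of that basic set with any basic set of the other factor containing the remaining coordinate (one exists since a base covers its space) separates the points in $X_1\times X_2$; $\overline\nu$ is a notation of $\overline\beta=\rangep{\overline\nu}$ by construction; $\domp{\overline\nu}=\{\langle u_1,u_2\rangle\mid (u_1,u_2)\in\domp{\nu_1}\times\domp{\nu_2}\}$ is computable whenever both $\domp{\nu_i}$ are, since $\langle\cdot,\cdot\rangle$ is computable with computable inverse; and if $S_i\subseteq(\domp{\nu_i})^3$ witnesses \eqref{eqdefCompTOPSPACE} for $\mathbf{X}_i$, then $\overline\nu(\langle u_1,u_2\rangle)\cap\overline\nu(\langle v_1,v_2\rangle)=(\nu_1(u_1)\cap\nu_1(v_1))\times(\nu_2(u_2)\cap\nu_2(v_2))=\bigcup\{\overline\nu(\langle w_1,w_2\rangle)\mid (u_1,v_1,w_1)\in S_1,\ (u_2,v_2,w_2)\in S_2\}$, so the ``tupled product'' of $S_1$ and $S_2$ is a c.e.\ witness for $\overline{\mathbf{X}}$.

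Next I would establish $\overline\delta\equiv[\delta_1,\delta_2]$, which yields (2) and (3) at once via the general properties of $[\gamma_1,\gamma_2]$ recalled in Subsection \ref{secComputabilityTheory}. Unfolding \eqref{eqdefPositiveRepresentationsCompTOPSPACE_deltaplus}, a $\overline\delta$-name of $(x_1,x_2)$ enumerates exactly the $\langle u_1,u_2\rangle$ with $x_1\in\nu_1(u_1)$ and $x_2\in\nu_2(u_2)$, that is, with $u_1\ll p_1$ and $u_2\ll p_2$ for $\delta_i$-names $p_i$ of $x_i$; conversely a $\delta_1$-name is read off a $\overline\delta$-name by keeping first components, and no padding is needed because $\bigcup\beta_2=X_2$ makes every first component occur already paired with something. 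Claims (4) and (5) follow the same template. For open sets, $\theta_i$-names $p_i$ of $W_i$ give $W_1\times W_2=\bigcup_{w_1\ll p_1,\,w_2\ll p_2}\overline\nu(\langle w_1,w_2\rangle)$, so enumerating the pairs produces a $\overline\theta$-name; for the projection one reads $\langle w_1,w_2\rangle$ off a $\overline\theta$-name of $W_1\times W_2$ and emits $w_1$ once $w_2$ has been confirmed to lie in $Z_2$, which is exactly where the c.e.\ hypothesis on $Z_2$ is used, since only $\nu_2(w_2)\neq\emptyset$ licenses $\nu_1(w_1)\subseteq W_1$. For closed sets the $\psi$-product is the literal analogue of $\overline\delta$ ($\overline\psi$-names list the $\langle u_1,u_2\rangle$ meeting $A_1\times A_2$, i.e.\ those with $A_i\cap\nu_i(u_i)\neq\emptyset$), and the $\psi^-$-product reduces to the open case through $\psi^-=X-\theta$ and the identity $(X_1\times X_2)\setminus(A_1\times A_2)=\bigl((X_1\setminus A_1)\times X_2\bigr)\cup\bigl(X_1\times(X_2\setminus A_2)\bigr)$: each term on the right is the product of an open set with a whole factor (computable by the open-product map together with the trivially computable $\theta_i$-name of $X_i$, namely an enumeration of $\domp{\nu_i}$), and finite union of open sets is computable by interleaving $\theta$-names.

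The main obstacle is (6), the compact case: a finite cover of $K_1\times K_2$ by \emph{basic} product sets need not refine to a product of finite covers, and testing ``$K_1\times K_2\subseteq V$'' directly is circular, because semideciding ``compact $\subseteq$ open'' needs a $\kappa$-name of the compact set, which is what we are producing. The remedy is an effective tube lemma: a finite family $\{\nu_1(a_j)\times\nu_2(b_j)\}_{j\le m}$ covers $K_1\times K_2$ if and only if there exist finitely many $J_1,\dots,J_k\subseteq\{1,\dots,m\}$ and base codes $d_1,\dots,d_k\in\domp{\nu_1}$ with $\nu_1(d_i)\subseteq\bigcap_{j\in J_i}\nu_1(a_j)$ for each $i$, with $K_1\subseteq\bigcup_{i\le k}\nu_1(d_i)$, and with $K_2\subseteq\bigcup_{j\in J_i}\nu_2(b_j)$ for each $i$; the forward direction is the classical tube argument applied to the slices $\{x\}\times K_2$ followed by extraction of a finite subcover of $K_1$, and the converse is a one-line verification. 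The right-hand side is c.e.\ in the family and in $\kappa_1$-, $\kappa_2$-names: by \eqref{eqdefPositiveRepresentationsCompTOPSPACE_kappaplus}, ``$K_1\subseteq\bigcup_{i\le k}\nu_1(d_i)$'' holds iff a code for this finite union occurs in the $\kappa_1$-name and ``$K_2\subseteq\bigcup_{j\in J_i}\nu_2(b_j)$'' iff a code for that finite union occurs in the $\kappa_2$-name, and the $d_i$ may be restricted to the base codes enumerated by a $\theta_1$-name of $\bigcap_{j\in J_i}\nu_1(a_j)$, which is computable from $\{a_j\mid j\in J_i\}$ by Theorem \ref{thmTheorem11ECT}(1) — this sidesteps having to decide ``$\nu_1(d)\subseteq$ (finite intersection of basic sets)'', which is not in general decidable in a computable topological space. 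Dovetailing over all $(J_1,\dots,J_k)$ and all admissible $(d_1,\dots,d_k)$ thus semidecides the covering relation, which lets us enumerate every code $z$ with $K_1\times K_2\subseteq\fsunionbasenotation{\overline\nu}(z)$ and hence output a $\overline\kappa$-name; this gives $(\kappa_1,\kappa_2,\overline\kappa)$-computability of $(K_1,K_2)\mapsto K_1\times K_2$. For the projection onto a factor, say the first one $\pi_1\colon X_1\times X_2\to X_1$, one uses $\pi_1(K)\subseteq U\Leftrightarrow K\subseteq U\times X_2$ together with the fact that ``$K\subseteq V$'' is c.e.\ from a $\overline\kappa$-name of $K$ and a $\overline\theta$-name of an open $V$ (by compactness $K\subseteq V$ iff $K$ is already covered by finitely many of the basic sets listed in the $\overline\theta$-name, and whether a fixed finite subfamily covers $K$ is read straight off the $\overline\kappa$-name), applied to the $\overline\theta$-name of $U\times X_2$ delivered by claim (4); enumerating the finite unions of basic sets of $X_1$ that contain $\pi_1(K)=K_1$ then produces a $\kappa_1$-name, and symmetrically for $\kappa_2$.
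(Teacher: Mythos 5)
Your proof is correct, but note that the paper itself offers no proof of this lemma: it is imported verbatim from the reference \cite{KWECTOP} (where it appears with proof as part of the treatment of product spaces), so there is no in-paper argument to compare against. What you have written is a sound self-contained reconstruction. Parts (1)--(5) are routine name-transport arguments and you handle them correctly, including the two points that are easy to overlook: that projecting a $\overline{\delta}$-name loses nothing because $\beta_2$ covers $X_2$, and that the c.e.\ hypothesis on $Z_2$ is exactly what licenses emitting $w_1$ in the open-set projection. The real content is in part (6), and you correctly identify why the naive argument fails (a $\overline{\kappa}$-name must list \emph{all} finite basic covers of $K_1\times K_2$, not just the rectangular ones, and ``compact $\subseteq$ open'' cannot be tested before a $\kappa$-name exists); your effective tube lemma, with the $d_i$ drawn from a $\theta_1$-name of $\bigcap_{j\in J_i}\nu_1(a_j)$ precisely to avoid the undecidable inclusion test $\nu_1(d)\subseteq V$, is the right repair and matches the standard argument in the literature. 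Two small blemishes: the $\overline{\theta}$-name of $U\times X_2$ used in the projection of compact sets comes from claim (3) (product of open sets), not claim (4); and the projection claims in (4) and (6) are only meaningful when the suppressed factor is non-empty (otherwise $W_1\times W_2$ or $K_1\times K_2$ is $\myemptyset$ and does not determine $W_1$ or $K_1$) --- but that caveat is inherent in the statement as quoted, not a defect of your argument.
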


The generalization to finite products is straightforward. Some examples of computable topological
spaces which are subspaces or products of other computable topological spaces will be given later.

\subsection{Computable functions between computable topological
spaces}\label{subsecCompFunctionsCompTOPSPACES}

In this section, we define \emph{computable functions} between computable topological spaces and prove 
some useful results about them. We also introduce computable embeddings and equivalences of 
computable topological spaces defined on the same topological space.
% In the work of Weihrauch and Grubba \cite{KWECTOP}, a number of multi-representation for the set
% $\partconfuncspace{X}{Y}$ of partial continuous functions between the effective topological spaces
% $\mathbf{X} = (X, \tau, \beta, \nu)$ and $\mathbf{Y} = (Y, \tau^\prime, \beta^\prime, \nu^\prime)$
% are introduced. In fact, up to eight different multi-representations for continuous functions are
% defined and their various equivalences are proven in \cite{KWECTOP}. As we will be dealing with
% computable functions between computable topological spaces, we need to use some of these
% multi-representations.

A partial function $f\colon \subseteq X\to Y$ is continuous if and only if for every $W \in
\tau^\prime$, $\inverseimage{W}{f}$ is open in $\dom f$, that is, $\inverseimage{W}{f} = V \cap
\dom f$ for some $V ∈ \tau$. 
% The following conditions are equivalent:
% \begin{itemize}
%  \item [\bf C1] $f$ is continuous,
% \item [\bf C2] $(\forall x \in \dom f, W \in \tau^\prime)(f(x) \in W \Rightarrow (\exists V \in
% \tau)(x\in V \wedge \directimage{V \cap \dom f}{f} \subseteq W))$,
% \item [\bf C3] $\directimage{\text{cls}_{\dom f}(C)}{f} \subseteq \overline{\directimage{C}{f}}$ for
% every $C \subseteq \dom f$,
% % \item [\bf C4] $f$ has a continuous $(\delta, \delta^\prime)$-realization.
% \end{itemize}
% The equivalences of {\bf C1}, {\bf C2} and {\bf C3} are well-known 
% \cite{engelking1989general,munkres2000topology}. 
Type-2 theory gives us a surprising connection 
between topology and computability, this is an equivalence for continuity in terms of 
continuous functions from $\sigmaomegastringset$ to itself \cite[Theorem 3.2.11]{weihrauch2000computable}.

\begin{theorem}\label{thmMainTheoremAdmissibleReps}
 Let $\mathbf{X} = (X, \tau, \beta, \nu)$ and $\mathbf{Y} = (Y, \tau^\prime, \beta^\prime, 
\nu^\prime)$ be effective topological spaces. Then a map $f\colon \subseteq X\to Y$ is continuous 
if and only if $f$ has a continuous $(\delta_X, \delta_Y)$-realization.
\end{theorem}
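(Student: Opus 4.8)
The plan is to prove the two directions separately, with the reverse direction being almost immediate and the forward direction resting on the admissibility of the representation $\delta$ (Proposition~\ref{propDeltaAdmissibleCompTOPSPACE}).

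\textbf{The easy direction.} Suppose $f\colon \subseteq X\to Y$ has a continuous $(\delta_X,\delta_Y)$-realization $h\colon \subseteq \sigmaomegastringset \to \sigmaomegastringset$, so that $\delta_Y(h(p)) = f(\delta_X(p))$ whenever $\delta_X(p)\in\dom f$. Since every continuous function on Cantor/Baire-type spaces is continuous in the topology generated by the base $\{w\sigmaomegastringset\}$, I would argue that $g \defeq \delta_Y \circ h$ is a continuous map (as a partial function into $Y$) because $\delta_Y$ is continuous (it is admissible by Proposition~\ref{propDeltaAdmissibleCompTOPSPACE}, hence continuous) and $h$ is continuous. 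But $g = f\circ \delta_X$ on $\dom(f\circ\delta_X)$, and $\delta_X$ is admissible, hence a quotient-type (at least continuous and ``final'') map; the standard fact (e.g.\ \cite[Theorem 3.2.11]{weihrauch2000computable}, of which the present theorem is the analogue) is that since $\delta_X$ is admissible, $f$ is continuous iff $f\circ\delta_X$ is continuous. So from continuity of $f\circ\delta_X = g$ we conclude $f$ is continuous.

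\textbf{The hard direction.} Now suppose $f\colon \subseteq X\to Y$ is continuous. I want a continuous $h\colon\subseteq\sigmaomegastringset\to\sigmaomegastringset$ with $\delta_Y\circ h = f\circ\delta_X$ on the appropriate domain. The key point is that the composite $f\circ\delta_X\colon\subseteq\sigmaomegastringset\to Y$ is a continuous partial function from $\sigmaomegastringset$ into the $T_0$-space $(Y,\tau')$: indeed $\delta_X$ is continuous (Proposition~\ref{propDeltaAdmissibleCompTOPSPACE}) and $f$ is continuous, so the composite is continuous on its domain. By admissibility of $\delta_Y$ (Definition~\ref{defCompTOPSPACEAdmissibleReps}, together with Proposition~\ref{propDeltaAdmissibleCompTOPSPACE} applied to $\mathbf{Y}$), every continuous function from $\sigmaomegastringset$ to $Y$ is continuously reducible to $\delta_Y$; applying this to $f\circ\delta_X$ yields a continuous $h\colon\subseteq\sigmaomegastringset\to\sigmaomegastringset$ such that $\delta_Y(h(p)) = (f\circ\delta_X)(p)$ for all $p\in\dom(f\circ\delta_X)$, i.e.\ for all $p$ with $\delta_X(p)\in\dom f$. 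This $h$ is the desired realization, so $f$ has a continuous $(\delta_X,\delta_Y)$-realization.

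\textbf{Main obstacle.} The delicate point is the precise formulation of admissibility as a transfer principle: Definition~\ref{defCompTOPSPACEAdmissibleReps} only directly says that continuous $\sigmaomegastringset\to X$ maps reduce to $\delta$, so in the forward direction I must be careful that the domain of $f\circ\delta_X$ is handled correctly (it is $\{p : \delta_X(p)\in\dom f\}$, which is the preimage under $\delta_X$ of $\dom f$, and this is where continuity of $f$ on $\dom f$ matters), and in the easy direction I must invoke the known characterization that admissibility of $\delta_X$ makes $X$ a ``continuous quotient'' of $\dom\delta_X$, so that continuity of $f$ follows from continuity of $f\circ\delta_X$. Both of these are exactly the content of \cite[Theorem 3.2.11]{weihrauch2000computable} transported to effective topological spaces via Proposition~\ref{propDeltaAdmissibleCompTOPSPACE}, so the proof is essentially an application of that theorem once one has checked that $\delta_X$ and $\delta_Y$ are admissible — which is guaranteed by Proposition~\ref{propDeltaAdmissibleCompTOPSPACE}.
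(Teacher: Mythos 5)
The paper does not actually prove this theorem: it is imported verbatim from Weihrauch's book (\cite[Theorem 3.2.11]{weihrauch2000computable}), and the surrounding text only observes that it applies because $\delta$ is admissible (Proposition \ref{propDeltaAdmissibleCompTOPSPACE}). Your proposal therefore supplies an argument where the paper supplies a citation, and its overall shape is the standard one. The hard direction is essentially right: $f\circ\delta_X$ is a continuous partial map from $\sigmaomegastringset$ into $Y$, and admissibility of $\delta_Y$ (Definition \ref{defCompTOPSPACEAdmissibleReps}, read for partial continuous functions --- the paper suppresses the ``$\subseteq$'') yields the continuous translation $h$, which is exactly the desired realization.

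The easy direction as written has a genuine gap. From continuity of $h$ and of $\delta_Y$ you correctly conclude that $f\circ\delta_X$ is continuous, but to pass from that to continuity of $f$ you invoke ``$f$ is continuous iff $f\circ\delta_X$ is continuous'' and justify it by admissibility making $\delta_X$ a ``quotient-type/final'' map, citing Theorem 3.2.11 --- which is the theorem being proved, so this is circular. Moreover, Definition \ref{defCompTOPSPACEAdmissibleReps} (continuity plus $\leq_t$-maximality) does not by itself give any finality property, and even if $\delta_X$ were merely a topological quotient that would not suffice, since $f$ is partial and quotient maps do not in general restrict to quotient maps over $\dom f$. What closes this direction is a concrete property of the \emph{standard} representation: $\delta_X$ is an open continuous surjection onto $X$ (the image of $w\sigmaomegastringset\cap\dom\delta_X$ is the intersection of the finitely many base elements listed in $w$), and openness does survive restriction to $\delta_X^{-1}\bigl[\dom f\bigr]$, so that $f^{-1}\bigl[W\bigr]=\delta_X\bigl[(f\circ\delta_X)^{-1}\bigl[W\bigr]\bigr]$ is open in $\dom f$ for every $W\in\tau^\prime$. (Alternatively, argue with sequences, using that $X$ is countably based and hence sequential.) Either way, you need this extra input about $\delta_X$ beyond the admissibility statement you quote.
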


This is the ``main theorem'' for admissible representations, since for an effective topological 
space $\mathbf{X} = (X, \tau, \beta, \nu)$, by Proposition \ref{propDeltaAdmissibleCompTOPSPACE}, 
the representation of points $\delta$ is admissible with respect to the topology $\tau$. 
\begin{definition}\label{defCompFunctionCompTopSpace}
Let $\mathbf{X}$ and $\mathbf{Y}$ be effective topological spaces. A 
% continuous 
function $f\colon \subseteq X \to Y$ 
is \emph{computable} if $f$ has a computable $(\delta_X, \delta_Y)$-realization.
% , that is, a computable function 
% $F\subseteq \colon \sigmaomegastringset \to \sigmaomegastringset$ such that 
% $f \circ \delta_X = \delta_Y \circ F$.
\end{definition}

We also say that $f$ is $(\delta_X,\delta_Y)$-computable. The following result is an immediate 
consequence of Theorem \ref{thmMainTheoremAdmissibleReps}.

\begin{proposition}\label{propCompFunctContFunc}
  if $f$ is a computable function between the computable spaces $\mathbf{X}$ and $\mathbf{Y}$, 
  then $f$ is continuous.
\end{proposition}

The definition of computable functions is given in terms of the representations of points of 
$\mathbf{X}$ and $\mathbf{Y}$. Characterizations given in terms of the other representations can be 
derived.

\begin{theorem}[\cite{KWECTOP}]\label{thmequivCompFunctions}
Let $f$ be a function between the computable spaces $\mathbf{X}$ and $\mathbf{Y}$. The following are equivalent:
%\begin{itemize}
 a) $f$ is $(\delta_X,\delta_Y)$-computable; 
b) the function $W\mapsto \inverseimage{W}{f}$ is $(\theta_Y, \theta_X)$-computable; 
c) the function $B \mapsto \inverseimage{B}{f}$ is $(\nu_Y, \theta_X)$-computable; 
d) the function $\overline{C} \mapsto \overline{\directimage{C}{f}}$ is $(\psi_X,
\psi_Y)$-computable; 
e) the function $K \mapsto \directimage{K}{f}$ is $(\kappa_X, \kappa_Y)$-computable.
%\end{itemize}
\end{theorem}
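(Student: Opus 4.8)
The plan is to prove the cycle of implications $(a)\Rightarrow(b)\Rightarrow(c)\Rightarrow(e)\Rightarrow(a)$ and, separately, the two implications $(a)\Rightarrow(d)$ and $(d)\Rightarrow(a)$, so that all five statements become equivalent. The recurring tools will be Lemma~\ref{lemCorollary14ECT} (membership ``$x\in W$'' is $(\delta,\theta)$-c.e.), Theorem~\ref{thmTheorem11ECT} (effectivity of finite intersections and unions of open, closed and compact sets), the fact that a countable union of open sets given by $\theta$-names again has a computable $\theta$-name, and the defining equation~\eqref{eqdefCompTOPSPACE}, which makes a finite intersection of base sets a computable union of base sets.

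For $(a)\Rightarrow(b)$, let $h$ be a computable $(\delta_X,\delta_Y)$-realization of $f$ and let $p$ be a $\theta_Y$-name of an open set $W$. I would enumerate a $\theta_X$-name of $\inverseimage{W}{f}$ by dovetailing over all pairs consisting of a finite prefix $\bar q$ of a prospective $\delta_X$-name---such a prefix encodes finitely many indices $u_1,\dots,u_k\in\dom \nu_X$---and a word $w$ with $w\ll p$; whenever $h$ run on $\bar q$ has already output $\iota(w)$, emit a $\theta_X$-name of the open set $\nu_X(u_1)\cap\dots\cap\nu_X(u_k)$ (available by~\eqref{eqdefCompTOPSPACE} and Theorem~\ref{thmTheorem11ECT}). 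Since $h$ realizes $f$, each such intersection is contained in $\inverseimage{W}{f}$; and since every $x\in\inverseimage{W}{f}$ has $f(x)\in\nu_Y(w)$ for some $w\ll p$, running $h$ on the canonical $\delta_X$-name of $x$ shows $x$ lies in one of the emitted intersections, so the union of everything emitted is exactly $\inverseimage{W}{f}$. The implication $(b)\Rightarrow(c)$ is immediate, as a $\nu_Y$-name of a base set converts trivially into a $\theta_Y$-name (and $(c)\Rightarrow(b)$ follows from $\inverseimage{W}{f}=\bigcup_{w\ll p}\inverseimage{\nu_Y(w)}{f}$). For $(c)\Rightarrow(e)$, from a $\kappa_X$-name of a compact $K$ I would output a $\kappa_Y$-name of $\directimage{K}{f}$: a finite index set $z$ encoding $v_1,\dots,v_m$ belongs to that name iff $\directimage{K}{f}\subseteq\nu_Y(v_1)\cup\dots\cup\nu_Y(v_m)$, i.e.\ iff $K\subseteq\inverseimage{\nu_Y(v_1)}{f}\cup\dots\cup\inverseimage{\nu_Y(v_m)}{f}$; by~$(c)$ the right-hand side is an open set with a computable $\theta_X$-name (a finite union of $\theta_X$-presented open sets), and inclusion of a $\kappa_X$-presented compact set in a $\theta_X$-presented open set is c.e., because by compactness it reduces to finding a finite subcover, which is exactly a query to the $\kappa_X$-name. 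Dovetailing over all $z$ gives the required name. Finally $(e)\Rightarrow(a)$: singletons are compact, a $\delta_X$-name of $x$ yields a $\kappa_X$-name of $\{x\}$ (a finite index set occurs in it iff one of its entries occurs in the $\delta_X$-name), the realization of $(e)$ turns this into a $\kappa_Y$-name of $\{f(x)\}$, and a $\delta_Y$-name of $f(x)$ is recovered from it by keeping only the singleton entries $\{v\}$, which by~\eqref{eqdefPositiveRepresentationsCompTOPSPACE_kappaplus} occur precisely when $f(x)\in\nu_Y(v)$.

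For the closed-set condition, $(a)\Rightarrow(d)$ goes exactly as $(c)\Rightarrow(d)$ would (using $(a)\Rightarrow(c)$): a $\psi_Y$-name of $\overline{\directimage{C}{f}}$ lists the indices $v$ with $\nu_Y(v)\cap\overline{\directimage{C}{f}}\neq\myemptyset$, equivalently $\inverseimage{\nu_Y(v)}{f}\cap C\neq\myemptyset$, and intersection of a $\theta_X$-presented open set with a $\psi_X$-presented closed set is c.e., so dovetailing over $v$ produces the name. The remaining implication $(d)\Rightarrow(a)$ is the delicate one, and the difficulty is genuinely topological: the image operator only ``sees'' closed sets, and as $X$ is merely $T_0$ the closure $\overline{\{x\}}$ of a point can be large, so a $\psi_Y$-name of $\overline{\directimage{\overline{\{x\}}}{f}}$ need not restrict to a $\delta_Y$-name of $f(x)$. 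My plan is first to extract continuity of $f$: the computable $(\psi_X,\psi_Y)$-realization is in particular continuous, so $C\mapsto\overline{\directimage{C}{f}}$ is continuous with respect to the admissible representations $\psi_X,\psi_Y$, and probing this map with the closed sets $\overline{\{x\}}$ forces each $\inverseimage{\nu_Y(v)}{f}$ to be open, i.e.\ $f$ continuous. Once $f$ is continuous one has $\overline{\directimage{\overline{\{x\}}}{f}}=\overline{\{f(x)\}}$ (since $\inverseimage{\overline{\{f(x)\}}}{f}$ is a closed set containing $x$, hence contains $\overline{\{x\}}$), a $\psi_Y$-name of $\overline{\{f(x)\}}$ \emph{is} a $\delta_Y$-name of $f(x)$, and, since a $\delta_X$-name of $x$ is literally a $\psi_X$-name of $\overline{\{x\}}$, composing the translations gives a computable $(\delta_X,\delta_Y)$-realization of $f$.

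I expect the continuity-extraction step in $(d)\Rightarrow(a)$ to be the main obstacle: in a $T_1$ space it is immediate, but in the $T_0$ setting of Definition~\ref{defCompTOPSPACE} it must be argued with care, via the admissibility of $\psi$ and the description of the topology on the set of closed sets for which $\psi$ is admissible. The other place requiring attention is $(a)\Rightarrow(b)$, where the prefix-and-compactness argument on $\sigmaomegastringset$ must be set up so that the emitted base sets cover $\inverseimage{W}{f}$ exactly and nothing more.
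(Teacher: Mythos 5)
The paper itself offers no proof of this statement (it is imported from \cite{KWECTOP}), so your proposal has to stand on its own. The cycle $(a)\Rightarrow(b)\Rightarrow(c)\Rightarrow(e)\Rightarrow(a)$ and the implication $(a)\Rightarrow(d)$ are the standard TTE arguments and are essentially correct, modulo the prefix-extension technicalities you already flag in $(a)\Rightarrow(b)$.

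The genuine gap is in $(d)\Rightarrow(a)$, and it is not merely delicate: the continuity-extraction step as you describe it cannot work. If condition (d) is read as ``the map $A\mapsto\overline{f[A]}$ on closed sets $A$ is $(\psi_X,\psi_Y)$-computable,'' then it does not imply continuity of $f$ at all. Take $X=Y=\{a,b\}$ with the Sierpi\'nski topology $\{\myemptyset,\{a\},X\}$, base $\nu(0)=\{a\}$, $\nu(1)=X$, and $f$ the swap $a\leftrightarrow b$, which is not continuous since $f^{-1}[\{a\}]=\{b\}$. The closed sets are $\myemptyset,\{b\},X$, with $\overline{f[\myemptyset]}=\myemptyset$, $\overline{f[\{b\}]}=\overline{\{a\}}=X$ and $\overline{f[X]}=X$; a machine that outputs nothing until it sees any $\iota(w)$ in its input and thereafter enumerates both base indices realizes this map with respect to $(\psi,\psi)$. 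Your probing argument runs into exactly this obstruction: from a finite prefix of a $\psi_X$-name of $\overline{\{y\}}$ one only learns that $\overline{f[\overline{\{y\}}]}$ meets $\nu_Y(v)$, i.e.\ that \emph{some} $y'\in\overline{\{y\}}$ has $f(y')\in\nu_Y(v)$ --- not that $f(y)\in\nu_Y(v)$ --- so the neighbourhood you produce witnesses openness of the wrong set. What saves the theorem is the way (d) is actually formulated: the input is $\overline{C}$ for an \emph{arbitrary} subset $C\subseteq X$, so the mere well-definedness of $\overline{C}\mapsto\overline{f[C]}$ already forces $\overline{f[\overline{C}]}=\overline{f[C]}$, hence $f[\overline{C}]\subseteq\overline{f[C]}$ for all $C$, which is precisely continuity of $f$ --- a purely topological fact in which admissibility of $\psi$ plays no role. (In the Sierpi\'nski example well-definedness fails: $\overline{\{a\}}=\overline{X}=X$ but $\overline{f[\{a\}]}=\{b\}\neq X=\overline{f[X]}$.) Once continuity is obtained this way, the remainder of your argument --- $\overline{f[\overline{\{x\}}]}=\overline{\{f(x)\}}$, a $\delta$-name being literally a $\psi$-name of the point closure, and composition of the translations --- is correct and completes the proof.
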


Many examples of computable functions exists. Almost all known real functions are computable, like
sum, multiplication, the trigonometric functions and their inverses, the square root, exponential
and logarithm functions (with computable bases) \cite{weihrauch2000computable}. We will present many
more examples in
Section \ref{secComputableManifolds}.

\subsubsection*{Equivalences between computable topological spaces}

We will be using two types of equivalences between computable topological spaces. The first one is
that of \emph{computable homeomorphism}.

\begin{definition}\label{defComputableHomeomorphism}
  Let $\mathbf{X}=(X,\tau, \beta, \nu)$ and $\mathbf{X}^\prime=(X^\prime,\tau^\prime, \beta^\prime,
\nu^\prime)$ be computable topological spaces 
% with inner representations $\delta$ and $\delta^\prime$ respectively
. A 
% \emph{$(\delta, \delta^\prime)$-computable homeomorphism}
\emph{computable homeomorphism}
 is a map
$h\colon X \to X^\prime$ such that $h$ is a homeomorphism and also $h,h^{-1}$ are computable
functions
% with respect to $\delta$ and $\delta^\prime$
. When such a map $h$ exists, we say that $\mathbf{X}$
and $\mathbf{X}^\prime$ are 
% \emph{$(\delta, \delta^\prime)$-computably homeomorphic}
\emph{computably homeomorphic}
. This fact is denoted by
$\mathbf{X} \chomeom \mathbf{X}^\prime$.
\end{definition}

Sometimes, if there is no confusion about which (computable) topologies are using the sets
$X,X^\prime$, we will just say that $X$ and $X^\prime$ themselves are computably homeomorphic.

% \begin{example}\label{exampleIdentity}
%  Let $\mathbf{X}=(X,\tau, \beta, \nu)$ be any computable topological space. Then
% $\mathbf{X}\chomeom \mathbf{X}$ because the identity $1_X\colon X \to X$ is a computable
% homeomorphism.
% \end{example}
\begin{example}\label{exampleBallCTEuclideanSpace}
 Let $B=B(0, 1)\subset \euclidean{n}$ be the open unit ball with the (computable) subspace topology.
The function $h\colon B \to \euclidean{n}$ defined by $h(x)=\frac{x}{1 - \lVert x \rVert^2}$ is a
$(\delta_B,\delta^n)$-computable map and its inverse is the function $h^{-1} \colon \euclidean{n}
\to B$, given by 
\[h^{-1}(y)=\frac{2y}{1 + \sqrt{1 + 4\lVert y \rVert^2}},\] 
which is $(\delta^n,\delta_B)$-computable, so that $h$ is a computable homeomorphism between
$\ctopcompsubspace{\mathbf{R}^n}{B}$ and $\mathbf{R}^n$. 
\end{example}

\begin{example}\label{exampleStereographicProjection}
 Let $\mathbf{S}^n = \ctopcompsubspace{\mathbf{R}^{n+1}}{\sphere{n}}$ denote the $n$-sphere $\sphere{n}$ equipped
with the computable subspace topology induced by $\mathbf{R}^{n+1}$. An important example of a
computable homeomorphism is the stereographic projection $s\colon
\puncturedsphere{n} \to \euclidean{n}$ ($\puncturedsphere{n}=\sphere{n} - \{ P \}, P=(0, \ldots, 0, 1)$) onto euclidean space
$\euclidean{n}$ given by the equation
\[ s(x,t) = \frac{x}{1 - t},\quad x= (x_1,\ldots, x_n).\]
Thus, we have that $\ctopcompsubspace{\mathbf{S}^{n}}{\puncturedsphere{n}} \chomeom \mathbf{R}^{n}$.
\end{example}

The homeomorphism $s$ has been
fundamental for the work in \cite{zhoucompfuncopenclosed}. 
Our second equivalence between computable topological spaces is as follows.

\begin{definition}[\cite{KWECTOP}]\label{defEquivCompTOPSPACES}
 The computable topological spaces $\mathbf{X}=(X,\tau, \beta, \nu)$ and $\mathbf{X}^\prime=(X,\tau,
\beta^\prime,$ $\nu^\prime)$ are \emph{equivalent} if and only if  $\nu\leq \theta^\prime$ and
$\nu^\prime \leq \theta$, where $\theta$ is the representation of $\tau$ defined by $\nu$ and $\theta^\prime$ is the representation of $\tau$ defined by $\nu^\prime$.
\end{definition}

The equivalence given in the previous definition identifies when the two notations $\nu,\nu^\prime$
of the bases $\beta$ and $\beta^\prime$ respectively, induce the same computability on the space
$(X,\tau)$. 
% As an example of equivalent computable topological spaces, consider the real line and
% two canonical notations $\nu$ and $\nu^\prime$ of the open intervals, the endpoints of which are
% binary fractions and ternary fractions respectively. 
Example \ref{exampleCompPREDSPACES} 
% also 
shows two equivalent computable topological spaces. 

The computability concepts introduced in Definitions \ref{defPositiveRepresentationsCompTOPSPACE}
and \ref{defNegativeRepresentationsCompTOPSPACE} can be called ``computationally robust'', since
they are the same for equivalent computable topological spaces. Usually, non-robust concepts
\cite{weihrauch2000computable,BrattkaW99,BrattkaP03} have only few applications. The next result, 
which can be found in \cite[Theorem 22.2]{KWECTOP} is very useful.
% The concept of
% equivalence of spaces, alongside with the next result, which can be found in \cite[Theorem
% 22.2]{KWECTOP} are very important for the definition of \emph{computable manifold}.

\begin{theorem}\label{thmRobustnessEquivTOPSPACES}
 Let $\mathbf{X}=(X,\tau, \beta, \nu)$ and $\mathbf{X}^\prime=(X,\tau, \beta^\prime, \nu^\prime)$
be computable topological spaces. The following statements are equivalent: 
%\begin{itemize}
 1) $\mathbf{X}$ and $\mathbf{X}^\prime$ are equivalent;
 2) $\delta \equiv \delta^\prime$;
 3) $\theta \equiv \theta^\prime$.
%\end{itemize}
Moreover, if $\mathbf{X}$ and $\mathbf{X}^\prime$ are equivalent, then $\gamma \equiv
\gamma^\prime$ for every representation $\gamma$ from Definitions
\ref{defPositiveRepresentationsCompTOPSPACE} and \ref{defNegativeRepresentationsCompTOPSPACE},
where $\gamma^\prime$ is the representation for $\mathbf{X}^\prime$ corresponding to $\gamma$.
\end{theorem}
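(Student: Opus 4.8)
The plan is to route all three equivalences through statement (3), $\theta\equiv\theta'$: first establish (1)$\Leftrightarrow$(3), then (2)$\Leftrightarrow$(3), and finally leverage these for the ``Moreover'' clause. Recall from Definition~\ref{defEquivCompTOPSPACES} that (1) \emph{is} the conjunction $\nu\leq\theta'$ and $\nu'\leq\theta$, and that $\dom\nu$ and $\dom\nu'$ are computable since $\mathbf{X}$ and $\mathbf{X}'$ are computable topological spaces.

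For (1)$\Leftrightarrow$(3) I would first record the elementary fact that, for computable topological spaces on the same underlying $(X,\tau)$, one has $\nu\leq\theta'$ if and only if $\theta\leq\theta'$ (and likewise with the primes exchanged). The implication ``$\Leftarrow$'' follows from the trivial reduction $\nu\leq\theta$ (a single $\nu$-name $w$ converts computably to a $\theta$-name whose $\ll$-content is exactly $\{w\}$, and this names $\nu(w)$), composed with $\theta\leq\theta'$. For ``$\Rightarrow$'', fix a computable $g$ realizing $\nu\leq\theta'$, so $\theta'(g(w))=\nu(w)$ for $w\in\dom\nu$; given a $\theta$-name $p$ of an open $W$ --- an enumeration via $\ll$ of $\nu$-names $w$ with $W=\bigcup_w\nu(w)$ --- replace each $w$ by the string $g(w)$, itself an enumeration of $\nu'$-names whose sets cover $\nu(w)$, and dovetail these enumerations; the result is an enumeration of $\nu'$-names with union $\bigcup_w\nu(w)=W$, i.e.\ a $\theta'$-name of $W$, and the transformation is computable. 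Conjoining this fact with its primed form, (1) $\Leftrightarrow$ ($\nu\leq\theta'$ and $\nu'\leq\theta$) $\Leftrightarrow$ ($\theta\leq\theta'$ and $\theta'\leq\theta$) $\Leftrightarrow$ (3).

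For (2)$\Leftrightarrow$(3) I would apply Theorem~\ref{thmequivCompFunctions}, parts (a)$\Leftrightarrow$(b), to the identity map $1_X$ regarded as a (continuous) function from $\mathbf{X}$ to $\mathbf{X}'$: it is $(\delta,\delta')$-computable exactly when $\delta\leq\delta'$, and by that theorem this happens exactly when $W\mapsto\inverseimage{W}{1_X}=W$ is $(\theta',\theta)$-computable, i.e.\ exactly when $\theta'\leq\theta$. Taking $1_X$ in the other direction gives $\delta'\leq\delta\Leftrightarrow\theta\leq\theta'$, and conjoining yields $\delta\equiv\delta'\Leftrightarrow\theta\equiv\theta'$. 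This is the step carrying the genuine content, since Theorem~\ref{thmequivCompFunctions} rests on the admissibility of the point representations (Proposition~\ref{propDeltaAdmissibleCompTOPSPACE}); so the main obstacle is in effect delegated to that cited result (a direct proof of $\delta'\leq\delta\Rightarrow\theta\leq\theta'$ would be the true hard point, since ``$\nu'(v)\subseteq\nu(w)$'' is not on its face c.e.).

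For the ``Moreover'' clause, assume (1)--(3) hold and fix computable $g,g'$ realizing $\nu'\leq\theta$ and $\nu\leq\theta'$ respectively. The cases $\gamma=\theta$ and $\gamma=\delta$ are precisely (3) and (2). For $\gamma=\psi^-$: since $\psi^-(p)=X-\theta(p)$ and $(\psi^-)'(p)=X-\theta'(p)$, any computable translator between $\theta$- and $\theta'$-names is at once a translator between $\psi^-$- and $(\psi^-)'$-names, so $\psi^-\equiv(\psi^-)'$. For $\gamma=\psi$: a $\psi$-name $p$ of a closed $A$ lists exactly the $w\in\dom\nu$ with $A\cap\nu(w)\neq\myemptyset$; since $\nu'(v)=\bigcup_{w\ll g(v)}\nu(w)$, one has $A\cap\nu'(v)\neq\myemptyset$ iff some $w\ll g(v)$ satisfies $w\ll p$, a condition c.e.\ in $(v,p)$, so enumerating all such $v$ produces a $\psi'$-name of $A$; symmetrically $\psi'\leq\psi$. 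For $\gamma=\kappa$: a $\kappa$-name $p$ of a compact $K$ lists exactly the finite-union codes $z\in\dom\fsunionbasenu$ with $K\subseteq\fsunionbasenu(z)$; to manufacture a $\kappa'$-name, enumerate those $z'\in\dom\fsunionbasenotation{\nu'}$ coding a set $\{v_1,\dots,v_m\}$ for which some finite $\{w_1,\dots,w_r\}$, each $w_i\ll g(v_j)$ for some $j$, has its $\fsunionbasenu$-code $z$ with $z\ll p$ --- for then $K\subseteq\fsunionbasenu(z)=\nu(w_1)\cup\dots\cup\nu(w_r)\subseteq\fsunionbasenotation{\nu'}(z')$, and conversely if $K\subseteq\fsunionbasenotation{\nu'}(z')=\bigcup_j\bigcup_{w\ll g(v_j)}\nu(w)$ then compactness of $K$ furnishes such a finite subfamily whose code covers $K$, hence lies $\ll p$. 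This enumeration is computable and depends only on the $\ll$-content of $p$, so it serves uniformly for every $K\in\kappa(p)$, giving $\kappa\leq\kappa'$; symmetrically (via $g'$) $\kappa'\leq\kappa$. The $\kappa$ case is the most delicate of these, since one must check both that the compactness-refinement step closes and that the translator is compatible with the multi-valuedness of $\kappa$.
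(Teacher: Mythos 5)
Your proof is correct, but note that the paper itself does not prove this theorem at all: it is imported verbatim as \cite[Theorem 22.2]{KWECTOP}, so there is no in-paper argument to compare against. Judged on its own, your reconstruction is sound and uses exactly the ingredients the paper makes available. The reduction $\nu\leq\theta'\Leftrightarrow\theta\leq\theta'$ (dovetailing the $\theta'$-names $g(w)$ over all $w\ll p$) is the right elementary mechanism for (1)$\Leftrightarrow$(3); routing (2)$\Leftrightarrow$(3) through Theorem \ref{thmequivCompFunctions} applied to $1_X$ in both directions is legitimate and correctly handles the contravariance ($\delta\leq\delta'$ pairs with $\theta'\leq\theta$), though you are right that this delegates the real work to the admissibility machinery behind that theorem. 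In the ``Moreover'' clause, the $\psi$ case correctly exploits that $A\cap\nu'(v)\neq\myemptyset$ iff some $w\ll g(v)$ appears in the $\psi$-name, and the $\kappa$ case correctly closes the compactness refinement in both directions; your observation that the translated name works simultaneously for every $K\in\kappa(p)$ is exactly what is needed for the multi-representation reduction, since a $\kappa$-name determines the same set of finite subcovers for all compact sets it names. The only omissions are routine bookkeeping (e.g.\ emitting padding so that the output is a genuine element of $\sigmaomegastringset$ when an enumeration is finite or empty), which do not affect correctness.
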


\begin{corollary}\label{corEquiTOPSPACEIDENTITYCOMPHOM}
 Let $\mathbf{X}=(X,\tau, \beta, \nu)$ and $\mathbf{X}^\prime=(X,\tau, \beta^\prime, \nu^\prime)$
be computable topological spaces. Then $\mathbf{X}$ and $\mathbf{X}^\prime$ are equivalent if
and only if $1_X \colon X \to X$ is a computable homeomorphism.
\end{corollary}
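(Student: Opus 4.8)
The plan is to reduce the statement to the characterization of equivalence in terms of the point representations, Theorem~\ref{thmRobustnessEquivTOPSPACES}, combined with the definition of reducibility of representations and the definition of a computable function between computable topological spaces (Definition~\ref{defCompFunctionCompTopSpace}).

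First I would record the key unwinding: the map $1_X\colon X\to X$ is automatically a homeomorphism of $(X,\tau)$ onto itself, since $\mathbf{X}$ and $\mathbf{X}^\prime$ carry the \emph{same} topology $\tau$; hence the only content of ``$1_X$ is a computable homeomorphism'' is that $1_X\colon\mathbf{X}\to\mathbf{X}^\prime$ and its inverse $1_X^{-1}=1_X\colon\mathbf{X}^\prime\to\mathbf{X}$ are both computable functions. By Definition~\ref{defCompFunctionCompTopSpace}, $1_X\colon\mathbf{X}\to\mathbf{X}^\prime$ is computable iff it has a computable $(\delta,\delta^\prime)$-realization; unwinding the notion of a realization of a single-valued function, a computable $h$ realizes $1_X$ precisely when $\delta^\prime\circ h(p)=\delta(p)$ for all $p\in\dom\delta$, i.e.\ when $h$ witnesses $\delta\leq\delta^\prime$. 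Symmetrically, $1_X\colon\mathbf{X}^\prime\to\mathbf{X}$ is computable iff $\delta^\prime\leq\delta$. Thus ``$1_X$ is a computable homeomorphism'' is equivalent to ``$\delta\equiv\delta^\prime$''.

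With this in hand both directions are immediate. For the forward implication, if $\mathbf{X}$ and $\mathbf{X}^\prime$ are equivalent, then Theorem~\ref{thmRobustnessEquivTOPSPACES}, implication (1)$\Rightarrow$(2), gives $\delta\equiv\delta^\prime$, and by the previous paragraph $1_X$ is a computable homeomorphism. For the converse, if $1_X$ is a computable homeomorphism, the previous paragraph yields $\delta\equiv\delta^\prime$, and Theorem~\ref{thmRobustnessEquivTOPSPACES}, implication (2)$\Rightarrow$(1), gives that $\mathbf{X}$ and $\mathbf{X}^\prime$ are equivalent.

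The whole argument is a matter of unwinding definitions; the only point needing a little care is the identification of ``$1_X$ has a computable $(\delta,\delta^\prime)$-realization'' with ``$\delta\leq\delta^\prime$'', which I would justify directly from the definition of reducibility of representations (the inclusion $i_{M_1}\colon M_1\hookrightarrow M_0$ being $(\gamma_1,\gamma_0)$-computable) specialized to $M_1=M_0=X$, where that inclusion is literally $1_X$. I expect no genuine obstacle, since the substantive content is already packaged into Theorem~\ref{thmRobustnessEquivTOPSPACES}.
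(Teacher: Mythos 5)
Your proof is correct and follows exactly the route the paper intends: the corollary is stated immediately after Theorem~\ref{thmRobustnessEquivTOPSPACES} with no written proof precisely because it amounts to identifying ``$1_X$ and its inverse are computable'' with ``$\delta\leq\delta^\prime$ and $\delta^\prime\leq\delta$'' and then invoking the equivalence (1)$\Leftrightarrow$(2) of that theorem. Your careful unwinding of the realization condition into the definition of reducibility is the only step with any content, and you handle it correctly.
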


\subsubsection*{The induced effective topological space}

Let $\mathbf{X}=(X,\tau, \beta, \nu)$ be an effective topological space and $Y$ a set and suppose
that there exists a bijective function $f\colon X \to Y$. Since $(X,\tau)$ is a topological space,
$f$ induces a topology $\tau_f$ on $Y$ given by the base 
\[\beta_f = \{ \directimage{V}{f} \mid V \in \beta \}.\] 
The topology defined by the base $\beta_f$ is called the \emph{push-forward} topology. We define a notation $\nu_f \colon \subseteq \sigmastringset \to \beta_f$ of the base
$\beta_f$ as $\nu_f(w) = \directimage{\nu(w)}{f}$ ($\dom \nu_f = \dom \nu)$. The 4-tuple $(Y,
\tau_f, \beta_f, \nu_f)$ will be denoted by $\mathbf{Y}_f$. It is easy to see that $f$ becomes a
homeomorphism between the two spaces $(X,\tau)$ and $(Y,\tau_f)$. The following results tell us that
this fact still holds in a computable way.

\begin{lemma}\label{lemmaTopSpaceCompFuncRepSet}
 Let $\mathbf{X} = (X, \tau, \beta,\nu)$ be an effective topological space and
$Y$ a set. If $f\colon X \to Y$ is a bijective function, then 
\begin{itemize}
 \item [a)] $\mathbf{Y}_f$ is an effective topological space, which is computable if $\mathbf{X}$ is
computable. Also, for any $w\in \dom \nu_f$ $\nu_f(w) = \directimage{\nu(w)}{f}$.
\item [b)] $f$ becomes a computable homeomorphism between $\mathbf{X}$ and $\mathbf{Y}_f$.
\item [c)] Let $\gamma$ be any of the representations given in Definitions
\ref{defPositiveRepresentationsCompTOPSPACE} and \ref{defNegativeRepresentationsCompTOPSPACE}
induced by $\mathbf{X}$ and let $\gamma_f$ be the respective representation induced by
$\mathbf{Y}_f$. Then
\begin{enumerate}
 \item $\dom \gamma_f = \dom \gamma$.
% \item If $\gamma \in \{ \delta, \delta^-\}$, then $\gamma_f = f \circ \gamma$.
\item if $\gamma=\delta$ then $\gamma_f = f \circ \gamma$.
 \item If $\gamma \in \{ \theta, 
%  \theta^-, \psi, 
 \psi^- \}$, then $\gamma_f(p) =
\directimage{\gamma(p)}{f}$.
% \item If $\gamma \in \{ \widetilde{\psi}, \kappa \}$, then $\gamma_f(p) = \{
% \directimage{D}{f} \mid D\in \gamma(p) \}$.
\item If $\gamma = \kappa$, then $\gamma_f(p) = \{ \directimage{D}{f} \mid D\in \gamma(p) \}$.
\end{enumerate}
\end{itemize}
\end{lemma}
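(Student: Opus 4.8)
The plan is to verify each of the listed properties of $\mathbf{Y}_f$ directly from the definitions, using the bijection $f$ to transport everything from $\mathbf{X}$. First I would handle part (a): since $f$ is a bijection, the map $V \mapsto \directimage{V}{f}$ is a bijection between $\beta$ and $\beta_f$, so $\nu_f = (W \mapsto \directimage{\nu(W)}{f})$ is well-defined with $\dom \nu_f = \dom \nu$ and is injective (hence a notation). The space $(Y,\tau_f)$ is $T_0$ because $(X,\tau)$ is and $f$ is a homeomorphism onto $(Y,\tau_f)$. For computability: if $\dom \nu$ is computable, then $\dom \nu_f = \dom \nu$ is computable, and the c.e.\ set $S \subseteq (\dom \nu)^3$ witnessing \eqref{eqdefCompTOPSPACE} for $\mathbf{X}$ works verbatim for $\mathbf{Y}_f$, since applying the bijection $f$ to the identity $\nu(u) \cap \nu(v) = \bigcup\{\nu(w) \mid (u,v,w)\in S\}$ gives $\nu_f(u) \cap \nu_f(v) = \bigcup\{\nu_f(w)\mid (u,v,w)\in S\}$ — here one uses that $f$, being a bijection, commutes with intersections and arbitrary unions of subsets. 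The statement $\nu_f(w) = \directimage{\nu(w)}{f}$ is the definition.

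Next I would prove part (b), namely that $f$ is a computable homeomorphism between $\mathbf{X}$ and $\mathbf{Y}_f$. Topologically $f$ is a homeomorphism by construction of $\tau_f$. For computability, I will show $f$ is $(\delta_X, \delta_{Y_f})$-computable and $f^{-1}$ is $(\delta_{Y_f}, \delta_X)$-computable. The cleanest route is the claim from part (c) that $\delta_{Y_f} = f \circ \delta_X$ (same domains): once that is established, the identity function on $\sigmaomegastringset$ realizes $f$ with respect to $(\delta_X, \delta_{Y_f})$, and it also realizes $f^{-1}$ with respect to $(\delta_{Y_f}, \delta_X)$, since $\delta_X = f^{-1}\circ \delta_{Y_f}$. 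So in fact I would prove the $\gamma = \delta$ case of (c) first, then derive (b) from it.

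So the real content is part (c), which I would prove representation by representation, in each case unwinding Definition~\ref{defPositiveRepresentationsCompTOPSPACE}/\ref{defNegativeRepresentationsCompTOPSPACE} for $\mathbf{Y}_f$ and substituting $\nu_f(w) = \directimage{\nu(w)}{f}$. The domain equalities $\dom\gamma_f = \dom\gamma$ follow because each domain condition only refers to membership $w \in \dom\nu_f = \dom\nu$, to emptiness of base sets (and $\directimage{\nu(w)}{f} = \myemptyset \iff \nu(w) = \myemptyset$ since $f$ is injective), and to $\domp{\fsunionbasenu_f} = \domp{\fsunionbasenu}$. For $\gamma = \delta$: $x = \delta_{Y_f}(p)$ iff for all $w$, $w \ll p \iff w\in\dom\nu_f \wedge x \in \nu_f(w) = \directimage{\nu(w)}{f}$; since $f$ is a bijection, $x \in \directimage{\nu(w)}{f} \iff f^{-1}(x) \in \nu(w)$, so this says $f^{-1}(x) = \delta_X(p)$, i.e.\ $x = f(\delta_X(p))$. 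For $\gamma \in \{\theta, \psi^-\}$: for $\theta$, $\theta_{Y_f}(p) = \bigcup_{w\ll p}\nu_f(w) = \bigcup_{w\ll p}\directimage{\nu(w)}{f} = \directimage{\bigcup_{w\ll p}\nu(w)}{f} = \directimage{\theta_X(p)}{f}$ (again using that $f$ commutes with unions); for $\psi^-$, $\psi^-_{Y_f}(p) = Y - \theta_{Y_f}(p) = Y - \directimage{\theta_X(p)}{f} = \directimage{X - \theta_X(p)}{f} = \directimage{\psi^-_X(p)}{f}$, using that $f$ is a bijection of $X$ onto $Y$ so it commutes with complementation. For $\gamma = \kappa$: $K \in \kappa_{Y_f}(p)$ iff for all $z$, $z\ll p \iff z\in\domp{\fsunionbasenu_f} \wedge K \subseteq \fsunionbasenu_f(z)$; writing $\fsunionbasenu_f(z) = \directimage{\fsunionbasenu(z)}{f}$ (finite unions transported by $f$) and using $K \subseteq \directimage{A}{f} \iff \inverseimage{K}{f} \subseteq A$ for $A\subseteq X$ (valid since $f$ is a bijection and $K\subseteq Y$), this says $\inverseimage{K}{f} \in \kappa_X(p)$, i.e.\ $K = \directimage{D}{f}$ for some $D\in\kappa_X(p)$ — more precisely $\kappa_{Y_f}(p) = \{\directimage{D}{f}\mid D\in\kappa_X(p)\}$.

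The only mildly delicate point — and the place I would be most careful — is making sure every set-theoretic manipulation genuinely uses only that $f$ is a \emph{bijection}: commuting with arbitrary unions needs nothing, commuting with finite (or any) intersections and with complementation in the ambient set needs injectivity and surjectivity respectively, and the equivalences $x\in\directimage{A}{f}\iff f^{-1}(x)\in A$ and $K\subseteq\directimage{A}{f}\iff\inverseimage{K}{f}\subseteq A$ need bijectivity. Since all of these hold, there is no real obstacle; the proof is a sequence of routine verifications, and the main job is organizing it so that (c) for $\delta$ is proved first, (b) is read off from it, and the remaining cases of (c) are dispatched by the same substitution pattern. I would also remark that the multi-representation $\kappa_{Y_f}$ in the last clause is genuinely multi-valued, matching the multi-valued nature of $\kappa$, and that the displayed formula $\kappa_{Y_f}(p) = \{\directimage{D}{f}\mid D\in\kappa_X(p)\}$ is the correct transported description.
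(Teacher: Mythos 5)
Your proof is correct and, for parts (a) and (c), follows essentially the same route as the paper: part (a) reuses the same c.e.\ set $S$ for $\mathbf{Y}_f$, and part (c) is the same case-by-case unwinding of Definitions \ref{defPositiveRepresentationsCompTOPSPACE} and \ref{defNegativeRepresentationsCompTOPSPACE}, transporting membership, unions, complements and the inclusion $K\subseteq \fsunionbasenu_f(z)$ through the bijection $f$. The one genuine difference is part (b): the paper asserts that one should check that $B\mapsto \inverseimage{B}{f}$ is $(\nu_f,\theta)$-computable and $A\mapsto \inverseimage{A}{(f^{-1})}$ is $(\nu,\theta_f)$-computable (i.e.\ it goes through the open-set characterization of Theorem \ref{thmequivCompFunctions}) and then omits the verification as straightforward, whereas you first establish $\delta_{Y_f}=f\circ\delta_X$ with $\dom\delta_{Y_f}=\dom\delta_X$ and observe that the identity on $\sigmaomegastringset$ then realizes both $f$ and $f^{-1}$ with respect to the point representations. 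Your ordering is arguably cleaner and actually supplies the detail the paper skips; it buys you (b) for free from the $\delta$ case of (c), at no cost in generality. One small inaccuracy worth fixing: you claim $\nu_f$ is injective ``hence a notation''---neither the injectivity of $\nu_f$ (which would require $\nu$ itself to be injective) nor that property is needed; a notation in the sense of Definition \ref{defMultirepresentation} only has to be surjective onto $\beta_f$, which is immediate. This is a harmless aside and does not affect the argument.
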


\begin{proof}
$a)$ It is clear that $\mathbf{Y}_f$ is an effective topological space. If $\mathbf{X}$ is
computable, then $\dom \nu$ is computable and there exists a c.e. set $S \subseteq \dom \nu \times
\dom \nu \times \dom \nu$ such that equation \eqref{eqdefCompTOPSPACE} of Definition
\ref{defCompTOPSPACE} is satisfied. Since $\dom \nu_f = \dom \nu$, $\dom \nu_f$ is computable and the
same c.e. set $S$ can be used to fulfill \eqref{eqdefCompTOPSPACE} for the effective space
$\mathbf{Y}_f$ in Definition \ref{defCompTOPSPACE}. Thus $\mathbf{Y}_f$ is a computable topological
space.

$b)$ We can show that $f$ is a computable homeomorphism between $\mathbf{X}$ and $\mathbf{Y}_f$ by
proving that the functions
\[ B \mapsto \inverseimage{B}{f}\quad\text{and}\quad A \mapsto \inverseimage{A}{(f^{-1})},\quad (B
\in \beta, A \in \beta_f), \]
are $(\nu_f,\theta)$-computable and $(\nu,\theta_f)$-computable respectively. The proof is 
straightforward, thus we omit it.

$c)$ We argue by cases on the induced representations $\gamma$ and $\gamma_f$:
\begin{itemize}
 \item [] {\it Case $\gamma=\delta$}. Let $p\in \sigmaomegastringset$, if $\delta(p)=x\in X$,
then $f(x)\in Y$ and
\begin{eqnarray*}
 \delta(p) = x & \Leftrightarrow & (\forall w \in \sigmastringset) (w \ll p \Leftrightarrow w \in 
\dom
\nu \wedge x \in \nu(w)) \\
& \Leftrightarrow & (\forall w \in \sigmastringset) (w \ll p \Leftrightarrow w \in \dom \nu_f \wedge
f(x) \in \directimage{\nu(w)}{f}) \\
& \Leftrightarrow & (\forall w \in \sigmastringset) (w \ll p \Leftrightarrow w \in \dom \nu_f \wedge
f(x) \in \nu_f(w)) \\
& \Leftrightarrow & \delta_f(p) = f(x).
\end{eqnarray*}
Thus $\dom \delta = \dom \delta_f$ and also $\delta_f(p) = f(x)=f(\delta(p))=f \circ \delta (p)$
for any $p\in \dom \delta_f$. Therefore, $\delta_f = f \circ \delta$.
% \item  [] {\it Case $\gamma=\delta^-$}. This case is similar to the previous one.
\item  [] {\it Case $\gamma=\theta$}. Given $p\in \sigmaomegastringset$, we have that
\begin{eqnarray*}
 \theta(p)=W & \Leftrightarrow & \begin{cases}
                                w \ll p \Rightarrow w\in \dom \nu, \\
				W = \bigcup \{ \nu(w) \mid w \ll p \},
                               \end{cases} \\
& \Leftrightarrow & \begin{cases}
                                w \ll p \Rightarrow w\in \dom \nu_f, \\
				\directimage{W}{f} = \bigcup \{ \directimage{\nu(w)}{f} \mid w \ll p
\},
                               \end{cases} \\
& \Leftrightarrow & \begin{cases}
                                w \ll p \Rightarrow w\in \dom \nu_f, \\
				\directimage{W}{f} = \bigcup \{ \nu_f(w) \mid w \ll p
\},
                               \end{cases} \\
& \Leftrightarrow & \theta_f(p)=\directimage{W}{f},
\end{eqnarray*}
so that $\dom \theta = \dom \theta_f$ and
$\theta_f(p)=\directimage{W}{f}=\directimage{\theta(p)}{f}$ for any $p\in \dom \theta_f$.
\item  [] {\it Case $\gamma=\psi$}. For $p\in \sigmaomegastringset$,
\begin{eqnarray*}
 \psi(p)=A & \Leftrightarrow & (\forall w \in \sigmastringset) (w \ll p \Leftrightarrow w \in \dom
\nu \wedge A \cap \nu(w) \neq \myemptyset) \\
& \Leftrightarrow & (\forall w \in \sigmastringset) (w \ll p \Leftrightarrow w \in \dom
\nu_f \wedge \directimage{A}{f} \cap \directimage{\nu(w)}{f} \neq \myemptyset) \\
& \Leftrightarrow & (\forall w \in \sigmastringset) (w \ll p \Leftrightarrow w \in \dom
\nu_f \wedge \directimage{A}{f} \cap \nu_f(w) \neq \myemptyset) \\
& \Leftrightarrow & \psi_f(p)=\directimage{A}{f}.
\end{eqnarray*}
And we conclude this case with the same arguments we use in all previous cases.
\item [] 
% {\it Cases $\gamma=\theta^-,\psi^-$}.
{\it Case $\gamma=\psi^-$}.  
This case follows immediately from the definitions
of the representation 
% $\theta^-$ and 
$\psi^-$. and the 
% two 
previous cases.
\item [] {\it Case $\gamma=\kappa$}. Let $p\in \sigmaomegastringset$, notice first that if
$\kappa(p)$ is defined, then $\kappa(p)$ is in general a set of compact subsets of $X$ and
accordingly to \eqref{eqdefPositiveRepresentationsCompTOPSPACE_kappaplus},
\[ K \in \kappa(p) \Longleftrightarrow (\forall z\in \sigmastringset)(z \ll p \Leftrightarrow
K\subseteq \fsunionbasenu(z)), \]
and $\fsunionbasenu(z)=\nu(w_1) \cup \cdots \cup \nu(w_s)$ with $w_i\in \dom \nu \wedge w_i
\ll z$ for all $i=1,\ldots,s$ (see Definition \ref{defFSCSnotations} and the remark after Lemma
\ref{lemmaFiniteCountableSetsNotationsRepresentations}). As each $w_i \in \dom \nu_f$, we have that
$\directimage{\fsunionbasenu(z)}{f}=\directimage{\nu(w_1) \cup \cdots \cup
\nu(w_s)}{f}=\directimage{\nu(w_1)}{f} \cup \cdots \cup \directimage{\nu(w_s)}{f}=\nu_f(w_1) \cup
\cdots \cup \nu_f(w_s)=\fsunionbasenu_f(z)$. Thus for a compact set $K \in \kappa(p)$,
\begin{eqnarray*}
 K \in \kappa(p) & \Leftrightarrow & (\forall z\in \sigmastringset)(z \ll p \Leftrightarrow
K\subseteq \fsunionbasenu(z)), \\
& \Leftrightarrow & (\forall z\in \sigmastringset)(z \ll p \Leftrightarrow
\directimage{K}{f} \subseteq \directimage{\fsunionbasenu(z)}{f}), \\
& \Leftrightarrow & (\forall z\in \sigmastringset)(z \ll p \Leftrightarrow
\directimage{K}{f} \subseteq \fsunionbasenu_f(z)), \\
& \Leftrightarrow & \directimage{K}{f} \in \kappa_f(p).
\end{eqnarray*}
This shows that $\dom \kappa = \dom \kappa_f$ and also that $\kappa_f(p)= \{ \directimage{K}{f}
\mid K \in \kappa(p) \}$.
% \item [] {\it Case $\gamma=\widetilde{\psi}$}. This is similar to the case $\gamma=\kappa$.
\end{itemize}
This concludes all the cases for the induced representations on $\mathbf{X}$ and $\mathbf{Y}_f$ and
finishes the proof.
\end{proof}

This result tell us that the only essential difference between $\mathbf{X}$ and $\mathbf{Y}_f$ is
the ``abstract symbol'' $f$. But more can be said if $Y$ already possesses a computable topology.

\begin{corollary}\label{corCompHomeoTopSpaces}
 Let $\mathbf{X} = (X, \tau_X, \beta_X,\nu_X)$, $\mathbf{Y} = (Y, \tau_Y, \beta_Y,\nu_Y)$ be
computable topological spaces. If $f\colon X \to Y$ is a computable homeomorphism, then
$\mathbf{Y}$ and $\mathbf{Y}_f$ are equivalent.
\end{corollary}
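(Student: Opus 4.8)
The plan is to reduce the statement to Corollary~\ref{corEquiTOPSPACEIDENTITYCOMPHOM}, which characterizes equivalence of two computable topological spaces on a common underlying topological space by the condition that the identity map be a computable homeomorphism between them. So the first step is to verify that $\mathbf{Y}$ and $\mathbf{Y}_f$ really sit over the same topology. Since $f\colon(X,\tau_X)\to(Y,\tau_Y)$ is a homeomorphism, it carries the base $\beta_X$ of $\tau_X$ onto a base of $\tau_Y$, and that image base is precisely $\beta_f=\{\directimage{V}{f}\mid V\in\beta_X\}$; hence the topology $\tau_f$ generated by $\beta_f$ coincides with $\tau_Y$. Combined with Lemma~\ref{lemmaTopSpaceCompFuncRepSet}(a), this shows that both $\mathbf{Y}=(Y,\tau_Y,\beta_Y,\nu_Y)$ and $\mathbf{Y}_f=(Y,\tau_Y,\beta_f,\nu_f)$ are computable topological spaces with the same first two components, which is exactly the setting of Corollary~\ref{corEquiTOPSPACEIDENTITYCOMPHOM}.

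The second step is to exhibit $1_Y$ as a computable homeomorphism $\mathbf{Y}\to\mathbf{Y}_f$. Here I would invoke Lemma~\ref{lemmaTopSpaceCompFuncRepSet}(b), which already tells us that $f$ is a computable homeomorphism between $\mathbf{X}$ and $\mathbf{Y}_f$; thus both $f\colon\mathbf{X}\to\mathbf{Y}_f$ and $f^{-1}\colon\mathbf{Y}_f\to\mathbf{X}$ are computable. On the other hand, the hypothesis of the corollary gives that $f^{-1}\colon\mathbf{Y}\to\mathbf{X}$ and $f\colon\mathbf{X}\to\mathbf{Y}$ are computable. Since computable functions between computable topological spaces are closed under composition (their $(\delta_{\bullet},\delta_{\bullet})$-realizations compose), the set map $f\circ f^{-1}=1_Y$ is computable both as a map $\mathbf{Y}\to\mathbf{Y}_f$ (via $\mathbf{Y}\xrightarrow{f^{-1}}\mathbf{X}\xrightarrow{f}\mathbf{Y}_f$) and as a map $\mathbf{Y}_f\to\mathbf{Y}$ (via $\mathbf{Y}_f\xrightarrow{f^{-1}}\mathbf{X}\xrightarrow{f}\mathbf{Y}$); and it is a topological homeomorphism because $\tau_Y=\tau_f$. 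Hence $1_Y$ is a computable homeomorphism between $\mathbf{Y}$ and $\mathbf{Y}_f$, and Corollary~\ref{corEquiTOPSPACEIDENTITYCOMPHOM} yields the desired equivalence.

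A direct alternative that bypasses the identity map: by Lemma~\ref{lemmaTopSpaceCompFuncRepSet}(c) the point representation $\delta_f$ of $\mathbf{Y}_f$ equals $f\circ\delta_X$, so a computable realization of $f$ (with respect to $\delta_X,\delta_Y$) translates $\delta_f$-names into $\delta_Y$-names, while a computable realization of $f^{-1}$ translates $\delta_Y$-names into $\delta_f$-names, giving $\delta_f\equiv\delta_Y$; then Theorem~\ref{thmRobustnessEquivTOPSPACES} immediately implies that $\mathbf{Y}$ and $\mathbf{Y}_f$ are equivalent. I do not expect a genuine obstacle here — the statement is essentially a formal consequence of earlier lemmas — so the only points requiring care are the bookkeeping of $f$ versus $f^{-1}$ in the compositions and, as noted, the verification that $\tau_f$ and $\tau_Y$ genuinely coincide before any of the computable machinery is applied.
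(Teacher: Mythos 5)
Your proposal is correct. The ``direct alternative'' you sketch at the end is in fact the paper's own proof, almost word for word: the paper shows $\delta_f\equiv\delta_Y$ by taking computable realizations $F,G$ of $f$ and $f^{-1}$ with respect to $\delta_X,\delta_Y$, using part (c) of Lemma~\ref{lemmaTopSpaceCompFuncRepSet} to write $\delta_f=f\circ\delta_X=\delta_Y\circ F$ and $\delta_Y=f\circ\delta_X\circ G=\delta_f\circ G$, and then invoking Theorem~\ref{thmRobustnessEquivTOPSPACES}. Your primary route, through $1_Y$ and Corollary~\ref{corEquiTOPSPACEIDENTITYCOMPHOM}, is a valid repackaging of the same ingredients rather than a genuinely different argument: that corollary is itself a consequence of Theorem~\ref{thmRobustnessEquivTOPSPACES}, and your composition $\mathbf{Y}\xrightarrow{f^{-1}}\mathbf{X}\xrightarrow{f}\mathbf{Y}_f$ is just the realization $F\circ G$ from the paper's diagram in disguise. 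What your version adds is the explicit check that $\tau_f=\tau_Y$ (so that the two 4-tuples really share the same underlying topology, as Definition~\ref{defEquivCompTOPSPACES} requires); the paper leaves this point implicit, and it is worth stating.
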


\begin{proof}
Let $\delta_X,\delta_Y$ be the inner representations induced by $\mathbf{X}$ and $\mathbf{Y}$
respectively. 
We will show that $\delta_f \equiv \delta_Y$. Since $f$ and $f^{-1}$ are computable functions 
between the
computable spaces $\mathbf{X}$ and $\mathbf{Y}$, they are computable with respect to $\delta_X$
and $\delta_Y$, thus there exist computable functions $F,G\colon \subseteq \sigmaomegastringset \to
\sigmaomegastringset$ such that the two inner squares in the following diagram
\begin{equation}\label{dialemmaTopSpaceCompFuncRepSet}
\begin{CD}
 \sigmaomegastringset @> F >> \sigmaomegastringset @> G >> \sigmaomegastringset \\
@V \delta_X VV @V \delta_Y VV @V \delta_X VV \\
 X @> f >> Y @> f^{-1} >> X 
\end{CD}
\end{equation}
commute. By part (c) of Lemma \ref{lemmaTopSpaceCompFuncRepSet} and the first square, $\delta_f = f 
\circ \delta_X = \delta_Y \circ F$, that
is, $\delta_f \leq \delta_Y$. By the second square, $f^{-1} \circ \delta_Y = \delta_X \circ
G$, so that $ \delta_Y = f \circ \delta_X \circ G = \delta_f \circ G$, thus $\delta_Y \leq
\delta_f$. This proves that $\delta_Y \equiv \delta_f$ and then we can apply Theorem
\ref{thmRobustnessEquivTOPSPACES} to deduce that $\mathbf{Y}$ and $\mathbf{Y}_f$ are equivalent
\end{proof}

For a computable predicate space $\mathbf{Z} = (X, \sigma, \lambda)$, we have the computable space 
$T(\mathbf{Z})=(X, \tau, \beta_\lambda, \nu_\lambda)$ of Lemma
\ref{lemmaCompPredicateSpaceISCompTOPSPACE}, 
where $\nu_\lambda(\iota(u_1)\cdots \iota(u_k))=\lambda(u_1) \cap \cdots \cap \lambda(u_k)$, $\tau$
is 
the topology generated by the
subbase $\sigma$ and $\delta_\lambda \equiv \delta_{\mathbf{Z}}$. If $\sigma$ is not only a subbase,
but
a base of $\tau$, then we can construct the effective space $\mathbf{Y}= (X,\tau, \sigma,
\lambda)$,
which may be computable. For the topology $\tau$ we have on the one hand, the basis $\beta_\lambda$ 
with
the 
notation $\nu_\lambda$ (defined via formal intersection of subbase elements) and on the other hand,
the 
basis $\sigma$ with notation $\lambda$. The question is: Are $T(\mathbf{Z})$ and $\mathbf{Y}$
equivalent ? 
The answer is right here \cite[Lemma 23]{KWECTOP}.

\begin{lemma}\label{lemmaEquivalentCompTOPSPACEBaseSubbase}
Let $\mathbf{Y}= (X,\tau, \sigma, \lambda)$ be an effective topological space such that 
$\mathbf{Z} = (X, \sigma, \lambda)$ is a computable predicate space. Then $T(\mathbf{Z})$ and
$\mathbf{Y}$ 
are equivalent if and only if $\mathbf{Y}$ is a computable topological space.
\end{lemma}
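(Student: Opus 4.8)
The plan is to unwind Definition~\ref{defEquivCompTOPSPACES} directly. First I would record the bookkeeping. Write $T(\mathbf{Z}) = (X, \tau_\lambda, \beta_\lambda, \nu_\lambda)$, and recall that $\nu_\lambda = \fsintersectionbasenotation{\lambda}$ is, as a notation, exactly $\lambda^{\text{fs}}$, with the convention that $\nu_\lambda(w)$ denotes the \emph{intersection} of the finite subfamily of $\sigma$ coded by $w$ (cf.\ the remark after Lemma~\ref{lemmaFiniteCountableSetsNotationsRepresentations}). Since $\mathbf{Y} = (X, \tau, \sigma, \lambda)$ is an effective topological space, $\sigma$ is a base of $\tau$; hence $\beta_\lambda$, which contains $\sigma$ and is contained in $\tau$, is again a base of $\tau$, so $\tau_\lambda = \tau$ and the two $4$-tuples are effective topological spaces over the same $(X,\tau)$. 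Moreover $T(\mathbf{Z})$ is a computable topological space by Lemma~\ref{lemmaCompPredicateSpaceISCompTOPSPACE}, because $\dom\lambda$ is computable. Writing $\theta_\lambda$ and $\theta_{\mathbf{Y}}$ for the representations of $\tau$ induced by $\nu_\lambda$ and by $\lambda$ respectively, Definition~\ref{defEquivCompTOPSPACES} asks us to show: $\mathbf{Y}$ is a computable topological space $\iff$ $\nu_\lambda \leq \theta_{\mathbf{Y}}$ and $\lambda \leq \theta_\lambda$.

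For the direction ``$\mathbf{Y}$ computable $\Rightarrow$ equivalent'', I would first dispatch $\lambda \leq \theta_\lambda$, which in fact costs nothing: the map $w \mapsto \iota(w)$ witnesses $\lambda \leq \nu_\lambda$ since $\nu_\lambda(\iota(w)) = \lambda(w)$ for $w\in\dom\lambda$, while $\nu_\lambda \leq \theta_\lambda$ holds in any effective topological space (a base element $\nu_\lambda(w)$ has, e.g., the $\theta_\lambda$-name $\iota(w)0^\omega$), and these reductions compose. The substantive step is $\nu_\lambda \leq \theta_{\mathbf{Y}}$, and here I would invoke Theorem~\ref{thmTheorem11ECT}(1) applied to the computable space $\mathbf{Y}$: finite intersection on open sets is $(\lambda^{\text{fs}}, \theta_{\mathbf{Y}})$-computable. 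A $\nu_\lambda$-name $w$ is precisely a $\lambda^{\text{fs}}$-name of a finite subfamily of $\sigma$ whose intersection is $\nu_\lambda(w)$, so running that algorithm on $w$ outputs a $\theta_{\mathbf{Y}}$-name of $\nu_\lambda(w)$ — that is, a computable realization of the inclusion $\beta_\lambda \hookrightarrow \tau$ with respect to $(\nu_\lambda, \theta_{\mathbf{Y}})$. Both reductions hold and both spaces are computable topological spaces, so they are equivalent.

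For the converse, suppose $T(\mathbf{Z})$ and $\mathbf{Y}$ are equivalent, so in particular $\nu_\lambda \leq \theta_{\mathbf{Y}}$ via a computable $h$ with $\theta_{\mathbf{Y}}(h(w)) = \nu_\lambda(w)$ for $w\in\dom\nu_\lambda$. Given $u,v\in\dom\lambda$, the string $\langle u,v\rangle$ lies in $\dom\nu_\lambda$ and $\nu_\lambda(\langle u,v\rangle) = \lambda(u)\cap\lambda(v)$, whence $\lambda(u)\cap\lambda(v) = \bigcup_{z \ll h(\langle u,v\rangle)} \lambda(z)$ with every such $z$ already in $\dom\lambda$. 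I would then set $S = \{(u,v,z) \in (\dom\lambda)^3 \mid z \ll h(\langle u,v\rangle)\}$; this is c.e.\ because one can dovetail over all pairs $(u,v)\in(\dom\lambda)^2$ (using computability of $\dom\lambda$) and, while running the Type-$2$ machine for $h$ on $\langle u,v\rangle$, enumerate $(u,v,z)$ whenever $\iota(z)$ appears in the output. By the displayed identity $S$ witnesses Equation~\eqref{eqdefCompTOPSPACE} for $\mathbf{Y}$, and since $\dom\lambda$ is computable and $(X,\tau)$ is $T_0$, $\mathbf{Y}$ is a computable topological space. (If Definition~\ref{defEquivCompTOPSPACES} is read as presupposing that both spaces are computable topological spaces, this direction is immediate, and the argument just given is what makes that presupposition legitimate.)

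The main obstacle I anticipate is not depth but bookkeeping: correctly matching the notation $\nu_\lambda = \fsintersectionbasenotation{\lambda}$ to the hypothesis ``$\mathbf{Y}$ is a computable topological space'' through the right instance of Theorem~\ref{thmTheorem11ECT}(1), and keeping straight which of $\theta_\lambda, \theta_{\mathbf{Y}}$ is induced by which notation and in which direction each reduction runs. Once the identification ``base element of $T(\mathbf{Z})$ $=$ finite intersection of subbase elements of $\mathbf{Y}$'' is in place, the load-bearing reduction $\nu_\lambda \leq \theta_{\mathbf{Y}}$ and the converse extraction of the c.e.\ set both fall out directly.
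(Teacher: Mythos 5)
Your argument is correct: the forward direction is exactly the right instance of Theorem~\ref{thmTheorem11ECT}(1) for $\mathbf{Y}$ together with the trivial reductions $\lambda \leq \nu_\lambda \leq \theta_\lambda$, and the converse correctly extracts the c.e.\ set $S$ witnessing Equation~\eqref{eqdefCompTOPSPACE} from the reduction $\nu_\lambda \leq \theta_{\mathbf{Y}}$ restricted to names $\langle u,v\rangle$. The paper itself gives no proof of this lemma (it is quoted from \cite[Lemma 23]{KWECTOP}), but your proof is the standard argument for that result, and your parenthetical remark about Definition~\ref{defEquivCompTOPSPACES} presupposing computability of both spaces correctly flags the only definitional wrinkle.
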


\subsubsection*{Computable Embeddings}

% Sometimes it is useful to have a space $X$ inside another (probably bigger) space $Y$, that is,
% when $X$ is \emph{embedded} in $Y$. This idea is formalized with the concept of \emph{embedding}. 
We now introduce the definition of \emph{computable embedding} for computable topological spaces.

\begin{definition}\label{defCompImbedding}
 Let $\mathbf{X} = (X, \tau_X, \beta_X,\nu_X)$, $\mathbf{Y} = (Y, \tau_Y, \beta_Y,\nu_Y)$ be
computable topological spaces. A \emph{computable embedding} of $\mathbf{X}$ into $\mathbf{Y}$ is a
topological embedding $h\colon X \hookrightarrow Y$ such that $h$ is a computable
homeomorphism of $\mathbf{X}$ onto the computable subspace $\ctopcompsubspace{\mathbf{Y}}{X^\prime}$, where
$X^\prime=\directimage{X}{h}$. 
%We denote this by $h\colon X \hookrightarrow Y$.
\end{definition}

By Corollary \ref{corCompHomeoTopSpaces}, the computable topological spaces $\mathbf{X}^\prime_h =
(\directimage{X}{h}, \tau_h, \beta_h, \nu_h)$ and $\ctopcompsubspace{\mathbf{Y}}{X^\prime}$ are equivalent.

% \begin{example}\label{exampleImbeddingInclusion}
%  For any effective topological space $\mathbf{X} = (X, \tau, \beta,\nu)$ and any $Z\subseteq X$,
% the computable topological space $\ctopcompsubspace{\mathbf{X}}{Z}$ is embedded in $X$ via the inclusion
% map $i\colon Z\hookrightarrow X$. %Clearly $i$ is a computable embedding of $Z$ into $X$ 
% In fact, any computable homeomorphism is a computable embedding.
% \end{example}

\begin{example}\label{exampleImbeddingBallInclusionEuclideanN}
Let $h \colon B \to \euclidean{n}$ ($B = B(0,1)$) be the computable homeomorphism of example
\ref{exampleBallCTEuclideanSpace} and $\iota \colon \euclidean{n} \to \euclidean{n+1}$ be such that
$\iota(x_1, \ldots, x_n) = (0, x_1, \ldots, x_n)$. Clearly $\iota$ is a computable embedding of
$\mathbf{R}^n$ into $\mathbf{R}^{n+1}$. We can compose $h$ with $\iota$ to obtain a computable
embedding $B(0,1) \hookrightarrow \euclidean{n+1}$ of $\ctopcompsubspace{\mathbf{R}^n}{B}$ into $\mathbf{R}^{n+1}$.
\end{example}

\subsection{Computably Hausdorff spaces}

In the theory of computable topology, computable versions of the standard separation axioms
$T_i$ ($i=0,1,2$) has been proposed \cite{GrubbaEffectiveStoneWeierstrass,Weihrauch:jucs_16_18:computable_separation_in_topology}. Some of the relationships between these
computable axioms and many examples are studied in
\cite{Weihrauch:jucs_16_18:computable_separation_in_topology}. One of this axioms will be used in
this paper, which is one of the computable variants of the $T_2$ (Hausdorff) separation axiom. A
space $(X,\tau)$ is $T_2$ or \emph{Hausdorff} if and only if the following condition holds:
\[ (\forall x,y \in X)(x\neq y \Rightarrow (\exists U,V \in \tau) U \cap V = \myemptyset \wedge
x\in U \wedge y \in V). \]
Many examples of Hausdorff spaces exist in the literature 
\cite{munkres2000topology,armstrong,engelking1989general}.
The next definition is a constructive version of the Hausdorff property.

\begin{definition}\label{defSCT2}
 A computable topological space $\mathbf{X}=(X,\tau, \beta, \nu)$ is called \emph{computably
Hausdorff} if there exists a c.e. set $H\subseteq \dom \nu \times \dom \nu$ such that
\begin{eqnarray}\label{eqDefSCT2}
  (\forall (u,v)\in H) (\nu(u) \cap \nu(v) = \myemptyset),\qquad\qquad \\ \label{eqDefSCT21}
  (\forall x,y \in X) (x\neq y \Rightarrow (\exists (u,v)\in H) (x\in \nu(u) \wedge y \in
\nu(v))).
\end{eqnarray}
\end{definition}

\begin{lemma}\label{exampleSCT2EuclideanSpace}
The computable euclidean space $\mathbf{R}^n$ is a computably Hausdorff space.
\end{lemma}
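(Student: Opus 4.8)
The plan is to exhibit an explicit c.e. set $H\subseteq \dom\mu^n\times\dom\mu^n$ witnessing the computable Hausdorff property for $\mathbf{R}^n=(\euclidean{n},\tau^n,\beta^n,\mu^n)$. Recall that $\mu^n$ is the canonical notation of all open rational balls, so a string $u\in\dom\mu^n$ encodes a pair $(q,r)$ with $q\in\rationals^n$ and $0<r\in\rationals$, and $\mu^n(u)=B(q,r)$. The natural choice is
\[
 H=\{(u,v)\in\dom\mu^n\times\dom\mu^n \mid \mu^n(u)\cap\mu^n(v)=\myemptyset\}.
\]
First I would check that $H$ is c.e.\ (in fact decidable): given $u,v\in\dom\mu^n$ decoding to balls $B(q_1,r_1)$ and $B(q_2,r_2)$, disjointness is equivalent to $\lVert q_1-q_2\rVert\geqslant r_1+r_2$, i.e.\ to $\lVert q_1-q_2\rVert^2\geqslant (r_1+r_2)^2$, which is a rational (in)equality between explicitly computable rational numbers and hence decidable; one may also invoke Theorem~\ref{thmTarskiDecisionMethod} since this is an elementary expression over $\reals$ with rational parameters. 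Combined with the fact that $\dom\mu^n$ is computable (Example~\ref{exampleCompTOPSPACES}), this shows $H$ is a computable, hence c.e., subset of $\dom\mu^n\times\dom\mu^n$, and condition \eqref{eqDefSCT2} holds by the very definition of $H$.

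Next I would verify the separation condition \eqref{eqDefSCT21}. Let $x,y\in\euclidean{n}$ with $x\neq y$ and put $d=\lVert x-y\rVert>0$. Choose a rational $0<r<d/2$ and rationals $q_1,q_2\in\rationals^n$ with $\lVert x-q_1\rVert<r$ and $\lVert y-q_2\rVert<r$; such rationals exist by density of $\rationals^n$ in $\euclidean{n}$. Then $x\in B(q_1,r)$ and $y\in B(q_2,r)$, while for any $z\in B(q_1,r)\cap B(q_2,r)$ one would get $d=\lVert x-y\rVert\leqslant \lVert x-q_1\rVert+\lVert q_1-z\rVert+\lVert z-q_2\rVert+\lVert q_2-y\rVert<4r<2d$, and more sharply $\lVert q_1-q_2\rVert\leqslant\lVert q_1-z\rVert+\lVert z-q_2\rVert<2r$ contradicts $\lVert q_1-q_2\rVert\geqslant \lVert x-y\rVert-\lVert x-q_1\rVert-\lVert y-q_2\rVert>d-2r>0$ once $r$ is taken small enough, say $0<r<d/4$, so that $\lVert q_1-q_2\rVert>d-2r>2r$ forces $B(q_1,r)\cap B(q_2,r)=\myemptyset$. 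Taking $u,v\in\dom\mu^n$ with $\mu^n(u)=B(q_1,r)$, $\mu^n(v)=B(q_2,r)$, we have $(u,v)\in H$ and $x\in\nu(u)$, $y\in\nu(v)$, which is exactly \eqref{eqDefSCT21}.

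There is no serious obstacle here; the only mild point requiring care is the bookkeeping in the radius estimate, making sure the chosen approximating balls are genuinely disjoint — picking $r<d/4$ (rather than merely $r<d/2$) gives enough slack. Everything else is a routine unwinding of the definitions of $\mu^n$, $\beta^n$, and Definition~\ref{defSCT2}, together with the density of the rationals.
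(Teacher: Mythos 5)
Your proposal is correct and takes essentially the same approach as the paper: the same witness set $H=\{(u,v)\mid \mu^n(u)\cap\mu^n(v)=\myemptyset\}$, decidability via Tarski (or, as you note, directly via the rational inequality $\lVert q_1-q_2\rVert\geqslant r_1+r_2$), and verification of the separation condition by producing disjoint rational balls around $x$ and $y$. The only difference is cosmetic: the paper invokes the Hausdorff property of $\euclidean{n}$ abstractly and then picks base elements inside the separating open sets, whereas you construct the disjoint balls explicitly with the $r<d/4$ estimate; both are fine.
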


\begin{proof}
We claim that the 
set $H$ defined as
\[ H= \{ (u,v)\in \sigmastringset \times \sigmastringset \mid \mu^n(u) \cap 
\mu^n(v)=\myemptyset \} \]
is c.e. and satisfies Equations \eqref{eqDefSCT2} and \eqref{eqDefSCT21}. By definition, $H$ 
fulfills \eqref{eqDefSCT2} and $H$ is c.e. by Theorem 
\ref{thmTarskiDecisionMethod}. To see that $H$ satisfies \eqref{eqDefSCT21}, we proceed as follows. 
Since $\euclidean{n}$ is Hausdorff, then given any two points $x,y \in \euclidean{n}$ with $x\neq y$, 
there exist open sets $U,V\in \tau^n$ such that $U\cap V = \myemptyset$ and $x \in U, y\in V$. Thus 
we can find two base elements $B_1,B_2 \in \beta^n$ such that $x\in B_1 \subseteq U$ and $y\in B_2 
\subseteq V$. As $\mu^n$ is a notation, it is surjective, so that there exist $u_1,u_2 \in \dom 
\mu^n$ with $\mu^n(u_i)=B_i$, $i=1,2$. Since $U \cap V = \myemptyset$, $\mu^n(u_1) \cap 
\mu^n(u_2) 
= \myemptyset$, so that $(u_1,u_2)\in H$. Therefore $H$ satisfies Definition \ref{defSCT2} and 
$\mathbf{R}^n$ is a computably Hausdorff space.
\end{proof}

The computable Hausdorff axiom of Definition \ref{defSCT2} is the strongest of all the computable
separation axioms of \cite{Weihrauch:jucs_16_18:computable_separation_in_topology}. Of course, A
computably Hausdorff space is a Hausdorff space in the usual sense. We list some properties of
computably Hausdorff topological spaces.

\begin{theorem}[\cite{GrubbaEffectiveStoneWeierstrass,Weihrauch:jucs_16_18:computable_separation_in_topology}]\label{thmPropertiesSCT2}
 Let $\mathbf{X}=(X,\tau, \beta, \nu)$ be a computable topological space.
\begin{enumerate}
%  \item If $\mathbf{X}$ is Hausdorff and the set $\{ (u,v) \in \dom \nu \times \dom \nu \mid \nu(u)
% \cap \nu(v)=\myemptyset \}$ is c.e., then $\mathbf{X}$ is computably Hausdorff.
\item If $\mathbf{X}$ is computably Hausdorff and $A\subseteq X$, then the computable subspace
$\ctopcompsubspace{\mathbf{X}}{A}$ is computably Hausdorff.
% \item The following are equivalent
% \begin{itemize}
%  \item [a)] $\mathbf{X}$ is computably Hausdorff.
%  \item [b)] The set $\{ (x,y) \in X\times X \mid x\neq y \}$ is $(\delta,\delta)$-c.e.
%  \item [c)] $\delta \leq \delta^-$.
% \end{itemize}
 \item If $\mathbf{X}$ is computably Hausdorff then $\kappa \leq \psi^-$.
\item If $\mathbf{Y}=(Y,\tau^\prime, \beta^\prime, \nu^\prime)$ is another computable topological
space and $\mathbf{X}, \mathbf{Y}$ are computably Hausdorff, then the computable product space
$\overline{\mathbf{X}}=(X\times Y,\overline{\tau}, \overline{\beta}, \overline{\nu})$ is computably
Hausdorff.
\end{enumerate}
\end{theorem}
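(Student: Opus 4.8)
The plan is to verify each of the three clauses of Theorem \ref{thmPropertiesSCT2} by exhibiting an explicit computably enumerable witness set in the sense of Definition \ref{defSCT2}; only clause (2) requires an idea beyond bookkeeping. For clause (1), let $H\subseteq\dom\nu\times\dom\nu$ be a c.e.\ set witnessing that $\mathbf{X}$ is computably Hausdorff, and recall from Lemma \ref{lemmaPropertiesEffectiveTOPSUBSPACE} that $\ctopcompsubspace{\mathbf{X}}{A}$ has $\dom\nu_A=\dom\nu$ and $\nu_A(w)=\nu(w)\cap A$. I would simply check that the \emph{same} set $H$ works: if $(u,v)\in H$ then $\nu_A(u)\cap\nu_A(v)\subseteq\nu(u)\cap\nu(v)=\myemptyset$, so \eqref{eqDefSCT2} holds; and if $x,y\in A$ with $x\neq y$, then $x\neq y$ in $X$, so some $(u,v)\in H$ has $x\in\nu(u)$ and $y\in\nu(v)$, hence $x\in\nu_A(u)$ and $y\in\nu_A(v)$, giving \eqref{eqDefSCT21}.

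For clause (2) the task is to turn a $\kappa$-name $p$ of a compact set $K$ into a $\psi^-$-name of $K$; by Definition \ref{defNegativeRepresentationsCompTOPSPACE} this means outputting a c.e.\ list of base indices whose union equals $X-K$. The engine is a \emph{certificate search}: dovetail over all finite tuples $(z,u_1,v_1,\dots,u_m,v_m)$ and accept such a tuple when (i) $\iota(z)$ occurs as a subword of $p$ (c.e., and by the definition of $\kappa$ this then guarantees $K\subseteq\fsunionbasenu(z)$), (ii) $z$ is exactly the $\fsunionbasenu$-code of $\{v_1,\dots,v_m\}$ (decidable), and (iii) $(u_i,v_i)\in H$ for every $i$ (c.e.), where $H$ witnesses computable Hausdorffness. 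For an accepted tuple the open set $W:=\nu(u_1)\cap\cdots\cap\nu(u_m)$ is disjoint from $K$: since $W\subseteq\nu(u_j)$ and $\nu(u_j)\cap\nu(v_j)=\myemptyset$ we get $W\cap\nu(v_j)=\myemptyset$ for each $j$, hence $W\cap\fsunionbasenu(z)=\myemptyset$ while $K\subseteq\fsunionbasenu(z)$. Feeding $\{u_1,\dots,u_m\}$ to the finite-intersection algorithm of Theorem \ref{thmTheorem11ECT}(1) yields a $\theta$-name of $W$, i.e.\ a stream of base indices $w$ with $\bigcup\nu(w)=W\subseteq X-K$, all of which are output. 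Finally, the union over all accepted tuples is all of $X-K$: given $x\notin K$, for each $y\in K$ apply \eqref{eqDefSCT21} to pick $(u_y,v_y)\in H$ with $x\in\nu(u_y)$, $y\in\nu(v_y)$; the $\nu(v_y)$ cover $K$, so by compactness finitely many already cover it, yielding an accepted tuple whose $W=\bigcap_i\nu(u_{y_i})$ contains $x$. (The degenerate case $K=\myemptyset$ is handled by the conventions $\bigcup\myemptyset=\myemptyset$, $\bigcap\myemptyset=X$, and $K$ is closed since $X-K$ is exposed as open, so the output is a legitimate $\psi^-$-name.) This is the step I expect to be the main obstacle, precisely because a $\psi^-$-name must be assembled from \emph{base} elements: one cannot stop at the open sets $W$ but must re-expand each finite intersection into base elements via the defining c.e.\ set of the computable topological space, equation \eqref{eqdefCompTOPSPACE}.

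For clause (3), let $H_1\subseteq\dom\nu_1\times\dom\nu_1$ and $H_2\subseteq\dom\nu_2\times\dom\nu_2$ witness computable Hausdorffness of $\mathbf{X}$ and $\mathbf{Y}$, and recall $\overline\nu(\langle u_1,u_2\rangle)=\nu_1(u_1)\times\nu_2(u_2)$ from Lemma \ref{lemmaPropertiesEffectiveTOPPRODUCT}. I would take
\[ \overline H=\{(\langle u_1,u_2\rangle,\langle v_1,v_2\rangle)\mid (u_1,v_1)\in H_1,\ u_2,v_2\in\dom\nu_2\}\cup\{(\langle u_1,u_2\rangle,\langle v_1,v_2\rangle)\mid (u_2,v_2)\in H_2,\ u_1,v_1\in\dom\nu_1\}, \]
which is c.e.\ because $H_1,H_2$ are c.e.\ and $\dom\nu_1,\dom\nu_2$ are computable. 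Condition \eqref{eqDefSCT2} holds because, for a pair coming from the first clause, $(\nu_1(u_1)\times\nu_2(u_2))\cap(\nu_1(v_1)\times\nu_2(v_2))=(\nu_1(u_1)\cap\nu_1(v_1))\times(\nu_2(u_2)\cap\nu_2(v_2))=\myemptyset$, and symmetrically for the second clause. Condition \eqref{eqDefSCT21}: if $(x_1,y_1)\neq(x_2,y_2)$ then $x_1\neq x_2$ or $y_1\neq y_2$; in the first case pick $(u_1,v_1)\in H_1$ separating $x_1,x_2$ and any $u_2,v_2\in\dom\nu_2$ with $y_1\in\nu_2(u_2)$, $y_2\in\nu_2(v_2)$ (such exist since $\beta_2$ is a base), and then $(\langle u_1,u_2\rangle,\langle v_1,v_2\rangle)\in\overline H$ separates the two points; the case $y_1\neq y_2$ is symmetric. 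This completes the sketch.
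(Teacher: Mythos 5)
The paper states Theorem \ref{thmPropertiesSCT2} as a known result imported from the cited references and supplies no proof of its own, so there is no internal argument to compare against; your proof is correct. All three clauses check out --- reusing the witness set $H$ verbatim for the subspace, the dovetailed certificate search plus compactness for $\kappa\leq\psi^-$ (including the essential step of re-expanding each finite intersection $W$ into base elements via Theorem \ref{thmTheorem11ECT}(1) so that the output is a legitimate $\theta$-name of $X-K$), and the two-sided union $\overline{H}$ for the product --- and they reconstruct the standard arguments found in the cited sources.
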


More information about Computably Hausdorff spaces and many other computable separation axioms,
results, examples and counterexamples can be found in
\cite{Weihrauch:jucs_16_18:computable_separation_in_topology}. The computable Hausdorff property
will be very important for our work on computable manifolds, as it is important the standard
Hausdorff property for topological manifolds.

\section{Computable manifolds}\label{secComputableManifolds}

In this section, we will construct the concept of \emph{computable manifold}. 
% As noted in the
% introduction, 
The first thing to do is to give the definition of what we call
\emph{computable
structures}. 
%on re\-pre\-sen\-ted sets
% This is very similar to the way smooth and PL manifolds are defined. 
Once we have defined computable
manifolds, we will talk about computable functions between computable manifolds.
% and finally, prove some basic properties of computable manifolds.

\subsection{The computable predicate space induced by a topological
atlas}\label{subsecCompPredicateSpacebyAtlas}

Assume that $X$ is a non-empty set and that $\Phi = \{ (\varphi_i, U_i)\}_{i \in I}$ is an
$n$-dimensional atlas on $X$ with index set $I\subseteq \sigmastringset$ (so that $\Phi$ is 
countable). If $\varphi_i \colon U_i \to \directimage{U_i}{\varphi_i}$ is a chart, then
by Definition \ref{defChartAtlas}, $\directimage{U_i}{\varphi_i} \subseteq 
\euclidean{n}$ is an 
open
set,thus
\[ \directimage{U_i}{\varphi_i}=\bigcup_{w\in S_i} \mu^n(w), \]
where $S_i = \{ w \in \domp{\mu^n} \mid \mu^n(w) \subset \directimage{U_i}{\varphi_i} \}$. 
Therefore $U_i=\bigcup_{w\in S_i}
\inverseimage{\mu^n(w)}{\varphi_i}$ $\subset X$. Let $A_\Phi = \sigmastringset \times 
\domp{\mu^n} \subseteq \sigmastringset \times \sigmastringset$ and define the set of strings 
$\Lambda_\Phi\subseteq 
\sigmastringset$ as
\[ \Lambda_\Phi=\{ \langle i, w \rangle \mid (i, w) \in A_\Phi \}, 
\] 
that is, $\Lambda_\Phi$ is just the image of the set $A_\Phi$ under the function 
$\langle \cdot , \cdot \rangle$ 
% (see Definition \ref{defWrappingFunc} and Equation 
% \eqref{eqWrappingFunctionFinite}). 
(see Section \ref{secComputabilityTheory}).
For each $\langle i, w \rangle \in \Lambda_\Phi$, 
define the set $B_{\langle i, w \rangle}\subseteq X$ by
\begin{equation}\label{eqDefBiw}
B_{\langle i, w \rangle} = \begin{cases}
                              \inverseimage{\mu^n(w)}{\varphi_i} & \text{ if } i \in I \text{ and 
}
\mu^n(w) \subseteq \directimage{U_i}{\varphi_i}, \\
			      \myemptyset & \text{ otherwise. }
                             \end{cases}
\end{equation}
% we will abuse notation and write $B_{iw}$ instead of $B_{\langle i, w \rangle}$ if there is no
% confusion. 
Each set $B_{\langle i, w \rangle}$ with $\langle i, w \rangle \in \Lambda_\Phi$ is called a 
\emph{computable ball}. Notice 
that unlike an ordinary ball, a computable ball can be empty. Because the set $\{ 
U_i \}_{i\in I}$ covers $X$, we can see that the set $\mathcal{B}_\Phi=\{ B_{j} \mid j \in 
\Lambda_\Phi \}$ also covers $X$. 

\begin{lemma}\label{lemBallsBaseAtlas}
The set $\mathcal{B}_\Phi$ is a base of the topology induced by $\Phi$.
\end{lemma}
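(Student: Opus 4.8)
The plan is to show that $\mathcal{B}_\Phi$ is a base of $\tau_\Phi$ by verifying the two standard conditions: (1) $\mathcal{B}_\Phi$ covers $X$ and consists of $\tau_\Phi$-open sets, and (2) every $\tau_\Phi$-open set is a union of members of $\mathcal{B}_\Phi$. The covering part is already observed in the text (the $U_i$ cover $X$, and each $U_i$ is itself a union of computable balls $B_{\langle i,w\rangle}$ with $\mu^n(w)\subseteq\directimage{U_i}{\varphi_i}$, so $\mathcal{B}_\Phi$ covers $X$). The empty computable balls cause no harm since $\myemptyset$ is trivially open and contributes nothing to unions.

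First I would show each computable ball $B_{\langle i,w\rangle}$ lies in $\tau_\Phi$. If $B_{\langle i,w\rangle}=\myemptyset$ this is immediate; otherwise $B_{\langle i,w\rangle}=\inverseimage{\mu^n(w)}{\varphi_i}$ with $\mu^n(w)\subseteq\directimage{U_i}{\varphi_i}$ open in $\euclidean{n}$. By Definition \ref{defInducedTOPAtlas}, each $\varphi_i\colon U_i\to\directimage{U_i}{\varphi_i}$ is a homeomorphism with respect to $\tau_\Phi$ and $U_i$ is $\tau_\Phi$-open, so the preimage under $\varphi_i$ of an open subset of $\directimage{U_i}{\varphi_i}$ is open in $U_i$, hence open in $X$. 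Thus $B_{\langle i,w\rangle}\in\tau_\Phi$.

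Next I would show every $V\in\tau_\Phi$ is a union of computable balls. Fix $V\in\tau_\Phi$ and $x\in V$. Since the $U_i$ cover $X$, pick $i\in I$ with $x\in U_i$. Then $V\cap U_i$ is $\tau_\Phi$-open, and since $\varphi_i$ is a homeomorphism onto $\directimage{U_i}{\varphi_i}$, the set $\directimage{V\cap U_i}{\varphi_i}$ is open in $\directimage{U_i}{\varphi_i}$, hence open in $\euclidean{n}$, and contains $\varphi_i(x)$. Because $\mu^n$ is a notation of the rational open balls — which form a base of $\tau^n$ — there is $w\in\domp{\mu^n}$ with $\varphi_i(x)\in\mu^n(w)\subseteq\directimage{V\cap U_i}{\varphi_i}\subseteq\directimage{U_i}{\varphi_i}$. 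Then $\langle i,w\rangle\in\Lambda_\Phi$ and, since $\mu^n(w)\subseteq\directimage{U_i}{\varphi_i}$, the computable ball is $B_{\langle i,w\rangle}=\inverseimage{\mu^n(w)}{\varphi_i}$. Applying $\varphi_i^{-1}$ to the inclusion $\mu^n(w)\subseteq\directimage{V\cap U_i}{\varphi_i}$ and using that $\varphi_i$ is a bijection on $U_i$ gives $x\in B_{\langle i,w\rangle}\subseteq V\cap U_i\subseteq V$. Hence every point of $V$ lies in some computable ball contained in $V$, so $V=\bigcup\{B_j\mid j\in\Lambda_\Phi,\ B_j\subseteq V\}$, and $\mathcal{B}_\Phi$ is a base of $\tau_\Phi$.

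The only slightly delicate point is the bookkeeping around $\directimage{V\cap U_i}{\varphi_i}$ being genuinely open in $\euclidean{n}$ and not merely in the subspace $\directimage{U_i}{\varphi_i}$; this is where one must invoke that $\directimage{U_i}{\varphi_i}$ is itself open in $\euclidean{n}$ (Definition \ref{defChartAtlas}) together with the homeomorphism property of $\varphi_i$ from Definition \ref{defInducedTOPAtlas}. Once that is in hand, the argument is just the standard "a family of open sets is a base iff every open set is a union of its members, iff every point of every open set has a basic neighbourhood inside it," specialized to the rational-ball base $\mu^n$ of $\euclidean{n}$ pulled back through the charts.
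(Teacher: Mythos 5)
Your proof is correct, but it is organized around a different criterion than the paper's. You verify directly that every $\tau_\Phi$-open set $V$ is a union of computable balls: for $x\in V$ you pick a chart $U_i$ containing $x$, use the definition of $\tau_\Phi$ to see that $\directimage{V\cap U_i}{\varphi_i}$ is open in $\euclidean{n}$, shrink to a rational ball $\mu^n(w)$ around $\varphi_i(x)$, and pull back to get $x\in B_{\langle i,w\rangle}\subseteq V$. The paper instead verifies the pairwise-intersection base criterion: for $x\in B_{\langle i,w\rangle}\cap B_{\langle j,v\rangle}$ it pushes forward by $\varphi_i$, finds a rational ball $\mu^n(z)$ with $\varphi_i(x)\in\mu^n(z)\subseteq \mu^n(w)\cap\directimage{\inverseimage{\mu^n(v)}{\varphi_j}}{\varphi_i}$, and pulls back to get $B_{\langle i,z\rangle}$ inside the intersection, then notes each ball is $\tau_\Phi$-open by continuity of the charts. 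The two arguments are close cousins (both ultimately rest on the rational balls being a base of $\euclidean{n}$ and on the charts being homeomorphisms for $\tau_\Phi$), but they buy slightly different things: the paper's version is the one that will be reused implicitly when checking that $\mathbf{Z}_\Phi$ generates a well-behaved base via formal intersections, whereas your version establishes more explicitly that the topology generated by $\mathcal{B}_\Phi$ is all of $\tau_\Phi$ and not merely contained in it --- a point the paper leaves to the reader (it follows from $U_i$ being a union of computable balls together with the definition of $\tau_\Phi$, which is exactly the computation you carry out). Your handling of the empty balls and of the distinction between openness in the subspace $\directimage{U_i}{\varphi_i}$ and openness in $\euclidean{n}$ is careful and correct.
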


\begin{proof}
Let $x \in B_{\langle i, w \rangle} \cap B_{\langle j, v \rangle} = \inverseimage{\mu^n(w)}{\varphi_i} \cap \inverseimage{\mu^n(v)}{\varphi_j}$, then we have that 
\[\varphi_i(x) \in \directimage{\inverseimage{\mu^n(w)}{\varphi_i}}{\varphi_i} \cap 
\directimage{\inverseimage{\mu^n(v)}{\varphi_j}}{\varphi_i} = \mu^n(w) \cap \directimage{\inverseimage{\mu^n(v)}{\varphi_j}}{\varphi_i}.\]
Since the rational open balls are a base of $\euclidean{n}$, there is a ball 
$\mu^n(z)$ such that 
$\varphi_i(x) \in \mu^n(z) \subset \mu^n(w) \cap \directimage{\inverseimage{\mu^n(v)}{\varphi_j}}{\varphi_i}$. 
Therefore $x \in \inverseimage{\mu^n(z)}{\varphi_i} \subset \inverseimage{\mu^n(w)}{\varphi_i} \cap \inverseimage{\mu^n(v)}{\varphi_j} = B_{\langle i, w \rangle} \cap B_{\langle j, v \rangle}$. Since each $\varphi_i$ is continuous in the topology $\tau_\Phi$, hence each $B_{\langle i, w \rangle} =  \inverseimage{\mu^n(w)}{\varphi_i}$ is open
 in $\tau_\Phi$, so that $\mathcal{B}_\Phi$ is a base for $\tau_\Phi$.
\end{proof}

Let the notation $\lambda_\Phi \colon \subseteq \sigmastringset 
\to \mathcal{B}_\Phi$ be given as follows: $\dom \lambda_\Phi = \{ z \in \sigmastringset \mid z = 
\langle i, w \rangle \text{ and } \langle i, w \rangle$ $ \in \Lambda_\Phi \}$ and 
\[ \lambda_\Phi(\langle i,w \rangle)= B_{\langle i, w \rangle}.\]

\begin{lemma}\label{lemmaAbstractManifoldCompPredicateSpace}
 The triple $\mathbf{Z}_\Phi=(X,\mathcal{B}_\Phi,\lambda_\Phi)$ is a computable predicate space.
\end{lemma}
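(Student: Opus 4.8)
The plan is to verify, clause by clause, the requirements of Definition~\ref{defAtomicPRedicate} for $\mathbf{Z}_\Phi=(X,\mathcal{B}_\Phi,\lambda_\Phi)$. Three of the four conditions are essentially bookkeeping. First, $\mathcal{B}_\Phi=\{B_j\mid j\in\Lambda_\Phi\}$ is a countable subset of $2^X$ because $\Lambda_\Phi\subseteq\sigmastringset$. Second, $\bigcup\mathcal{B}_\Phi=X$: this was already observed before the statement, since each $U_i=\bigcup_{w\in S_i}\inverseimage{\mu^n(w)}{\varphi_i}=\bigcup_{w\in S_i}B_{\langle i,w\rangle}$ and the $U_i$ cover $X$. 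Third, $\lambda_\Phi$ is a notation of $\mathcal{B}_\Phi$: it is a well-defined function because the tupling $\langle\cdot,\cdot\rangle$ is injective, so $\langle i,w\rangle=\langle i',w'\rangle$ forces $B_{\langle i,w\rangle}=B_{\langle i',w'\rangle}$; and it is surjective onto $\mathcal{B}_\Phi$ because $\domp{\lambda_\Phi}=\Lambda_\Phi$ and $\mathcal{B}_\Phi=\{\lambda_\Phi(j)\mid j\in\Lambda_\Phi\}$ by construction.

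Next I would argue that $\domp{\lambda_\Phi}$ is computable, which is what upgrades $\mathbf{Z}_\Phi$ from an effective to a computable predicate space. By definition $\domp{\lambda_\Phi}=\Lambda_\Phi$ is the image under the pairing function of $A_\Phi=\sigmastringset\times\domp{\mu^n}$. Since $\mathbf{R}^n$ is a computable topological space (Example~\ref{exampleCompTOPSPACES}), $\domp{\mu^n}$ is computable; and the standard pairing is injective with an effectively computable partial inverse that detects whether a given $z\in\sigmastringset$ is of the form $\langle i,w\rangle$. Hence a Type-2 machine decides membership in $\Lambda_\Phi$ on input $z$ by testing whether $z=\langle i,w\rangle$ for some $i,w$ and, if so, whether $w\in\domp{\mu^n}$ (the first component $i$ ranges freely over $\sigmastringset$). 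Therefore $\domp{\lambda_\Phi}$ is computable.

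The one clause that actually requires a topological argument is the separation condition \eqref{eqdefAtomicPRedicate}, namely $x=y\Leftrightarrow\mathcal{P}_x(X)=\mathcal{P}_y(X)$ with $\mathcal{P}_x(X)=\{B\in\mathcal{B}_\Phi\mid x\in B\}$; this is the step I expect to be the main, though still routine, point, and it is the only place where the locally euclidean structure of $X$ is genuinely used. The implication $\Rightarrow$ is trivial, so I would fix $x\neq y$ and exhibit a computable ball containing exactly one of them. Choose a chart $(\varphi_i,U_i)\in\Phi$ with $x\in U_i$. If $y\notin U_i$, then since $x\in U_i=\bigcup_{w\in S_i}B_{\langle i,w\rangle}$ there is $w\in S_i$ with $x\in B_{\langle i,w\rangle}\subseteq U_i$, and $y\notin B_{\langle i,w\rangle}$. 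If instead $y\in U_i$, then $\varphi_i(x)\neq\varphi_i(y)$ because $\varphi_i$ is injective; as $\euclidean{n}$ is $T_1$ and $\directimage{U_i}{\varphi_i}$ is open, $\directimage{U_i}{\varphi_i}\setminus\{\varphi_i(y)\}$ is an open neighbourhood of $\varphi_i(x)$, so there is a rational ball $\mu^n(z)$ with $\varphi_i(x)\in\mu^n(z)\subseteq\directimage{U_i}{\varphi_i}$ and $\varphi_i(y)\notin\mu^n(z)$; then $B_{\langle i,z\rangle}=\inverseimage{\mu^n(z)}{\varphi_i}$ is a computable ball with $x\in B_{\langle i,z\rangle}$ and $y\notin B_{\langle i,z\rangle}$. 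In either case $\mathcal{P}_x(X)\neq\mathcal{P}_y(X)$. (Equivalently, one may invoke Lemma~\ref{lemBallsBaseAtlas}: $\mathcal{B}_\Phi$ is a base of $\tau_\Phi$ and the chart argument shows $(X,\tau_\Phi)$ is $T_1$, hence base elements separate points.) Having verified all clauses, $\mathbf{Z}_\Phi$ is a computable predicate space.
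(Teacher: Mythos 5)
Your proof is correct, but the key step --- the separation condition \eqref{eqdefAtomicPRedicate} --- is handled by a genuinely different route than the paper's. The paper proves the implication $\mathcal{P}_x(X)=\mathcal{P}_y(X)\Rightarrow x=y$ directly: from the hypothesis it extracts a single computable ball $B_{\langle j,u\rangle}$ containing both points, pushes the families of balls forward through $\varphi_j$, shows that $\varphi_j(x)$ and $\varphi_j(y)$ then have the same atomic predicates in the standard Euclidean predicate space $\mathbf{Z}^n$ of Example~\ref{exampleCompPREDSPACES}, and concludes $\varphi_j(x)=\varphi_j(y)$ from the already-established separation property of $\mathbf{Z}^n$, whence $x=y$ by injectivity of the chart. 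You instead prove the contrapositive, with a case split on whether $y$ lies in the chart domain $U_i$ containing $x$: if not, any computable ball $B_{\langle i,w\rangle}\subseteq U_i$ containing $x$ separates; if so, injectivity of $\varphi_i$ plus the $T_1$ property of $\euclidean{n}$ yields a rational ball $\mu^n(z)\subseteq\directimage{U_i}{\varphi_i}$ containing $\varphi_i(x)$ but not $\varphi_i(y)$, and $\inverseimage{\mu^n(z)}{\varphi_i}$ separates. Your version is slightly more self-contained (it does not lean on the separation property of $\mathbf{Z}^n$, only on the base property of rational balls) and makes explicit the case where the two points do not share a chart, which the paper absorbs into its hypothesis; the paper's version buys a cleaner reduction to a previously verified predicate space. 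Your additional verifications (countability, covering, well-definedness of $\lambda_\Phi$, computability of $\Lambda_\Phi$) match what the paper dismisses as immediate, and are all sound.
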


\begin{proof}
 Clearly, the set $\dom \lambda_\Phi=\Lambda_\Phi$ is a computable set, because $\sigmastringset
\times \domp{\mu^n}$ is computable, so we only need to prove that for $x,y\in X$
\[ \mathcal{P}_x(X)= \mathcal{P}_y(X) \Rightarrow x = y.\]
Assume that $\mathcal{P}_x(X)= \mathcal{P}_y(X)$, so that 
% Because the set $\mathcal{B}_\Phi$ covers $X$, 
there is a string $\langle j, u\rangle \in
\sigmastringset$ such that $x,y\in B_{\langle j, u \rangle}$. For $z\in \{ x,y \}$, define the set
\[\mathcal{P}_z^j=\{ B_{\langle j, w \rangle} \mid B_{\langle j, w \rangle} \in \mathcal{P}_z(X) 
\}\] 
and using $\varphi_j$ we can define the set
\[ \varphi_j(\mathcal{P}_z^j)= \{ \mu^n (w) \mid \mu^n (w)\subset  \directimage{U_j}{\varphi_j}
\wedge \varphi_j(z)\in \mu^n (w) \} \subset \beta^{n},\]
then using our hypothesis, we can prove that $\mathcal{P}_x^j=\mathcal{P}_y^j$ and clearly this
implies that $\varphi_j(\mathcal{P}_x^j)=\varphi_j(\mathcal{P}_y^j)$. 
% Notice also that for any
% $w,u\in \dom \nu$ that satisfy the properties  $\nu(w)\in \varphi_j(\mathcal{P}_x^j)$ and 
% $\varphi_j(x) \in \nu (u)\subseteq \nu(w)$, it holds that $\nu (u)\in \varphi_j(\mathcal{P}_x^j)$.

Let $\mathbf{Z}^n=(\euclidean{n}, \beta^n, \mu^n)$ be the standard computable predicate space
associated with $\euclidean{n}$ (see Example \ref{exampleCompPREDSPACES}) and $V\in
\mathcal{P}_{\varphi_j(x)}(\euclidean{n})$. $V$ is an open neighborhood of $\varphi_j(x)$ and for a
suitable $u\in \domp{\mu^n}$, we can choose $\mu^n (u) \in \varphi_j(\mathcal{P}_x^j)$ such that
$\varphi_j(x), \varphi_j(y) \in \mu^n (u) \subseteq V \cap \directimage{U_j}{\varphi_j}$, thus $V\in
\mathcal{P}_{\varphi_j(y)}(\euclidean{n})$. This shows
that $\mathcal{P}_{\varphi_j(x)}(\euclidean{n})\subset \mathcal{P}_{\varphi_j(y)}(\euclidean{n})$
and a similar argument can be used to check that the
other inclusion holds. Therefore in the predicate space $\mathbf{Z}^{n}$,
\[\mathcal{P}_{\varphi_j(x)}(\euclidean{n})=\mathcal{P}_{\varphi_j(y)}(\euclidean{n}),\] 
implying that $\varphi_j(x)=\varphi_j(y)$ and as $\varphi_j$ is injective, $x=y$. We have proven
that the triple $\mathbf{Z}_\Phi=(X,\mathcal{B}_\Phi,\lambda_\Phi)$ is a computable predicate space.
\end{proof}

%\begin{remark*}
{\it Remark.} 
Since an atlas $\Phi$ on a set $X$ induces a computable predicate space $\mathbf{Z}_\Phi$ on $X$, 
$\Phi$ induces a structure of computable topological space on $X$.
By Lemma \ref{lemmaCompPredicateSpaceISCompTOPSPACE}, the effective space
% \[\mathbf{T}_\Phi(X) \defeq T(\mathbf{Z}_\Phi)\]
\[\mathbf{T}_\Phi(X) \defeq T(\mathbf{Z}_\Phi)\]  
is a computable topological space, such that the topology induced by $\mathbf{Z}_\Phi$ is precisely
the topology $\tau_\Phi$ induced by $\Phi$. By Lemma \ref{lemBallsBaseAtlas}, the set $\mathcal{B}_\Phi$ of computable 
balls is not only a subbase, but a base of $\tau_\Phi$, so that we have another effective
topological space $\mathbf{X}_\Phi=(X, \tau_\Phi, \mathcal{B}_\Phi, \lambda_\Phi)$ associated with
$X$. We define the \emph{computable topological space associated to $\Phi$} (and induced on $X$) as
the space  $\mathbf{T}_\Phi(X)$. In general, we cannot use the effective space $\mathbf{X}_\Phi$,
because it could happen that it is not computable. But if it is the case that $\mathbf{X}_\Phi$ is
computable, then by Lemma \ref{lemmaEquivalentCompTOPSPACEBaseSubbase}, $\mathbf{X}_\Phi$ 
% can be used as well as 
is equivalent to 
$\mathbf{T}_\Phi(X)$. If $\delta_{T(\mathbf{Z}_\Phi)} \colon \subseteq \sigmaomegastringset \to X$
is the inner representation of $X$ induced by $\mathbf{T}_\Phi(X)$, then by part 2. of Lemma
\ref{lemmaCompPredicateSpaceISCompTOPSPACE}, $\delta_{T(\mathbf{Z}_\Phi)} \equiv \delta_{\mathbf{Z}_\Phi}$, hence
from the point of view of computability, we can interchange $\delta$ with $\delta_{\mathbf{Z}_\Phi}$. 
Moreover, we
will use the symbol $\delta_\Phi$ to denote any of the representations $\delta_{T(\mathbf{Z}_\Phi)}$ or 
$\delta_{\mathbf{Z}_\Phi}$. In fact, because the computable space $\mathbf{T}_\Phi(X)$ depends on
the atlas $\Phi$, all the elements of $\mathbf{T}_\Phi(X)$ and all notations and
representations induced by this space will be denoted with a ``$\Phi$'' subindex, so that
$\mathbf{T}_\Phi(X)=(X, \tau_\Phi, \beta_\Phi, \nu_\Phi)$, where $\tau_\Phi$ is the topology induced
by $\Phi$ on $X$; $\beta_\Phi$ is the set of all finite intersections of elements of
$\mathcal{B}_\Phi$; $\nu_\Phi$ is the notation of $\beta_\Phi$ induced by $\lambda_\Phi$ and
$\delta_\Phi,\theta_\Phi,\psi_\Phi,\kappa_\Phi,$ etc., are the representations
given in Definitions \ref{defPositiveRepresentationsCompTOPSPACE} and
\ref{defNegativeRepresentationsCompTOPSPACE} respectively, for the computable space
$\mathbf{T}_\Phi(X)$.
%\end{remark*}

\subsection{Computable structures}

We are now ready to formulate our definition of \emph{computable manifold}.
% , which is based on
% the concepts of (computable) charts and atlases
 We start by defining what we call \emph{computable atlas}. Remember that the computable topological
space $\mathbf{R}^n=(\euclidean{n}, \tau^n, \beta^n, \mu^n)$ of Example \ref{exampleCompTOPSPACES}
is computable euclidean space of dimension $n$ and that each standard representation induced by
$\mathbf{R}^n$ is denoted by $\gamma^n$.

% \begin{definition}\label{defAbstractCompCoordChart}
% Let $(X, \delta)$ be a represented set. A \emph{computable coordinated chart} in $X$ is a pair
% $(\varphi, U)$ where $U\subseteq X$ and $\varphi \colon U \to \varphi (U)$ is a bijective 
% $(\delta,
% \rho^n)$-computable function (with $(\rho^n, \delta)$-computable inverse) onto an open subset of
% $\euclidean{n}$.
% \end{definition}
\begin{definition}\label{defAbstractComputableAtlas}
An $n$-dimensional \emph{computable atlas} on a set $X$
is a topological atlas (See Definition \ref{defChartAtlas}) $\Phi = \{ (\varphi_i, $ $U_i) \}_{i \in I}$ ($I\subseteq \sigmastringset$) 
%of computable charts such that
such that the following properties are satisfied:
\begin{itemize}
 %\item [(a)] $\Phi$ is a topological atlas on $X$ as per Definition \ref{defChartAtlas};
\item [(a)] For $i\in I$, the map $\varphi_i\colon U_i\subseteq X \to \euclidean{n}$ is a
$(\delta_{\mathbf{Z}_\Phi}, \delta^n)$-computable function  and  the inverse 
$\varphi_i^{-1}\colon \directimage{U_i}{\varphi} \to U_i$ is 
$(\delta^n, \delta_{\mathbf{Z}_\Phi})$-computable. 
\item [(b)] Each set $\directimage{U_i}{\varphi_i}\subseteq \euclidean{n}$ is a
$\theta^n$-computable subset of $\euclidean{n}$.
\end{itemize}
\end{definition}

We now present examples of computable atlases.
% \footnote{Many of the computable manifolds we present as examples 
% are smooth manifolds and were taken from \cite{brickellclarkDIFF}.}.

\begin{example}\label{exampleCompAtlasEuclideanSpace}
 Let $n\geqslant 0$. The identity $1_{\euclidean{n}}\colon \euclidean{n} \to \euclidean{n}$ is a
$(\delta^n, \delta^n)$-computable function which covers $\euclidean{n}$ and this is clearly
a $\theta^n$-computable open set, thus it determines an $n$-dimensional computable atlas $\Phi = \{
(1_{\euclidean{n}},\euclidean{n}) \}$ on $\euclidean{n}$.
 % The topology induced by this atlas is of course, the usual euclidean topology.
\end{example}

\begin{example}\label{exampleUnusualLine}
 The map $\varphi_1\colon \reals \to \reals$ defined by $\varphi_1(x)=x + a$ $(a\in \reals)$ is a 
homeomorphism of the line onto itself. We now prove that $\Psi = \{ (\varphi_1, \reals) \}$ is a 
computable atlas on $\reals$. Clearly, $\reals$ is a $\theta^1$-computable open set in $\reals$, 
now we need to show that $\varphi_1$ and its inverse are computable with respect to $\delta_\Psi$ 
and $\delta^1$. $\varphi_1$ is $(\delta_\Psi, \delta^1)$-computable, because given $x \in \reals$ 
and $p\in \dom \delta_\Psi$ such that $\delta_\Psi(p)=x$, we have that 
\[ \delta_\Psi(p)=x \Longleftrightarrow (\forall w \in \sigmastringset) (w \ll p \Leftrightarrow w 
\in \dom \lambda_\Psi \text{ and } x \in \lambda_\Psi(w)), \]
So, for each $w \ll p, x \in \lambda_\Psi(w)=\inverseimage{\mu^1(z)}{\varphi_1}$ with $w= 
\langle 1, z \rangle$. Therefore, $\varphi_1(x) \in \mu^1(z)$ for all $w \ll p$. Let $\mathfrak{M}$ 
be a TTE machine with the following program. On the input $p \in \sigmaomegastringset$
\begin{enumerate}
 \item For each $w = \langle 1, z \rangle \ll p$
\begin{itemize}
 \item [1.1] output $\iota(z)$
\end{itemize}
\end{enumerate}
By the previous argument, the machine $\mathfrak{M}$ outputs a string $q\in \sigmaomegastringset$ 
such that $(\forall z \in \sigmastringset) (z \ll q \Leftrightarrow z \in \domp{\mu^1} \text{ and 
} \varphi_1(x) \in \mu^1(z))$. By definition \ref{defPositiveRepresentationsCompTOPSPACE}, 
$\delta^1(q)=\varphi_1(x)$, thus $\mathfrak{M}$ computes a function which realizes $\varphi_1$ with 
respect to $\delta_\Psi$ and $\delta^1$. To show that $\varphi_1^{-1}$ is $(\delta^1, 
\delta_\Psi)$-computable, the argument is very similar, we omit it. We have shown that $\Psi$ is a 
1-dimensional computable atlas on $\reals$.
\end{example}

Notice that the computability of the atlas $\Psi$ is independent of the computability of the real 
number $a$ (with respect to $\delta^1$), given in the definition of $\varphi_1$. We will come back 
to this example later.

\begin{example}\label{exampleCompAtlasCircle}
We construct a computable atlas for the 1-sphere $\sphere{1}\subset \euclidean{2}$. Let $U_+,U_-$
be defined as
\[ U_+ = \{ (x,y) \in \sphere{1} \mid y > 0 \}\quad\text{ and } \quad U_- = \{ (x,y) \in \sphere{1}
\mid y < 0 \}, \]
and let $f_+ \colon U_+ \to \reals, f_- \colon U_- \to \reals$ be given by $f_+(x,y)= f_-(x,y) =
x$, these two functions are injections of $U_+,U_-$ onto $(-1,1)\subset \reals$. Now let
\[ V_+ = \{ (x,y) \in \sphere{1} \mid x > 0 \}\quad\text{ and } \quad V_- = \{ (x,y) \in \sphere{1}
\mid x < 0 \}, \]
and define $g_+\colon V_+ \to \reals,g_-\colon V_- \to \reals$ as $g_+(x,y)= g_-(x,y) = y$. The set
\[\Phi = \{ (f_+, U_+),(f_-, U_-),(g_+, V_+),(g_-, V_-) \}\] 
is an 1-dimensional atlas on $\sphere{1}$. To check that $\Phi$ is computable, we need to show that 
each chart $\varphi\colon U \to \reals$ in 
$\Phi$ is a $(\delta_{\mathbf{Z}_\Phi}, \delta^1)$-computable function with  $(\delta^1,
\delta_{\mathbf{Z}_\Phi})$-computable inverse and the sets $\directimage{U}{\varphi}$ are 
$\theta^1$-computable open subsets of $\reals$. Each set $\directimage{U}{\varphi}$ is clearly a 
$\theta^1$-computable open subset of $\reals$, because we have that $\directimage{U}{\varphi}=(-1, 
1)$ and the latter set is $\theta^1$-computable. Now we give the full proof of the computability of 
$f_+$ and its inverse, the other cases $f_-, g_+$ and $g_-$ are very similar.

Let $f_1=f_+,f_2=f_-,f_3=g_+$ and $f_4=g_-$ To see that $f_1$ is $(\delta_{\mathbf{Z}_\Phi}, 
\delta^1)$-computable, consider the following Type-2 machine $\mathfrak{F}_1$. On input $p\in \dom 
\delta_{\mathbf{Z}_\Phi}$ ($\delta_{\mathbf{Z}_\Phi}(p)=(x,y)\in \sphere{1}$):
\begin{enumerate}
 \item For each $z = \langle i, w \rangle \ll p$,
\begin{itemize}
 \item [1.1] if $i=1$, then output $\iota(w)$;
\item [1.2] Otherwise, execute the following:
\begin{itemize}
 \item [1.2.1] Compute $a_y,b_y$ such that $y\in (a_y,b_y)\subset (-1,1)$;
 \item [1.2.2] compute $a_x=\sqrt{1 - a_y^2}$;
 \item [1.2.3] compute $b_x=\sqrt{1 - b_y^2}$;
 \item [1.2.4] compute $w^\prime$ such that $(b_x, a_x)\subseteq \mu^1(w^\prime)\subset (-1,1)$;
 \item [1.2.5] output $\iota(w^\prime)$.
\end{itemize}
\end{itemize}
\end{enumerate}
We claim that $\mathfrak{F}_1$ computes a function $F_1\colon \subseteq \sigmaomegastringset \to 
\sigmaomegastringset$ which realizes $f_1$. We first check that each step of $\mathfrak{F}_1$ can 
be done in finite time. Step 1.1 is clearly computable, and for step 1.2, we only need to check 
that steps 1.2.1-1.2.5 can be calculated in finite time by $\mathfrak{F}_1$. First of all, if 
$\mathfrak{F}_1$ is executing step 1.2, then we have that $i\neq 1$, so that the string $z=\langle 
i, w \rangle$ represents a computable ball $\inverseimage{\mu^1(w)}{f_i}$ such that 
$f_i \neq f_1$. Remember that $f_1$ is defined on the set $U_+=\{ (x,y) \in \sphere{1} \mid y 
> 0 \}$, so that if $f_i\neq f_1=f_+$, then $f_i=f_3$ or $f_i=f_4$. Without loss 
of generality, assume that $f_i=f_3$.

Step 1.2.1 can be computed in finite time, because since $(x,y)=\delta_{\mathbf{Z}_\Phi}(p) \in 
\inverseimage{\mu^1(w)}{f_3}$, then $f_3(x,y)=g_+(x,y)=y\in \mu^1(w)\subset (-1, 1)$. Using the 
string $w\in \domp{\mu^1}$, $\mathfrak{F}_1$ can compute $a_y,b_y\in \rationals$ with $y\in 
(a_y,b_y)\subset \mu^1(w)$, thus step 1.2.1, can be done in finite time by $\mathfrak{F}_1$. Steps 
1.2.2 and 1.2.3 are computable, because $a_y,b_y$ are rationals and 
% by part (b) of Lemma \ref{lemBasicComputableFunctions}, 
the square root is a computable function. 
Step 1.2.4 can be 
calculated by $\mathfrak{F}_1$, because by the previous steps, $a_x$ and $b_x$ are computable point 
in $\reals$ and the given set inclusions can be tested by using Theorem 
\ref{thmTarskiDecisionMethod}. Step 1.2.5 is clearly computable.

Now we prove the correctness of $\mathfrak{F}_1$. Let $(x,y)=\delta_{\mathbf{Z}_\Phi}(p) \in \dom 
f_1$. and take $\langle i, w \rangle \ll p$, so that $(x,y)\in \inverseimage{\mu^1(w)}{f_i}$. If 
$i=1$ then $f_1(x,y) \in \mu^1(w)$ and in this case, $\mathfrak{F}_1$ outputs $\iota(w)$ in the 
output tape. When $i\neq 1$, $i=3$ or $i=4$, thus $f_i(x,y)=y \in \mu^1(w)$. There exist 
$a_y,b_y\in \rationals$ such that $y\in \mu^1(w)=(a_y,b_y)$. Since $f_i$ is bijective, 
$a_x=f_i^{-1}(a_y)$ and $b_x=f_i^{-1}(b_y)$ are defined and they satisfy the equations
 \[ a_x=\sqrt{1 - a_y^2} \quad\text{and}\quad b_x=\sqrt{1 - b_y^2}; \]
(notice that $b_x < a_x$, because $a_y < b_y$) and it is immediate to show that $b_x < f_1(x,y) < 
a_x$. There exist rational numbers $a,b$ such that  $f_1(x,y)\in (b_x, a_x)\subseteq (a,b)\subset 
(-1,1)$ and the set $(a,b)$ is represented by a string $w^\prime\in \domp{\mu^1}$, this string is 
computed by $\mathfrak{F}_1$ in step 1.2.5, thus the output $\iota(w^\prime)$ is correct. Therefore 
the machine $\mathfrak{F}_1$ computes a function $F$ such that on the input $p\in \dom 
\delta_{\mathbf{Z}_\Phi}$, $F(p)$ satisfies
\[ (\forall w^\prime \in \sigmastringset) (w^\prime \ll F(p) \Leftrightarrow w^\prime \in 
\domp{\mu^1} \text{ and } f_1(x,y) \in \mu^1(w^\prime)).\]
Therefore $\delta^1(F(p))=f_1(x,y)$ and $F$ realizes $f_1$ with respect to 
$\delta_{\mathbf{Z}_\Phi}$ and $\delta^1$, so that $f_1$ is $(\delta_{\mathbf{Z}_\Phi}, 
\delta^1)$-computable.

It only remains to prove that $f_1^{-1}$ is $(\delta^1,\delta_{\mathbf{Z}_\Phi})$-computable. There 
exists a Type-2 machine $\mathfrak{G}_1$ that, on input $q\in \dom \delta^1$, does the following: 
For each $z \ll q$, $\mathfrak{G}_1$ checks if $\mu^1(z) \subseteq (-1,1)$, if so, then it writes 
$\iota(\langle 1, z \rangle)$ on the output tape; otherwise $\mathfrak{G}_1$ ignores the string 
$z$. It is easy to see that $\mathfrak{G}_1$ computes a function $G\colon \subseteq 
\sigmaomegastringset \to \sigmaomegastringset$ which realizes $f_1^{-1}$ with respect to 
$\delta^1$ and $\delta_{\mathbf{Z}_\Phi}$.

Therefore, the atlas $\Phi$ fulfills Definition \ref{defAbstractComputableAtlas} so that it is a 
$1$-dimensional computable atlas for the circle.
\end{example}

We now show that with a computable atlas, the transition functions satisfy the expected
computability properties inside the induced computable topological space $\mathbf{T}_\Phi(X)$.

\begin{lemma}\label{lemmaCompPropertiesAtlas}
 Let $\Phi= \{ (\varphi_i, U_i) \}_{i\in I}$ be an $n$-dimensional computable atlas on 
%$(X, \delta)$. 
$X$. 
Then for the computable spaces $\mathbf{T}_\Phi(X)$ and $\mathbf{R}^n$:
\begin{itemize}
 \item [(i)] For all $i\in I$, the chart $\varphi_i\colon U_i \to \directimage{U_i}{\varphi_i}$ is 
a computable
homeomorphism and $U_i$ is a $\theta_\Phi$-computable open set in $X$.
\item [(ii)] For each $i,j\in I$, $U_i\cap U_j$ is $\theta_\Phi$-computable open in $X$ and
$\varphi_i(U_i\cap U_j)$ is $\theta^n$-computable open in $\euclidean{n}$.
\item [(iii)] Each transition function $\varphi_i \varphi_j^{-1}$ is a computable homeomorphism
between $\theta^n$-computable subsets of $\euclidean{n}$.
\end{itemize}
\end{lemma}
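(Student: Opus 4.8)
The plan is to reduce all three items to Definition~\ref{defAbstractComputableAtlas}, the identification $\delta_\Phi\equiv\delta_{\mathbf{Z}_\Phi}$ established in the remark above, and the behaviour of the representations $\delta,\theta$ under passage to subspaces (Lemma~\ref{lemmaPropertiesEffectiveTOPSUBSPACE}) and under computable maps (Theorems~\ref{thmequivCompFunctions} and~\ref{thmTheorem11ECT}). Throughout, for any chart $\varphi_i$ the restrictions $\mathbf{R}^n|_{\directimage{U_i}{\varphi_i}}$ and $\mathbf{T}_\Phi(X)|_{U_i}$ are computable topological spaces by Lemma~\ref{lemmaPropertiesEffectiveTOPSUBSPACE}, and their inner point representations are $\delta^n|^{\directimage{U_i}{\varphi_i}}$ and $\delta_\Phi|^{U_i}$; I abbreviate these by $\delta^n_{\directimage{U_i}{\varphi_i}}$ and $\delta_{U_i}$. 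The same convention is used for intersections $U_i\cap U_j$. Similarly, restricting a computable map to open subspaces leaves it computable, since the inner representations of the subspaces are literal restrictions of the ambient ones, and computable maps compose (this is the workhorse for items (i) and (iii)).

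\emph{Item (i).} By Definition~\ref{defInducedTOPAtlas} each $U_i$ is open in $\tau_\Phi$ and $\varphi_i\colon U_i\to\directimage{U_i}{\varphi_i}$ is a homeomorphism, so only the computability assertions need work. Since $\directimage{U_i}{\varphi_i}$ is $\theta^n$-computable we may compute one of its $\theta^n$-names, i.e.\ enumerate strings $w\in\domp{\mu^n}$ with $\bigcup_w\mu^n(w)=\directimage{U_i}{\varphi_i}$; every such $w$ then satisfies $\mu^n(w)\subseteq\directimage{U_i}{\varphi_i}$, hence $\langle i,w\rangle\in\domp{\lambda_\Phi}=\Lambda_\Phi$ and $\lambda_\Phi(\langle i,w\rangle)=B_{\langle i,w\rangle}=\inverseimage{\mu^n(w)}{\varphi_i}$, so that $U_i=\inverseimage{\directimage{U_i}{\varphi_i}}{\varphi_i}=\bigcup_w\lambda_\Phi(\langle i,w\rangle)$. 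Because each computable ball is its own one-fold formal intersection we have $\lambda_\Phi\leq\nu_\Phi$, so this enumeration converts into a computable $\theta_\Phi$-name of $U_i$; thus $U_i$ is $\theta_\Phi$-computable open in $X$. For the chart, Definition~\ref{defAbstractComputableAtlas}(a) furnishes a computable $(\delta_{\mathbf{Z}_\Phi},\delta^n)$-realization of $\varphi_i$; since $\delta_{U_i}=\delta_\Phi|^{U_i}$ and $\delta^n_{\directimage{U_i}{\varphi_i}}=\delta^n|^{\directimage{U_i}{\varphi_i}}$, that same realization witnesses that $\varphi_i\colon U_i\to\directimage{U_i}{\varphi_i}$ is $(\delta_{U_i},\delta^n_{\directimage{U_i}{\varphi_i}})$-computable, and symmetrically $\varphi_i^{-1}$ is $(\delta^n_{\directimage{U_i}{\varphi_i}},\delta_{U_i})$-computable. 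Hence $\varphi_i$ is a computable homeomorphism.

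\emph{Item (ii).} From (i) we have computable $\theta_\Phi$-names of $U_i$ and $U_j$; packaging them into a $\theta_\Phi^{\text{fs}}$-code of $\{U_i,U_j\}$ (Lemma~\ref{lemmaFiniteCountableSetsNotationsRepresentations}) and applying the $(\theta^{\text{fs}},\theta)$-computability of finite intersection (Theorem~\ref{thmTheorem11ECT}(1)) gives a computable $\theta_\Phi$-name of $U_i\cap U_j$, which is therefore $\theta_\Phi$-computable open in $X$. By Lemma~\ref{lemmaPropertiesEffectiveTOPSUBSPACE}(2) that name is also a $\theta_{U_i}$-name of $U_i\cap U_j$, and feeding it to the $(\theta_{U_i},\theta^n_{\directimage{U_i}{\varphi_i}})$-computable map $W\mapsto\inverseimage{W}{(\varphi_i^{-1})}=\directimage{W}{\varphi_i}$ (part (b) of Theorem~\ref{thmequivCompFunctions} applied to the computable function $\varphi_i^{-1}$ of (i)) yields a computable $\theta^n_{\directimage{U_i}{\varphi_i}}$-name of $\directimage{U_i\cap U_j}{\varphi_i}$. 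Finally, since $\theta^n_{\directimage{U_i}{\varphi_i}}(p)=\theta^n(p)\cap\directimage{U_i}{\varphi_i}$ and $\directimage{U_i}{\varphi_i}$ is itself $\theta^n$-computable, one more application of Theorem~\ref{thmTheorem11ECT}(1) (intersect with a $\theta^n$-name of $\directimage{U_i}{\varphi_i}$) turns this into a computable $\theta^n$-name of $\directimage{U_i\cap U_j}{\varphi_i}$, proving it $\theta^n$-computable.

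\emph{Item (iii), and the main obstacle.} By (ii) both $\directimage{U_i\cap U_j}{\varphi_j}$ and $\directimage{U_i\cap U_j}{\varphi_i}$ are $\theta^n$-computable subsets of $\euclidean{n}$, and $\varphi_i\varphi_j^{-1}\colon\directimage{U_i\cap U_j}{\varphi_j}\to\directimage{U_i\cap U_j}{\varphi_i}$ is a homeomorphism by Definition~\ref{defChartAtlas}(c). Write it as the composite of the restriction of $\varphi_j^{-1}\colon\directimage{U_j}{\varphi_j}\to U_j$ to the open subspace $\directimage{U_i\cap U_j}{\varphi_j}$ followed by the restriction of $\varphi_i\colon U_i\to\directimage{U_i}{\varphi_i}$ to the open subspace $U_i\cap U_j$; each factor is computable by (i), restrictions to subspaces stay computable, and computable maps compose, so $\varphi_i\varphi_j^{-1}$ is $(\delta^n_{\directimage{U_i\cap U_j}{\varphi_j}},\delta^n_{\directimage{U_i\cap U_j}{\varphi_i}})$-computable, and symmetrically $\varphi_j\varphi_i^{-1}$ is computable, giving the claimed computable homeomorphism of computable subspaces of $\mathbf{R}^n$. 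The step I expect to be most delicate is the last paragraph of item (ii): the subspace representation $\theta^n_{\directimage{U_i}{\varphi_i}}$ only records base sets relative to $\directimage{U_i}{\varphi_i}$, so recovering a genuine $\theta^n$-name inside all of $\euclidean{n}$ genuinely needs the explicit $\theta^n$-computability of the open set $\directimage{U_i}{\varphi_i}$ supplied by Definition~\ref{defAbstractComputableAtlas}(b); everything else is bookkeeping with $\lambda_\Phi\leq\nu_\Phi$, the subspace identities of Lemma~\ref{lemmaPropertiesEffectiveTOPSUBSPACE}, and closure of computable maps under composition and restriction.
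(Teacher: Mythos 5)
Your proof is correct and follows essentially the same route as the paper's (much terser) argument: item (i) from Definition~\ref{defAbstractComputableAtlas} together with $\delta_{\mathbf{Z}_\Phi}\equiv\delta_\Phi$, item (ii) from (i) plus Theorem~\ref{thmTheorem11ECT}(1) and Theorem~\ref{thmequivCompFunctions}, and item (iii) by closure of computable maps under composition. The only cosmetic difference is in showing $U_i$ is $\theta_\Phi$-computable: the paper applies the $(\theta^n,\theta_\Phi)$-computable preimage map $W\mapsto\inverseimage{W}{\varphi_i}$ to a $\theta^n$-name of $\directimage{U_i}{\varphi_i}$, whereas you assemble the $\theta_\Phi$-name directly from computable balls via $\lambda_\Phi\leq\nu_\Phi$; both hinge on the same hypothesis (b) of the definition.
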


\begin{proof}
 (i) That each $\varphi_i$ $(\varphi_i^{-1})$ is computable is true because $\varphi_i$
$(\varphi_i^{-1})$ is $(\delta_{\mathbf{Z}_\Phi}, \delta^n)$-computable ($(\delta^n,
\delta_{\mathbf{Z}_\Phi})$-computable) and by Lemma \ref{lemmaCompPredicateSpaceISCompTOPSPACE},
$\delta_{\mathbf{Z}_\Phi} \equiv \delta_{\Phi}$. $U_i$ is $\theta_\Phi$-computable
because since $\varphi_i$ is computable, the map $W \mapsto \inverseimage{W}{\varphi_i}$ is
$(\theta^n,\theta_\Phi)$-computable (Theorem \ref{thmequivCompFunctions}); (ii) Follows by
combining (i) and part 1. of Theorem \ref{thmTheorem11ECT} with Theorem 
\ref{thmequivCompFunctions}; (iii) is immediate because composition of computable functions
between computable topological spaces is again,
computable.
\end{proof}

We continue with more examples of computable atlases.

\begin{example}\label{exampleCompAtlasEuclideanSpaceCicleWithTOP}
The topology induced by the computable atlas
$\{ (1_{\euclidean{n}},\euclidean{n}) \}$ on $\euclidean{n}$ of Example
\ref{exampleCompAtlasEuclideanSpace} is of course, the usual euclidean topology and clearly
$\mathbf{T}_\Phi(\euclidean{n})$ is equivalent to $\mathbf{R}^n$.
Consider now the atlas $\Phi=\{ (f_+, U_+),$ $(f_-, U_-),(g_+, V_+),(g_-, V_-) \}$ on $\sphere{1}$ of
Example \ref{exampleCompAtlasCircle}. We have proven that it is computable. But which is the
topology that $\Phi$ induces on $\sphere{1}$ ? Notice that for suitable $a,b,c,d,e,f\in \rationals$
and $\varphi,\psi,\alpha \in \{ f_+,f_-,g_+,g_- \}$ the condition
\[ \inverseimage{(a, b)}{\varphi} \subseteq \inverseimage{(c, d)}{\psi} \cap
\inverseimage{(e, f)}{\alpha}, \]
can be verified algorithmically. For example, $\inverseimage{(a, b)}{g_-} \subseteq
\inverseimage{(c, d)}{f_+} \cap \inverseimage{(e, f)}{g_+}$ is equivalent to the condition
\begin{multline*}
 (\forall (x,y) \in \euclidean{2}) x^2+y^2=1 \wedge x < 0 \wedge y \in (a,b) \wedge
(a,b)\subset (-1,1) \Rightarrow y > 0 \wedge x \in (c,d) \\ 
\wedge (c,d)\subset (-1,1) \wedge x> 0 \wedge y \in (e,f) \wedge (e,f)\subset (-1,1)
\end{multline*}
and by Theorem \ref{thmTarskiDecisionMethod}, all expressions of this kind can always be checked in 
finite time by a Turing machine\footnote{Whenever we need to apply the Tarski's decision method 
\cite{Tarski51}, we
assume that we have an appropriate encoding of polynomial functions with rational coefficients as
strings of $\sigmastringset$.}. Hence, the effective space $\mathbf{S}^1_\Phi = (\sphere{1},
\tau_\Phi, \mathcal{B}_\Phi, \lambda_\Phi)$ becomes a computable topological space, equivalent to
$\mathbf{T}_\Phi(\sphere{1})$. It is not hard to show that $\mathbf{S}^1_\Phi$ is equivalent to the
computable subspace $\mathbf{S}^1$ of $\mathbf{R}^2$, therefore the induced computable topology on
$\sphere{1}$ by $\Phi$ is the computable subspace topology.
\end{example}

\begin{example}\label{exampleCompAtlasnSphereSP}
% Let $n\geqslant 1$. 
An atlas can be defined for the sphere $\sphere{n} \subset \euclidean{n+1}$
with the stereographic projection. Let $P_{1}=(0,\ldots, 0, 1), P_{-1}=(0,\ldots,0, -1)$ and define
$U_r = \sphere{n} - \{ P_r \}$ with $r=1,-1$. $U_1 \cup U_{-1}=\sphere{n}$ and if we define $s_1 
\colon U_1 \to \euclidean{n}$ by $s_1(x,t)=\frac{x}{1-t}$ and $s_{-1} \colon U_{-1} \to 
\euclidean{n}$, $s_{-1}(x,t)=\frac{x}{1+t}$ with $x=(x_1, \ldots, x_n)$, then $\Psi = \{ (s_1, 
U_1), 
(s_{-1}, U_{-1}) \}$ is a topological atlas for $\sphere{n}$. We now prove that $\Psi$ is a 
computable atlas for $\sphere{n}$ as follows: The atlas $\Psi$ induces the effective 
spaces $\mathbf{S}^n_\Psi = (\sphere{n}, \tau_\Psi, \mathcal{B}_\Psi,
\lambda_\Psi)$ and $\mathbf{T}_\Psi(\sphere{n})$, being the latter computable. Now, since the maps 
$s_1,s_{-1}$ are rational functions with coefficients in $\rationals$, we can apply Theorem 
\ref{thmTarskiDecisionMethod} to deduce that the decision problem
\[ \inverseimage{B_1}{s_r} \subseteq \inverseimage{B_2}{s_t} \cap \inverseimage{B_3}{s_u},\quad 
B_i \in \beta^n, r,t,u\in \{ 1, -1 \}, \]
is computable, thus $\mathbf{S}^n_\Psi = (\sphere{n}, \tau_\Psi, \mathcal{B}_\Psi, \lambda_\Psi)$ 
is 
a computable topological space and by Lemma \ref{lemmaEquivalentCompTOPSPACEBaseSubbase}, it is 
equivalent to $\mathbf{T}_\Psi(\sphere{n})$. Our next step is to show that $\mathbf{S}^n_\Psi$ and 
$\mathbf{S}^n=\ctopcompsubspace{\mathbf{R}^{n+1}}{\sphere{n}}$ are equivalent computable spaces. 
This can be done easily using Definition \ref{defEquivCompTOPSPACES} and Theorem 
\ref{thmTarskiDecisionMethod}
% , just as we did in Lemma \ref{lemmmaEquivalentTOPCOMPS1}
. If 
$\mu^{n+1}_{\sphere{n}}$ is the notation for base elements of $\mathbf{S}^n$ and $\theta_\Psi$ is 
the representation for open sets of $\mathbf{S}^n_\Psi$, then to prove that $\mu^{n+1}_{\sphere{n}} 
\leq \theta_\Psi$, we have that $\inverseimage{\mu^n(z)}{s_r}\subseteq \mu^{n+1}_{\sphere{n}}(w)$ 
($z\in \domp{\mu^n}$ and $w \in \domp{\mu^{n+1}}$) is equivalent to the expression
\begin{equation}\label{eq1exampleCompAtlasnSphereSP}
 (\forall y \in \euclidean{n+1}) (y\in \sphere{n} \text{ and } s_r(y)\in \mu^n(z) 
\Rightarrow y \in \sphere{n} \cap \mu^{n+1}(w))
\end{equation}
and this expression can be easily translated into an elementary expression. With all this data, a 
Type-2 machine can be constructed such that, on input $w\in \domp{\mu^{n+1}_{\sphere{n}}}$, 
enumerates all pairs $(r, z)$ ($r\in \{ -1, 1 \}, z\in \domp{\mu^n}$) and tests if $w,r$ and $z$ 
satisfy \eqref{eq1exampleCompAtlasnSphereSP}, if this is the case, then the machine outputs 
$\iota(z)$. This machine computes a function which translates 
$\mu^{n+1}_{\sphere{n}}$-names into $\theta_\Psi$-names, that is, $\mu^{n+1}_{\sphere{n}} \leq 
\theta_\Psi$. To prove that $\lambda_\Psi \leq \theta^{n+1}_{\sphere{n}}$, the argument is almost 
the same. Thus $\mathbf{S}^n_\Psi$ and $\mathbf{S}^n$ are equivalent, therefore 
$\mathbf{T}_\Psi(\sphere{n})$ and $\mathbf{S}^n$ are equivalent computable spaces.

Now to show that the atlas $\Psi$ is computable, the argument is the following: Since  
$\mathbf{T}_\Psi(\sphere{n})$ and $\mathbf{S}^n$ are equivalent, $\delta_\Psi \equiv 
\delta^{n+1}_{\sphere{n}}$, so that to check that $s_1,s_{-1}$ and their inverses are computable 
with respect to $\delta_\Psi$ and $\delta^n$, it is enough to show that they are computable with 
respect to $\delta^{n+1}_{\sphere{n}}$ and $\delta^n$. But $\delta^{n+1}_{\sphere{n}}$ is simply 
the representation $\delta^{n+1}$ of $\euclidean{n+1}$, restricted to $\sphere{n}$ (see Definition 
\ref{defMultifuncRestrictionRange}). In other words, we only need to show that the charts are 
computable 
with respect to $\delta^{n+1}$ and $\delta^{n}$. But the maps $s_1, s_{-1}$ and their inverses are 
defined in terms of sums, multiplications and square roots, thus they are easily seen to be  
computable.
% we only need to apply Lemma \ref{lemBasicComputableFunctions} and we are done. 

To finish the proof, we need to show that the sets $\directimage{U_i}{s_i}$ are 
$\theta^n$-computable in $\euclidean{n}$. But this is immediate, because 
$\directimage{U_i}{s_i}=\euclidean{n}$. Therefore $\Psi$ is a computable atlas for 
$\sphere{n}$.
\end{example}

\begin{example}\label{exampleCompAtlasRealProjectiveSpace}
 Let $\realprojectivespace{n}$ be $n$-dimensional real projective space, the set of all
$1$-dimensional vector subspaces of $\euclidean{n+1}$. Each subspace is spanned by a non-zero
vector $v=(x_1,\ldots,x_{n+1})$. In other words
\[ \realprojectivespace{n}= (\euclidean{n+1} - \{ 0 \}) / \sim, \]
where $x\sim y \Leftrightarrow (\exists \alpha \in \reals - \{ 0 \})(x=\alpha y)$. We define for
$i=1,\ldots,n+1$ the set $U_i=\{ \left[ v \right] \in \realprojectivespace{n} \mid x_i \neq 0 \}$.
Clearly $X$ is covered by $U_1,\ldots,U_{n+1}$. Let $\varphi_i \colon U_i \to \euclidean{n}$ and 
$\varphi_i^{-1} \colon \euclidean{n} \to U_i$ be defined by 
\begin{itemize}
\item [] $\varphi_i (\left[ (x_1,\ldots,x_{n+1}) \right]) = \bigl(\frac{x_1}{x_i},\ldots,\frac{x_{i-1}}{x_i},\frac{x_{i+1}}{x_i},\ldots,\frac{x_{n+1}}{x_i}\bigr)$;
\item [] $\varphi_i^{-1} \bigl(y_1,\ldots,y_n\bigr) = \left[ (y_1,\ldots,y_{i-1},1,y_{i+1},\ldots,y_{n+1}) \right]$.
\end{itemize}
% For each $\left[ v \right] \in U_i$, we can choose
% an unique $v_1 \sim v$ such that the $i$th component of $v_1$ is $1$ and then $U_i$ is in one-to-one
% correspondence with the hyperplane $\{ x \in \euclidean{n+1} \mid x_i=1 \}$, which can be projected
% onto $\euclidean{n}$ via the map $p_i$ given by $(x_1,\ldots, x_i, \ldots, x_{n+1}) \mapsto
% (x_1,\ldots, x_{i-1}, x_{i+1},\ldots,x_{n+1})$. 
This is therefore a coordinate chart $(\varphi_i, U_i)$ and the set of all these charts is an atlas $\Gamma$ on
$\realprojectivespace{n}$. It is easy to show that each function $\varphi_i$,
$\varphi_i^{-1}$ is computable with respect to $\delta_\Gamma$ and $\delta^n$. We conclude that
$\Gamma$ is a computable atlas on $\realprojectivespace{n}$.

The set $\mathcal{B}_\Gamma =\{ \inverseimage{\mu^n(w)}{\varphi_i} \mid i=1,\ldots,n+1; w \in
\domp{\mu^n} \}$ is a base for the topology induced by $\Gamma$ and also the property 
\[\inverseimage{\mu^n(w_1)}{\varphi_i} \subseteq \inverseimage{\mu^n(w_2)}{\varphi_j} \cap
\inverseimage{\mu^n(w_3)}{\varphi_k}\] 
is equivalent to 
\[(\forall x\in \euclidean{n+1}) ( x_i \neq 0 \wedge p_i(x)\in \mu^n(w_1) \Rightarrow x_j \neq 0
\wedge p_j(x)\in \mu^n(w_2) \wedge x_k \neq 0 \wedge p_k(x)\in \mu^n(w_3)), \]
and using Theorem \ref{thmTarskiDecisionMethod}, the latter expression can be checked 
algorithmically. The effective space
$\mathbf{RP}^n_\Gamma = (\realprojectivespace{n}, \tau_\Gamma, \mathcal{B}_\Gamma,$
$\lambda_\Gamma)$ is a computable topological space, which is equivalent to the computable space
$\mathbf{T}_\Gamma(\realprojectivespace{n})$ induced  by $\Gamma$.
\end{example}

All the previous examples are cases in which we can replace the canonical computable space 
$\mathbf{T}_\Phi(X)$ with the somewhat simpler effective space $\mathbf{X}_\Phi$, but according
to Lemma \ref{lemmaEquivalentCompTOPSPACEBaseSubbase}, this can be done only when the latter is
computable. 

As with standard topological manifolds, it can happen that a set $X$ has more that one computable
atlas defined on it. We would like to consider two computable atlases that define the same
computable topology as equivalent.

\begin{definition}\label{defComputablyCompatibleAtlases}
 Two $n$-dimensional computable atlases $\Phi, \Psi$ on
$X$ are \emph{computably compatible} if and only if $\Phi$ and $\Psi$ are compatible (Definition
\ref{defCompatibleAtlases}) and $\delta_\Phi\equiv \delta_\Psi$.
\end{definition}

Computable compatibility of computable atlases is an equivalence relation on the set of all
computable atlases on $X$.

\begin{lemma}\label{lemmaCompatibleCompAtlasesIsSameCompTOP}
  Let $\Phi, \Psi$ be two computable atlases on $X$. 
Then $\Phi$ and $\Psi$ are computably compatible if and only if $\mathbf{T}_\Phi(X)$ and 
$\mathbf{T}_\Psi(X)$
are equivalent computable topological spaces.
\end{lemma}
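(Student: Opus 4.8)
The plan is to reduce the statement to two facts already available: Lemma~\ref{lemmaTOPatlasescompatible}, which says that two $n$-dimensional topological atlases on $X$ are compatible exactly when they induce the same topology on $X$, and Theorem~\ref{thmRobustnessEquivTOPSPACES}, which says that two computable topological spaces sharing the same underlying $(X,\tau)$ are equivalent exactly when their inner point representations are computably equivalent. The one piece of bookkeeping I would set up first is that $\delta_\Phi$ is (equivalent to) the inner point representation of the computable topological space $\mathbf{T}_\Phi(X)$: this is immediate from part~2 of Lemma~\ref{lemmaCompPredicateSpaceISCompTOPSPACE} together with the convention introduced in the remark after Lemma~\ref{lemmaAbstractManifoldCompPredicateSpace}, by which $\delta_\Phi$ denotes either $\delta_{T(\mathbf{Z}_\Phi)}$ or $\delta_{\mathbf{Z}_\Phi}$, and these are equivalent.

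For the forward direction I would assume $\Phi$ and $\Psi$ are computably compatible. By Definition~\ref{defComputablyCompatibleAtlases} this unpacks into: $\Phi$ and $\Psi$ are compatible topological atlases, and $\delta_\Phi\equiv\delta_\Psi$. First, applying Lemma~\ref{lemmaTOPatlasescompatible} to the compatibility clause yields $\tau_\Phi=\tau_\Psi$; write $\tau$ for this common topology. Then $\mathbf{T}_\Phi(X)=(X,\tau,\beta_\Phi,\nu_\Phi)$ and $\mathbf{T}_\Psi(X)=(X,\tau,\beta_\Psi,\nu_\Psi)$ are computable topological spaces over the same $(X,\tau)$, so Theorem~\ref{thmRobustnessEquivTOPSPACES} (equivalence of its statements 1 and 2) applies, and since $\delta_\Phi\equiv\delta_\Psi$ it delivers that $\mathbf{T}_\Phi(X)$ and $\mathbf{T}_\Psi(X)$ are equivalent.

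For the converse I would assume $\mathbf{T}_\Phi(X)$ and $\mathbf{T}_\Psi(X)$ are equivalent. By Definition~\ref{defEquivCompTOPSPACES}, equivalence is only defined between computable spaces of the shape $(X,\tau,\beta,\nu)$ and $(X,\tau,\beta',\nu')$, so this already forces $\tau_\Phi=\tau_\Psi$. Now Lemma~\ref{lemmaTOPatlasescompatible}, used in the reverse direction, gives that $\Phi$ and $\Psi$ are compatible topological atlases, and Theorem~\ref{thmRobustnessEquivTOPSPACES} (again the equivalence of statements 1 and 2) gives $\delta_\Phi\equiv\delta_\Psi$. These are exactly the two clauses of Definition~\ref{defComputablyCompatibleAtlases}, so $\Phi$ and $\Psi$ are computably compatible.

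The hard part is essentially nonexistent once the right lemmas are lined up; the only place to be careful is that ``equivalent computable topological spaces'' in the sense of Definition~\ref{defEquivCompTOPSPACES} presupposes equality of the underlying topologies, so that on the converse side invoking Lemma~\ref{lemmaTOPatlasescompatible} is legitimate, and, symmetrically, that on the forward side one must first derive $\tau_\Phi=\tau_\Psi$ before Theorem~\ref{thmRobustnessEquivTOPSPACES} can be quoted at all.
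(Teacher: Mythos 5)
Your proposal is correct and follows essentially the same route as the paper's own proof: unpack Definition~\ref{defComputablyCompatibleAtlases}, use Lemma~\ref{lemmaTOPatlasescompatible} to pass between compatibility of the atlases and equality of the induced topologies, and use Theorem~\ref{thmRobustnessEquivTOPSPACES} (together with part~2 of Lemma~\ref{lemmaCompPredicateSpaceISCompTOPSPACE}) to pass between $\delta_\Phi\equiv\delta_\Psi$ and equivalence of the computable spaces. Your explicit remark that equivalence in the sense of Definition~\ref{defEquivCompTOPSPACES} presupposes the same underlying topology is a point the paper uses implicitly; otherwise the two arguments coincide.
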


\begin{proof}
  $(\Rightarrow)$ That $\Phi$ and $\Psi$ are computably compatible means that the following holds: 
\begin{itemize}
 \item [(1)] $\Phi$ and $\Psi$
are topologically compatible;
\item [(2)] $\delta_\Phi\equiv \delta_\Psi$.
\end{itemize}
 By (1) we have that $\tau= \tau_\Phi = \tau_\Psi$, thus $\mathbf{T}_\Phi(X) = (X, \tau, 
\beta_\Phi, 
\nu_\Phi)$ and $\mathbf{T}_\Psi(X) = (X, \tau, \beta_\Psi, \nu_\Psi)$. Combining (2) with part 2. 
of 
Lemma \ref{lemmaCompPredicateSpaceISCompTOPSPACE} and Theorem \ref{thmRobustnessEquivTOPSPACES}, 
$\mathbf{T}_\Phi(X)$ and $\mathbf{T}_\Psi(X)$ are equivalent. 

$(\Leftarrow)$ If $\mathbf{T}_\Phi(X)$ and $\mathbf{T}_\Psi(X)$ are equivalent computable 
topological spaces, then $\tau_\Phi=\tau_\Psi$, thus $\Phi$ and $\Psi$ induce the same topology on 
$X$, so that 
% by Lemma \ref{lemmaTOPatlasescompatible} 
$\Phi$ and $\Psi$ are topologically 
compatible. Also, by Theorem \ref{thmRobustnessEquivTOPSPACES}, $\delta_\Phi\equiv \delta_\Psi$. 
Therefore, by Definition \ref{defComputablyCompatibleAtlases}, $\Phi$ and $\Psi$ are computably 
compatible.
\end{proof}

We now show two 
computable atlases which are compatible, but not computably compatible.

\begin{example}\label{exampleLineAtlasesNotCC}
 Let $\Phi = \{ (1_\reals, \reals) \}$ and $\Psi=\{ (\varphi_1, \reals) \}$, where $\varphi_1\colon 
\reals \to \reals$ is defined as $\varphi_1(x)=x+a$. $\Psi$ is the atlas of Example 
\ref{exampleUnusualLine}, where we saw that the computability of $\Psi$ does not depend on the 
computability of the number $a$ with respect to $\delta^1$. We now prove that if $a$ is not 
$\delta^1$-computable, then $\Phi$ and $\Psi$ are not computably compatible (notice that the two 
atlases are topologically compatible).

Suppose then that $a$ is not computable with respect to $\delta^1$ and that $\Phi$ and $\Psi$ are 
computably compatible. By Definition \ref{defComputablyCompatibleAtlases}, $\delta_\Phi \equiv 
\delta_\Psi$ and since $\Phi$ is the atlas induced by the identity on $\reals$, we have that 
$\delta_\Phi \equiv \delta^1$, hence $\delta_\Psi \equiv \delta^1$. The atlas $\Psi$ is 
computable, so that $\varphi_1$ is $(\delta_\Psi, \delta^1)$-computable. Since $\delta_\Psi 
\equiv \delta^1$, we can conclude that $\varphi_1$ is $(\delta^1, \delta^1)$-computable. But 
$\varphi_1$ is computable with respect to $\delta^1$ if and only if $a$ is a computable real 
number. Therefore $\Phi$ and $\Psi$ cannot be computably compatible.
\end{example}

Example \ref{exampleLineAtlasesNotCC} give us a desirable consequence of our definitions. If 
the map $\varphi_1$ is not computable in $\reals$ with the usual manifold structure, then we do not 
want the atlas $\Psi$ to be compatible with $\Phi$.

\begin{definition}\label{defComputableStructure}
 A \emph{computable structure} on 
%$(X,\delta)$ 
$X$ 
is a equivalence class $\left[ \Phi \right]$ of computable atlases on $X$.
\end{definition}

\begin{definition}[Computable manifold]\label{defAbstractComputableManifold}
An $n$-dimensional \emph{computable manifold} is a 
%represented set $(M,\delta)$
set $M$ 
together
with a computable structure $\left[ \Phi \right]$.
\end{definition}

% \begin{TODO}
%  TalK about the maximal computable atlas of $M$, such that every atlas in the same equivalence
% class is equivalent to it ?
% \end{TODO}
Thus, a computable $n$-manifold is a pair $(M, \left[ \Phi \right])$. As each computable atlas
determines a unique computable structure, we can also simply write $(M, \Phi)$ and forget 
about the brackets and sometimes, if no confusion arises, we will omit the explicit reference to 
the 
computable atlas $\Phi$. The integer $n$, the dimension of the manifold, is denoted in the usual
form as $\dim M$. All the previous examples about sets with computable atlases are actually
computable manifolds, we now give more examples.

\begin{example}\label{exampleZeroCompManifold}
  A computable 0-manifold is just a discrete computable topological space. 
%   This fact is the computable version of the classification of topological 0-manifolds.
\end{example}

\begin{lemma}\label{lemmaCompManifoldsProduct}
 Let $(M_1, \Phi_1),(M_2, \Phi_2)$ be computable manifolds such that $\dim M = n$ and $\dim N = m$.
Then there exists a $(n+m)$-dimensional computable atlas $\overline{\Phi}$ for the set $M_1\times
M_2$ with the following properties:
\begin{itemize}
\item  [(a)] $\lambda_{\overline{\Phi}} = \lambda_{\Phi_1} \times \lambda_{\Phi_2}$;
 \item [(b)] $\delta_{\overline{\Phi}} \equiv \left[ \delta_{\Phi_1}, \delta_{\Phi_2} \right]$;
 \item [(c)] $\mathbf{T}_{\overline{\Phi}}(M_1 \times M_2)=\overline{\mathbf{T}}$, where
$\overline{\mathbf{T}}$ is the computable topological space of Lemma
\ref{lemmaPropertiesEffectiveTOPPRODUCT} induced by the computable spaces $\mathbf{T}_{\Phi_1}(M_1)$
and $\mathbf{T}_{\Phi_2}(M_2)$.
\end{itemize}
\end{lemma}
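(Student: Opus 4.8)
The plan is to take $\overline{\Phi}$ to be the \emph{product atlas} and then verify clauses (a)--(c) in that order, leaning on the results on products of computable topological spaces. Write $\Phi_1=\{(\varphi_i,U_i)\}_{i\in I}$ and $\Phi_2=\{(\psi_j,V_j)\}_{j\in J}$, and for $(i,j)\in I\times J$ let $\varphi_i\times\psi_j\colon U_i\times V_j\to\euclidean{n}\times\euclidean{m}$ be $(x,y)\mapsto(\varphi_i(x),\psi_j(y))$; set $\overline{\Phi}=\{(\varphi_i\times\psi_j,\ U_i\times V_j)\}_{(i,j)\in I\times J}$, with index set $\{\langle i,j\rangle\mid i\in I,\ j\in J\}$. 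First I would check the routine point that $\overline{\Phi}$ is a topological $(n+m)$-dimensional atlas on $M_1\times M_2$: the $U_i\times V_j$ cover; each $\varphi_i\times\psi_j$ is a bijection onto the open set $\directimage{U_i}{\varphi_i}\times\directimage{V_j}{\psi_j}\subseteq\euclidean{n}\times\euclidean{m}\cong\euclidean{n+m}$; each $\directimage{(U_i\times V_j)\cap(U_k\times V_l)}{(\varphi_i\times\psi_j)}$ equals $\directimage{U_i\cap U_k}{\varphi_i}\times\directimage{V_j\cap V_l}{\psi_j}$ and is open; and each transition map $(\varphi_k\times\psi_l)(\varphi_i\times\psi_j)^{-1}=(\varphi_k\varphi_i^{-1})\times(\psi_l\psi_j^{-1})$ is a homeomorphism. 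I would take the model computable space of $\overline{\Phi}$ to be the product $\mathbf{R}^n\times\mathbf{R}^m$ of Lemma \ref{lemmaPropertiesEffectiveTOPPRODUCT} rather than $\mathbf{R}^{n+m}$; since inclusions between rational boxes and rational balls are decidable (Theorem \ref{thmTarskiDecisionMethod}) these two computable spaces are equivalent, so this is harmless, and it is the only place where that equivalence is needed to line things up with Definition \ref{defAbstractComputableAtlas}.

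With this choice the computable balls of $\overline{\Phi}$ are exactly the sets $\inverseimage{\mu^{n}(w_1)\times\mu^{m}(w_2)}{(\varphi_i\times\psi_j)}=\inverseimage{\mu^{n}(w_1)}{\varphi_i}\times\inverseimage{\mu^{m}(w_2)}{\psi_j}$, i.e.\ products of a computable ball of $\Phi_1$ with one of $\Phi_2$; hence $\lambda_{\overline{\Phi}}(\langle\langle i,w_1\rangle,\langle j,w_2\rangle\rangle)=\lambda_{\Phi_1}(\langle i,w_1\rangle)\times\lambda_{\Phi_2}(\langle j,w_2\rangle)$, which after the obvious identification of the two re-tuplings of $\sigmastringset$ is clause (a). For clause (b) I would prove $\delta_{\mathbf{Z}_{\overline{\Phi}}}\equiv[\delta_{\mathbf{Z}_{\Phi_1}},\delta_{\mathbf{Z}_{\Phi_2}}]$ straight from the predicate-space definition: a $\delta_{\mathbf{Z}_{\overline{\Phi}}}$-name of $(x,y)$ enumerates all pairs $\langle u_1,u_2\rangle$ with $x\in\lambda_{\Phi_1}(u_1)$ and $y\in\lambda_{\Phi_2}(u_2)$, and because these two conditions are independent and each $\mathcal{B}_{\Phi_k}$ covers $M_k$, one can uniformly extract from such a list a $\delta_{\mathbf{Z}_{\Phi_1}}$-name of $x$ (output $u_1$ whenever some $\langle u_1,u_2\rangle$ has appeared) and a $\delta_{\mathbf{Z}_{\Phi_2}}$-name of $y$, and conversely merge two such names into a $\delta_{\mathbf{Z}_{\overline{\Phi}}}$-name by emitting $\langle u_1,u_2\rangle$ for every pair appearing in the two inputs. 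Combining this with $\delta_{\Phi_k}\equiv\delta_{\mathbf{Z}_{\Phi_k}}$ (Lemma \ref{lemmaCompPredicateSpaceISCompTOPSPACE}(2)) and the fact that $[\,\cdot\,,\,\cdot\,]$ respects $\equiv$ yields $\delta_{\overline{\Phi}}\equiv[\delta_{\Phi_1},\delta_{\Phi_2}]$.

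Next I would verify that $\overline{\Phi}$ is a \emph{computable} atlas. For condition (a) of Definition \ref{defAbstractComputableAtlas}: using clause (b), and since the inner representation of the target $\mathbf{R}^n\times\mathbf{R}^m$ is $\equiv[\delta^n,\delta^m]$ (Lemma \ref{lemmaPropertiesEffectiveTOPPRODUCT}(1)), it suffices to realize $\varphi_i\times\psi_j$ and its inverse with respect to $[\delta_{\Phi_1},\delta_{\Phi_2}]$ and $[\delta^n,\delta^m]$; since $\varphi_i$ is $(\delta_{\Phi_1},\delta^n)$-computable and $\psi_j$ is $(\delta_{\Phi_2},\delta^m)$-computable (the hypothesis that $\Phi_1,\Phi_2$ are computable atlases, with $\delta_{\mathbf{Z}_{\Phi_k}}\equiv\delta_{\Phi_k}$), their product is $([\delta_{\Phi_1},\delta_{\Phi_2}],[\delta^n,\delta^m])$-computable by Lemma \ref{lemmaPropertiesEffectiveTOPPRODUCT}(2), and the inverse is the same. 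For condition (b) of that definition: $\directimage{U_i\times V_j}{(\varphi_i\times\psi_j)}=\directimage{U_i}{\varphi_i}\times\directimage{V_j}{\psi_j}$ is computable with respect to the open-set representation of $\mathbf{R}^n\times\mathbf{R}^m$, hence with respect to $\theta^{n+m}$, by Lemma \ref{lemmaPropertiesEffectiveTOPPRODUCT}(3), since $\directimage{U_i}{\varphi_i}$ is $\theta^n$-computable and $\directimage{V_j}{\psi_j}$ is $\theta^m$-computable. Finally, for clause (c): by Lemma \ref{lemBallsBaseAtlas} the computable balls of $\overline{\Phi}$ are a base of $\tau_{\overline{\Phi}}$, and being the products of base elements of $\tau_{\Phi_1}$ and $\tau_{\Phi_2}$ they generate exactly the product topology, so $\tau_{\overline{\Phi}}=\overline{\tau}$; thus $\mathbf{T}_{\overline{\Phi}}(M_1\times M_2)$ and $\overline{\mathbf{T}}$ are computable topological spaces on the same underlying topological space, and by Theorem \ref{thmRobustnessEquivTOPSPACES} it is enough to note that their inner representations agree up to $\equiv$, which is clause (b) together with $\overline{\delta}\equiv[\delta_{\Phi_1},\delta_{\Phi_2}]$ from Lemma \ref{lemmaPropertiesEffectiveTOPPRODUCT}(1); so the two spaces are equivalent, which is clause (c).

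I expect the technical heart to be clause (b) — the string-level surgery of splitting and re-merging lists of pairs of base codes, which has to be done uniformly and for which one genuinely uses that each $\mathcal{B}_{\Phi_k}$ covers $M_k$ so that each coordinate's name is recoverable. The rest is largely bookkeeping, the fiddliest part being to make clause (a) come out on the nose: this forces the computable balls of $\overline{\Phi}$ to be honest products $B\times B'$, which in turn forces the model space to be $\mathbf{R}^n\times\mathbf{R}^m$ and requires a careful appeal to $\mathbf{R}^n\times\mathbf{R}^m\equiv\mathbf{R}^{n+m}$ when matching up with Definition \ref{defAbstractComputableAtlas}. Everything else reduces to direct applications of Lemma \ref{lemmaPropertiesEffectiveTOPPRODUCT} and Theorem \ref{thmRobustnessEquivTOPSPACES}.
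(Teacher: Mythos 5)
Your proposal is correct and follows essentially the same route as the paper: the product atlas, clause (a) via identifying computable balls with products of computable balls, and computability of the charts via Lemma \ref{lemmaPropertiesEffectiveTOPPRODUCT}. The only cosmetic difference is that the paper composes each product chart with the canonical computable homeomorphism $g\colon \euclidean{n}\times\euclidean{m}\to\euclidean{n+m}$ and replaces the base of $\mathbf{R}^{n+m}$ by the $g$-images of rational boxes (via Lemma \ref{lemmaTopSpaceCompFuncRepSet} and Corollary \ref{corCompHomeoTopSpaces}), whereas you work in $\mathbf{R}^n\times\mathbf{R}^m$ and invoke its equivalence with $\mathbf{R}^{n+m}$ at the end --- the same identification, applied at a different point; your explicit name-splitting argument for clause (b) fills in a step the paper leaves implicit.
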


\begin{proof}
Let $\overline{\mathbf{X}} = (\euclidean{n}\times \euclidean{m}, \overline{\tau}, \overline{\beta}, 
\overline{\nu})$ be the computable product space induced by $\mathbf{R}^{n}$ and 
$\mathbf{R}^{m}$. Then $\overline{\mathbf{X}}$ is computably homeomorphic to $\mathbf{R}^{n+m}$, 
using the canonical map $((x_1, \ldots, x_n),(y_1,\ldots,y_m)) \stackrel{g}{\mapsto} (x_1, 
\ldots,x_n,$ $y_1,\ldots,y_m)$. The set $M_1\times M_2$ endowed with the atlas
\[ \overline{\Phi} = \{ (g\circ (\varphi \times \psi), U \times V) \mid (\varphi, U) \in \Phi_1,
(\psi, V) \in \Phi_2 \} \]
is a topological manifold of dimension $n+m$, such that the topology induced in $M_1\times M_2$ by
$\overline{\Phi}$ is the product topology of the spaces $M_1$ and $M_2$.

Now we prove (a)-(c). Notice that as $\overline{\mathbf{X}}$ and $\mathbf{R}^{n+m}$ are computably 
homeomorphic, in the space $\mathbf{R}^{n+m}$, we can replace the base $\beta^{n+m}$ by the base
\[ \overline{\beta}_g = \{ \directimage{\mu^n(w) \times \mu^m(z)}{g} \mid (w,z) \in 
\domp{\mu^n} \times \domp{\mu^m} \}, \] 
(see Lemma \ref{lemmaTopSpaceCompFuncRepSet} and Corollary \ref{corCompHomeoTopSpaces}), thus the 
elements of the predicate
space $\mathbf{Z}_{\overline{\Phi}} = (M_1\times M_2,
\mathcal{B}_{\overline{\Phi}}, \lambda_{\overline{\Phi}})$ are defined as follows:
\begin{itemize}
 \item [] $\mathcal{B}_{\overline{\Phi}} = \{ \inverseimage{\mu^n(w) \times \mu^m(z) }{(\varphi_i
\times \psi_j)} \mid (\langle i,w \rangle, \langle j,z \rangle) \in \dom \nu_{\Phi_1} \times \dom
\nu_{\Phi_2} \}$;
 \item [] $\lambda_{\overline{\Phi}} \colon \sigmastringset \to \mathcal{B}_{\overline{\Phi}}$, is given by 
$\lambda_{\overline{\Phi}}(\langle \langle i,j \rangle, \langle w,z \rangle
\rangle)=\inverseimage{\mu^n(w) \times \mu^m(z) }{(\varphi_i \times \psi_j)}$.
\end{itemize}
Hence, it follows that $\lambda_{\overline{\Phi}}(\langle \langle i,j \rangle, \langle w,z \rangle
\rangle)=\inverseimage{\directimage{\mu^n(w) \times \mu^m(z)}{g}}{(g\circ (\varphi_i \times
\psi_j))}=\directimage{\directimage{\mu^n(w) \times \mu^m(z)}{g}}{(\varphi_i^{-1} \times
\psi_j^{-1}) \circ g^{-1}}=\directimage{\mu^n(w) \times \mu^m(z)}{(\varphi_i^{-1} \times
\psi_j^{-1})}=$ $\inverseimage{\mu^n(w)}{\varphi_i} \times 
\inverseimage{\mu^m(z)}{\psi_j}=\lambda_{\Phi_1}(\langle i, w \rangle) \times
\lambda_{\Phi_2}(\langle j,z \rangle)=(\lambda_{\Phi_1} \times \lambda_{\Phi_2}) (\langle i, w
\rangle, \langle j,z \rangle)$ and this implies that $\delta_{\overline{\Phi}}
\equiv \left[ \delta_{\Phi_1}, \delta_{\Phi_2} \right]$; this proves (a) and (b). It is easy to 
verify that the computable topological space $\mathbf{T}_{\overline{\Phi}}(M_1\times M_2)=(M_1 
\times M_2,
\tau_{\overline{\Phi}}, \beta_{\overline{\Phi}}, \nu_{\overline{\Phi}})$ induced in $M_1\times M_2$
by $\overline{\Phi}$ is precisely the computable product space of $\mathbf{T}_{\Phi_1}(M_1)$ and
$\mathbf{T}_{\Phi_2}(M_2)$, so that (c) holds.
  
To finish the proof of the Lemma, we only need to show that $\overline{\Phi}$ is a computable atlas
on $M_1 \times M_2$, i.e. we have to prove that  each chart $g\circ (\varphi_i \times \psi_j) \colon
U_i\times V_j \to \euclidean{n+m} \in \overline{\Phi}$ is $(\delta_{\overline{\Phi}},
\delta^{n+m})$-computable with $(\delta^{n+m},\delta_{\overline{\Phi}})$-computable inverse and
that each set $\directimage{U_i}{\varphi_i}\times \directimage{V_j}{\psi_j}$ is
$\theta^{n+m}$-computable. To do this, we will use the computable space $\overline{\mathbf{X}}$ 
and the representations $\overline{\delta},\overline{\theta}$ of points and open sets of 
$\euclidean{n}\times \euclidean{m}$ induced by $\overline{\mathbf{X}}$. By (b), 
$\delta_{\overline{\Phi}} \equiv \left[ \delta_{\Phi_1}, \delta_{\Phi_2} \right]$ and by
part 2. of Lemma \ref{lemmaPropertiesEffectiveTOPPRODUCT}, $\varphi_i \times \psi_j$ is
$(\delta_{\overline{\Phi}}, \overline{\delta})$-computable with 
$(\overline{\delta},\delta_{\overline{\Phi}})$-computable inverse. Using part 3. of Lemma 
\ref{lemmaPropertiesEffectiveTOPPRODUCT}, the set $\directimage{U_i}{\varphi_i}\times 
\directimage{V_j}{\psi_j}$ is a $\overline{\theta}$-computable open subset of $\euclidean{n} \times 
\euclidean{m}$. Combining these facts with the computable homeomorphism $g$ between 
$\overline{\mathbf{X}}$ and the computable euclidean space $\mathbf{R}^{n+m}$, we deduce that 
$\overline{\Phi}$ is a computable atlas on $M_1\times M_2$.
\end{proof}

We can generalize this result to arbitrary finite products of computable manifolds
$M_1 \times \cdots \times M_r$.

\begin{example}\label{exampleNTORUS}
 Following Lemma \ref{lemmaCompManifoldsProduct}, we can prove that the \emph{$n$-dimensional
torus} 
\[T^n=\prod_{i=1}^n \sphere{1}\] 
is a computable $n$-manifold, where we equip $\sphere{1}$ with any of the atlases given in Examples
\ref{exampleCompAtlasEuclideanSpaceCicleWithTOP} and \ref{exampleCompAtlasnSphereSP} (they are
computably compatible).
\end{example}

\subsection{Functions between computable manifolds}

In this section, we will show that (as expected)
computable functions, as we have defined them in Section \ref{subsecCompFunctionsCompTOPSPACES}, are
just adequate to be used as morphisms between computable manifolds.

To define morphisms between two computable manifolds $(M,\Phi)$ and $(N,\Psi)$, we have two possible
choices. The first uses the induced computable topological spaces
$\mathbf{T}_\Phi(M)$ and $\mathbf{T}_\Psi(N)$, a morphism between $M$ and $N$ is just a computable 
continuous function between $\mathbf{T}_\Phi(M)$ and $\mathbf{T}_\Psi(N)$. The second approach uses 
the representations $\delta_\Phi, \delta_\Psi$ of $M$ and $N$ respectively, induced by the 
computable predicate spaces $\mathbf{Z}_\Phi$ and $\mathbf{Z}_\Psi$. The two options are equivalent 
thanks to Definition \ref{defCompFunctionCompTopSpace} and part 2. of Lemma
\ref{lemmaCompPredicateSpaceISCompTOPSPACE}. As our formal definition, we adopt the second approach.

\begin{definition}\label{defComputableFuncManifolds}
Let $(M, \Phi),(N, \Psi)$ be computable manifolds. A morphism
$f\colon \subseteq (M, \Phi)$ $\to (N, \Psi)$ between computable manifolds is a $(\delta_\Phi,
\delta_\Psi)$-computable function $f\colon \subseteq M \to N$, where $\delta_\Phi, \delta_\Psi$ are 
the representations induced by $\mathbf{Z}_\Phi$ and $\mathbf{Z}_\Psi$ respectively.
%  computable function
% with respect to
% the computable spaces $\mathbf{T}_\Phi(M)$ and $\mathbf{T}_\Psi(N)$ (equivalently, $f$ is a 
% $(\delta_\Phi, \delta_\Psi)$-computable function, where $\delta_\Phi, \delta_\Psi$ are the
% representations induced by $\mathbf{Z}_\Phi$ and $\mathbf{Z}_\Psi$ respectively).
\end{definition}

\begin{lemma}\label{lemmaCompManifoldsMorphismCompFuncCOMPTOPSPACES}
 Let $(M, \Phi),(N, \Psi)$ be computable manifolds and $f\colon \subseteq M \to N$ a function. Then 
$f$ is a morphism of computable manifolds if and only if $f$ is a computable map between the 
computable spaces $\mathbf{T}_\Phi(M)$ and $\mathbf{T}_\Psi(N)$.
\end{lemma}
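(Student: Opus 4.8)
The plan is to observe that both characterisations of $f$ reduce to the same computability condition once one identifies the relevant representations of $M$ and of $N$. First I would record, from Definition \ref{defComputableFuncManifolds}, that $f$ is a morphism of computable manifolds exactly when $f$ is $(\delta_{\mathbf{Z}_\Phi}, \delta_{\mathbf{Z}_\Psi})$-computable, where $\delta_{\mathbf{Z}_\Phi}$ and $\delta_{\mathbf{Z}_\Psi}$ are the inner representations of the computable predicate spaces $\mathbf{Z}_\Phi$ and $\mathbf{Z}_\Psi$. On the other hand, $\mathbf{T}_\Phi(M) = T(\mathbf{Z}_\Phi)$ and $\mathbf{T}_\Psi(N) = T(\mathbf{Z}_\Psi)$ by definition, so by Definition \ref{defCompFunctionCompTopSpace} the statement ``$f$ is a computable map between $\mathbf{T}_\Phi(M)$ and $\mathbf{T}_\Psi(N)$'' unfolds to ``$f$ has a computable $(\delta_{T(\mathbf{Z}_\Phi)}, \delta_{T(\mathbf{Z}_\Psi)})$-realization'', where $\delta_{T(\mathbf{Z}_\Phi)}$, $\delta_{T(\mathbf{Z}_\Psi)}$ are the representations of points attached to the effective topological spaces $T(\mathbf{Z}_\Phi)$, $T(\mathbf{Z}_\Psi)$.

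Next I would invoke part 2 of Lemma \ref{lemmaCompPredicateSpaceISCompTOPSPACE}, which yields $\delta_{T(\mathbf{Z}_\Phi)} \equiv \delta_{\mathbf{Z}_\Phi}$ and $\delta_{T(\mathbf{Z}_\Psi)} \equiv \delta_{\mathbf{Z}_\Psi}$; this is precisely the identification already used in the Remark following Lemma \ref{lemmaAbstractManifoldCompPredicateSpace} to justify writing $\delta_\Phi$ for either representation. Since computably equivalent representations induce the same computability, the two conditions on $f$ coincide. Concretely: if $h$ is a computable $(\delta_{\mathbf{Z}_\Phi}, \delta_{\mathbf{Z}_\Psi})$-realization of $f$ and $t_1, t_2$ are computable functions on $\sigmaomegastringset$ realising the reductions $\delta_{T(\mathbf{Z}_\Phi)} \leq \delta_{\mathbf{Z}_\Phi}$ and $\delta_{\mathbf{Z}_\Psi} \leq \delta_{T(\mathbf{Z}_\Psi)}$, then $t_2 \circ h \circ t_1$ is a computable $(\delta_{T(\mathbf{Z}_\Phi)}, \delta_{T(\mathbf{Z}_\Psi)})$-realization of $f$ by closure of computable functions on $\sigmaomegastringset$ under composition; the converse direction is symmetric, precomposing and postcomposing a $(\delta_{T(\mathbf{Z}_\Phi)}, \delta_{T(\mathbf{Z}_\Psi)})$-realization with the translations witnessing $\delta_{\mathbf{Z}_\Phi} \leq \delta_{T(\mathbf{Z}_\Phi)}$ and $\delta_{T(\mathbf{Z}_\Psi)} \leq \delta_{\mathbf{Z}_\Psi}$. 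This gives both implications and proves the Lemma.

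There is no substantive obstacle here; the only thing to get right is the bookkeeping of which representation is attached to which object — the predicate space $\mathbf{Z}_\Phi$ versus its induced topological space $T(\mathbf{Z}_\Phi) = \mathbf{T}_\Phi(M)$ — and this is exactly what part 2 of Lemma \ref{lemmaCompPredicateSpaceISCompTOPSPACE} settles, so the argument is short.
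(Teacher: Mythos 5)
Your proposal is correct and follows essentially the same route as the paper's own proof: both directions reduce to the equivalence $\delta_{T(\mathbf{Z}_\Phi)} \equiv \delta_{\mathbf{Z}_\Phi}$ (and likewise for $\Psi$) from part 2 of Lemma \ref{lemmaCompPredicateSpaceISCompTOPSPACE}, combined with the two definitions of computability. The only difference is that you explicitly write out the composition of realizers with the translation functions, which the paper leaves implicit.
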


\begin{proof}
 $(\Rightarrow)$ If $f$ is a morphism of computable manifolds, then $f$ is $(\delta_\Phi,
\delta_\Psi)$-computable. But we know from part 2 of Lemma 
\ref{lemmaCompPredicateSpaceISCompTOPSPACE} that $\delta_\Phi$ is equivalent to the inner 
representation of points of the computable space $\mathbf{T}_\Phi(M)=T(\mathbf{Z}_\Phi)$. A similar 
statement is true for $\delta_\Psi$ and $\mathbf{T}_\Psi(N)$. Therefore $f$ is a computable 
function with respect to the inner representations of $\mathbf{T}_\Phi(M)$ and 
$\mathbf{T}_\Psi(N)$, so that by Definition \ref{defCompFunctionCompTopSpace}, $f$ is a computable 
map between $\mathbf{T}_\Phi(M)$ and $\mathbf{T}_\Psi(N)$.

$(\Leftarrow)$ If $f$ is a computable map between $\mathbf{T}_\Phi(M)$ and $\mathbf{T}_\Psi(N)$, 
then by Definition \ref{defCompFunctionCompTopSpace}, $f$ is computable with respect to the inner 
representations of points of $\mathbf{T}_\Phi(M)$ and $\mathbf{T}_\Psi(N)$ and these 
representations are equivalent to the representations $\delta_\Phi$ and $\delta_\Psi$ respectively 
(by 2 of Lemma \ref{lemmaCompPredicateSpaceISCompTOPSPACE}), so that by Definition 
\ref{defComputableFuncManifolds}, $f$ is a morphism between the computable manifolds $M$ and $N$.
\end{proof}

Recall that computable functions are continuous. Of course, we would like computable
homeomorphisms to be the standard equivalence between computable manifolds. We now prove some
results about this topic. First, we have an analog of Lemma \ref{lemmaTopSpaceCompFuncRepSet} for
computable manifolds.

\begin{lemma}\label{lemmaManifoldFuncRepSet}
 Let $(M, \left[ \Phi \right])$ be a computable $n$-manifold and $X$ a set. 
If $f\colon M \to X$ is a bijective function, then there exists a computable structure
$\left[ \Phi_f \right]$ on $X$ induced by $f$ and $\Phi$, such that for all $z\in \dom 
\lambda_{\Phi_f},
\lambda_{\Phi_f}(z)=\directimage{\lambda_\Phi(z)}{f}$ and $\delta_{\Phi_f}=f\circ \delta_\Phi$.
% , such that the underlying effective
% topological space $\mathbf{T}_\Psi(X)$ is precisely $\mathbf{X}_{f}$.
\end{lemma}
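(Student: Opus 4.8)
The plan is to transport the computable structure along $f$ by pushing each chart forward. Writing $\Phi=\{(\varphi_i,U_i)\}_{i\in I}$, I would set
\[
\Phi_f=\{(\varphi_i\circ f^{-1},\,\directimage{U_i}{f})\}_{i\in I},
\]
so that the new chart $\psi_i:=\varphi_i\circ f^{-1}$ is a bijection of $\directimage{U_i}{f}$ onto the open set $\directimage{U_i}{\varphi_i}\subseteq\euclidean{n}$. The transition maps are unchanged, $\psi_j\psi_i^{-1}=\varphi_j\varphi_i^{-1}$, and the $\directimage{U_i}{f}$ cover $X$; hence $\Phi_f$ is an $n$-dimensional topological atlas on $X$, and the topology it induces is the push-forward of $\tau_\Phi$ under $f$.

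First I would identify the computable balls of $\Phi_f$. The index set $\Lambda_\Phi$ of Section \ref{subsecCompPredicateSpacebyAtlas} depends only on $\domp{\mu^n}$, so $\dom\lambda_{\Phi_f}=\Lambda_\Phi=\dom\lambda_\Phi$; moreover the condition in \eqref{eqDefBiw} for an index $\langle i,w\rangle$, namely $i\in I$ and $\mu^n(w)\subseteq\directimage{U_i}{\varphi_i}$, is literally the same for $\Phi$ and $\Phi_f$, because $\directimage{(\directimage{U_i}{f})}{\psi_i}=\directimage{U_i}{\varphi_i}$. Since $\inverseimage{\mu^n(w)}{\varphi_i}\subseteq U_i$ and $f$ is a bijection, $\inverseimage{\mu^n(w)}{\psi_i}=\directimage{\inverseimage{\mu^n(w)}{\varphi_i}}{f}$, so in every case the $\Phi_f$-ball indexed by $\langle i,w\rangle$ equals $\directimage{B_{\langle i,w\rangle}}{f}$ (with $\directimage{\myemptyset}{f}=\myemptyset$). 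Hence $\mathcal{B}_{\Phi_f}=\{\directimage{B}{f}\mid B\in\mathcal{B}_\Phi\}$ and $\lambda_{\Phi_f}(z)=\directimage{\lambda_\Phi(z)}{f}$ on $\dom\lambda_{\Phi_f}=\dom\lambda_\Phi$, the first asserted equality. Then, exactly as in the case $\gamma=\delta$ of the proof of Lemma \ref{lemmaTopSpaceCompFuncRepSet} (using $y\in\directimage{\lambda_\Phi(w)}{f}\Leftrightarrow f^{-1}(y)\in\lambda_\Phi(w)$), the defining equivalence of $\delta_{\mathbf{Z}_{\Phi_f}}$ unwinds to $\delta_{\mathbf{Z}_{\Phi_f}}(p)=y\Leftrightarrow\delta_{\mathbf{Z}_\Phi}(p)=f^{-1}(y)$; that is, $\dom\delta_{\mathbf{Z}_{\Phi_f}}=\dom\delta_{\mathbf{Z}_\Phi}$ and $\delta_{\mathbf{Z}_{\Phi_f}}=f\circ\delta_{\mathbf{Z}_\Phi}$. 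Writing $\delta_{\Phi_f}$ for this representation, which by Lemma \ref{lemmaCompPredicateSpaceISCompTOPSPACE} is equivalent to $\delta_{T(\mathbf{Z}_{\Phi_f})}$, this gives $\delta_{\Phi_f}=f\circ\delta_\Phi$, the second asserted equality.

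Next I would check that $\Phi_f$ is a computable atlas in the sense of Definition \ref{defAbstractComputableAtlas}. Condition (b) is immediate, since $\directimage{(\directimage{U_i}{f})}{\psi_i}=\directimage{U_i}{\varphi_i}$ is $\theta^n$-computable because $\Phi$ is a computable atlas. For (a): if $h$ realizes $\varphi_i$ with respect to $\delta_{\mathbf{Z}_\Phi}$ and $\delta^n$, then for any $p$ with $\delta_{\Phi_f}(p)=y\in\directimage{U_i}{f}$ we have $\delta_\Phi(p)=f^{-1}(y)\in U_i$, whence $\delta^n(h(p))=\varphi_i(f^{-1}(y))=\psi_i(y)$; so the same $h$ realizes $\psi_i$ with respect to $\delta_{\Phi_f}$ and $\delta^n$. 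Dually, a realizer $g$ of $\varphi_i^{-1}$ with respect to $\delta^n$ and $\delta_{\mathbf{Z}_\Phi}$ satisfies $\delta_{\Phi_f}(g(q))=f(\delta_\Phi(g(q)))=f(\varphi_i^{-1}(z))=\psi_i^{-1}(z)$ whenever $\delta^n(q)=z\in\directimage{U_i}{\varphi_i}$, so $g$ realizes $\psi_i^{-1}$ with respect to $\delta^n$ and $\delta_{\Phi_f}$. Thus $\Phi_f$ is an $n$-dimensional computable atlas on $X$, and I would let $[\Phi_f]$ be its computable-compatibility class (Definition \ref{defComputableStructure}). I would finish by remarking that this class does not depend on the chosen representative of $[\Phi]$: if $\Phi,\Psi$ are computably compatible then $\Phi_f,\Psi_f$ induce the same push-forward topology, hence are topologically compatible (Definition \ref{defCompatibleAtlases}), and $\delta_{\Phi_f}=f\circ\delta_\Phi\equiv f\circ\delta_\Psi=\delta_{\Psi_f}$ because any computable translation of $\delta_\Phi$-names into $\delta_\Psi$-names simultaneously translates $(f\circ\delta_\Phi)$-names into $(f\circ\delta_\Psi)$-names; so $\Phi_f$ and $\Psi_f$ are computably compatible by Definition \ref{defComputablyCompatibleAtlases}. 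The argument is a transport-of-structure along a set-bijection and presents no real obstacle; the only point requiring care is verifying that the combinatorial data $\dom\lambda_\Phi$ and the computable-ball condition in \eqref{eqDefBiw} are genuinely unchanged by the push-forward, so that the equalities $\lambda_{\Phi_f}(z)=\directimage{\lambda_\Phi(z)}{f}$ and $\delta_{\Phi_f}=f\circ\delta_\Phi$ hold on the nose rather than merely up to equivalence.
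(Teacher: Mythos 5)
Your proposal is correct and follows essentially the same route as the paper: push each chart forward as $\psi_i=\varphi_i\circ f^{-1}$ on $\directimage{U_i}{f}$, observe that the computable balls satisfy $\lambda_{\Phi_f}(z)=\directimage{\lambda_\Phi(z)}{f}$ so that $\delta_{\Phi_f}=f\circ\delta_\Phi$, and verify conditions (a) and (b) of Definition \ref{defAbstractComputableAtlas} by transporting realizers and using $\directimage{(\directimage{U_i}{f})}{\psi_i}=\directimage{U_i}{\varphi_i}$. Your added checks (explicit realizer transfer and independence of the chosen representative of $\left[\Phi\right]$) go beyond what the paper writes out but are consistent with it.
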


\begin{proof}
Let $\Phi=\{ (\varphi_i, U_i) \}_{i\in I}$. By Lemma \ref{lemmaTopSpaceCompFuncRepSet},
$\mathbf{X}_{f}$ becomes a computable topological space and $f$ is a computable homeomorphism
between $\mathbf{T}_\Phi(M)$ and $\mathbf{X}_{f}$. Now, the atlas induced by $f$ on $X$ is given by 
\[ \Phi_f = \{ (\psi_i, V_i) \mid \psi_i = \varphi_i \circ f^{-1} \text{ and } V_i =
\directimage{U_i}{f}, (\varphi_i, U_i)\in \Phi \}_{i\in I}, \]
which induces the computable predicate space $\mathbf{Z}_{\Phi_f} = (X, \mathcal{B}_{\Phi_f},
\lambda_{\Phi_f})$, where 
\begin{itemize}
 \item [] $\mathcal{B}_{\Phi_f} = \{ \inverseimage{\mu^n(w)}{(\varphi_i \circ f^{-1})} \mid (i,w)\in
\sigmastringset \times \domp{\mu^n_+} \}$,
\item [] $\lambda_{\Phi_f} \colon \subseteq \sigmastringset \to \mathcal{B}_{\Phi_f}$ is given by 
$\lambda_{\Phi_f}(\langle i,w \rangle)=\inverseimage{\mu^n_+(w)}{(\varphi_i \circ f^{-1})}$.
\end{itemize}
Clearly $\lambda_{\Phi_f}(j)=\directimage{\lambda_\Phi(j)}{f}$, so that $\delta_{\Phi_f}(p) =
f\circ \delta_\Phi(p)$ for all $p\in \dom \delta_{\Phi_f}$. For each $i\in I$, the open set
$\directimage{V_i}{\psi_i}= \directimage{\directimage{U_i}{f}}{(\varphi_i \circ
f^{-1})}=\directimage{U_i}{\varphi_i}$ is $\theta^n$-computable open in $\euclidean{n}$ and the
map $\psi_i$ is $(\delta_{\Phi_f}, \delta^n)$-computable with $(\delta^n, \delta_{\Phi_f})$-computable
inverse. It follows that $\Phi_f$ satisfies Definition \ref{defAbstractComputableAtlas}, thus it
induces a computable structure $\left[ \Phi_f \right]$ on $X$.
\end{proof}

\begin{corollary}\label{corManifoldFuncRepSetCompTOPSPACE}
 With the hypothesis of Lemma \ref{lemmaManifoldFuncRepSet},
$\mathbf{T}_{\Phi_f}(X)=\mathbf{X}_{f}$, where $\mathbf{X}_{f}$ is the computable topological space
of Lemma \ref{lemmaTopSpaceCompFuncRepSet}.
\end{corollary}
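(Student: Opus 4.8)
The plan is to unwind the two constructions and verify that they yield the same four-tuple, the only genuine ingredient being that $f$, as a bijection, commutes with set-theoretic intersections and unions. Recall that $\mathbf{T}_{\Phi_f}(X) = T(\mathbf{Z}_{\Phi_f})$, so by Definition \ref{defAtomicPRedicate} its base $\beta_{\Phi_f}$ is the family of all finite intersections of members of $\mathcal{B}_{\Phi_f}$, its notation is $\nu_{\Phi_f} = \fsintersectionbasenotation{\lambda_{\Phi_f}}$, and its topology $\tau_{\Phi_f}$ is the family of all unions of subfamilies of $\beta_{\Phi_f}$. On the other side, $\mathbf{X}_f$ is built from $\mathbf{X} = \mathbf{T}_\Phi(M) = (M, \tau_\Phi, \beta_\Phi, \nu_\Phi)$ and the bijection $f \colon M \to X$ as in the paragraph preceding Lemma \ref{lemmaTopSpaceCompFuncRepSet}, so $\mathbf{X}_f = (X, (\tau_\Phi)_f, (\beta_\Phi)_f, (\nu_\Phi)_f)$ with $(\beta_\Phi)_f = \{ \directimage{V}{f} \mid V \in \beta_\Phi \}$, $(\nu_\Phi)_f(w) = \directimage{\nu_\Phi(w)}{f}$ on $\dom (\nu_\Phi)_f = \dom \nu_\Phi$, and $(\tau_\Phi)_f$ the push-forward topology, i.e.\ the one with base $(\beta_\Phi)_f$.

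First I would extract the one fact needed from the proof of Lemma \ref{lemmaManifoldFuncRepSet}: the atlas $\Phi_f$ has the same index set $I$ and the same set $\domp{\mu^n}$ of rational-ball codes as $\Phi$, so $\Lambda_{\Phi_f} = \Lambda_\Phi$ and hence $\dom \lambda_{\Phi_f} = \dom \lambda_\Phi$; moreover $\lambda_{\Phi_f}(z) = \directimage{\lambda_\Phi(z)}{f}$ for every $z$ in this common domain, so that $\mathcal{B}_{\Phi_f} = \{ \directimage{B}{f} \mid B \in \mathcal{B}_\Phi \}$. Next I would use that, since $f$ is bijective, $\directimage{\bigcap_k A_k}{f} = \bigcap_k \directimage{A_k}{f}$ (this uses injectivity of $f$) and $\directimage{\bigcup_k A_k}{f} = \bigcup_k \directimage{A_k}{f}$ for any family $(A_k)_k$ of subsets of $M$. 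Applying the first identity to finite subfamilies of $\mathcal{B}_\Phi$ shows that the finite intersections of members of $\mathcal{B}_{\Phi_f}$ are precisely the $f$-images of finite intersections of members of $\mathcal{B}_\Phi$, that is, $\beta_{\Phi_f} = (\beta_\Phi)_f$; applying the second identity to subfamilies of $\beta_{\Phi_f}$ then gives $\tau_{\Phi_f} = (\tau_\Phi)_f$.

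Finally I would identify the notations. Unwinding Definition \ref{defFSCSnotations} in its intersection reading, $\dom \nu_{\Phi_f} = \{ w \mid (\forall v \ll w)\, v \in \dom \lambda_{\Phi_f} \}$ and $\nu_{\Phi_f}(w) = \bigcap \{ \lambda_{\Phi_f}(v) \mid v \ll w \}$, and likewise for $\nu_\Phi$ with $\lambda_\Phi$ replacing $\lambda_{\Phi_f}$; since $\dom \lambda_{\Phi_f} = \dom \lambda_\Phi$ these notations share their domain, which also equals $\dom (\nu_\Phi)_f$, and for $w$ in it
\[
\nu_{\Phi_f}(w) = \bigcap_{v \ll w} \directimage{\lambda_\Phi(v)}{f} = \directimage{\ \bigcap_{v \ll w} \lambda_\Phi(v)\ }{f} = \directimage{\nu_\Phi(w)}{f} = (\nu_\Phi)_f(w),
\]
the middle equality again using injectivity of $f$. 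Hence $\nu_{\Phi_f} = (\nu_\Phi)_f$, all four components agree, and $\mathbf{T}_{\Phi_f}(X) = \mathbf{X}_f$. I do not anticipate a real obstacle in this argument; the only points demanding a little care are bookkeeping with domains — checking $\Lambda_{\Phi_f} = \Lambda_\Phi$ so that $\fsintersectionbasenotation{\lambda_{\Phi_f}}$ and $\fsintersectionbasenotation{\lambda_\Phi}$ genuinely have the same domain — and keeping in mind that it is the injectivity of $f$, not merely its surjectivity, that lets images commute with intersections.
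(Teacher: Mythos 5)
Your proof is correct and follows essentially the same route as the paper, which simply observes that $\beta_{\Phi_f} = \{ \directimage{B}{f} \mid B \in \beta_\Phi \}$ coincides with $\beta_f$ and hence $\nu_{\Phi_f} = \nu_f$. You merely spell out the details the paper leaves implicit (the equality of index sets and the fact that images under the injective map $f$ commute with intersections), which is a sound and welcome elaboration but not a different argument.
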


\begin{proof}
The topology of $\mathbf{T}_{\Phi_f}(X)$ is generated by the base
\[ \beta_{\Phi_f} = \{ \directimage{B}{f} \mid B \in \beta_\Phi \}, \]
and this is exactly the set $\beta_{f}$; so that $\nu_{\Phi_f} = \nu_{f}$. In
other words, $\mathbf{T}_{\Phi_f}(X)=\mathbf{X}_{f}$.
\end{proof}

\begin{corollary}\label{corManifoldFuncTopSpace}
 Let $(M, \Phi)$ be a computable manifold and $\mathbf{X} = (X, \tau, \beta,
\nu)$ be a computable topological space. If $f\colon M \to X$ is a computable homeomorphism between
$\mathbf{T}_{\Phi}(M)$ and $\mathbf{X}$, then
$\mathbf{T}_{\Phi_f}(X)$ is equivalent to $\mathbf{X}$.
\end{corollary}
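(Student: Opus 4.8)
The plan is to obtain this corollary as an immediate combination of Lemma \ref{lemmaManifoldFuncRepSet}, Corollary \ref{corManifoldFuncRepSetCompTOPSPACE} and Corollary \ref{corCompHomeoTopSpaces}, after observing that the space denoted $\mathbf{X}_f$ in Corollary \ref{corManifoldFuncRepSetCompTOPSPACE} is precisely the push-forward space produced by the ``induced effective topological space'' construction when that construction is applied to the effective space $\mathbf{T}_\Phi(M)$ and the bijection $f$. So the whole proof is a matter of checking that the hypotheses of the cited results are met and then reading off the conclusion.

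First I would note that a computable homeomorphism is in particular a bijection, so the hypotheses of Lemma \ref{lemmaManifoldFuncRepSet} hold: $f$ induces a computable structure $[\Phi_f]$ on $X$, and by Corollary \ref{corManifoldFuncRepSetCompTOPSPACE} the associated computable topological space satisfies $\mathbf{T}_{\Phi_f}(X) = \mathbf{X}_f$, where $\mathbf{X}_f$ is the computable topological space of Lemma \ref{lemmaTopSpaceCompFuncRepSet} obtained from the effective space $\mathbf{T}_\Phi(M)$ by pushing its base, its notation and its topology forward along $f$ (explicitly, $\beta_{\Phi_f}=\{\directimage{B}{f}\mid B\in\beta_\Phi\}$ and $\nu_{\Phi_f}(w)=\directimage{\nu_\Phi(w)}{f}$, which is exactly the construction of $\beta_f,\nu_f$). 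Next I would invoke Corollary \ref{corCompHomeoTopSpaces} with $\mathbf{T}_\Phi(M)$ playing the role of ``$\mathbf{X}$'' and the given computable topological space $\mathbf{X}$ playing the role of ``$\mathbf{Y}$'': since $f\colon M \to X$ is by hypothesis a computable homeomorphism between $\mathbf{T}_\Phi(M)$ and $\mathbf{X}$, the corollary yields that $\mathbf{X}$ and $\mathbf{X}_f$ are equivalent computable topological spaces. Combining this with the identity $\mathbf{T}_{\Phi_f}(X) = \mathbf{X}_f$ from the previous step gives that $\mathbf{T}_{\Phi_f}(X)$ is equivalent to $\mathbf{X}$, which is the assertion.

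The argument is essentially bookkeeping about which earlier result supplies which half, so there is no substantial obstacle; the only point that needs a small amount of care is the compatibility of the symbol $\mathbf{X}_f$ across Corollaries \ref{corManifoldFuncRepSetCompTOPSPACE} and \ref{corCompHomeoTopSpaces} — namely that both instances denote the same push-forward of $\mathbf{T}_\Phi(M)$ along $f$ — and this is transparent from the definitions of $\beta_{\Phi_f},\lambda_{\Phi_f}$ on the one hand and of $\beta_f,\nu_f$ on the other.
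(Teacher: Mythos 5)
Your proposal is correct and matches the paper's own proof, which likewise obtains the corollary as an immediate consequence of Corollary \ref{corCompHomeoTopSpaces} and Corollary \ref{corManifoldFuncRepSetCompTOPSPACE}; you have merely spelled out the bookkeeping (that $\mathbf{T}_{\Phi_f}(X)=\mathbf{X}_f$ and that $\mathbf{X}\,$ is equivalent to $\mathbf{X}_f$ by the computable homeomorphism $f$) that the paper leaves implicit.
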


\begin{proof}
 It is an immediate consequence of Lemma \ref{corCompHomeoTopSpaces} and Corollary
\ref{corManifoldFuncRepSetCompTOPSPACE}.
\end{proof}

\begin{corollary}\label{corManifoldFuncManifold}
 Let $(M, \Phi), (N, \Phi^\prime)$ be computable manifolds. If $f\colon M \to N$ is a computable
homeomorphism, then the computable structure of $N$ and the computable structure induced by
$f$ on $N$
%$\mathbf{N}_{f^{-1}}$
 are the same.
\end{corollary}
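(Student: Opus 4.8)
The plan is to reduce the statement entirely to facts already proved about the induced computable topological spaces, so that nothing new really has to be computed. First I would unwind the hypothesis. To say that $f\colon M \to N$ is a \emph{computable homeomorphism} between the computable manifolds $(M,\Phi)$ and $(N,\Phi^\prime)$ means that $f$ is a bijection and that both $f$ and $f^{-1}$ are morphisms of computable manifolds, i.e. $f$ is $(\delta_\Phi,\delta_{\Phi^\prime})$-computable and $f^{-1}$ is $(\delta_{\Phi^\prime},\delta_\Phi)$-computable. By Lemma \ref{lemmaCompManifoldsMorphismCompFuncCOMPTOPSPACES} (together with part 2 of Lemma \ref{lemmaCompPredicateSpaceISCompTOPSPACE}, which lets us identify $\delta_\Phi$ and $\delta_{\Phi^\prime}$ with the inner point representations of $\mathbf{T}_\Phi(M)$ and $\mathbf{T}_{\Phi^\prime}(N)$), this is the same as saying that $f$ is a computable homeomorphism between the computable topological spaces $\mathbf{T}_\Phi(M)$ and $\mathbf{T}_{\Phi^\prime}(N)$.

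Next, since $f\colon M \to N$ is in particular a bijection, Lemma \ref{lemmaManifoldFuncRepSet} applies and yields a computable structure $\left[ \Phi_f \right]$ on $N$ induced by $f$ and $\Phi$, with $\lambda_{\Phi_f}(z) = \directimage{\lambda_\Phi(z)}{f}$ and $\delta_{\Phi_f} = f \circ \delta_\Phi$; this $\left[ \Phi_f \right]$ is by definition the computable structure induced by $f$ on $N$, and by Corollary \ref{corManifoldFuncRepSetCompTOPSPACE} its induced computable topological space is $\mathbf{T}_{\Phi_f}(N) = \mathbf{X}_f$, the push-forward of $\mathbf{T}_\Phi(M)$ along $f$. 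Now I would invoke Corollary \ref{corManifoldFuncTopSpace} with $\mathbf{X} = \mathbf{T}_{\Phi^\prime}(N)$: because $f$ is a computable homeomorphism between $\mathbf{T}_\Phi(M)$ and $\mathbf{T}_{\Phi^\prime}(N)$, that corollary gives that $\mathbf{T}_{\Phi_f}(N)$ is equivalent to $\mathbf{T}_{\Phi^\prime}(N)$ as computable topological spaces. Finally, Lemma \ref{lemmaCompatibleCompAtlasesIsSameCompTOP} converts this equivalence into computable compatibility of the atlases $\Phi_f$ and $\Phi^\prime$, so $\left[ \Phi_f \right] = \left[ \Phi^\prime \right]$; the left-hand side is the computable structure induced by $f$ on $N$ and the right-hand side is the given computable structure of $N$, which is the assertion.

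The statement is a corollary precisely because the substantive work already sits in Lemma \ref{lemmaManifoldFuncRepSet}, Corollary \ref{corManifoldFuncTopSpace} and Lemma \ref{lemmaCompatibleCompAtlasesIsSameCompTOP}, so there is no real obstacle. The only point deserving a line of care is the first one — checking that the notion of a computable homeomorphism of computable manifolds unwinds to exactly the hypothesis $\mathbf{T}_\Phi(M) \chomeom \mathbf{T}_{\Phi^\prime}(N)$ needed to feed into Corollary \ref{corManifoldFuncTopSpace}. If one prefers to avoid that corollary, the same conclusion can be reached directly: from $\delta_{\Phi_f} = f \circ \delta_\Phi$ and computable realizations $F$ of $f$ and $G$ of $f^{-1}$ (with respect to $\delta_\Phi$ and $\delta_{\Phi^\prime}$) one gets $\delta_{\Phi_f} = \delta_{\Phi^\prime} \circ F$ and $\delta_{\Phi^\prime} = f \circ \delta_\Phi \circ G = \delta_{\Phi_f} \circ G$, hence $\delta_{\Phi_f} \equiv \delta_{\Phi^\prime}$; since $f$ is a homeomorphism, $\Phi_f$ and $\Phi^\prime$ induce the same topology on $N$, and Definition \ref{defComputablyCompatibleAtlases} then gives that $\Phi_f$ and $\Phi^\prime$ are computably compatible, so again $\left[ \Phi_f \right] = \left[ \Phi^\prime \right]$.
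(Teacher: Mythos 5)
Your proof is correct and follows exactly the paper's route: apply Corollary \ref{corManifoldFuncTopSpace} to get that $\mathbf{T}_{\Phi_f}(N)$ and $\mathbf{T}_{\Phi^\prime}(N)$ are equivalent, then Lemma \ref{lemmaCompatibleCompAtlasesIsSameCompTOP} to conclude that $\Phi_f$ and $\Phi^\prime$ are computably compatible, hence $\left[ \Phi_f \right] = \left[ \Phi^\prime \right]$. The extra unwinding of the hypothesis and the alternative direct diagram-chase are fine but not needed beyond what the paper already records.
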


\begin{proof}
% Let $\Phi_{f^{-1}}$ be the atlas induced by $\Phi_M$ and $f$ on $N$. Then 
% $\Phi_{f}, \Phi^\prime$ are compatible accordingly to Definition \ref{defCompatibleAtlases}. 
By Corollary \ref{corManifoldFuncTopSpace}, $\mathbf{T}_{\Phi_f}(N)$ and
$\mathbf{T}_{\Phi^\prime}(N)$ are equivalent computable topological spaces, so that we can use Lemma
\ref{lemmaCompatibleCompAtlasesIsSameCompTOP} to conclude that $\Phi_{f}$ and $\Phi^\prime$ are
computably compatible (Definition \ref{defComputablyCompatibleAtlases}), therefore $\left[ \Phi_{f}
\right] = \left[ \Phi^\prime \right]$.
\end{proof}

With this last result, we can see that two computably homeomorphic computable manifolds are
``essentially'' the same manifold.

\begin{example}\label{exampleSphereminusPchomeomRn}
 Let $h=s^{-1}\colon \euclidean{n} \to \puncturedsphere{n}$ ($\puncturedsphere{n}=\sphere{n} - \{ P \}$) 
 be the inverse of the stereographic projection of Example \ref{exampleStereographicProjection}. 
 If we equip
$\euclidean{n}$ with the (usual) manifold structure of Example \ref{exampleCompAtlasEuclideanSpace},
then by Lemma \ref{lemmaManifoldFuncRepSet} and Corollary \ref{corManifoldFuncTopSpace}, $\puncturedsphere{n}$
becomes a computable $n$-manifold with a computable structure $\Gamma$ such that
$\mathbf{T}_\Gamma(\puncturedsphere{n})$ is equivalent to $\ctopcompsubspace{\mathbf{S}^n}{\puncturedsphere{n}}$.
\end{example}

We will have more to say about computable functions between computable manifolds in Section
\ref{secComputableSubmanifolds}, where we introduce \emph{computable submanifolds} and study
\emph{computable embeddings} of manifolds.

\section{Properties of computable manifolds}

We now present some properties of computable manifolds. First, we analyze which 
% (computable)
topological properties are satisfied by the induced topology in a computable manifold $M$, then we present
some results about computability in $M$. Finally, we present the computable version of
classical results about special atlases on $M$.
% and finally, we present a simple characterization of computable points in $M$

\subsection{Topological and computable properties}\label{subsecCompTOPProperties}
% \subsubsection*{Topological properties}

By definition, a computable topological space is a $T_0$-space and thus, every computable manifold
is a $T_0$-space. However, a manifold also satisfies all the local topological properties of
euclidean space $\euclidean{n}$. Here is a list of some basic properties that the induced
topology on a computable manifold satisfies: The $T_1$ separation axiom; the second axiom of 
countability (because every computable manifold has a coun\-ta\-ble atlas); it is locally connected;
it is locally compact. But despite the fact that $\euclidean{n}$ is a (computably) Hausdorff space, it
is not true that every computable manifold is (computably) Hausdorff. We now prove that a well known
example of a non-Hausdorff topological manifold is in fact, an example of a non-Hausdorff computable
manifold.

\begin{example}\label{exampleNONHausdorffCOMPMAN}
 Let $H\subset \euclidean{2}$ be defined by $H=\{ (s,0)\mid s\in \reals \} \cup \{(0,1)\}$. $H$ is
called \emph{the line with two origins}. Let $U$ be the subset of $H$ of all points of the form
$(s,0)$ with $s\in \reals$ and $U^\prime=(U - \{ (0,0) \}) \cup \{ (0,1) \}$. Define charts $f
\colon U \to \reals$ and $f^\prime \colon U^\prime \to \reals$ of $H$ into $\reals$ by
\begin{itemize}
 \item [] $f(s,0)= s$
 \item [] $f^\prime(s,0)= s$ for $s\neq 0$ and $f^\prime(0,1)= 0$
\end{itemize}
It is very easy to show that $\Phi = \{ (f, U), (f^\prime, U^\prime) \}$ is an atlas on $H$
and that each chart is $(\delta_\Phi, \delta^1)$-computable with $(\delta^1,
\delta_\Phi)$-computable inverse. Since 
$\directimage{U}{f}=\directimage{U^\prime}{f^\prime}=\reals$, 
each of these sets is $\theta^1$-computable open in the computable space $\mathbf{R}$. The atlas 
$\Phi$ satisfies Definition \ref{defAbstractComputableAtlas} so that the pair $(H, \Phi)$ becomes a 
computable 1-manifold. The proof that $H$ is a non-Hausdorff space with the topology induced by 
$\Phi$ can be found in \cite{brickellclarkDIFF,1974TOPPROPERTIES}.
\end{example}

% \subsubsection*{Computable properties}
Recall from Section \ref{subsecCompPredicateSpacebyAtlas} that given a computable manifold $(M,
\Phi)$ using the atlas $\Phi$, we cons\-truc\-ted the computable predicate space 
$\mathbf{Z}_\Phi=(M, \mathcal{B}_\Phi, \lambda_\Phi)$ where  $\mathcal{B}_\Phi$ is the set of all 
computable balls in $M$. This predicate space depends on $\Phi$ and it can happen that 
$\mathcal{B}_\Phi$ contains empty elements. 

\begin{definition}\label{defComputableBalls}
Let $(M, \Phi)$ be a computable $n$-manifold. If the subset $E \subseteq \dom \lambda_\Phi$ defined by
\begin{equation}\label{eqNonEmptyCompEuclideanBalls}
 E = \{ l \in \dom \lambda_\Phi \mid \lambda_\Phi(l)\neq \myemptyset \}
\end{equation}
is c.e., then $(M, \Phi)$ will be called a \emph{computable manifold with non-empty computable 
balls}. 
\end{definition}

The next lemma states that if the set of non-empty computable balls is
c.e., then the empty elements of $\mathcal{B}_\Phi$ can be removed from the computable
spaces $\mathbf{Z}_\Phi$ and $\mathbf{T}_\Phi(M)$\footnote{This result is reminiscent of Lemma 25 of
\cite{KWECTOP}, which says that for a computable topological space $\mathbf{X}$, the empty base
elements can be ignored if the set of non-empty base elements is c.e.}.

\begin{lemma}\label{emptyEuclideanBallsRemoved}
 Let $(M,\Phi)$ be a computable $n$-manifold with non-empty computable balls. Then there
exists a computable topological space $\mathbf{T}^\prime(M)=(M, \tau_\Phi, \beta^\prime,
\nu^\prime)$ such that $\mathbf{T}_\Phi(M)$ is equivalent to $\mathbf{T}^\prime(M)$ and 
$\beta^\prime \subset \beta_\Phi$ is the set of all finite intersections of non-empty computable 
balls of $M$.
\end{lemma}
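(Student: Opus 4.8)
The plan is to realize $\mathbf{T}'(M)$ as the computable topological space $T(\mathbf{Z}')$ induced by a computable predicate space $\mathbf{Z}'$ obtained from $\mathbf{Z}_\Phi$ by discarding the empty computable balls, and then to verify the required equivalence at the level of the inner point representations. First I would fix a total computable surjection $g\colon \sigmastringset \to E$; this exists because $E$ is c.e.\ and, since $\mathcal{B}_\Phi$ covers the (non-empty) set $M$, non-empty. Put $\mathcal{B}'_\Phi = \rangep{\lambda_\Phi} \setminus \{\myemptyset\} = \{ \lambda_\Phi(l) \mid l \in E \}$ and define the notation $\lambda'\colon \sigmastringset \to \mathcal{B}'_\Phi$ by $\lambda'(w) = \lambda_\Phi(g(w))$, so that $\dom \lambda' = \sigmastringset$ is (trivially) computable. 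I then claim $\mathbf{Z}' = (M, \mathcal{B}'_\Phi, \lambda')$ is a computable predicate space: deleting the empty members of the cover $\mathcal{B}_\Phi$ does not change its union, so $\bigcup \mathcal{B}'_\Phi = M$; and condition \eqref{eqdefAtomicPRedicate} survives because in the proof of Lemma \ref{lemmaAbstractManifoldCompPredicateSpace} the computable ball used to separate two distinct points of $M$ contains one of those points and is therefore non-empty, hence lies in $\mathcal{B}'_\Phi$.

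By Lemma \ref{lemmaCompPredicateSpaceISCompTOPSPACE}, $\mathbf{T}'(M) \defeq T(\mathbf{Z}') = (M, \tau', \beta', \nu')$ is then a computable topological space with $\nu' = \fsintersectionbasenotation{\lambda'}$ and $\beta'$ equal to the set of all finite intersections of elements of $\mathcal{B}'_\Phi$, i.e.\ the set of all finite intersections of non-empty computable balls; and since removing $\myemptyset$ from the subbase $\mathcal{B}_\Phi$ leaves the generated topology unchanged, $\tau' = \tau_\Phi$. Next I would show $\delta_{\mathbf{Z}_\Phi} \equiv \delta_{\mathbf{Z}'}$, from which the equivalence of $\mathbf{T}_\Phi(M)$ and $\mathbf{T}'(M)$ follows by part (2) of Lemma \ref{lemmaCompPredicateSpaceISCompTOPSPACE} together with Theorem \ref{thmRobustnessEquivTOPSPACES} (statement 2 implies statement 1), both spaces sitting over the common topological space $(M, \tau_\Phi)$. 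For $\delta_{\mathbf{Z}_\Phi} \leq \delta_{\mathbf{Z}'}$: given a $\delta_{\mathbf{Z}_\Phi}$-name $p$ of $x$ — which lists exactly the $v \in \dom \lambda_\Phi$ with $x \in \lambda_\Phi(v)$ — a Type-2 machine dovetails over all $w \in \sigmastringset$, searching whether $g(w) \ll p$, and outputs $\iota(w)$ whenever this search succeeds; since $g(w) \in E \subseteq \dom \lambda_\Phi$, we have $g(w) \ll p \Leftrightarrow x \in \lambda_\Phi(g(w)) = \lambda'(w)$, so the output is a $\delta_{\mathbf{Z}'}$-name of $x$. For the converse: given a $\delta_{\mathbf{Z}'}$-name $q$ of $x$, a machine dovetails over all pairs $(v,w)$, tests the decidable condition $g(w) = v$ and, if it holds, searches whether $w \ll q$, outputting $\iota(v)$ when successful; the output lists precisely the $v$ with $v \in E$ and $x \in \lambda_\Phi(v)$, which equals $\{ v \in \dom \lambda_\Phi \mid x \in \lambda_\Phi(v) \}$ because $x \in \lambda_\Phi(v)$ already forces $\lambda_\Phi(v) \neq \myemptyset$, i.e.\ $v \in E$; hence the output is a $\delta_{\mathbf{Z}_\Phi}$-name of $x$.

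The one genuine obstacle is the point dealt with at the very start: the set $E$ of names of non-empty computable balls is merely c.e., so a verbatim restriction of $\lambda_\Phi$ to $E$ would have non-computable domain and $T$ of it would not be a computable topological space; reindexing $\mathcal{B}'_\Phi$ through the total surjection $g$ removes this difficulty while keeping $\rangep{\lambda'} = \mathcal{B}'_\Phi$. Note also that one cannot instead invoke the removal-of-empty-base-elements lemma of \cite{KWECTOP} directly, since the non-emptiness of a finite intersection of non-empty computable balls is not guaranteed to be c.e.; this is why the deletion has to be carried out at the subbase (predicate-space) level and the base $\beta'$ regenerated. Everything else — the predicate-space axioms for $\mathbf{Z}'$, the identification of $\tau'$, $\beta'$, $\nu'$, and the two realizer constructions above — is routine Type-2 bookkeeping with the wrapping function $\iota$. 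Alternatively, once $\mathbf{Z}'$ is in hand, the last step can be phrased as showing that $1_M\colon M \to M$ is a computable homeomorphism between $\mathbf{T}_\Phi(M)$ and $\mathbf{T}'(M)$ and then applying Corollary \ref{corEquiTOPSPACEIDENTITYCOMPHOM}.
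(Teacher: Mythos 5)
Your proposal is correct and follows essentially the same route as the paper: re-index the c.e.\ set $E$ by a total computable function onto $E$ (the paper uses an injective $h$ with $\range h = E$, you use a surjection $g$), form the computable predicate space on the non-empty computable balls, and conclude via $\delta_{\mathbf{Z}_\Phi}\equiv\delta_{\mathbf{Z}'}$ together with Lemma \ref{lemmaCompPredicateSpaceISCompTOPSPACE} and Theorem \ref{thmRobustnessEquivTOPSPACES}. Your explicit dovetailing realizers and the remark on why the deletion must happen at the subbase level are more detailed than the paper's one-line justification, but the argument is the same.
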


\begin{proof}
 Let $\mathcal{E} = \lambda_\Phi(E)$. Since the set $E \subseteq \dom \lambda_\Phi$ is c.e. 
(and infinite), 
% we can apply Lemma \ref{lemCharacterizationCESubsetsSigmaString} to obtain an 
there exists an 
injective total computable function $h\colon \sigmastringset \to $ $\sigmastringset$ such that 
$\range h=E$. Define the notation $\lambda_h \colon \sigmastringset \to \mathcal{E}$ by 
$\lambda_h(w)= \lambda_\Phi \circ h (w)$. Then the triple $\mathbf{Z}_h = (M, \mathcal{E}, 
\lambda_h)$ is a computable predicate space and since $\lambda_h = \lambda_\Phi \circ h$ and 
$\lambda_\Phi = \lambda_h \circ h^{-1}$, we have that $\lambda_\Phi \equiv \lambda_h$, and this 
fact implies that $\delta_\Phi \equiv \delta_h = \delta_{\mathbf{Z}_h}$. 
The computable topological 
space 
induced by $\mathbf{Z}_h$ is 
\[ \mathbf{T}^\prime(M) = (M, \tau_\Phi, \beta^\prime, \nu^\prime), \]
where $\beta^\prime$ is the base generated by all finite intersections of the elements of
$\mathcal{E}\subseteq \mathcal{B}_\Phi$, thus $\beta^\prime \subset \beta_\Phi$. Since $\delta
\equiv \delta_\Phi \equiv \delta_h\equiv \delta^\prime$, where $\delta,\delta^\prime$ are the inner
representations of $M$ induced by $\mathbf{T}_\Phi(M)$ and $\mathbf{T}^\prime(M)$ respectively, we
conclude that $\mathbf{T}_\Phi(M)$ and $\mathbf{T}^\prime(M)$ are equivalent computable topological
spaces, hence the result follows.
\end{proof}

In a computable $n$-manifold, the computable points can be characterized by the computability of
points in the computable euclidean space $\mathbf{R}^n$. We present a simple result from which
other characterizations can be derived.

\begin{lemma}\label{lemmaComputablePointsCOMPMAN}
 Let $(M, \Phi)$ be a computable $n$-manifold and $x\in M$. Then $x$ is a $\delta_\Phi$-computable
point in $M$ if and only if there exists a $\delta^n$-computable point $y\in \euclidean{n}$ and a
computable function $f\colon \subseteq \euclidean{n} \to M$ such that $f(y)=x$.
\end{lemma}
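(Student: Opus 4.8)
The plan is to prove both implications by restricting attention to a single chart of $\Phi$ and invoking Lemma \ref{lemmaCompPropertiesAtlas} together with the equivalence $\delta_\Phi \equiv \delta_{\mathbf{Z}_\Phi}$ coming from part 2 of Lemma \ref{lemmaCompPredicateSpaceISCompTOPSPACE} (and the convention, from the Remark after Lemma \ref{lemmaAbstractManifoldCompPredicateSpace}, that $\delta_\Phi$ denotes either of the equivalent representations $\delta_{T(\mathbf{Z}_\Phi)}$, $\delta_{\mathbf{Z}_\Phi}$). The elementary fact I will use twice is that a computable function carries computable points to computable points: if $g$ is $(\gamma_1,\gamma_2)$-computable via a computable realization $G$, and $x \in \dom g$ with $x = \gamma_1(p)$ for a computable $p$, then $G(p)$ is again computable (a Type-2 machine fed a computable input produces a computable output) and $\gamma_2(G(p)) = g(x)$, so $g(x)$ is $\gamma_2$-computable.

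For the forward implication I would argue as follows. Assume $x$ is $\delta_\Phi$-computable. Since the chart domains $U_i$ cover $M$, pick $i \in I$ with $x \in U_i$ and set $y = \varphi_i(x) \in \directimage{U_i}{\varphi_i} \subseteq \euclidean{n}$. By Definition \ref{defAbstractComputableAtlas}(a) the chart $\varphi_i$ is $(\delta_{\mathbf{Z}_\Phi}, \delta^n)$-computable, hence $(\delta_\Phi, \delta^n)$-computable, so by the fact above $y$ is a $\delta^n$-computable point of $\euclidean{n}$. The same clause of Definition \ref{defAbstractComputableAtlas} gives that $\varphi_i^{-1} \colon \directimage{U_i}{\varphi_i} \to U_i \subseteq M$ is $(\delta^n, \delta_{\mathbf{Z}_\Phi})$-computable, hence has a computable $(\delta^n, \delta_\Phi)$-realization, so it is a computable partial function $f \colon \subseteq \euclidean{n} \to M$ in the sense of Definition \ref{defCompFunctionCompTopSpace}. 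Since $f(y) = \varphi_i^{-1}(\varphi_i(x)) = x$, the pair $(y,f)$ witnesses the claim.

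For the converse I would simply unwind the definitions: given a $\delta^n$-computable point $y = \delta^n(q)$ with $q$ computable, and a computable $f \colon \subseteq \euclidean{n} \to M$ with $f(y) = x$, let $h$ be a computable $(\delta^n, \delta_\Phi)$-realization of $f$; then $h(q)$ is computable and $\delta_\Phi(h(q)) = f(y) = x$, so $x$ is $\delta_\Phi$-computable. I do not expect a genuine obstacle here; the only point that deserves a sentence of care is in the forward direction, where one must observe that $\varphi_i^{-1}$, although defined only on the proper open subset $\directimage{U_i}{\varphi_i}$ of $\euclidean{n}$, still counts as a computable function into $M$ — which is immediate, since the definition of computability for a \emph{partial} function constrains its realization only on names of points of the domain.
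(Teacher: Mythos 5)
Your proof is correct and follows essentially the same route as the paper's: in the forward direction both arguments pick a chart $(\varphi_i,U_i)$ containing $x$, take $y=\varphi_i(x)$ and $f=\varphi_i^{-1}$, and use that computable maps preserve computable points; the converse is the same one-line observation in both. Your extra remark about partial functions only being constrained on names of points in their domain is a harmless (and welcome) clarification the paper leaves implicit.
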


\begin{proof}
$(\Rightarrow)$ Let $x\in M$ be a $\delta_\Phi$-computable point. There exists a chart $(\varphi, 
U)\in \Phi$ such that $x\in U$ and $\varphi$ is a computable homeomorphism between $U$ and a computable
 open 
subset of $\euclidean{n}$. In particular, $\varphi$ and $\varphi^{-1}$ are computable with 
respect to $\delta^n$ and $\delta_\Phi$, which implies that $\varphi$ and $\varphi^{-1}$ preserve 
the computability of points between $U$ and $\directimage{U}{\varphi}$. Let $y=\varphi(x)$ and 
$f=\varphi^{-1}$. Then $y$ is a $\delta^n$-computable point in $\euclidean{n}$ 
% (thus, $y$ is a  $\delta^n$-computable point in $\euclidean{n}$) 
and $f$ is a computable map such that $f(y)=x$.

$(\Leftarrow)$ Since $f$ is computable, it takes the $\delta^n$-computable point $y\in 
\euclidean{n}$ onto a $\delta_\Phi$-computable point in $M$, thus $f(y)=x$ must be 
$\delta_\Phi$-computable in $M$.
\end{proof}

\subsection{Some special computable atlases}\label{subsecSpecialAtlases}

In Theorem \ref{thmCompManifoldEuclideanAtlas}, we deal with the existence of two very useful 
computable atlases. To show that
one of these atlases exists, our computable manifolds will need the property of non-empty computable
 balls and a technical result about computable homeomorphisms between rational open balls
$B(q, \epsilon)\subset \euclidean{n}$ and $\euclidean{n}$.

\begin{lemma}\label{lemmaTechResultCompHomeoBallsRn}
In the computable euclidean space $\mathbf{R}^n$, the following statements hold:
\begin{itemize}
 \item [(1)] For each $w\in \domp{\mu^n}$ such that $\mu^n(w)=B(w) =B(q, \epsilon)$, there exists a
computable homeomorphism $h_w\colon B(w) \to \euclidean{n}$ from $\ctopcompsubspace{\mathbf{R}^n}{B(w)}$ to
$\mathbf{R}^n$.
\item [(2)] The set
\begin{equation}\label{eqlemmaTechResultCompHomeoBallsRn}
 \{ (v,w,z) \in (\domp{\mu^n})^3 \mid \mu^n(v)\subset \inverseimage{\mu^n(z)}{h_w} \}
\end{equation}
is computable, where $h_w$ is the computable homeomorphism of part (1).
\end{itemize}
\end{lemma}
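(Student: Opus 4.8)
The plan is to build $h_w$ by composing the concrete homeomorphism of Example~\ref{exampleBallCTEuclideanSpace} with an affine change of coordinates, and then to settle part~(2) by reducing the containment to an elementary expression and invoking Tarski's decision method.

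\textbf{Part (1).} Fix $w\in\domp{\mu^n}$ and decode $\mu^n(w)=B(q_w,\epsilon_w)$ with $q_w\in\rationals^n$, $\epsilon_w\in\rationals$, $\epsilon_w>0$. Let $T_w\colon\euclidean{n}\to\euclidean{n}$ be the affine map $T_w(x)=(x-q_w)/\epsilon_w$, with inverse $T_w^{-1}(y)=\epsilon_w y+q_w$. Both are polynomial maps with rational coefficients, hence $(\delta^n,\delta^n)$-computable, so $T_w$ is a computable homeomorphism of $\mathbf{R}^n$ carrying the rational ball $B(w)=B(q_w,\epsilon_w)$ bijectively onto $B(0,1)$; by restriction (Lemma~\ref{lemmaPropertiesEffectiveTOPSUBSPACE}, using $\delta_B=\delta|^B$) it yields a computable homeomorphism $\ctopcompsubspace{\mathbf{R}^n}{B(w)}\to\ctopcompsubspace{\mathbf{R}^n}{B(0,1)}$. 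Composing with the computable homeomorphism $h\colon B(0,1)\to\euclidean{n}$ of Example~\ref{exampleBallCTEuclideanSpace} and using that computable homeomorphisms between computable topological spaces are closed under composition, we obtain the required $h_w\defeq h\circ T_w\colon\ctopcompsubspace{\mathbf{R}^n}{B(w)}\to\mathbf{R}^n$, given explicitly by $h_w(x)=\dfrac{\epsilon_w(x-q_w)}{\epsilon_w^2-\lVert x-q_w\rVert^2}$.

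\textbf{Part (2).} Given $v,w,z\in\domp{\mu^n}$, decode $\mu^n(v)=B(q_v,\epsilon_v)$, $\mu^n(w)=B(q_w,\epsilon_w)$, $\mu^n(z)=B(q_z,\epsilon_z)$ with rational data extracted algorithmically from $v,w,z$. Write $D=\epsilon_w^2-\lVert x-q_w\rVert^2$; for $x\in B(w)$ one has $D>0$ and $h_w(x)=\epsilon_w(x-q_w)/D$, so $\lVert h_w(x)-q_z\rVert<\epsilon_z$ is equivalent (multiplying by $D^2>0$) to the polynomial inequality $\lVert\epsilon_w(x-q_w)-D\,q_z\rVert^2<\epsilon_z^2 D^2$. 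Since $\inverseimage{\mu^n(z)}{h_w}=\{x\in B(w)\mid h_w(x)\in\mu^n(z)\}$, the condition $\mu^n(v)\subset\inverseimage{\mu^n(z)}{h_w}$ holds if and only if
\[
(\forall x\in\euclidean{n})\Bigl(\lVert x-q_v\rVert^2<\epsilon_v^2\ \Rightarrow\ \lVert x-q_w\rVert^2<\epsilon_w^2\ \wedge\ \lVert\epsilon_w(x-q_w)-D\,q_z\rVert^2<\epsilon_z^2 D^2\Bigr)
\]
(with an extra $\exists$-clause witnessing strictness if $\subset$ is meant properly). After substituting $D$ and the rational coordinates, this is an elementary expression in the algebra of the reals with rational constants, assembled algorithmically from $v,w,z$; by Theorem~\ref{thmTarskiDecisionMethod} its truth value is decidable, so the set \eqref{eqlemmaTechResultCompHomeoBallsRn} is computable.

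The only delicate bookkeeping is making sure the affine maps $T_w$ genuinely induce computable homeomorphisms of the subspace structures (handled by Lemma~\ref{lemmaPropertiesEffectiveTOPSUBSPACE}) and that clearing the denominator $D$ in part~(2) is legitimate, which it is because the conjunct $\lVert x-q_w\rVert^2<\epsilon_w^2$ already forces $D>0$ inside the scope of the implication. I do not expect a genuine obstacle; the content is entirely the reduction to Tarski's method together with the explicit formula for $h_w$.
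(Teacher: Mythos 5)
Your proposal is correct and follows essentially the same route as the paper: part (1) composes the homeomorphism of Example~\ref{exampleBallCTEuclideanSpace} with a rational translation and scaling, and part (2) rewrites the containment as an elementary expression decided by Theorem~\ref{thmTarskiDecisionMethod}. The only (harmless) difference is that by working with the squared-norm formula and clearing the denominator $D^2>0$ directly, you reach a polynomial inequality in one step, whereas the paper introduces auxiliary variables $y=\varepsilon^{-1}(x-q)$ and $r=1-\lVert y\rVert$ to eliminate the norm before applying Tarski's method.
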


\begin{proof}
 (1) Let $h\colon B(0, 1) \to \euclidean{n}$ be the computable homeomorphism of Example
\ref{exampleBallCTEuclideanSpace} and let $S_\varepsilon,T_a \colon$ $\euclidean{n} \to
\euclidean{n}$ $(a\in \euclidean{n}, 0 < \varepsilon \in \reals)$ be the homeomorphisms of
$\euclidean{n}$ defined by
\[ T_a(x)=x - a \quad \text{and} \quad S_\varepsilon(x)= \varepsilon x. \]
If $(a,\varepsilon)\in \rationalsspace{n} \times \rationals^+$, then $T_a$ and $S_\varepsilon$ are
computable homeomorphisms of $\mathbf{R}^n$ onto itself. Now, if $\mu^n(w)=B(q, \varepsilon)$,
then we can define $h_w \colon  B(q, \varepsilon)\to \euclidean{n}$ as $h_w= h \circ
S_{\varepsilon^{-1}} \circ T_q$, clearly $h_w$ is a computable homeomorphism.

(2) Using the computable homeomorphisms $h_w$ of (1), we can express the property
% \begin{equation}\label{eqlemmaTechResultCompHomeoBallsRn3}
 $\mu^n(v)\subset \inverseimage{\mu^n(z)}{h_w}$ 
% \end{equation}
in terms of polynomial functions in $n$ 
variables and
rational coefficients, this is true because we know that
\begin{equation}\label{eqlemmaTechResultCompHomeoBallsRn3}
 \mu^n(v)\subset \inverseimage{\mu^n(z)}{h_w} \Leftrightarrow (\forall x \in \euclidean{n}) (x 
\in \mu^n(v) \Rightarrow h_w(x) \in \mu^n(z))
\end{equation}
and it can be seen that the right side of \eqref{eqlemmaTechResultCompHomeoBallsRn3} is equivalent 
to the expression
\begin{equation}\label{eqlemmaTechResultCompHomeoBallsRn32}
 (\forall x \in \euclidean{n}) (d(x, q_v) < \varepsilon_v \Rightarrow d(h_w(x), q_z) < 
\varepsilon_z),
\end{equation}
where $\mu^n(v)=B(q_v, \varepsilon_v)$ and $\mu^n(z)=B(q_z,\varepsilon_z)$. Now, using the 
formula to compute the function $h_w$, which is defined by 
\[ h_w(x) = \frac{\varepsilon^{-1}(x - q)}{1 - \lVert \varepsilon^{-1}(x - q) \rVert}, \quad 
(\mu^n(w)=B(q,\varepsilon)) \]
for the distance $d(h_w(x), q_z)$, we have that 
\begin{eqnarray*}
d(h_w(x), q_z)
& = &
 d\biggl(\frac{\varepsilon^{-1}(x - q)}{1 - \lVert \varepsilon^{-1}(x - q) \rVert}, q_z\biggr) \\
& = &
\frac{1}{1 - \lVert \varepsilon^{-1}(x - q) \rVert} d(\varepsilon^{-1}(x - q), (1 - 
\lVert \varepsilon^{-1}(x - q) \rVert) q_z).
\end{eqnarray*}
Using the last expression, we can deduce that 
\begin{equation}
 d(h_w(x), q_z) < \varepsilon_z \Leftrightarrow 
 d(\varepsilon^{-1}(x - q), (1 - \lVert \varepsilon^{-1}(x - q) \rVert) q_z) < \varepsilon_z(1 - 
\lVert \varepsilon^{-1}(x - q) \rVert).
\end{equation}
Let $y=\varepsilon^{-1}(x - q)$. Now we can write Equation 
\eqref{eqlemmaTechResultCompHomeoBallsRn32} as 
\[ (\forall x,y \in \euclidean{n}) (y=\varepsilon^{-1}(x - q) \wedge d(x, q_v) < \varepsilon_v 
\Rightarrow d(y, (1 - \lVert y \rVert) q_z) < \varepsilon_z(1 - 
\lVert y \rVert)). \]
Now let $r=1 - \lVert y \rVert \in \reals$, then $\lVert y \rVert^2=(1 - r)^2$ and the above 
equation becomes
\begin{equation}\label{eqlemmaTechResultCompHomeoBallsRn34}
 (\forall x,y \in \euclidean{n},\forall r \in \reals) (y=\varepsilon^{-1}(x - q) \wedge 
\lVert y \rVert^2=(1 - r)^2 \wedge d(x, q_v) < \varepsilon_v 
\Rightarrow d(y, r q_z) < \varepsilon_zr).
\end{equation}
This last expression can be 
easily converted into an elementary expression and it can be seen that given $v,w,z\in 
\domp{\mu^n}$ the polynomial expression defining Equation 
\eqref{eqlemmaTechResultCompHomeoBallsRn34} can be constructed algorithmically, so that the
de\-ci\-da\-bi\-li\-ty of \eqref{eqlemmaTechResultCompHomeoBallsRn} can be verified by a single
Turing machine (using Theorem \ref{thmTarskiDecisionMethod}) uniformly in $(v,w,z) \in
(\domp{\mu^n})^3$.
\end{proof}

\begin{theorem}[Special computable atlases]\label{thmCompManifoldEuclideanAtlas}
 Let $(M,\Phi)$ be a computable $n$-manifold.
\begin{itemize}
 \item [(a)] There exists a computable atlas $\Gamma$ on $M$, computably compatible with $\Phi$
and such that for every chart $h \colon V\to \euclidean{n}\in \Gamma$,
$\directimage{V}{h}=B(q,\epsilon)$ for some $(q,\epsilon)\in \rationals^n \times \rationals^+$.
\item [(b)] If $(M,\Phi)$ has the property of non-empty computable euclidean balls, there exists a
computable atlas $\Psi$ on $M$, computably compatible with $\Phi$ and such that for every chart
$\psi \colon V\to \euclidean{n}$ of $\Psi$, $\directimage{V}{\psi}=\euclidean{n}$.
\end{itemize}
\end{theorem}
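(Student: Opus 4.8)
The plan is to build both atlases by refining the charts of $\Phi$ using the technical tools already at hand, primarily Lemma~\ref{lemmaTechResultCompHomeoBallsRn} and the characterization of computable compatibility via equivalence of induced computable spaces (Lemma~\ref{lemmaCompatibleCompAtlasesIsSameCompTOP}). For part (a), I would start from $\Phi=\{(\varphi_i,U_i)\}_{i\in I}$ and, for each $i$ and each $w\in\domp{\mu^n}$ with $\mu^n(w)=B(q,\epsilon)\subseteq\directimage{U_i}{\varphi_i}$, define a new chart on the open set $V_{\langle i,w\rangle}=\inverseimage{\mu^n(w)}{\varphi_i}\subseteq M$ by composing $\varphi_i$ with the computable homeomorphism $h_w\colon B(w)\to\euclidean{n}$ of Lemma~\ref{lemmaTechResultCompHomeoBallsRn}(1); that is, the chart $h_w\circ(\varphi_i|_{V_{\langle i,w\rangle}})\colon V_{\langle i,w\rangle}\to\euclidean{n}$, whose image is all of $\euclidean{n}$ — but to match the literal statement of (a), I would instead just take the chart $\varphi_i|_{V_{\langle i,w\rangle}}\colon V_{\langle i,w\rangle}\to B(q,\epsilon)$ directly, whose image is the rational ball $B(q,\epsilon)$. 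Call this collection $\Gamma$. The $V_{\langle i,w\rangle}$ are exactly the (nonempty) computable balls $\mathcal{B}_\Phi$, which cover $M$ by Lemma~\ref{lemBallsBaseAtlas}, so $\Gamma$ covers $M$; the overlap conditions (b),(c) of Definition~\ref{defChartAtlas} hold because the transition maps among the $\Gamma$-charts are restrictions/compositions of the $\Phi$-transition maps with translations and dilations. Each restricted chart is $(\delta_{\mathbf{Z}_\Gamma},\delta^n)$-computable with computable inverse, and each image $B(q,\epsilon)$ is $\theta^n$-computable, so $\Gamma$ is a computable atlas by Definition~\ref{defAbstractComputableAtlas}.

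The key point for (a) is then computable compatibility. Topological compatibility is clear since $\mathcal{B}_\Gamma=\mathcal{B}_\Phi$ (as a family of subsets of $M$) is a base of $\tau_\Phi$, so $\tau_\Gamma=\tau_\Phi$. For $\delta_\Gamma\equiv\delta_\Phi$ I would appeal to Lemma~\ref{lemmaCompPredicateSpaceISCompTOPSPACE}: it suffices that $\lambda_\Gamma\equiv\lambda_\Phi$ as notations of $\mathcal{B}_\Phi$, and a computable translation of indices between the two notations exists precisely because membership of a rational ball $\mu^n(w)$ in $\directimage{U_i}{\varphi_i}$ (a $\theta^n$-computable open set) is $(\mu^n,\theta^n)$-c.e. by Lemma~\ref{lemCorollary14ECT} applied in the appropriate form — more precisely, one enumerates the pairs $\langle i,w\rangle\in\Lambda_\Phi$ with $\mu^n(w)\subseteq\directimage{U_i}{\varphi_i}$. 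Hence $\delta_\Gamma\equiv\delta_\Phi$, and by Definition~\ref{defComputablyCompatibleAtlases}, $\Gamma$ and $\Phi$ are computably compatible. (If one prefers to state (a) with literal image $\euclidean{n}$, then composing with $h_w$ works identically, using Lemma~\ref{lemmaTechResultCompHomeoBallsRn}(2) to keep the transition maps computable and the domains $\theta^n$-computable.)

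For part (b), under the non-empty-computable-balls hypothesis I would pass to the atlas $\Psi=\{\,h_w\circ(\varphi_i|_{V_{\langle i,w\rangle}})\colon V_{\langle i,w\rangle}\to\euclidean{n}\mid \langle i,w\rangle\in\Lambda_\Phi,\ \lambda_\Phi(\langle i,w\rangle)\neq\myemptyset\,\}$, where $w$ ranges over strings with $\mu^n(w)=B(q,\epsilon)\subseteq\directimage{U_i}{\varphi_i}$ and $h_w$ is the homeomorphism of Lemma~\ref{lemmaTechResultCompHomeoBallsRn}(1). Then every chart of $\Psi$ has image $\euclidean{n}$, which is trivially $\theta^n$-computable. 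The charts cover $M$ and are computable homeomorphisms by composition of computable homeomorphisms; the transition functions are computable homeomorphisms between open subsets of $\euclidean{n}$ because Lemma~\ref{lemmaTechResultCompHomeoBallsRn}(2) makes the relevant inclusions of rational balls into the sets $\inverseimage{\mu^n(z)}{h_w}$ decidable, so the notation $\lambda_\Psi$ of the base $\mathcal{B}_\Psi$ is computable. The role of the non-empty-computable-balls hypothesis is exactly to make the index set of $\Psi$ c.e.: we must restrict to nonempty $V_{\langle i,w\rangle}$ so that each $h_w\circ\varphi_i$ is genuinely a chart (a bijection onto $\euclidean{n}$ requires a nonempty domain), and Definition~\ref{defComputableBalls} guarantees this restricted index set is c.e., hence $\Psi$ is a legitimate computable atlas. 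Computable compatibility with $\Phi$ follows as in part (a): topologically $\mathcal{B}_\Psi$ is still a base for $\tau_\Phi$ (after deleting empties, which is harmless by Lemma~\ref{emptyEuclideanBallsRemoved}), and $\delta_\Psi\equiv\delta_\Phi$ by an index-translation argument using Lemma~\ref{lemmaTechResultCompHomeoBallsRn}(2) together with Lemma~\ref{lemmaCompPredicateSpaceISCompTOPSPACE}.

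The main obstacle I expect is not any single deep step but the bookkeeping of the index-translation reductions $\lambda_\Gamma\equiv\lambda_\Phi$ and $\lambda_\Psi\equiv\lambda_\Phi$: one must check that, uniformly in the string codes, one can recognize which new chart-ball sits inside which old chart-ball and conversely, and this is exactly where Lemma~\ref{lemmaTechResultCompHomeoBallsRn}(2) and the $\theta^n$-computability clause (b) of Definition~\ref{defAbstractComputableAtlas} must be invoked carefully. The subtle point in (b) specifically is ensuring the restriction to nonempty computable balls yields a c.e. — not merely countable — index set, which is precisely what the non-empty-computable-balls hypothesis of Definition~\ref{defComputableBalls} provides.
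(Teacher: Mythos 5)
Your constructions coincide with the paper's: for (a) you restrict each $\varphi_i$ to the computable balls $B_{\langle i,w\rangle}=\inverseimage{\mu^n(w)}{\varphi_i}$, and for (b) you post-compose these restrictions with the homeomorphisms $h_w$ of Lemma \ref{lemmaTechResultCompHomeoBallsRn}. The one step of your argument that does not work as written is the compatibility argument in (a). You ground the reduction $\delta_\Gamma\equiv\delta_\Phi$ on the claim that one can enumerate the pairs $\langle i,w\rangle$ with $\mu^n(w)\subseteq\directimage{U_i}{\varphi_i}$, citing Lemma \ref{lemCorollary14ECT}. That lemma concerns membership of a \emph{point} in an open set; it says nothing about inclusion of an open ball in a $\theta^n$-computable open set, and the latter cannot in general be semi-decided from the available data: a $\theta^n$-name is a list of rational balls exhausting the open set, and finitely many of them can only witness coverage of a compact subset (e.g.\ for $W=\bigcup_k B(0,1-2^{-k})=B(0,1)$ no finite stage confirms $B(q,\epsilon)\subseteq W$ when $\lVert q\rVert+\epsilon=1$). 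So the enumeration you invoke is not available.

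The step you are trying to justify is nonetheless true, and the paper's argument shows it needs no such enumeration: by construction $\lambda_\Gamma(\langle\langle i,u\rangle,w\rangle)=\inverseimage{\mu^n(w)}{\varphi_i}=\lambda_\Phi(\langle i,w\rangle)$ whenever the left-hand side is nonempty, and conversely $\lambda_\Phi(\langle i,u\rangle)=\lambda_\Gamma(\langle\langle i,u\rangle,u\rangle)$. Since empty base elements never occur in a $\delta$-name of a point, translating names is the purely syntactic re-indexing $\langle\langle i,u\rangle,w\rangle\mapsto\langle i,w\rangle$ (respectively $\langle i,u\rangle\mapsto\langle\langle i,u\rangle,u\rangle$), which is trivially computable; no validity test on pairs is ever performed. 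With that repair your part (a) matches the paper. Your part (b) also follows the paper's route, and you correctly identify Lemma \ref{lemmaTechResultCompHomeoBallsRn}(2) as what makes each $\Psi$-ball a computable union of $\Phi$-balls; note, however, that the paper uses the non-empty-computable-balls hypothesis not to make the index set of $\Psi$ c.e.\ (an atlas's index set carries no computability requirement in Definition \ref{defAbstractComputableAtlas}), but inside the reductions $\nu_\Phi\le\theta_\Psi$ and $\nu_\Psi\le\theta_\Phi$, where it prevents the machine from translating the name of an empty base element into a $\theta$-name of a nonempty open set. Your ``(b) follows as in (a)'' glosses over exactly this point, which is where the hypothesis earns its keep.
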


\begin{proof}
 (a) For the computable atlas $\Phi$, recall that an element $B_j\in \mathcal{B}_\Phi$ is defined
as $B_j=\inverseimage{\mu^n(w)}{\varphi_i}$, where $j=\langle i, w \rangle$. If $B_j\neq
\myemptyset$, then we can define a chart $h_j\colon B_j \to \euclidean{n}$ by
$h_j=\varphi_i|_{B_j}$. Let $\Gamma$ be the set of all such charts, $\Gamma$ is clearly a computable
atlas on $M$.
Now we have to show that $\Phi$ and $\Gamma$ are computably compatible. Notice
that if $z = \langle j, w \rangle= \langle \langle i, u \rangle, w \rangle\in \dom \lambda_\Gamma$
is such that $\lambda_\Gamma(z) \neq \myemptyset$, then
% \begin{eqnarray*}
\[ \lambda_\Gamma(z) = \inverseimage{\mu^n(w)}{h_j} 
 = \inverseimage{\mu^n(w)}{(\varphi_i|_{B_j})} 
 = \inverseimage{\mu^n(w)}{\varphi_i} 
 = \lambda_\Phi(\langle i, w \rangle).\]
% \end{eqnarray*}
and if $v=\langle i, u \rangle\in \dom \lambda_\Phi$ with $\lambda_\Phi(v)\neq \myemptyset$, then
\[ \lambda_\Phi(v) = \inverseimage{\mu^n(u)}{\varphi_i} 
 = \inverseimage{\mu^n(u)}{(\varphi_i|_{B_v})}
 = \lambda_\Gamma(\langle v, u \rangle).\]
There is a Type-2 machine $T$ that on input $p\in \dom \delta_\Gamma$, extracts each string $z =
\langle \langle i, u \rangle, w \rangle \ll p$, computes the string $l=\langle i, w \rangle\in \dom
\lambda_\Phi$ and prints $\iota (l)$ on the output tape. The function $f_T$ calculated by $T$
translates $\delta_\Gamma$-names into $\delta_\Phi$-names. A Type-2 machine which translates
$\delta_\Phi$-names into $\delta_\Gamma$-names is build similarly. Therefore, $\delta_\Phi \equiv
\delta_\Gamma$.

(b) If $(M, \Phi)$ has the property of non-empty computable euclidean balls, then we can assume
that for all $k\in \dom \lambda_\Phi$, $\lambda_\Phi(k)\neq \myemptyset$. For each $B_{j}\in
\mathcal{B}_\Phi$ ($j= \langle i, w \rangle$), define a chart
$\psi_{j}\colon B_{j}\to \euclidean{n}$ as $\psi_{j}=h_w \circ \varphi_i|_{B_j}$, where $h_w$ is the
computable homeomorphism $h_w\colon \mu^n(w) \to \euclidean{n}$ of Lemma
\ref{lemmaTechResultCompHomeoBallsRn}. The set of charts $\Psi = \{ \psi_{l} \colon B_{l} \to
\euclidean{n} \}$ is a topological atlas on $M$, compatible with $\Phi$. Moreover, $\Psi$ is
computable, because the $\psi_{l}$'s and their inverses are computable with respect to
$\delta_{\Psi}$ and $\delta^n$. Also, $\euclidean{n}$ is trivially a $\theta^n$-computable open set.
We claim that $(M, \Psi)$ is a computable manifold with non-empty computable euclidean balls. If
$u\in \dom \lambda_\Psi$, then $\lambda_\Psi(u) = \inverseimage{\mu^n(z)}{\psi_j}$, where $u=\langle
j, z \rangle$ and $j=\langle i, w \rangle \in \dom \lambda_\Phi$. Now,
$\lambda_\Psi(u) \neq \myemptyset \Leftrightarrow \inverseimage{\mu^n(z)}{\psi_{j}} \neq \myemptyset 
\Leftrightarrow \inverseimage{\mu^n(z)}{(h_w \circ \varphi_i)} \neq \myemptyset 
\Leftrightarrow j=\langle i,w \rangle \in \dom \lambda_\Phi \wedge \mu^n(z) \subset
\directimage{\directimage{B_{j}}{\varphi_i}}{h_w} 
\Leftrightarrow j=\langle i,w \rangle \in \dom \lambda_\Phi \wedge \mu^n(z) \subset
\directimage{\mu^n(w)}{h_w}
\Leftrightarrow j \in \dom \lambda_\Phi \wedge \mu^n(z) \subset
\euclidean{n} 
\Leftrightarrow j \in \dom \lambda_\Phi$. 
This shows that the set $\{ u\in \dom \lambda_\Psi \mid \lambda_\Psi(u) \neq \myemptyset \}$ is
c.e., so that $(M, \Psi)$ is a computable manifold with non-empty computable
euclidean balls.

To finish the proof of the theorem, we only need to show that $\Phi$ and $\Psi$ are computably
compatible. By Lemma \ref{lemmaCompatibleCompAtlasesIsSameCompTOP}, we can do this by proving that
$\mathbf{T}_\Phi(M)$ and $\mathbf{T}_{\Psi}(M)$ are equivalent computable spaces, which means that
we have to show that $\nu_\Phi \leq \theta_{\Psi}$ and $\nu_{\Psi} \leq \theta_\Phi$.
\begin{itemize}
 \item [] $\nu_\Phi \leq \theta_{\Psi}$. An element $B$ of $\beta_\Phi$ is a finite
intersection of the form
\[ B = \nu_\Phi(\iota(j_1) \cdots \iota(j_k)) = \bigcap_{j \in F} \lambda_\Phi(j)\quad F\subset \dom
\lambda_\Phi, \]
where $F= \{ j_1, \ldots, j_k \}$. Notice that for any $j \in F$ $(j = \langle i, w \rangle)$, 
we have that 
\begin{multline*}
\lambda_\Phi(j)
= \inverseimage{\mu^n(w)}{\varphi_i}  
= \inverseimage{\inverseimage{\euclidean{n}}{h_w}}{\varphi_i} 
=\inverseimage{\inverseimage{\bigcup_{z\in \domp{\mu^n}} \mu^n(z)}{h_w}}{\varphi_i} \\
= \bigcup_{z\in \domp{\mu^n}}\inverseimage{\mu^n(z)}{(h_w \circ \varphi_i)} 
= \bigcup_{z\in \domp{\mu^n}}\inverseimage{\mu^n(z)}{\psi_{j}} 
= \bigcup_{z\in \domp{\mu^n}}\lambda_\Psi(\langle j, z \rangle),
\end{multline*}
hence 
\begin{multline*}
B = \nu_\Phi(\iota(j_1) \cdots \iota(j_k)) = \bigcap_{j \in F} \lambda_\Phi(j) 
= \bigcap_{j \in F} \Bigl(\bigcup_{z_r \in \domp{\mu^n}} \lambda_\Psi(\langle j, z_r \rangle)
\Bigr) \\
= \bigcup_{z_r\in \domp{\mu^n}} \Bigl(\bigcap_{j \in F} \lambda_\Psi(\langle j, z_r
\rangle) \Bigr) = \bigcup_{\substack{z_r \in \domp{\mu^n} \\ j_l \in F}} \nu_\Psi(\iota(\langle
j_{i_1}, z_1 \rangle) \cdots \iota(\langle j_{i_k}, z_k \rangle)).
\end{multline*}
With all this data, we can construct a Type-2 machine that, 
on input $u = \iota(j_1) \cdots \iota(j_k) \in \dom \nu_\Phi$, computes an element $p_u\in
\sigmaomegastringset$ such that $p_u$ is a list of all strings $\iota(\langle j_{i_1}, z_1 \rangle)
\cdots $ $\iota(\langle j_{i_k}, z_k \rangle)$ where $z_l \in \domp{\mu^n}$ and
$j_{i_1},\ldots,j_{i_k} \in F$. This reduction works because $(M,\Phi)$ and $(M, \Psi)$ have the
property of non-empty euclidean balls (it allows us to avoid translate the name of an empty base 
element of $\beta_\Phi$ into a non-empty element of $\tau_\Phi=\tau_\Gamma$). We conclude that
$\nu_\Phi \leq \theta_{\Psi}$. 
\item [] $\nu_{\Psi} \leq \theta_\Phi$. An argument similar to the previous one can be
used to prove this case, because for any $l = \langle j, z \rangle \in \dom \lambda_\Psi$ 
$(j = \langle i, w \rangle)$, 
$\lambda_\Psi(\langle j, z \rangle) 
= \inverseimage{\mu^n(z)}{\psi_{j}} 
= \inverseimage{\mu^n(z)}{(h_w \circ \varphi_i)} 
=  \inverseimage{\inverseimage{\mu^n(z)}{h_w}}{\varphi_i} 
= \bigcup_{u\in C_{wz}} \inverseimage{\mu^n(u)}{\varphi_i} 
= \bigcup_{u\in C_{wz}} \lambda_\Phi(\langle i, u \rangle)$, 
where $C_{wz} = \{ v\in \domp{\mu^n} \mid \mu^n(v)\subset \inverseimage{\mu^n(z)}{h_w} \}$ 
is a computable subset of $\domp{\mu^n}$ (uniformly in $w,z$, apply (b) of Lemma
\ref{lemmaTechResultCompHomeoBallsRn}). Using this fact, we can construct the Type-2 machine which
computes the function that translates $\lambda_\Psi$-names into $\theta_\Phi$-names. Again, the
property of non-empty euclidean balls is being used to avoid translating the name of an empty element
into a non-empty element.
\end{itemize}
We have proven that $\mathbf{T}_{\Psi}(M)$ and $\mathbf{T}_\Phi(M)$ are equivalent
computable topological spaces. By Lemma \ref{lemmaCompatibleCompAtlasesIsSameCompTOP}, $\Phi$ and
$\Psi$ are computably compatible atlases on $M$. The result follows.
\end{proof}

%\begin{remark*}
{\it Remark.} 
Since any compact computable manifold admits a finite computable atlas, all such
manifolds have the property of non-empty computable euclidean balls, thus (b) of Theorem
\ref{thmCompManifoldEuclideanAtlas} is valid for these manifolds.
%\end{remark*}

\section{Computable submanifolds}\label{secComputableSubmanifolds}

One of the most important concepts in the theory of manifolds is that of \emph{submanifold}, that
is, a manifold which is a subset of another manifold. In this section, we will develop the
corresponding concept of \emph{computable submanifold}. 
% This is one of the main ingredients to be
% able to prove Theorem \ref{thmCompactT2CompManEmbedsRq}.

\begin{definition}\label{defCompSubmanifold}
 A computable manifold $(M, \Phi)$ is a \emph{computable submanifold} of $(N, \Psi)$ if and only if
$M\subset N$ and the inclusion $i\colon M \hookrightarrow N$ is a computable embedding of
$\mathbf{T}_\Phi(M)$ into $\mathbf{T}_\Psi(N)$.
\end{definition}

A computable submanifold is just a subset of a computable manifold $N$, which is also a
computable manifold in its own right, with computable subspace topology. A simple
example of a computable submanifold is the computable $n$-sphere $\sphere{n}\subset
\euclidean{n+1}$. From the definition, it is clear that the computable topology of subspace
characterizes the computable structure of a computable submanifold. The following is an important
example of a computable submanifold of a computable manifold.

\begin{proposition}\label{lemmaOpenSubsetSubmanifoldIF}
 Let $(M,\Phi)$ be a computable manifold and $\myemptyset \neq W\subset M$ a
$\theta_\Phi$-computable open subset of $M$. Then there exists a computable atlas $\Psi$ on $W$
which makes $W$ into a computable submanifold of $M$.
\end{proposition}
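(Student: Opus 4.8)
The plan is to realise the computable subspace $\ctopcompsubspace{\mathbf{T}_\Phi(M)}{W}$ of $\mathbf{T}_\Phi(M)$ --- which is a computable topological space by Lemma~\ref{lemmaPropertiesEffectiveTOPSUBSPACE} and carries the subspace topology $\tau_W$ on $W$ --- as the computable topological space $\mathbf{T}_\Psi(W)$ associated with a suitably chosen computable atlas $\Psi$ on $W$, and then to conclude with Corollary~\ref{corEquiTOPSPACEIDENTITYCOMPHOM}. Indeed, once $\Psi$ is a computable atlas on $W$ with $\tau_\Psi=\tau_W$ and with $\mathbf{T}_\Psi(W)$ equivalent to $\ctopcompsubspace{\mathbf{T}_\Phi(M)}{W}$, that corollary makes $1_W\colon W\to W$ a computable homeomorphism between the two spaces; since $1_W$ is the inclusion $W\hookrightarrow M$, this is precisely the assertion that $W\hookrightarrow M$ is a computable embedding of $\mathbf{T}_\Psi(W)$ into $\mathbf{T}_\Phi(M)$, i.e.\ that $(W,\Psi)$ is a computable submanifold of $(M,\Phi)$ in the sense of Definition~\ref{defCompSubmanifold}.

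First I would pass to a convenient atlas on $M$. By Theorem~\ref{thmCompManifoldEuclideanAtlas}(a) there is a computable atlas $\Gamma$ on $M$, computably compatible with $\Phi$, every chart of which has image a rational ball read off the chart index. As $\Gamma$ and $\Phi$ are computably compatible, $\mathbf{T}_\Gamma(M)$ and $\mathbf{T}_\Phi(M)$ are equivalent (Lemma~\ref{lemmaCompatibleCompAtlasesIsSameCompTOP}), hence so are their restrictions to $W$ (Theorem~\ref{thmRobustnessEquivTOPSPACES}, Lemma~\ref{lemmaPropertiesEffectiveTOPSUBSPACE}(1)) and $W$ is also $\theta_\Gamma$-computable; so we may assume every chart $\varphi_i\colon U_i\to\euclidean{n}$ of $\Phi$ has $\directimage{U_i}{\varphi_i}=\mu^n(r_i)$ a rational ball, with $r_i$ computable from the chart index. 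Now let $\Psi$ consist of the charts $\varphi_i|_{\inverseimage{\mu^n(w)}{\varphi_i}}\colon\inverseimage{\mu^n(w)}{\varphi_i}\to\mu^n(w)$ for all pairs $\langle i,w\rangle$ with $i\in I$, $\overline{\mu^n(w)}\subseteq\directimage{U_i}{\varphi_i}$, $\inverseimage{\mu^n(w)}{\varphi_i}\neq\myemptyset$ and $\inverseimage{\overline{\mu^n(w)}}{\varphi_i}\subseteq W$. These charts cover $W$ (given $x\in W\cap U_i$, the point $\varphi_i(x)$ lies in the open set $\directimage{U_i\cap W}{\varphi_i}$, so a small enough rational ball $\mu^n(w)\ni\varphi_i(x)$ satisfies $\overline{\mu^n(w)}\subseteq\directimage{U_i\cap W}{\varphi_i}$, whence $\inverseimage{\overline{\mu^n(w)}}{\varphi_i}\subseteq U_i\cap W$); their transition maps are restrictions of transition maps of $\Phi$; and, the charts being restrictions of the $\varphi_i$ with domains open in $M$ and contained in $W$, the topology $\tau_\Psi$ induced by $\Psi$ is exactly $\tau_W$. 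Thus $\Psi$ is a topological atlas on $W$ whose chart images are rational balls, so condition (b) of Definition~\ref{defAbstractComputableAtlas} is immediate.

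The heart of the matter is the equivalence $\delta_\Psi\equiv\delta_W$, where $\delta_W=\delta_\Phi|^W$ is the inner representation of $\ctopcompsubspace{\mathbf{T}_\Phi(M)}{W}$ (Lemma~\ref{lemmaPropertiesEffectiveTOPSUBSPACE}(1) together with the Remark after Lemma~\ref{lemmaAbstractManifoldCompPredicateSpace}): by Theorem~\ref{thmRobustnessEquivTOPSPACES} this equivalence is exactly what makes $\mathbf{T}_\Psi(W)$ equivalent to $\ctopcompsubspace{\mathbf{T}_\Phi(M)}{W}$, and once it holds the charts of $\Psi$ and their inverses, being restrictions of the $(\delta_\Phi,\delta^n)$- and $(\delta^n,\delta_\Phi)$-computable maps $\varphi_i$ and $\varphi_i^{-1}$, are computable with respect to $\delta_\Psi$ and $\delta^n$, so condition (a) of Definition~\ref{defAbstractComputableAtlas} holds and $\Psi$ is a computable atlas. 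To establish $\delta_\Psi\equiv\delta_W$ one translates names in both directions. A $\delta_\Phi$-name of a point $x\in W$ lists the computable balls of $\Phi$ containing $x$ and a $\delta_\Psi$-name lists the non-empty computable balls of $\Psi$ containing $x$, each of the latter being a computable ball $\inverseimage{\mu^n(v)}{\varphi_i}$ of $\Phi$ lying in $W$; in both directions the translation rests on the elementary fact that, rational balls forming a base of $\euclidean{n}$ and $\directimage{U_i}{\varphi_i}$ being a rational ball, one has $x\in\inverseimage{\mu^n(v)}{\varphi_i}$ if and only if some ball $\inverseimage{\mu^n(w)}{\varphi_i}$ with the same chart and $\mu^n(w)\subseteq\mu^n(v)$ already appears in the given name --- a decidable test on rational balls. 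The only place where the computability of $W$, and not merely its openness, enters is in recognising that a candidate pair $\langle i,w\rangle$ names a genuine (non-empty) element of $\mathcal{B}_\Psi$: this reduces to semi-deciding $\inverseimage{\overline{\mu^n(w)}}{\varphi_i}\subseteq W$, which is c.e.\ because it is the statement that a \emph{compact} set is contained in the $\theta_\Phi$-computable open set $W$ --- one enumerates finite unions of base elements taken from a $\theta_\Phi$-name of $W$ and waits until one of them covers the compact set, using that $\inverseimage{\overline{\mu^n(w)}}{\varphi_i}$ has a computable $\kappa_\Phi$-name obtained by pushing the $\kappa^n$-name of the rational closed ball $\overline{\mu^n(w)}$ through $\varphi_i^{-1}$ (Theorem~\ref{thmequivCompFunctions}(e)), and that $\overline{\mu^n(w)}\subseteq\directimage{U_i}{\varphi_i}$ is decidable.

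I expect the main obstacle to be getting this last equivalence exactly right: the name translations must never emit a spurious empty ball of $\Psi$ --- which is the reason for building $\Psi$ over the ``closed rational ball inside $W$'' computable balls rather than over the raw restrictions $\varphi_i|_{U_i\cap W}$, for which the relevant membership test fails to be semi-decidable --- and it is exactly here that the hypothesis that $W$ is $\theta_\Phi$-\emph{computable}, together with the passage to the special atlas $\Gamma$ and the representation $\kappa$ of compact sets, is used. With $\delta_\Psi\equiv\delta_W$ in hand, Corollary~\ref{corEquiTOPSPACEIDENTITYCOMPHOM} delivers the conclusion as described in the first paragraph.
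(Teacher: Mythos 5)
Your proof is correct, but it reaches the conclusion by a genuinely different mechanism than the paper's. The paper also reduces everything to showing $\delta_{W}\equiv\delta_{\Psi}$ for a restricted atlas and then invoking the equivalence of $\mathbf{T}_\Psi(W)$ with $\ctopcompsubspace{\mathbf{T}_\Phi(M)}{W}$, but its atlas is built directly from the base elements enumerated in a computable $\theta_\Phi$-name $q_W$ of $W$ (charts $\varphi_i|_{V_j}$ for the nonempty $V_j=\nu_\Phi(j)$ with $j\ll q_W$), and the hard direction $\delta_W\leq\delta_{\Phi_W}$ is handled by a dovetailing (``time-sharing'') search: for each product of a subword of $q_W$ with a subword of the input name, it semi-decides $x\in\nu_\Phi(z)$ using Lemma \ref{lemCorollary14ECT} and emits the corresponding $\Phi_W$-index on success. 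You instead first normalize via Theorem \ref{thmCompManifoldEuclideanAtlas}(a), build the atlas from rational balls whose \emph{closures} pull back into $W$, and concentrate the use of the $\theta_\Phi$-computability of $W$ into a single semi-decision of ``compact $\subseteq$ open'' through the $\kappa$-representation. Both routes are sound; the paper's is leaner (it needs neither the special atlas of Theorem \ref{thmCompManifoldEuclideanAtlas} nor compact-set representations, only the c.e.-ness of point membership in an open set), while yours buys charts with rational-ball images and a cleanly isolated index-set recognition step, at the price of an auxiliary fact you use but do not cite --- that $K\subseteq W$ is $(\kappa_\Phi,\theta_\Phi)$-c.e. --- which does follow from Definition \ref{defPositiveRepresentationsCompTOPSPACE} by the finite-subcover argument you sketch, and of implicit dovetailing in both name translations to avoid ever emitting an unconfirmed index.
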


\begin{proof}
 Let $W\subseteq M$ be any non-empty $\theta_\Phi$-computable open subset of $(M, \Phi)$. We give
$W$ a computable structure induced by that of $M$. Since $W$ is a $\theta_\Phi$-computable open set in 
$\mathbf{T}_\Phi(M)$, there exists a computable infinite string $q_W\in \sigmaomegastringset$ such 
that
\[ \theta_\Phi(q_W)=W. \]
For each $j \ll q_W$ such that $\nu_\Phi(j)=V_j$ and $ V_j \neq \myemptyset$, there exists a chart 
$(\varphi_i, U_i)\in \Phi$ such that $V_j \subset U_i$ and $\psi_{\langle i, j \rangle}=\varphi_i 
|_{V_j} \colon V_j \to 
\euclidean{n}$ is a homeomorphism onto an open subset of $\euclidean{n}$, so that the pair 
$(\psi_{\langle i, j \rangle}, V_j)$ is a chart on $W$. Let $\Phi_W$ be defined by 
\[ \Phi_W = \{ (\psi_{\langle i, j \rangle}, 
V_j) \mid j \ll q_W \wedge V_j \neq \myemptyset \wedge V_j \subset W \cap U_i \wedge \psi_{\langle 
i, j \rangle}= \varphi_i |V_j \wedge (\varphi_i, U_i) \in \Phi \}. \]
 $\Phi_W$ is a topological atlas on $W$, and it can be easily verified that all charts $\psi_{\langle i, j
\rangle}$ with their respective inverses are computable with respect to $\delta_{\Phi_W}$ and
$\delta^n$ and the sets $\directimage{V_j}{\psi_{\langle i, j \rangle}}$ are $\theta^n$-computable
in $\euclidean{n}$. By Definition \ref{defAbstractComputableAtlas}, $\Phi_W$ is a computable atlas
on $W$, thus the pair $(W,\Phi_W)$ is a computable $n$-manifold. Let 
$\mathbf{T}=\mathbf{T}_\Phi(M)$, we now prove that $\ctopcompsubspace{\mathbf{T}}{W}$ and 
$\mathbf{T}_{\Phi_W}(W)$ are 
equivalent computable topological spaces by showing that $\delta_W\equiv\delta_{\Phi_W}$.

$\delta_{\Phi_W} \leq \delta_W$. For any $k=\langle j, z \rangle \in \dom \lambda_{\Phi_W}$ with 
$j=\langle i,w \rangle$ and $\lambda_{\Phi_W}(k) \neq \myemptyset$, we have that
\[ 
\lambda_{\Phi_W}(k)=\inverseimage{\mu^n(z)}{\psi_j}=\inverseimage{\mu^n(z)}{(\varphi_{i}|_{V_j})
} =\inverseimage{\mu^n(z)}{\varphi_{i}}=\lambda_\Phi(\langle i, z \rangle) \subset W. \]
Using the equality $\lambda_{\Phi_W}(k)=\lambda_\Phi(\langle i, z \rangle)$, it is easy to build a 
Type-2 machine which translates $\delta_{\Phi_W}$-names into $\delta_{W}$-names.

$\delta_W \leq \delta_{\Phi_W}$. This is the part of the proof where the computability of $W$ 
comes into play, we also need to use a ``time-sharing'' technique. A Type-2 machine $\mathfrak{M}$ 
can be 
build with the following program: On input $p\in \dom \delta_W$, $\mathfrak{M}$ starts enumerating 
all strings 
$z\in \sigmastringset$ such that $z=w y$; $w\ll q_W$ and $y \ll p$. This enumeration can be 
computed 
by $\mathfrak{M}$, because $q_W$ can be calculated. Notice also that $w\in \dom \nu_\Phi$, $y \in 
\dom 
\nu_W=\dom \nu_\Phi$, so that $z\in \dom \nu_\Phi$ and the open set $\nu_\Phi(z)$ is 
$\theta_\Phi$-computable. Let $x=\delta_W(p)=\delta_\Phi(p)$, for each enumerated string $z$, 
$\mathfrak{M}$  
tries to determine if $x \in \nu_\Phi(z)$, if so, then $\mathfrak{M}$ prints the string 
$\iota(z^\prime)$ on 
the output tape, where  
\[ z^\prime = \iota(\langle \langle i_1, e_1 \rangle, e_1 \rangle)\cdots 
\iota(\langle \langle i_s, e_s \rangle, e_s \rangle)\iota(\langle \langle j_1, f_1 \rangle, f_1 
\rangle)\cdots \iota(\langle \langle j_r, f_r \rangle, f_r \rangle) \in \dom \nu_{\Phi_W} \] 
and $w=\iota(\langle i_1, e_1 \rangle)\cdots \iota(\langle i_s, e_s \rangle)$ and $y=\iota(\langle 
j_1, f_1 \rangle)\cdots \iota(\langle j_r, f_r \rangle)$. The machine $\mathfrak{M}$ must execute 
simultaneously this step for multiple enumerated strings $z$, and from time to time, $\mathfrak{M}$ 
must begin 
executing new tests. This completes the specification of $\mathfrak{M}$.

By part 1 of Lemma \ref{lemCorollary14ECT}, the decision problem ``$x\in \nu_\Phi(z)$'' is 
$(\delta_\Phi, 
\theta_\Phi)$-c.e., thus if $x\in \nu_\Phi(z)$, $\mathfrak{M}$ will finish executing this step for 
the string 
$z$. But if $x\notin \nu_\Phi(z)$, $\mathfrak{M}$ might not be able to finish this part of its 
program in 
finite time. This is the reason why $\mathfrak{M}$ must run multiple tests ``$x\in \nu_\Phi(z)$'' 
simultaneously, advancing each test a few steps at a time. Because $x \in W$, it cannot happen that 
all of the test executed by $\mathfrak{M}$ are unsuccessful. Therefore the output of the machine 
$\mathfrak{M}$ is 
non-trivial, it is an infinite string $p^\prime \in \sigmaomegastringset$ such that 
\[ (\forall z^\prime \in \sigmastringset) (z^\prime \ll p^\prime \Leftrightarrow z^\prime \in \dom 
\nu_{\Phi_W} \wedge \delta_{\Phi_W}(p^\prime) \in \nu_{\Phi_W}(z^\prime)), \]
and $x=\delta_W(p)=\delta_{\Phi_W}(p^\prime)$. Therefore $\delta_W \leq \delta_{\Phi_W}$.

We have shown that $\delta_W \equiv \delta_{\Phi_W}$, hence the inclusion $i\colon W 
\hookrightarrow M$ is a computable embedding of $\mathbf{T}_{\Phi_W}(W)$ onto $\mathbf{T}_\Phi(M)$.
By Definition \ref{defCompSubmanifold}, $W$ is an (open) computable submanifold of $M$.
\end{proof}

When $W$ is not $\theta_\Phi$-computable, we do not know whether $(W, \Phi_W)$ is a computable
submanifold of $M$, we only have that $\nu_W \leq \theta_{\Phi_W}$, i.e., everything that
is computable in $\mathbf{T}_{\Phi_W}(W)$ is computable in $\ctopcompsubspace{\mathbf{T}}{W}$.

In Section \ref{subsecCompFunctionsCompTOPSPACES}, we introduced computable embeddings of computable
topological spaces, the next lemma shows their relationship with computable manifolds. Informally,
it says that every computable manifold that is computably embedded in another computable manifold,
can be thought as a computable submanifold of the latter.

\begin{lemma}\label{lemmaImbeddingImageManifold}
 Let $(M^n, \Phi), (N^m,\Psi)$ be computable manifolds and $h\colon M \hookrightarrow N$ be a
computable embedding. Then the subset $M^\prime=\directimage{M}{h}$ has the structure of a
computable submanifold of $N$, such that $M \chomeom M^\prime$.
\end{lemma}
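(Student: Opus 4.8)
The plan is to transport the computable structure of $M$ to the set $M^\prime=\directimage{M}{h}$ along $h$, and then to recognize the resulting computable topological space as the computable subspace that $M^\prime$ inherits from $N$.

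First I would regard $h$ as a bijection $h\colon M \to M^\prime$ onto the set $M^\prime$. Lemma \ref{lemmaManifoldFuncRepSet} then yields a computable structure $[\Phi_h]$ on $M^\prime$ with $\lambda_{\Phi_h}(z)=\directimage{\lambda_\Phi(z)}{h}$ and $\delta_{\Phi_h}=h\circ\delta_\Phi$, and Corollary \ref{corManifoldFuncRepSetCompTOPSPACE} identifies the associated computable topological space $\mathbf{T}_{\Phi_h}(M^\prime)$ with the push-forward space $\mathbf{X}_h$ of Lemma \ref{lemmaTopSpaceCompFuncRepSet}. Part (b) of that lemma already makes $h$ a computable homeomorphism between $\mathbf{T}_\Phi(M)$ and $\mathbf{T}_{\Phi_h}(M^\prime)$, which gives $M \chomeom M^\prime$.

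Next I would set $\mathbf{T}=\mathbf{T}_\Psi(N)$; by Lemma \ref{lemmaPropertiesEffectiveTOPSUBSPACE} the restriction $\ctopcompsubspace{\mathbf{T}}{M^\prime}$ is a computable topological space. Since $h$ is a computable embedding, Definition \ref{defCompImbedding} says precisely that $h$ is a computable homeomorphism of $\mathbf{T}_\Phi(M)$ onto $\ctopcompsubspace{\mathbf{T}}{M^\prime}$. I would then apply Corollary \ref{corManifoldFuncTopSpace} with $\ctopcompsubspace{\mathbf{T}}{M^\prime}$ playing the role of $\mathbf{X}$ and $h$ the role of $f$, to conclude that $\mathbf{T}_{\Phi_h}(M^\prime)$ and $\ctopcompsubspace{\mathbf{T}}{M^\prime}$ are equivalent computable topological spaces; in particular they have the same underlying topological space, so $\tau_{\Phi_h}$ is exactly the subspace topology induced on $M^\prime$ by $N$, and $(M^\prime,\Phi_h)$ is genuinely a computable manifold sitting inside $N$.

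Finally, from the equivalence of $\mathbf{T}_{\Phi_h}(M^\prime)$ and $\ctopcompsubspace{\mathbf{T}}{M^\prime}$, Corollary \ref{corEquiTOPSPACEIDENTITYCOMPHOM} gives that $1_{M^\prime}$ is a computable homeomorphism between them; hence the set inclusion $i\colon M^\prime \hookrightarrow N$ (which as a map of sets factors as $1_{M^\prime}$ followed by $M^\prime\subseteq N$) is a computable homeomorphism of $\mathbf{T}_{\Phi_h}(M^\prime)$ onto the computable subspace $\ctopcompsubspace{\mathbf{T}}{M^\prime}$, i.e.\ a computable embedding of $\mathbf{T}_{\Phi_h}(M^\prime)$ into $\mathbf{T}_\Psi(N)$. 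By Definition \ref{defCompSubmanifold} this exhibits $(M^\prime,\Phi_h)$ as a computable submanifold of $(N,\Psi)$, and $M \chomeom M^\prime$ has already been shown. The only step carrying real content is the identification of $\mathbf{T}_{\Phi_h}(M^\prime)$ with $\ctopcompsubspace{\mathbf{T}}{M^\prime}$ via Corollary \ref{corManifoldFuncTopSpace}: for that corollary to apply one must use the full force of Definition \ref{defCompImbedding}, namely that $h$ is a computable homeomorphism onto the \emph{computable subspace} $\ctopcompsubspace{\mathbf{T}}{M^\prime}$ and not merely onto its image equipped with some abstractly pushed-forward structure. Everything else — that $h$ is bijective onto $M^\prime$, and that restrictions and push-forwards of computable topological spaces are again computable — is routine.
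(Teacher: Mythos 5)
Your proof is correct and follows essentially the same route as the paper: both transport the structure of $M$ to $M^\prime$ via Corollary \ref{corManifoldFuncTopSpace} (which rests on Lemma \ref{lemmaManifoldFuncRepSet} and Corollary \ref{corManifoldFuncRepSetCompTOPSPACE}), identify $\mathbf{T}_{\Phi_h}(M^\prime)$ with the computable subspace $\ctopcompsubspace{\mathbf{T}}{M^\prime}$, and conclude that the inclusion is a computable embedding. You merely spell out the intermediate steps (including the appeal to Corollary \ref{corEquiTOPSPACEIDENTITYCOMPHOM}) that the paper's shorter proof leaves implicit.
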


\begin{proof}
Let $\mathbf{T}=\mathbf{T}_{\Psi}(N)$. By Corollary \ref{corManifoldFuncTopSpace}, the computable
homeomorphism $h \colon M \to M^\prime$ induces a computable structure $\Phi_h$ on $M^\prime$ such
that $\mathbf{T}_{\Phi_h}(M^\prime)$ is equivalent to $\ctopcompsubspace{\mathbf{T}}{M^\prime}$, 
therefore the
inclusion of $M^\prime$ into $N$ is a computable embedding of $\mathbf{T}_{\Phi_h}(M^\prime)$ into
$\mathbf{T}$, so that $(M^\prime, \Phi_h)$ is a computable submanifold of $(N, \Psi)$. Since 
$\mathbf{T}_\Phi(M) \chomeom \mathbf{T}_{\Phi_h}(M^\prime)$, we are done.
\end{proof}

\begin{example}\label{exampleTorusDefinitionsCHOmeom}
In example \ref{exampleNTORUS}, if we take $n=2$, the 2-dimensional Torus $T^2=\sphere{1} \times
\sphere{1}$ is a computable submanifold of $\euclidean{4}$. But it is known from standard topology
that the map $h\colon T^2 \hookrightarrow \euclidean{3}$ given by  
\[ h(x_u,y_u,x_v,y_v)=((2 + y_v)y_u, (2 + y_v)x_u, x_v), \]
is a homeomorphism of $T^2$ onto the set $T^\prime=\directimage{T^2}{h}$ (that is, it is an
embedding). Clearly, $h$ is a computable embedding of $\mathbf{T}_\Phi(T^2)$ into $\mathbf{R}^3$.
By Lemma \ref{lemmaImbeddingImageManifold}, $h$ induces a computable structure on $T^\prime$, such
that $\ctopcompsubspace{\mathbf{R}^3}{T^\prime}$ is equivalent to $\mathbf{T}_{\Phi_h}(T^\prime)$, 
so that, $(T^\prime, \Phi_h)$ becomes a computable submanifold of $\mathbf{R}^3$.
\end{example}

\begin{lemma}\label{lemmaCompSubManifoldSCT2ManifoldIsSCT2}
Every computable submanifold of a computably Hausdorff computable manifold is a computably
Hausdorff manifold.
\end{lemma}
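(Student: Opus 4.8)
The plan is to realize $\mathbf{T}_\Phi(M)$ as (a space equivalent to) a computable subspace of $\mathbf{T}_\Psi(N)$ and then apply the fact that computable subspaces of computably Hausdorff spaces are computably Hausdorff. So let $(M,\Phi)$ be a computable submanifold of the computably Hausdorff computable manifold $(N,\Psi)$. By Definition \ref{defCompSubmanifold} the inclusion $i\colon M\hookrightarrow N$ is a computable embedding of $\mathbf{T}_\Phi(M)$ into $\mathbf{T}_\Psi(N)$, hence by Definition \ref{defCompImbedding} a computable homeomorphism of $\mathbf{T}_\Phi(M)$ onto the computable subspace $\ctopcompsubspace{\mathbf{T}_\Psi(N)}{M}$, which is a computable topological space by Lemma \ref{lemmaPropertiesEffectiveTOPSUBSPACE}. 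Since the push-forward of $\mathbf{T}_\Phi(M)$ along the set inclusion $i$ is $\mathbf{T}_\Phi(M)$ itself, Corollary \ref{corCompHomeoTopSpaces} gives that $\mathbf{T}_\Phi(M)$ and $\ctopcompsubspace{\mathbf{T}_\Psi(N)}{M}$ are equivalent computable topological spaces. Applying Theorem \ref{thmPropertiesSCT2}, part 1, with $\mathbf{X}=\mathbf{T}_\Psi(N)$ and $A=M\subseteq N$, the subspace $\ctopcompsubspace{\mathbf{T}_\Psi(N)}{M}$ is computably Hausdorff.

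What remains is to show that the computably Hausdorff property is preserved by equivalence of computable topological spaces; this is the only point that needs a genuine argument. Let $\mathbf{X}=(X,\tau,\beta,\nu)$ be computably Hausdorff, witnessed by a c.e. set $H\subseteq\dom\nu\times\dom\nu$ as in Definition \ref{defSCT2}, and let $\mathbf{X}^\prime=(X,\tau,\beta^\prime,\nu^\prime)$ be equivalent to $\mathbf{X}$; then $\nu\leq\theta^\prime$ via some computable $h$, so $\nu(w)=\theta^\prime(h(w))=\bigcup_{z\ll h(w)}\nu^\prime(z)$ for all $w\in\dom\nu$. I would put
\[ H^\prime=\{\,(z,z^\prime)\in\dom\nu^\prime\times\dom\nu^\prime\mid(\exists\,(u,v)\in H)\; z\ll h(u)\;\wedge\;z^\prime\ll h(v)\,\}, \]
which is c.e. by dovetailing the enumeration of $H$ with a search through longer and longer prefixes of $h(u)$ and $h(v)$. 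If $(z,z^\prime)\in H^\prime$ with witness $(u,v)$, then $\nu^\prime(z)\cap\nu^\prime(z^\prime)\subseteq\nu(u)\cap\nu(v)=\myemptyset$, which is \eqref{eqDefSCT2}; and if $x\neq y$, pick $(u,v)\in H$ with $x\in\nu(u)$, $y\in\nu(v)$ and then $z\ll h(u)$, $z^\prime\ll h(v)$ with $x\in\nu^\prime(z)$, $y\in\nu^\prime(z^\prime)$, so $(z,z^\prime)\in H^\prime$ separates $x$ and $y$, giving \eqref{eqDefSCT21}. Thus $\mathbf{X}^\prime$ is computably Hausdorff, and taking $\mathbf{X}=\ctopcompsubspace{\mathbf{T}_\Psi(N)}{M}$, $\mathbf{X}^\prime=\mathbf{T}_\Phi(M)$ finishes the proof.

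I expect the transfer step --- setting up the translated c.e. set $H^\prime$ and verifying both clauses of Definition \ref{defSCT2} --- to be the only real work; everything else is routine bookkeeping with Definitions \ref{defCompSubmanifold} and \ref{defCompImbedding}, Corollary \ref{corCompHomeoTopSpaces}, Lemma \ref{lemmaPropertiesEffectiveTOPSUBSPACE}, and Theorem \ref{thmPropertiesSCT2}. The one subtlety to keep in mind is that $h$ translates a notation into the $\sigmaomegastringset$-valued representation $\theta^\prime$, so its outputs are infinite strings and $H^\prime$ must be obtained by a prefix search rather than by a single computation on $h(u),h(v)$.
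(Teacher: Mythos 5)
Your proof is correct and follows the same route as the paper, which simply declares the lemma an immediate consequence of Theorem \ref{thmPropertiesSCT2} and Definition \ref{defCompSubmanifold}. The one thing you add --- the explicit verification, via the translated c.e.\ set $H^\prime$, that the computably Hausdorff property transfers across equivalence of computable topological spaces from $\ctopcompsubspace{\mathbf{T}_\Psi(N)}{M}$ to $\mathbf{T}_\Phi(M)$ --- is a genuine step the paper leaves implicit under its general ``robustness'' remark, and your argument for it is sound.
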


\begin{proof}
 This is an immediate consequence of Theorem \ref{thmPropertiesSCT2} and Definition
\ref{defCompSubmanifold}.
\end{proof}

\section{Computable submanifolds of computable euclidean spaces}\label{secImbeddings}

In the general theory of manifolds, it is shown that abstract Hausdorff topological $n$-manifolds 
are no more general that $n$-dimensional submanifolds of euclidean spaces, thus for all practical 
purposes, one can work with submanifolds of $\euclidean{q}$ and this yields some good advantages. 
The main step in proving that general abstract manifolds can be reduced to submanifolds of 
$\euclidean{q}$, is to show that for any $n$-manifold $M$, there is an embedding of $M$ into some 
euclidean space.

In this section, we will prove a computable version of the embedding theorem for topological 
manifolds. We will show that any abstract compact computable $n$-manifold 
that is computably Hausdorff, can be embedded in some computable space $\mathbf{R}^q$, for $q$ 
sufficiently large. 

\subsection{Computable embeddings of manifolds in computable euclidean spaces}%\label{secImbeddings}

% \subsection{Finite atlases}
There are many facts \cite{kirby1977foundational} in the theory of manifolds (topological,
differentiable and/or PL) which can be shown to be true using the well known result that 
every Hausdorff $n$-manifold $M$ (of any kind) embeds in some high dimensional euclidean space
$\euclidean{q}$, where $q$ depends on $n$. Many versions of this embedding theorem exist, 
% being the big difference between them 
the difference between them is 
the dimension of the space $\euclidean{q}$. It was proven by Whitney
\cite{WhitneyDIFF36} that if $M$ is smooth, then it embeds in $\euclidean{2n}$ and this is the best
possible result. The same is true for the piecewise linear case using similar constructions to those
used in the smooth case. If $M$ has no additional structure, it can be embedded in
$\euclidean{2n+1}$ and again, this is the lowest possible dimension for the euclidean space. This
last result can be proven by means of dimension theory \cite{munkres2000topology,hurewicz1948dimension,engelking1995theory}.

Our aim now is to prove a computable version of an embedding theorem for compact computable
manifolds which are computably Hausdorff. We will show that every such manifold can be embedded in
computable euclidean space $\mathbf{R}^{q}$ (where $q$ must be large enough, we will not try to
optimize $q$) with a computable embedding. 
% , where $l$ has asymptotic complexity of $O(n^2)$. 
The question remains open if a compact computably Hausdorff computable manifold can be computably
embedded in a lower dimensional euclidean space. Notice that by Lemma
\ref{lemmaCompSubManifoldSCT2ManifoldIsSCT2}, any computable manifold embedded in computable
euclidean space must have the property of being computably Hausdorff.
% Proving such result could need to develop computable versions of very sophisticated results about
% metric spaces and manifolds.

\subsection{An embedding theorem}

From now on, all computable manifolds are assumed to be computably Hausdorff. We now show that every
compact computably Hausdorff computable manifold has a computable embedding in some euclidean space of
sufficiently high dimension. We develop a computable version of the proof of the classical 
embedding theorem which can be found in \cite{1974TOPPROPERTIES}.

Let $(M, \Phi)$ be a compact computable manifold of dimension $n$ and suppose that $\Phi$ is the
atlas given in part (a) of Theorem \ref{thmCompManifoldEuclideanAtlas}. 
% (every compact computable manifold has the property of non-empty computable euclidean balls). 
Let $(\varphi, U) \in \Phi$, where $\directimage{U}{\varphi}=\mu^n(z)$ for some $z\in 
\domp{\mu^n}$. Let the map $h_z\colon \mu^n(z) \to \euclidean{n}$ be the computable homeomorphism 
of Lemma \ref{lemmaTechResultCompHomeoBallsRn} and let 
$s\colon \sphere{n} - \{ P \} \to \euclidean{n}$ ($P=(0, \ldots, 0, 1)$) be the computable 
stereographic projection given in Example \ref{exampleStereographicProjection}. Define a function 
$g\colon M\to \sphere{n}$ by
\begin{equation*}
g(x)=\begin{cases}
   s^{-1} \circ h_z \circ \varphi(x) & \text{if } x\in U, \\
  P & \text{if } x\in M - U.
  \end{cases}
\end{equation*}
% It is not hard to show that $g$ is continuous. 
% We now prove that
% \footnote{The pseudocode of
% Type-2 machines given in Lemma \ref{lemmaOneCoordinateImbedding} and Theorem
% \ref{thmCompactT2CompManEmbedsRq} are supported by the main results of
% \cite{Weihrauch:jucs_14_6:the_computable_multi_functions, GTMs}.}

\begin{lemma}\label{lemmaOneCoordinateImbedding}
 The function $g$ is computable and hence it is continuous.
\end{lemma}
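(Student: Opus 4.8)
The plan is to prove that $g$ is computable; its continuity then follows for free from Proposition~\ref{propCompFunctContFunc}. Since $g\colon M\to\sphere{n}$ is total and $\sphere{n}$ carries the computable subspace topology $\mathbf{S}^{n}=\ctopcompsubspace{\mathbf{R}^{n+1}}{\sphere{n}}$ with point representation $\delta^{n+1}_{\sphere{n}}$, it suffices by Theorem~\ref{thmequivCompFunctions}(c) to give a uniform procedure that, from a code $w$ for a basic open set $V=\mu^{n+1}(w)\cap\sphere{n}$ of $\mathbf{S}^{n}$, outputs a $\theta_\Phi$-name of $\inverseimage{V}{g}$. Write $\mu^{n}(z)=B(q,\epsilon)$ with $q\in\rationals^{n}$, $\epsilon\in\rationals^{+}$, and $F=s^{-1}\circ h_{z}\colon\mu^{n}(z)\to\puncturedsphere{n}$, so that $g=F\circ\varphi$ on $U$ and $g\equiv P$ on $M-U$. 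The first step is to record one elementary analytic fact, read off from the explicit formulas for $h_{z}$ (Lemma~\ref{lemmaTechResultCompHomeoBallsRn}) and $s^{-1}$ (Example~\ref{exampleStereographicProjection}): $\lVert h_{z}(y)\rVert\to\infty$ as $\lVert y-q\rVert\uparrow\epsilon$, and $s^{-1}(v)\to P$ as $\lVert v\rVert\to\infty$, both with a computable rate; hence for each rational $\tau>0$ there is a rational $\delta=\delta(\tau)\in(0,\epsilon)$, computable from $\tau$ and $(q,\epsilon)$, such that $y\in B(q,\epsilon)$ with $\lVert y-q\rVert>\epsilon-\delta$ forces $\lVert F(y)-P\rVert<\tau$.

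Next the procedure decides whether $P\in\mu^{n+1}(w)$, which is a decidable rational condition. If $P\notin V$, then $V\subseteq\puncturedsphere{n}$ is a basic open set of $\ctopcompsubspace{\mathbf{S}^{n}}{\puncturedsphere{n}}$ and $\inverseimage{V}{g}=\inverseimage{V}{(F\circ\varphi)}\subseteq U$; since $\varphi$, $h_{z}$, $s^{-1}$ are computable homeomorphisms (Lemma~\ref{lemmaCompPropertiesAtlas}(i), Lemma~\ref{lemmaTechResultCompHomeoBallsRn}(1), Example~\ref{exampleStereographicProjection}), $F\circ\varphi$ is a computable map of $\ctopcompsubspace{\mathbf{T}_\Phi(M)}{U}$ into $\ctopcompsubspace{\mathbf{S}^{n}}{\puncturedsphere{n}}$, so Theorem~\ref{thmequivCompFunctions}(c) yields a name of $\inverseimage{V}{(F\circ\varphi)}$ for the subspace topology on $U$; as $U$ is $\theta_\Phi$-computable open (Lemma~\ref{lemmaCompPropertiesAtlas}(i)) and $\theta_{U}(p)=\theta_\Phi(p)\cap U$ (Lemma~\ref{lemmaPropertiesEffectiveTOPSUBSPACE}(2)), a finite intersection of open sets (Theorem~\ref{thmTheorem11ECT}(1)) converts this into a $\theta_\Phi$-name of $\inverseimage{V}{g}$, uniformly in $w$. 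If instead $P\in V$, then $\inverseimage{V}{g}$ is open and contains the closed set $M-U$, and this is where the work lies. Compute a rational $\eta>0$ with $\overline{B}(P,\eta)\cap\sphere{n}\subseteq V$ and the corresponding $\delta=\delta(\eta)$. The claim is
\[
\inverseimage{V}{g}=\bigl[\,M\setminus\inverseimage{\overline{B}(q,\epsilon-\delta)}{\varphi}\,\bigr]\cup\bigl[\,\inverseimage{V}{g}\cap U\,\bigr],
\]
with both sets open: the first is the complement of the closed set $\inverseimage{\overline{B}(q,\epsilon-\delta)}{\varphi}$ and equals $(M-U)\cup\inverseimage{A_{\delta}}{\varphi}$, where $A_{\delta}=\{y:\epsilon-\delta<\lVert y-q\rVert<\epsilon\}$; it is contained in $\inverseimage{V}{g}$ because $F$ maps $A_{\delta}$ into $\overline{B}(P,\eta)\subseteq V$ while $g(M-U)=\{P\}\subseteq V$; the second equals $\inverseimage{V\setminus\{P\}}{(F\circ\varphi)}$ with $V\setminus\{P\}=\mu^{n+1}(w)\cap\puncturedsphere{n}$ again a basic open set, so it is $\theta_\Phi$-computable open exactly as in the first case, uniformly in $w$.

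It then remains to produce, uniformly in $w$, a $\theta_\Phi$-name of $M\setminus\inverseimage{\overline{B}(q,\epsilon-\delta)}{\varphi}$. Now $\inverseimage{\overline{B}(q,\epsilon-\delta)}{\varphi}$ is compact, being the image of the compact ball $\overline{B}(q,\epsilon-\delta)\subset\mu^{n}(z)$ under the homeomorphism $\varphi^{-1}$, hence closed in $M$ since $M$ is (computably) Hausdorff; and it is $\kappa_\Phi$-computable: $\overline{B}(q,\epsilon-\delta)$ is $\kappa^{n}$-computable (by Tarski's method one decides when a finite union of rational balls covers it, Theorem~\ref{thmTarskiDecisionMethod}) and is contained in $\mu^{n}(z)$, hence $\kappa$-computable in $\ctopcompsubspace{\mathbf{R}^{n}}{\mu^{n}(z)}$ (Lemma~\ref{lemmaPropertiesEffectiveTOPSUBSPACE}(5)); $\varphi^{-1}$ is a computable homeomorphism of $\ctopcompsubspace{\mathbf{R}^{n}}{\mu^{n}(z)}$ onto $\ctopcompsubspace{\mathbf{T}_\Phi(M)}{U}$, so Theorem~\ref{thmequivCompFunctions}(e) together with Lemma~\ref{lemmaPropertiesEffectiveTOPSUBSPACE}(5) transport the $\kappa^{n}$-name to a $\kappa_\Phi$-name of $\inverseimage{\overline{B}(q,\epsilon-\delta)}{\varphi}$. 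Since $\mathbf{T}_\Phi(M)$ is computably Hausdorff, Theorem~\ref{thmPropertiesSCT2}(2) gives $\kappa_\Phi\leq\psi^{-}_\Phi$, so $\inverseimage{\overline{B}(q,\epsilon-\delta)}{\varphi}$ is $\psi^{-}_\Phi$-computable, which by Definition~\ref{defNegativeRepresentationsCompTOPSPACE} means exactly that its complement is $\theta_\Phi$-computable open. Finally, finite union on open sets is $(\theta_\Phi^{\text{fs}},\theta_\Phi)$-computable (Theorem~\ref{thmTheorem11ECT}(1)), so the displayed decomposition yields a $\theta_\Phi$-name of $\inverseimage{V}{g}$, uniformly in $w$; Theorem~\ref{thmequivCompFunctions}(c) then gives that $g$ is $(\delta_\Phi,\delta^{n+1}_{\sphere{n}})$-computable, and Proposition~\ref{propCompFunctContFunc} gives continuity. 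The main obstacle is the case $P\in V$, i.e.\ gluing $g$ across the closed set $M-U$; the two devices that overcome it are (i) the analytic blow-up of $s^{-1}\circ h_{z}$ at the boundary of $\mu^{n}(z)$, which lets an entire neighborhood of $M-U$ be absorbed into $\inverseimage{V}{g}$, and (ii) the implication ``$\kappa$-computable compact $\Rightarrow$ $\psi^{-}$-computable'' available in computably Hausdorff spaces, which turns the compact set $\inverseimage{\overline{B}(q,\epsilon-\delta)}{\varphi}$ into a $\theta_\Phi$-name of its complement.
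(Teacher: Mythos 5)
Your proof is correct, and it reaches the conclusion by a route that differs from the paper's in the one place where the real work happens, namely the case $P\in V$. The paper's machine computes the compact set $K=\sphere{n}-V$ directly (using that $V\mapsto\sphere{n}-V$ is $(\nu_{\sphere{n}},\kappa_{\sphere{n}})$-computable on the compact sphere), pulls it back to the compact set $K^\prime=\inverseimage{K}{g}$ via the computable partial inverse $g^{-1}=\varphi^{-1}\circ h_z^{-1}\circ s$ on $U$ (Theorem~\ref{thmequivCompFunctions}(e)), and outputs $\inverseimage{V}{g}=M-K^\prime$ via $\kappa_\Phi\leq\psi^-_\Phi$. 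You instead avoid computing $\sphere{n}-V$ as a compact set: you extract a quantitative modulus from the explicit formulas for $h_z$ and $s^{-1}$ (blow-up at $\topboundary B(q,\epsilon)$, convergence to $P$ at infinity) to produce a rational closed ball $\overline{B}(q,\epsilon-\delta)$ whose $\varphi$-preimage's complement is guaranteed to land inside $\inverseimage{V}{g}$, and then assemble $\inverseimage{V}{g}$ as a union of that complement with $\inverseimage{V\setminus\{P\}}{(F\circ\varphi)}$. Both arguments hinge on the same central device --- a $\kappa_\Phi$-computable compact set whose complement is named via $\kappa_\Phi\leq\psi^-_\Phi$, available because $M$ is computably Hausdorff --- but yours trades the paper's appeal to computable complementation on $\sphere{n}$ for an explicit analytic estimate plus a decomposition into two open pieces; the paper's version is shorter and produces the preimage exactly in one step, while yours is more self-contained about why each intermediate object is computable. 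One small slip: you cite Theorem~\ref{thmTheorem11ECT}(1) for computability of finite \emph{union} of open sets, but that item concerns finite \emph{intersection}; the union step is nevertheless valid, since a $\theta_\Phi$-name of a union is obtained by merging the two lists of base elements.
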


\begin{proof}
We prove that $g$ is a computable map from $\mathbf{T}_\Phi(M)$ to $\mathbf{S}^n$ by showing that
the map $B\mapsto \inverseimage{B}{g}$ is $(\nu_{\sphere{n}}, \theta_\Phi)$-com\-pu\-ta\-ble (See
Theorem \ref{thmequivCompFunctions}). There is a Type-2 machine $\mathfrak{M}$ that on 
input $V = B
\cap \sphere{n}$ ($B = B(r, \epsilon)$ is the open ball in $\euclidean{n+1}$), does the following:
\begin{enumerate}
 \item If $V$ contains the point $P$, then execute the following steps:
\begin{itemize}
 \item [1.1] Calculate the compact set $K=\sphere{n} - V$;
 \item [1.2] compute the compact set $K^\prime=\inverseimage{K}{g}$;
 \item [1.3] output the set $V^\prime=M-K^\prime$.
\end{itemize}
\item If $V$ does not contains $P$, output the open set $\inverseimage{V}{g}$.
\end{enumerate}
First, we show that each step can be executed by $\mathfrak{M}$. The test $P \in V$ in steps 1 and 
2 
can be
done by $\mathfrak{M}$ in finite time, because $P\in \rationalsspace{n}$ and $V= B \cap \sphere{n}$ 
is
specified by rational numbers. To construct the set $K$ in step 1.1, $\mathfrak{M}$ needs to 
compute 
an element
$p\in \sigmaomegastringset$ such that $K \in \kappa_{\sphere{n}}(p)$ and this can be done by 
$\mathfrak{M}$
because the map $V \mapsto \sphere{n} - V$ is $(\nu_{\sphere{n}}, \kappa_{\sphere{n}})$-computable.
Step 1.2 can be accomplished because in the subset $U\subset \dom g = M$, $g^{-1}=\varphi^{-1} 
\circ 
f_z^{-1} \circ s$ exists and it is a computable function, thus the function $K \mapsto 
\inverseimage{K}{g}$ is $(\kappa_{\sphere{n}}, \kappa_{\Phi})$-computable. To execute step 1.3, 
$\mathfrak{M}$ 
can use the computable function $f\colon \subseteq \sigmaomegastringset \to \sigmaomegastringset$ 
of 
the reduction $\kappa_{\Phi} \leq \psi^{-}_\Phi$. This reduction exists because 
$\mathbf{T}_\Phi(M)$ 
is computably Hausdorff (part 4 of Theorem \ref{thmPropertiesSCT2}). The construction of the set 
$\inverseimage{V}{g}$ in step 2 can be executed by $\mathfrak{M}$ because as $P\notin V$, $V\subset 
\sphere{n} 
- \{ P \}$, thus in $V$, $g$ is a computable function. This proves that each step of $\mathfrak{M}$ 
can be done in finite time, so that $\mathfrak{M}$ is a valid Type-2 machine.

We now prove the correctness of this pseudocode. Assume that $P\in V$, then $\mathfrak{M}$ goes
on to execute step 1.1 and compute the set $K=\sphere{n} - V$. Notice that as $P\in V$, $K\subset
\directimage{U}{g}$, so that in $K$, $g=s^{-1} \circ f_z \circ \varphi$ and $\mathfrak{M}$ can use 
$s,f_z$ 
and $\varphi^{-1}$ to compute the compact set $K^\prime=\inverseimage{K}{g}\subset U$ in step 1.2. 
Finally, in step 1.3, $\mathfrak{M}$ computes the open set $V^\prime=M-K^\prime$, and this set is 
such that
$V^\prime=\inverseimage{V}{g}$. Suppose that $P\notin V$. Then $V\subset \sphere{n} - \{ P \}$ and 
$\mathfrak{M}$ uses $g$ to compute the open set $\inverseimage{V}{g}\subset U$.

This proves that the machine $\mathfrak{M}$ computes a function which realizes the map $V 
\mapsto
\inverseimage{V}{g}$ with respect to $\nu_{\sphere{n}}$ and $\theta_\Phi$. Therefore, the map $V 
\mapsto \inverseimage{V}{g}$ is $(\nu_{\sphere{n}}, \theta_\Phi)$-computable, thus by Theorem 
\ref{thmequivCompFunctions}, $g$ is a computable function from $M$ to $\sphere{n}$. 
By Proposition \ref{propCompFunctContFunc}, $g$ is continuous. 
\end{proof}
% Combining the function $g$ of the previous result with the computable inclusion $i\colon
% \sphere{n}
% \hookrightarrow \euclidean{n+1}$ we obtain a computable function $i \circ g \colon M \to
% \euclidean{n+1}$ from $\mathbf{T}_\Phi(M)$ to $\mathbf{R}^{n+1}$.

We are ready to prove the main result of this section.

\begin{theorem}\label{thmCompactT2CompManEmbedsRq}
 For any compact computable manifold $M^n$, there exists a computable embedding of $M$ into
% $\mathbf{R}^{O(n^2)}$.
$\mathbf{R}^{q}$ for $q$ sufficiently large.
\end{theorem}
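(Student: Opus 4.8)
The plan is to mimic the classical proof of the embedding theorem (as in \cite{1974TOPPROPERTIES}), making each step effective. Since $M$ is compact, by part (a) of Theorem \ref{thmCompManifoldEuclideanAtlas} we may replace $\Phi$ by a computably compatible computable atlas all of whose charts have rational-ball image, and since $M$ is compact this atlas may be taken finite, say $\Phi = \{ (\varphi_1, U_1), \ldots, (\varphi_k, U_k) \}$ with $\directimage{U_i}{\varphi_i} = \mu^n(z_i)$. For each $i$, the construction preceding Lemma \ref{lemmaOneCoordinateImbedding} gives a computable map $g_i \colon M \to \sphere{n}$, defined by $g_i(x) = s^{-1} \circ h_{z_i} \circ \varphi_i(x)$ for $x \in U_i$ and $g_i(x) = P$ otherwise; each $g_i$ is computable by Lemma \ref{lemmaOneCoordinateImbedding}. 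The idea is then to assemble these into a single map into a product of spheres.

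First I would form the product map $G = (g_1, \ldots, g_k) \colon M \to (\sphere{n})^k$. Using the computable product space of Lemma \ref{lemmaPropertiesEffectiveTOPPRODUCT} (iterated $k$ times) and part 2 of that lemma (the pairing $(x_1,\ldots,x_k) \mapsto (x_1,\ldots,x_k)$ is computable with respect to $\overline{\delta}$), the map $G$ is $(\delta_\Phi, \overline{\delta})$-computable, hence computable as a map from $\mathbf{T}_\Phi(M)$ into the $k$-fold computable product $\overline{\mathbf{S}} = \ctopcompsubspace{\mathbf{R}^{(n+1)k}}{(\sphere{n})^k}$. Since $(\sphere{n})^k$ is a computable submanifold of $\euclidean{(n+1)k}$ (it is a computable subspace of $\mathbf{R}^{(n+1)k}$), composing with the inclusion gives a computable map $M \to \mathbf{R}^{q}$ with $q = (n+1)k$. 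It remains to show $G$ (equivalently its corestriction) is a \emph{computable embedding}, i.e.\ a computable homeomorphism of $\mathbf{T}_\Phi(M)$ onto the computable subspace $\ctopcompsubspace{\mathbf{R}^{q}}{\directimage{M}{G}}$.

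The topological part is classical: $G$ is injective because the $U_i$ cover $M$ and the $g_i$ are injective on $U_i$, using computable Hausdorffness (hence ordinary Hausdorffness) to separate a point of $U_i$ from points outside $U_i$ via the value $P$; and a continuous injection from a compact space into a Hausdorff space is a homeomorphism onto its image. For the computable statement I would invoke Corollary \ref{corCompHomeoTopSpaces} / Lemma \ref{lemmaImbeddingImageManifold}: it suffices to exhibit $G^{-1} \colon \directimage{M}{G} \to M$ as a computable map between $\ctopcompsubspace{\mathbf{R}^q}{\directimage{M}{G}}$ and $\mathbf{T}_\Phi(M)$. Here the key point is that on the open set $\directimage{U_i}{G} = G[U_i]$ (which is the preimage under the $i$-th projection of $\sphere{n} - \{P\}$, hence $\theta$-computable open in $\directimage{M}{G}$), the inverse is $\varphi_i^{-1} \circ h_{z_i}^{-1} \circ s \circ (\text{$i$-th projection})$, a composition of computable maps; one covers $\directimage{M}{G}$ by finitely many such computable open pieces and patches the computable local inverses together — a finite-atlas gluing argument, using that the $\directimage{U_i}{G}$ are $\theta$-computable and the decision problem ``$x \in W$'' for open $W$ is $(\delta,\theta)$-c.e.\ (Lemma \ref{lemCorollary14ECT}) to run the local inverse machines in parallel.

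The main obstacle I expect is precisely this last patching step: producing a single Type-2 machine that, given a name of a point $y \in \directimage{M}{G}$, locates (semi-decidably) an index $i$ with $y \in \directimage{U_i}{G}$ and then runs the computable realizer of the local inverse on that chart, outputting a $\delta_\Phi$-name of $G^{-1}(y)$. This requires a time-sharing/dovetailing argument across the finitely many charts, exactly in the spirit of the machine $\mathfrak{M}$ in the proof of Proposition \ref{lemmaOpenSubsetSubmanifoldIF}, together with care that the $\theta$-computability of each $\directimage{U_i}{G}$ is uniform in $i$ and that the finite cover is genuinely a cover (using Hausdorffness and compactness). Everything else — injectivity, continuity, compactness giving a homeomorphism, and the computability of the forward map $G$ — follows routinely from the results already established, in particular Lemma \ref{lemmaOneCoordinateImbedding}, Lemma \ref{lemmaPropertiesEffectiveTOPPRODUCT}, Theorem \ref{thmequivCompFunctions}, and Lemma \ref{lemmaImbeddingImageManifold}.
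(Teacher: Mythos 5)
Your proposal is correct and follows essentially the same route as the paper's proof: a finite atlas with rational-ball images from Theorem \ref{thmCompManifoldEuclideanAtlas}(a), the one-chart maps $g_i$ of Lemma \ref{lemmaOneCoordinateImbedding}, the product map $G=(g_1,\ldots,g_l)$ into $(\sphere{n})^l$, injectivity plus compact-to-Hausdorff for the topological embedding, and a dovetailing machine that semi-decides which component $x_j$ avoids $P$ and then applies $g_j^{-1}$ to compute the inverse. The ``patching'' step you flag as the main obstacle is exactly the paper's machine $\mathfrak{M}^\prime$, handled just as you describe via Lemma \ref{lemCorollary14ECT} and the fact that at least one index succeeds.
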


\begin{proof}
 Let $M$ be a compact computable manifold of dimension $n$ and assume that $\sphere{n}$ has the
computable subspace topology induced by $\mathbf{R}^{n+1}$. By compactness of $M$ and part (a) of
Theorem \ref{thmCompManifoldEuclideanAtlas}, we can find a finite atlas $\{ (\varphi_1, U_1),
\ldots, (\varphi_l, U_l) \}$ such that $\directimage{U_i}{\varphi_i}=\mu^n(z_i)$ for all 
$i=1,\ldots,l$. Using Lemma \ref{lemmaOneCoordinateImbedding}, we 
construct computable functions $g_i\colon M\to \sphere{n}$. Now let
\[X= \underbrace{\sphere{n} \times \cdots \times \sphere{n}}_{l \text{ times}} \quad
\text{and} \quad \overline{\mathbf{X}}=(X, \overline{\tau}, \overline{\beta},\overline{\nu})\]
and define $G\colon M \to X$ of $\mathbf{T}_\Phi(M)$ into $\overline{\mathbf{X}}$ by $G=(g_1,
\ldots, g_l)$. 
We now prove that $G$ is injective. Take $x,y\in M$ with $x\neq y$. If there is a chart $U_i$ such 
that $x,y\in U_i$ then $g_i(x)\neq g_i(y)$ because $g_i$ is inyective in $U_i$. If $x$ and $y$ are 
not in the same chart, then $x\in U_i$ for some $i$ and $y\notin U_i$, hence $g_i(x)\neq P$ and 
$g_i(y)=P$. Since $M$ is compact and $X$ is Hausdorff, then $G$ is a closed map, therefore $G$ is a 
topological embedding. 
% It is easy to show that $G$ is a topological embedding \cite{1974TOPPROPERTIES} 
% and
Because each $g_i$ is computable, $G$ is computable (use part 2. of Lemma
\ref{lemmaPropertiesEffectiveTOPPRODUCT}). To see that the inverse function $G^{-1}$ is
computable, let $\mathfrak{M}^\prime$ be a Type-2 machine that on input $x\in \range h$, executes 
the following
steps:
\begin{enumerate}
 \item Compute each component $x_i \in \sphere{n}$ ($i=1,\ldots,l$) of $x$.
 \item Find $j$ such that $x_j\in \sphere{n} - P$.
%  \item Output $y=\varphi_j^{-1}(s(x_j))$.
\item Output $y=g_j^{-1}(x_j)$.
\end{enumerate}
Step 1 can be computed because by Lemma \ref{lemmaPropertiesEffectiveTOPPRODUCT}, the map
$x\mapsto x_i$ is $(\overline{\delta}, \delta_{\sphere{n}})$-computable for each $i$; step 2 is
finished in finite time because each $x_i\in \sphere{n}\subset \euclidean{n+1}$ is
$\delta_{\sphere{n}}$-computable, the set $\sphere{n} - P$ is $\theta_{\sphere{n}}$-computable open
in $\sphere{n}$ and by part 1 of Lemma \ref{lemCorollary14ECT}, the decision problem ``$x_i \in 
\sphere{n} -
P$'' is $(\delta_{\sphere{n}}, \theta_{\sphere{n}})$-c.e. ($x_j\neq P$ for at least one index $j$);
step 3 is easily calculated because $g_j^{-1}$ exists and it is $(\delta_{\sphere{n}}, 
\delta_\Phi)$-computable in $\sphere{n} - P$. Hence, the function computed by
$\mathfrak{M}^\prime$ realizes $g^{-1}$, that is, it is a computable function.

Therefore $G$ is a computable embedding of $\mathbf{T}_\Phi(M)$ into $\overline{\mathbf{X}}$, which is a
computable subspace of $\overline{\mathbf{X}}^\prime=(X^\prime, \overline{\tau}^\prime,
\overline{\beta}^\prime, \overline{\nu}^\prime)$ ($X^\prime=\prod_i^l \euclidean{n+1}$, each
factor is to be understood as the computable space $\mathbf{R}^{n+1}$), which is
equivalent to the computable euclidean space $\mathbf{R}^{l(n+1)}$, so that by combining $G$ with
the computable inclusion of $\overline{\mathbf{X}}$ into $\overline{\mathbf{X}}^\prime$ and the
equivalence of $\overline{\mathbf{X}}^\prime$ with $\mathbf{R}^{l(n+1)}$, we obtain the desired
computable embedding of $M$ into $\euclidean{l(n+1)}$.
% By the paragraphs at the beginning of this section, $l(n+1)\in O(n^2)$.
\end{proof}

% Combining the characterization of computable submanifolds given in Theorem 
% \ref{thmEquivalenceCompSubmanifoldRm} with 
With the computable embedding constructed in Theorem 
\ref{thmCompactT2CompManEmbedsRq}, we can deduce that abstract compact computably Hausdorff 
computable manifolds and compact computable submanifolds of euclidean spaces are essentially the 
same.

\section{Final remarks}\label{secFinalRemarks}

In this paper, we have proposed a starting point to build a computable theory 
for topological manifolds, viewing computability as a structure that we impose to topological 
manifolds. We have provided the basic results needed to give computable versions 
of the standard definitions and theorems. We also studied computable functions 
between computable manifolds and defined computable submanifolds. Finally, we proved an embedding 
theorem for compact computably Haussdorf computable manifolds, which is a computable version of the 
result which states that every compact manifold embeds in some euclidean space.

% \subsection{Further work and open problems}
%
% Starting from our computable theory of manifolds, there are many steps that can be taken to enrich 
% this theory, here is a small list of some possibilities.
%
% \begin{itemize}
%  \item Investigate which computability properties of computable euclidean spaces generalizes to 
% computable manifolds.
% \item Try to improve the dimension of the euclidean space $\euclidean{q}$ in Theorem 
% \ref{thmCompactT2CompManEmbedsRq}.
% \item Study specific computable classes of manifolds (e.g. computability in $C^r$ or piecewise 
% lineal manifolds).
% \item Find out which computable separation axioms 
% \cite{Weihrauch:jucs_16_18:computable_separation_in_topology} fulfill computable manifolds.
% \item Inquire the computability properties of spaces of continuous functions between computable 
% ma\-ni\-folds.
% \item Prove computable versions of important results about manifolds (e.g. the product structure 
% theorem \cite{kirby1977foundational,RudyakPLStructures}). 
% \end{itemize}

% \subsection{Structures on topological manifolds and computability}
The computable theory of topological manifolds that we present in this paper could 
also be used as a new part of the standard theory of manifolds. It is known that the most 
important structures (topological, smooth and piecewise linear) coincide in dimension at most three 
and they are different in dimensions at least four\footnote{For the case of dimension four, 
smooth and piecewise linear structures coincide, but there exists a topological 4-manifold which 
has no smooth or picewise linear structure.} \cite{kirby1977foundational,KervaireManifoldNODIFF,RudyakPLStructures}. Also, it was recently proved in \cite{MR3402697} that in dimensions 
at least 5, there are topological manifolds that cannot be triangulated as simplicial complexes.  
We have 
introduced computability as a structure that we impose to topological 
manifolds. Which is the relationship of computable structures with the classical structures and 
simplicial triangulations? An important open question (among many others) in this direction, 
concerning the computable theory is the following: Does there exists a second 
countable topological manifold $E$ such that $E$ does not admit a computable structure ?

\section*{Acknowledgement}

We would like to thank 
% The author wishes to thank his supervisor Professor Marcelo Aguilar and 
Professor Francisco
Gonz\'alez Acu\~na for very valuable discussions. Also, special thanks to all the 
members of the \href{http://cca-net.de/list.html}{Computability and Complexity in Analysis}
 mailing list for their help, comments, 
 corrections and tips about computable analysis and topology. In 
particular, thanks to Vasco Brattka, Klaus Weihrauch, Mathieu Hoyrup and Dimiter Skordev.

\bibliographystyle{unsrt}       % or "unsrt", "alpha", "abbrv", etc.
\bibliography{bib/biblio}           % use data in file "biblio.bib"

\end{document}